\pdfoutput=1
\documentclass[11pt]{article}
\usepackage[T1]{fontenc}
\usepackage[utf8]{inputenc}
\usepackage{lmodern}
\usepackage{microtype}
\usepackage[margin=1in]{geometry}
\usepackage{amsmath,amssymb,amsthm,mathtools}
\usepackage{booktabs}
\usepackage{array}
\usepackage{enumitem}
\usepackage{graphicx}\usepackage{longtable}
\usepackage[hidelinks]{hyperref}

\newtheorem{theorem}{Theorem}[section]
\newtheorem{lemma}[theorem]{Lemma}
\newtheorem{proposition}[theorem]{Proposition}
\newtheorem{corollary}[theorem]{Corollary}
\newtheorem{fact}[theorem]{Fact}
\theoremstyle{definition}
\newtheorem{definition}[theorem]{Definition}
\newtheorem{openproblem}[theorem]{Open Problem}
\theoremstyle{remark}
\newtheorem{remark}[theorem]{Remark}

\newcommand{\R}{\mathbb{R}}
\newcommand{\E}{\mathbb{E}}
\newcommand{\Prob}{\mathbb{P}}
\newcommand{\dk}{d_k}
\newcommand{\dv}{d_v}
\newcommand{\vrho}{\varrho}
\newcommand{\eps}{\varepsilon}
\newcommand{\norm}[1]{\lVert #1\rVert}
\newcommand{\inner}[2]{\langle #1,#2\rangle}
\newcommand{\Attn}{\operatorname{Attn}}
\newcommand{\rad}{\operatorname{rad}}
\newcommand{\Dud}{\operatorname{Dud}}
\newcommand{\diam}{\operatorname{diam}}
\newcommand{\gam}{\gamma}
\newcommand{\disc}{\operatorname{disc}}
\newcommand{\INDEX}{\mathrm{INDEX}}
\newcommand{\tbd}{\textendash}
\newcommand{\vEaLBands}{0--5, 6--11, 12--18, 19--24, 25--31}
\newcommand{\vEaLGroup}{4}
\newcommand{\vEaLKvHeads}{256}
\newcommand{\vEaLLayers}{32}
\newcommand{\vEaLMeaningful}{no}
\newcommand{\vEaLPromptsMeasured}{212}
\newcommand{\vEaLRows}{54{,}272}
\newcommand{\vEaLSkipped}{38}
\newcommand{\vEaLSkippedShort}{2}
\newcommand{\vEaLerhoMax}{$3.02\!\times\!10^{21}$}
\newcommand{\vEaLerhoPnn}{$1.46\!\times\!10^{21}$}
\newcommand{\vEaLfracHi}{100\%}
\newcommand{\vEaLfracLo}{0\%}
\newcommand{\vEaLfracSampling}{0\%}
\newcommand{\vEaLgqaHi}{1.674}
\newcommand{\vEaLgqaLo}{1.352}
\newcommand{\vEaLgqaRatio}{1.22}
\newcommand{\vEaLkappaMed}{30.3}
\newcommand{\vEaLqA}{42.068}
\newcommand{\vEaLqB}{40.201}
\newcommand{\vEaLqC}{46.738}
\newcommand{\vEaLqD}{61.367}
\newcommand{\vEaLqE}{72.162}
\newcommand{\vEaLrhoControl}{39.6}
\newcommand{\vEaLrhoMax}{49.442}
\newcommand{\vEaLrhoMaxHead}{128.1}
\newcommand{\vEaLrhoMinHead}{29.3}
\newcommand{\vEaLrhoNatural}{50.1}
\newcommand{\vEaLrhoPninetyHead}{80.7}
\newcommand{\vEaLrhoPnn}{48.733}
\newcommand{\vEaLrhoPtenHead}{37.6}
\newcommand{\vEaLrhoRound}{49}
\newcommand{\vEaLtrimFour}{0.977}
\newcommand{\vEaLtrimThirtytwo}{0.945}
\newcommand{\vEaLuErhoMax}{$7.48\!\times\!10^{23}$}
\newcommand{\vEaLuRhoMax}{54.960}
\newcommand{\vEaLvrhoMed}{1.67}
\newcommand{\vEaQBands}{0--4, 5--10, 11--15, 16--21, 22--27}
\newcommand{\vEaQBiasGapLarge}{6\%}
\newcommand{\vEaQBiasHeadsLarge}{7}
\newcommand{\vEaQBiasMax}{920.337}
\newcommand{\vEaQGroup}{7}
\newcommand{\vEaQHeadDim}{128}
\newcommand{\vEaQKvHeads}{112}
\newcommand{\vEaQLayers}{28}
\newcommand{\vEaQMeaningful}{no}
\newcommand{\vEaQPromptsMeasured}{223}
\newcommand{\vEaQRows}{24{,}976}
\newcommand{\vEaQSkipped}{27}
\newcommand{\vEaQerhoMax}{$2.88\!\times\!10^{23}$}
\newcommand{\vEaQerhoPnn}{$9.22\!\times\!10^{21}$}
\newcommand{\vEaQfracHi}{100\%}
\newcommand{\vEaQfracLo}{0\%}
\newcommand{\vEaQfracSampling}{0\%}
\newcommand{\vEaQgqaHi}{2.415}
\newcommand{\vEaQgqaLo}{1.722}
\newcommand{\vEaQgqaRatio}{1.39}
\newcommand{\vEaQkappaMed}{22.1}
\newcommand{\vEaQqA}{44.716}
\newcommand{\vEaQqB}{53.935}
\newcommand{\vEaQqC}{60.708}
\newcommand{\vEaQqD}{57.344}
\newcommand{\vEaQqE}{53.062}
\newcommand{\vEaQrhoControl}{46.5}
\newcommand{\vEaQrhoMax}{54.015}
\newcommand{\vEaQrhoMaxHead}{562.5}
\newcommand{\vEaQrhoMinHead}{30.3}
\newcommand{\vEaQrhoNatural}{55.6}
\newcommand{\vEaQrhoPninetyHead}{73.5}
\newcommand{\vEaQrhoPnn}{50.575}
\newcommand{\vEaQrhoPtenHead}{41.9}
\newcommand{\vEaQrhoRound}{54}
\newcommand{\vEaQtrimFour}{0.970}
\newcommand{\vEaQtrimThirtytwo}{0.929}
\newcommand{\vEaQuErhoMax}{$1.91\!\times\!10^{30}$}
\newcommand{\vEaQuRhoMax}{69.719}
\newcommand{\vEaQvrhoMed}{2.41}
\newcommand{\vEbLBeatsRealAll}{303}
\newcommand{\vEbLBeatsRealClean}{47}
\newcommand{\vEbLBkvOversizeMax}{1.35}
\newcommand{\vEbLBkvRandA}{0.798}
\newcommand{\vEbLBkvRandAMax}{0.846}
\newcommand{\vEbLBkvRandB}{0.738}
\newcommand{\vEbLBkvRandBMax}{0.770}
\newcommand{\vEbLBkvRandC}{0.683}
\newcommand{\vEbLBkvRandCMax}{0.759}
\newcommand{\vEbLBkvRealA}{0.448}
\newcommand{\vEbLBkvRealAMax}{0.534}
\newcommand{\vEbLBkvRealB}{0.384}
\newcommand{\vEbLBkvRealBMax}{0.479}
\newcommand{\vEbLBkvRealC}{0.235}
\newcommand{\vEbLBkvRealCMax}{0.297}
\newcommand{\vEbLCapDumps}{18}
\newcommand{\vEbLCapHeads}{9}
\newcommand{\vEbLCellsRealAll}{540}
\newcommand{\vEbLCellsRealClean}{80}
\newcommand{\vEbLCleanDumps}{8}
\newcommand{\vEbLConcBody}{L00H00 & 59.3 & 105.8 & 3957--32768 & 0.09 & 0 of 6 \\
L00H03 & 47.7 & 533.9 & 3957--32768 & 0.02 & 0 of 6 \\
L09H00 & 38.7 & 10.4 & 3957--32768 & 0.43 & 4 of 6 \\
L14H06 & 56.3 & 3.4 & 3957--32768 & 0.69 & 4 of 6 \\
L15H06 & 48.2 & 5.2 & 3957--32768 & 0.61 & 4 of 6 \\
L19H03 & 41.4 & 4.0 & 3957--32768 & 0.69 & 5 of 6 \\
L25H04 & 43.1 & 1.4 & 3957--32768 & 0.90 & 4 of 6 \\
L31H06 & 50.5 & 1.8 & 3957--32768 & 0.82 & 4 of 6 \\
L31H07 & 47.7 & 4.2 & 3957--32768 & 0.66 & 4 of 6 \\}
\newcommand{\vEbLDumps}{54}
\newcommand{\vEbLHardMaxDumps}{29}
\newcommand{\vEbLHardMaxHeads}{0}
\newcommand{\vEbLHhRandA}{0.800}
\newcommand{\vEbLHhRandB}{0.819}
\newcommand{\vEbLHhRandC}{0.798}
\newcommand{\vEbLHhRealA}{0.147}
\newcommand{\vEbLHhRealB}{0.054}
\newcommand{\vEbLHhRealC}{0.011}
\newcommand{\vEbLKeptDumps}{25}
\newcommand{\vEbLKeptHeads}{9}
\newcommand{\vEbLLengthsMax}{32768}
\newcommand{\vEbLLengthsMin}{3957}
\newcommand{\vEbLOhCells}{125}
\newcommand{\vEbLOhRandA}{0.789}
\newcommand{\vEbLOhRandAMax}{0.899}
\newcommand{\vEbLOhRandB}{0.757}
\newcommand{\vEbLOhRandBMax}{0.852}
\newcommand{\vEbLOhRandC}{0.662}
\newcommand{\vEbLOhRandCMax}{0.754}
\newcommand{\vEbLOhRealA}{0.471}
\newcommand{\vEbLOhRealAMax}{0.574}
\newcommand{\vEbLOhRealB}{0.370}
\newcommand{\vEbLOhRealBMax}{0.450}
\newcommand{\vEbLOhRealC}{0.232}
\newcommand{\vEbLOhRealCMax}{0.278}
\newcommand{\vEbLOhWins}{76}
\newcommand{\vEbLOsbCells}{125}
\newcommand{\vEbLOsbRandA}{0.806}
\newcommand{\vEbLOsbRandAMax}{0.922}
\newcommand{\vEbLOsbRandB}{0.750}
\newcommand{\vEbLOsbRandBMax}{0.836}
\newcommand{\vEbLOsbRandC}{0.665}
\newcommand{\vEbLOsbRandCMax}{0.759}
\newcommand{\vEbLOsbRealA}{0.453}
\newcommand{\vEbLOsbRealAMax}{0.602}
\newcommand{\vEbLOsbRealB}{0.376}
\newcommand{\vEbLOsbRealBMax}{0.474}
\newcommand{\vEbLOsbRealC}{0.226}
\newcommand{\vEbLOsbRealCMax}{0.285}
\newcommand{\vEbLOsbWins}{67}
\newcommand{\vEbLOursOversizeMax}{1.09}
\newcommand{\vEbLPqRandA}{0.807}
\newcommand{\vEbLPqRandB}{0.746}
\newcommand{\vEbLPqRandC}{0.610}
\newcommand{\vEbLPqRealA}{0.433}
\newcommand{\vEbLPqRealB}{0.391}
\newcommand{\vEbLPqRealC}{0.228}
\newcommand{\vEbLRatioMed}{0.990}
\newcommand{\vEbLRatioQhi}{1.086}
\newcommand{\vEbLRatioQlo}{0.871}
\newcommand{\vEbLRealQueriesMax}{512}
\newcommand{\vEbLTailBeats}{18}
\newcommand{\vEbLTailCells}{30}
\newcommand{\vEbLTailDumps}{18}
\newcommand{\vEbLTailKeptDumps}{3}
\newcommand{\vEbLTailRhoMax}{43.9}
\newcommand{\vEbLTercRatioA}{0.992}
\newcommand{\vEbLTercRatioB}{0.968}
\newcommand{\vEbLTercRatioC}{0.995}
\newcommand{\vEbLThinRandA}{0.802}
\newcommand{\vEbLThinRandAMax}{0.888}
\newcommand{\vEbLThinRandB}{0.752}
\newcommand{\vEbLThinRandBMax}{0.834}
\newcommand{\vEbLThinRandC}{0.671}
\newcommand{\vEbLThinRandCMax}{0.763}
\newcommand{\vEbLThinRealA}{0.469}
\newcommand{\vEbLThinRealAMax}{0.602}
\newcommand{\vEbLThinRealB}{0.371}
\newcommand{\vEbLThinRealBMax}{0.444}
\newcommand{\vEbLThinRealC}{0.235}
\newcommand{\vEbLThinRealCMax}{0.262}
\newcommand{\vEbLUnifRandA}{0.794}
\newcommand{\vEbLUnifRandAMax}{0.867}
\newcommand{\vEbLUnifRandB}{0.760}
\newcommand{\vEbLUnifRandBMax}{0.856}
\newcommand{\vEbLUnifRandC}{0.667}
\newcommand{\vEbLUnifRandCMax}{0.756}
\newcommand{\vEbLUnifRealA}{0.463}
\newcommand{\vEbLUnifRealAMax}{0.625}
\newcommand{\vEbLUnifRealB}{0.383}
\newcommand{\vEbLUnifRealBMax}{0.447}
\newcommand{\vEbLUnifRealC}{0.237}
\newcommand{\vEbLUnifRealCMax}{0.279}
\newcommand{\vEbQBeatsRealAll}{273}
\newcommand{\vEbQBeatsRealClean}{35}
\newcommand{\vEbQBkvOversizeMax}{1.30}
\newcommand{\vEbQBkvRandA}{1.050}
\newcommand{\vEbQBkvRandAMax}{1.079}
\newcommand{\vEbQBkvRandB}{0.989}
\newcommand{\vEbQBkvRandBMax}{1.053}
\newcommand{\vEbQBkvRandC}{0.902}
\newcommand{\vEbQBkvRandCMax}{0.984}
\newcommand{\vEbQBkvRealA}{0.575}
\newcommand{\vEbQBkvRealAMax}{0.707}
\newcommand{\vEbQBkvRealB}{0.453}
\newcommand{\vEbQBkvRealBMax}{0.624}
\newcommand{\vEbQBkvRealC}{0.300}
\newcommand{\vEbQBkvRealCMax}{0.401}
\newcommand{\vEbQCapDumps}{18}
\newcommand{\vEbQCapHeads}{9}
\newcommand{\vEbQCellsRealAll}{540}
\newcommand{\vEbQCellsRealClean}{90}
\newcommand{\vEbQCleanDumps}{9}
\newcommand{\vEbQConcBody}{L00H00 & 193.8 & 292.6 & 4096--32768 & 0.07 & 0 of 6 \\
L00H03 & 37.9 & 586.4 & 4096--32768 & 0.02 & 0 of 6 \\
L03H01 & 52.7 & 298.9 & 4096--32768 & 0.07 & 0 of 6 \\
L07H01 & 48.9 & 3.9 & 4096--32768 & 0.70 & 6 of 6 \\
L15H00 & 66.9 & 12.1 & 4096--32768 & 0.50 & 5 of 6 \\
L17H00 & 56.4 & 5.5 & 4096--32768 & 0.57 & 6 of 6 \\
L25H03 & 62.7 & 6.7 & 4096--32768 & 0.66 & 6 of 6 \\
L27H01 & 558.2 & 64.8 & 4096--32768 & 0.38 & 0 of 6 \\
L27H03 & 49.6 & 14.2 & 4096--32768 & 0.41 & 3 of 6 \\}
\newcommand{\vEbQDumps}{54}
\newcommand{\vEbQHardMaxDumps}{26}
\newcommand{\vEbQHardMaxHeads}{3}
\newcommand{\vEbQHhRandA}{1.055}
\newcommand{\vEbQHhRandB}{1.014}
\newcommand{\vEbQHhRandC}{0.976}
\newcommand{\vEbQHhRealA}{0.362}
\newcommand{\vEbQHhRealB}{0.190}
\newcommand{\vEbQHhRealC}{0.046}
\newcommand{\vEbQKeptDumps}{28}
\newcommand{\vEbQKeptHeads}{6}
\newcommand{\vEbQLengthsMax}{32768}
\newcommand{\vEbQLengthsMin}{4096}
\newcommand{\vEbQOhCells}{140}
\newcommand{\vEbQOhRandA}{1.053}
\newcommand{\vEbQOhRandAMax}{1.088}
\newcommand{\vEbQOhRandB}{0.996}
\newcommand{\vEbQOhRandBMax}{1.048}
\newcommand{\vEbQOhRandC}{0.897}
\newcommand{\vEbQOhRandCMax}{0.975}
\newcommand{\vEbQOhRealA}{0.573}
\newcommand{\vEbQOhRealAMax}{0.706}
\newcommand{\vEbQOhRealB}{0.449}
\newcommand{\vEbQOhRealBMax}{0.582}
\newcommand{\vEbQOhRealC}{0.311}
\newcommand{\vEbQOhRealCMax}{0.403}
\newcommand{\vEbQOhWins}{64}
\newcommand{\vEbQOsbCells}{140}
\newcommand{\vEbQOsbRandA}{1.034}
\newcommand{\vEbQOsbRandAMax}{1.083}
\newcommand{\vEbQOsbRandB}{0.985}
\newcommand{\vEbQOsbRandBMax}{1.052}
\newcommand{\vEbQOsbRandC}{0.900}
\newcommand{\vEbQOsbRandCMax}{0.971}
\newcommand{\vEbQOsbRealA}{0.574}
\newcommand{\vEbQOsbRealAMax}{0.685}
\newcommand{\vEbQOsbRealB}{0.444}
\newcommand{\vEbQOsbRealBMax}{0.577}
\newcommand{\vEbQOsbRealC}{0.305}
\newcommand{\vEbQOsbRealCMax}{0.391}
\newcommand{\vEbQOsbWins}{70}
\newcommand{\vEbQOursOversizeMax}{1.03}
\newcommand{\vEbQPqRandA}{1.069}
\newcommand{\vEbQPqRandB}{0.995}
\newcommand{\vEbQPqRandC}{0.952}
\newcommand{\vEbQPqRealA}{0.636}
\newcommand{\vEbQPqRealB}{0.458}
\newcommand{\vEbQPqRealC}{0.283}
\newcommand{\vEbQRatioMed}{1.002}
\newcommand{\vEbQRatioQhi}{1.081}
\newcommand{\vEbQRatioQlo}{0.917}
\newcommand{\vEbQRealQueriesMax}{896}
\newcommand{\vEbQSeeds}{10}
\newcommand{\vEbQTailBeats}{43}
\newcommand{\vEbQTailCells}{90}
\newcommand{\vEbQTailDumps}{18}
\newcommand{\vEbQTailKeptDumps}{9}
\newcommand{\vEbQTailRhoMax}{50.6}
\newcommand{\vEbQTercRatioA}{1.002}
\newcommand{\vEbQTercRatioB}{1.019}
\newcommand{\vEbQTercRatioC}{0.996}
\newcommand{\vEbQThinRandA}{1.049}
\newcommand{\vEbQThinRandAMax}{1.091}
\newcommand{\vEbQThinRandB}{0.991}
\newcommand{\vEbQThinRandBMax}{1.047}
\newcommand{\vEbQThinRandC}{0.889}
\newcommand{\vEbQThinRandCMax}{0.986}
\newcommand{\vEbQThinRealA}{0.578}
\newcommand{\vEbQThinRealAMax}{0.671}
\newcommand{\vEbQThinRealB}{0.454}
\newcommand{\vEbQThinRealBMax}{0.506}
\newcommand{\vEbQThinRealC}{0.291}
\newcommand{\vEbQThinRealCMax}{0.413}
\newcommand{\vEbQUnifRandA}{1.038}
\newcommand{\vEbQUnifRandAMax}{1.078}
\newcommand{\vEbQUnifRandB}{0.981}
\newcommand{\vEbQUnifRandBMax}{1.046}
\newcommand{\vEbQUnifRandC}{0.919}
\newcommand{\vEbQUnifRandCMax}{1.007}
\newcommand{\vEbQUnifRealA}{0.550}
\newcommand{\vEbQUnifRealAMax}{0.676}
\newcommand{\vEbQUnifRealB}{0.490}
\newcommand{\vEbQUnifRealBMax}{0.565}
\newcommand{\vEbQUnifRealC}{0.291}
\newcommand{\vEbQUnifRealCMax}{0.415}
\newcommand{\vEcLBkvPostDiv}{1.005}
\newcommand{\vEcLBkvRatio}{1.000}
\newcommand{\vEcLBody}{Uniform sampling & $n/16$ & 0.481 & 0.485 & 0.998 & 54 of 54 & 6 & 0.08 \\
BalanceKV & $n/16$ & 0.456 & 0.437 & 1.000 & 54 of 54 & 5 & 0.06 \\
\textbf{Halving, re-centred subsample} & $n/16$ & 0.449 & 0.454 & 1.000 & 54 of 54 & 5 & 0.10 \\
Heavy hitters (oracle) & $n/16$ & 0.082 & 0.089 & 1.080 & 45 of 54 & 24 & 0.21 \\
Uniform sampling & $n/4$ & 0.328 & 0.332 & 1.000 & 54 of 54 & 8 & 0.07 \\
BalanceKV & $n/4$ & 0.334 & 0.343 & 1.000 & 54 of 54 & 5 & 0.06 \\
\textbf{Halving, re-centred subsample} & $n/4$ & 0.317 & 0.286 & 1.000 & 54 of 54 & 5 & 0.08 \\
Heavy hitters (oracle) & $n/4$ & 0.027 & 0.034 & 1.103 & 54 of 54 & 14 & 0.09 \\}
\newcommand{\vEcLCacheMax}{8192}
\newcommand{\vEcLCacheMin}{3957}
\newcommand{\vEcLHhBEq}{9}
\newcommand{\vEcLHhBGt}{26}
\newcommand{\vEcLHhBHeads}{9}
\newcommand{\vEcLHhBHeadsAbove}{6}
\newcommand{\vEcLHhBLt}{10}
\newcommand{\vEcLHhCEq}{3}
\newcommand{\vEcLHhCGt}{34}
\newcommand{\vEcLHhCHeads}{9}
\newcommand{\vEcLHhCHeadsAbove}{8}
\newcommand{\vEcLHhCLt}{17}
\newcommand{\vEcLHhPostDiv}{1.035}
\newcommand{\vEcLHhRatio}{1.086}
\newcommand{\vEcLHhRatioAll}{1.047}
\newcommand{\vEcLOsbBEq}{9}
\newcommand{\vEcLOsbBGt}{22}
\newcommand{\vEcLOsbBLt}{23}
\newcommand{\vEcLOsbPostDiv}{1.001}
\newcommand{\vEcLOsbRatio}{1.000}
\newcommand{\vEcLUnifPostDiv}{1.008}
\newcommand{\vEcLUnifRatio}{1.000}
\newcommand{\vEcQBkvPostDiv}{1.005}
\newcommand{\vEcQBkvRatio}{1.000}
\newcommand{\vEcQBody}{Uniform sampling & $n/16$ & 0.582 & 0.567 & 0.984 & 54 of 54 & 4 & 0.09 \\
BalanceKV & $n/16$ & 0.645 & 0.634 & 1.006 & 54 of 54 & 4 & 0.08 \\
\textbf{Halving, re-centred subsample} & $n/16$ & 0.651 & 0.680 & 0.993 & 54 of 54 & 4 & 0.07 \\
Heavy hitters (oracle) & $n/16$ & 0.212 & 0.254 & 1.110 & 36 of 54 & 98 & 0.44 \\
Uniform sampling & $n/4$ & 0.506 & 0.494 & 1.000 & 45 of 54 & 38 & 0.20 \\
BalanceKV & $n/4$ & 0.508 & 0.528 & 1.000 & 54 of 54 & 10 & 0.11 \\
\textbf{Halving, re-centred subsample} & $n/4$ & 0.503 & 0.514 & 1.000 & 54 of 54 & 5 & 0.09 \\
Heavy hitters (oracle) & $n/4$ & 0.067 & 0.087 & 1.125 & 18 of 54 & 256 & 1.00 \\}
\newcommand{\vEcQCacheMax}{8192}
\newcommand{\vEcQCacheMin}{4114}
\newcommand{\vEcQHhBEq}{10}
\newcommand{\vEcQHhBGt}{23}
\newcommand{\vEcQHhBHeads}{9}
\newcommand{\vEcQHhBHeadsAbove}{8}
\newcommand{\vEcQHhBLt}{3}
\newcommand{\vEcQHhCEq}{1}
\newcommand{\vEcQHhCGt}{16}
\newcommand{\vEcQHhCHeads}{9}
\newcommand{\vEcQHhCHeadsAbove}{8}
\newcommand{\vEcQHhCLt}{1}
\newcommand{\vEcQHhPostDiv}{1.071}
\newcommand{\vEcQHhRatio}{1.121}
\newcommand{\vEcQHhRatioAll}{1.000}
\newcommand{\vEcQOsbBEq}{3}
\newcommand{\vEcQOsbBGt}{22}
\newcommand{\vEcQOsbBLt}{29}
\newcommand{\vEcQOsbOpenB}{0.651}
\newcommand{\vEcQOsbPostDiv}{1.010}
\newcommand{\vEcQOsbRatio}{1.000}
\newcommand{\vEcQRows}{54}
\newcommand{\vEcQSteps}{256}
\newcommand{\vEcQTokenLength}{8192}
\newcommand{\vEcQUnifOpenB}{0.582}
\newcommand{\vEcQUnifPostDiv}{1.004}
\newcommand{\vEcQUnifRatio}{1.000}
\newcommand{\vEdCensoredFits}{0}
\newcommand{\vEdConstructionCells}{12}
\newcommand{\vEdFits}{60}
\newcommand{\vEdFullBody}{Sphere, $\dv=1$ & Uniform sampling & 0.54 [0.40, 0.60] & 0.81 [0.76, 0.93] & 0.57 [0.40, 0.72] \\
 & BalanceKV & 0.48 [0.37, 0.59] & \textbf{0.97$^{\dagger}$} [0.79, 1.19] & 0.68 [0.44, 1.00] \\
 & Kernel halving & 0.42 [0.33, 0.53] & 0.90 [0.83, 1.07] & 0.58 [0.48, 0.81] \\
 & \textbf{Halving, re-centred subsample} & 0.64 [0.44, 0.84] & \textbf{1.08} [0.91, 1.14] & \textbf{0.98} [0.72, 1.15] \\
 & \textbf{Halving} & 0.58 [0.32, 0.86] & \textbf{1.10} [0.98, 1.26] & \textbf{1.03} [0.82, 1.22] \\
Sphere, $\dv=8$ & Uniform sampling & 0.34 [0.31, 0.38] & 0.58 [0.54, 0.63] & 0.27 [0.25, 0.27] \\
 & BalanceKV & 0.36 [0.25, 0.47] & 0.65 [0.60, 0.71] & 0.35 [0.20, 0.39] \\
 & Kernel halving & 0.34 [0.30, 0.39] & 0.65 [0.57, 0.69] & 0.33 [0.31, 0.40] \\
 & \textbf{Halving, re-centred subsample} & 0.42 [0.34, 0.51] & 0.90 [0.84, 0.93] & 0.60 [0.50, 0.74] \\
 & \textbf{Halving} & 0.41 [0.35, 0.45] & 0.90 [0.86, 0.96] & 0.66 [0.57, 0.75] \\
Clustered, $\dv=1$ & Uniform sampling & 0.42 [0.31, 0.59] & 0.57 [0.45, 0.72] & 0.44 [0.32, 0.73] \\
 & BalanceKV & 0.69 [0.25, 1.19] & 0.60 [0.44, 0.92] & 0.37 [0.21, 0.73] \\
 & Kernel halving & 0.42 [0.25, 0.79] & 0.52 [0.41, 0.67] & 0.50 [0.34, 0.63] \\
 & \textbf{Halving, re-centred subsample} & 0.72$^{\dagger}$ [0.17, 1.49] & 0.69$^{\dagger}$ [0.52, 0.86] & 0.70$^{\dagger}$ [0.54, 1.01] \\
 & \textbf{Halving} & 0.83$^{\dagger}$ [0.40, 1.47] & 0.77$^{\dagger}$ [0.54, 1.17] & 0.63$^{\dagger}$ [0.45, 1.06] \\
Clustered, $\dv=8$ & Uniform sampling & 0.33 [0.29, 0.43] & 0.41 [0.37, 0.49] & 0.27 [0.23, 0.32] \\
 & BalanceKV & 0.33 [0.22, 0.45] & 0.55 [0.26, 0.66] & 0.28 [0.03, 0.43] \\
 & Kernel halving & 0.37 [0.29, 0.41] & 0.43 [0.34, 0.54] & 0.29 [0.22, 0.38] \\
 & \textbf{Halving, re-centred subsample} & 0.51$^{\dagger}$ [0.44, 0.70] & 0.57 [0.52, 0.68] & 0.51 [0.34, 0.65] \\
 & \textbf{Halving} & 0.52 [0.41, 0.78] & 0.60 [0.49, 0.73] & 0.49 [0.36, 0.66] \\}
\newcommand{\vEdFullDaggered}{8}
\newcommand{\vEdFullWithin}{5}
\newcommand{\vEdGainHi}{5.3}
\newcommand{\vEdGainLo}{3.2}
\newcommand{\vEdMaxSlope}{1.10}
\newcommand{\vEdMonotone}{31\%}
\newcommand{\vEdOffGridFrac}{2\%}
\newcommand{\vEdOffGridTotal}{1800}
\newcommand{\vEdOhWithin}{2}
\newcommand{\vEdOsbRatioRa}{0.18}
\newcommand{\vEdOsbRatioRb}{0.20}
\newcommand{\vEdOsbRatioRc}{0.28}
\newcommand{\vEdOsbWithin}{2}
\newcommand{\vEdOursHighest}{12 of 12}
\newcommand{\vEdRatioBody}{Sphere, $\dk=2$, $\dv=1$ & 0.11 & 0.13 & 0.15 \\
Sphere, $\dk=2$, $\dv=8$ & 0.19 & 0.21 & 0.24 \\
Sphere, $\dk=8$, $\dv=1$ & 0.14 & 0.21 & 0.33 \\
Sphere, $\dk=8$, $\dv=8$ & 0.25 & 0.43 & 0.73 \\
Sphere, $\dk=64$, $\dv=1$ & 0.26 & 0.43 & 1.02 \\
Sphere, $\dk=64$, $\dv=8$ & 0.31 & 0.48 & 0.98 \\
Clustered, $\dk=2$, $\dv=1$ & \tbd & 0.12 & 0.26 \\
Clustered, $\dk=2$, $\dv=8$ & 0.14 & 0.17 & 0.29 \\
Clustered, $\dk=8$, $\dv=1$ & 0.16 & 0.15 & 0.20 \\
Clustered, $\dk=8$, $\dv=8$ & 0.15 & 0.16 & 0.27 \\
Clustered, $\dk=64$, $\dv=1$ & 0.31 & 0.28 & 0.47 \\
Clustered, $\dk=64$, $\dv=8$ & 0.26 & 0.36 & 0.51 \\}
\newcommand{\vEdRatioCellsRa}{11}
\newcommand{\vEdRatioCellsRb}{12}
\newcommand{\vEdRatioCellsRc}{12}
\newcommand{\vEdRatioRa}{0.19}
\newcommand{\vEdRatioRb}{0.21}
\newcommand{\vEdRatioRc}{0.31}
\newcommand{\vEdRatioWinsRa}{11}
\newcommand{\vEdRatioWinsRb}{12}
\newcommand{\vEdRatioWinsRc}{11}
\newcommand{\vEdSeeds}{10}
\newcommand{\vEdSlopeBody}{Sphere, $\dv=1$ & Uniform sampling & 0.54 [0.40, 0.60] & 0.81 [0.76, 0.93] & 0.57 [0.40, 0.72] \\
Sphere, $\dv=1$ & \textbf{Halving} & 0.58 [0.32, 0.86] & \textbf{1.10} [0.98, 1.26] & \textbf{1.03} [0.82, 1.22] \\
Sphere, $\dv=8$ & Uniform sampling & 0.34 [0.31, 0.38] & 0.58 [0.54, 0.63] & 0.27 [0.25, 0.27] \\
Sphere, $\dv=8$ & \textbf{Halving} & 0.41 [0.35, 0.45] & 0.90 [0.86, 0.96] & 0.66 [0.57, 0.75] \\
Clustered, $\dv=1$ & Uniform sampling & 0.42 [0.31, 0.59] & 0.57 [0.45, 0.72] & 0.44 [0.32, 0.73] \\
Clustered, $\dv=1$ & \textbf{Halving} & 0.83$^{\dagger}$ [0.40, 1.47] & 0.77$^{\dagger}$ [0.54, 1.17] & 0.63$^{\dagger}$ [0.45, 1.06] \\
Clustered, $\dv=8$ & Uniform sampling & 0.33 [0.29, 0.43] & 0.41 [0.37, 0.49] & 0.27 [0.23, 0.32] \\
Clustered, $\dv=8$ & \textbf{Halving} & 0.52 [0.41, 0.78] & 0.60 [0.49, 0.73] & 0.49 [0.36, 0.66] \\}
\newcommand{\vEdViolation}{0.006}
\newcommand{\vEfBkvRa}{0.025}
\newcommand{\vEfBkvRb}{0.034}
\newcommand{\vEfBkvRc}{0.051}
\newcommand{\vEfBkvRd}{0.069}
\newcommand{\vEfN}{65536}
\newcommand{\vEfOhRa}{0.011}
\newcommand{\vEfOhRb}{0.018}
\newcommand{\vEfOhRc}{0.026}
\newcommand{\vEfOhRd}{0.038}
\newcommand{\vEfOsbRa}{0.011}
\newcommand{\vEfOsbRb}{0.016}
\newcommand{\vEfOsbRc}{0.024}
\newcommand{\vEfOsbRd}{0.038}
\newcommand{\vEfSize}{1024}
\newcommand{\vEfThinRa}{0.024}
\newcommand{\vEfThinRb}{0.035}
\newcommand{\vEfThinRc}{0.053}
\newcommand{\vEfThinRd}{0.068}
\newcommand{\vEfUnifRa}{0.031}
\newcommand{\vEfUnifRb}{0.040}
\newcommand{\vEfUnifRc}{0.058}
\newcommand{\vEfUnifRd}{0.074}
\newcommand{\vKernelBody}{Qwen2.5-7B-Instruct & $[0,50)$ & 14 & 4 & 256.0 & 14/14 & 0.972 & 0.179 & 0 \\
 & $[50,100)$ & 28 & 7 & 256.0 & 28/28 & 0.962 & 0.039 & 0 \\
 & $[100,200)$ & 4 & 1 & 158.0 & 0/4 & 0.951 & $3.20\!\times\!10^{-6}$ & 1 \\
 & $\ge 200$ & 8 & 2 & 133.5 & 0/8 & 0.967 & $3.31\!\times\!10^{-11}$ & 5 \\
Llama-3-8B-Instruct & $[0,50)$ & 28 & 7 & 256.0 & 28/28 & 0.962 & 0.094 & 0 \\
 & $[50,100)$ & 26 & 8 & 256.0 & 24/26 & 0.964 & 0.041 & 0 \\}
\newcommand{\vKnLAbHi}{0.094}
\newcommand{\vKnLAbLo}{0.041}
\newcommand{\vKnLHeadsHi}{0}
\newcommand{\vKnLOperSat}{52/54}
\newcommand{\vKnLResid}{0.007}
\newcommand{\vKnQAbHi}{0.179}
\newcommand{\vKnQAbLo}{0.039}
\newcommand{\vKnQCap}{256}
\newcommand{\vKnQCollapse}{100}
\newcommand{\vKnQDegenHi}{6}
\newcommand{\vKnQDumpsHi}{12}
\newcommand{\vKnQHeads}{9}
\newcommand{\vKnQHeadsHi}{2}
\newcommand{\vKnQOperSat}{42/42}
\newcommand{\vKnQResid}{0.015}
\newcommand{\vMeLBkvFrac}{15\%}
\newcommand{\vMeLBkvRatio}{1.000}
\newcommand{\vMeLDumps}{25}
\newcommand{\vMeLHeads}{9}
\newcommand{\vMeLHhFrac}{96\%}
\newcommand{\vMeLHhRatio}{0.250}
\newcommand{\vMeLMaxFrac}{15\%}
\newcommand{\vMeLOhFrac}{15\%}
\newcommand{\vMeLOhRatio}{1.000}
\newcommand{\vMeLOsbFrac}{4\%}
\newcommand{\vMeLOsbRatio}{1.000}
\newcommand{\vMeLPqFrac}{38\%}
\newcommand{\vMeLPqRatio}{1.000}
\newcommand{\vMeLThinFrac}{6\%}
\newcommand{\vMeLThinRatio}{1.000}
\newcommand{\vMeQBkvFrac}{19\%}
\newcommand{\vMeQBkvRatio}{1.000}
\newcommand{\vMeQDumps}{28}
\newcommand{\vMeQHeads}{6}
\newcommand{\vMeQHhFrac}{93\%}
\newcommand{\vMeQHhRatio}{0.250}
\newcommand{\vMeQMaxFrac}{25\%}
\newcommand{\vMeQOhFrac}{25\%}
\newcommand{\vMeQOhRatio}{1.000}
\newcommand{\vMeQOsbFrac}{24\%}
\newcommand{\vMeQOsbRatio}{1.000}
\newcommand{\vMeQPqFrac}{46\%}
\newcommand{\vMeQPqRatio}{1.000}
\newcommand{\vMeQThinFrac}{21\%}
\newcommand{\vMeQThinRatio}{1.000}

\newcommand{\vRunEaLGpu}{RTX 5090}
\newcommand{\vRunEaLHours}{0.6}
\newcommand{\vRunEaQGpu}{RTX 5090}
\newcommand{\vRunEaQHours}{0.4}
\newcommand{\vRunEbLGpu}{RTX 3090}
\newcommand{\vRunEbLHours}{6.2}
\newcommand{\vRunEbQGpu}{RTX 3090}
\newcommand{\vRunEbQHours}{5.2}
\newcommand{\vRunEcLGpu}{RTX 5090}
\newcommand{\vRunEcLHours}{0.4}
\newcommand{\vRunEcQGpu}{RTX 5090}
\newcommand{\vRunEcQHours}{0.4}
\newcommand{\vRunEdGpu}{RTX 3090/RTX 5090}
\newcommand{\vRunEdHours}{10.8}
\newcommand{\vScLExp}{0.130}
\newcommand{\vScLExpHi}{0.156}
\newcommand{\vScLExpKappa}{0.113}
\newcommand{\vScLExpLo}{0.118}
\newcommand{\vScLExpSe}{0.025}
\newcommand{\vScLExpVrho}{0.030}
\newcommand{\vScLFillFrac}{10\%}
\newcommand{\vScLRhoMinHead}{23.877}
\newcommand{\vScQExp}{0.073}
\newcommand{\vScQExpHi}{0.078}
\newcommand{\vScQExpKappa}{0.057}
\newcommand{\vScQExpLo}{0.062}
\newcommand{\vScQExpSe}{0.006}
\newcommand{\vScQExpVrho}{0.014}
\newcommand{\vScQFillFrac}{8\%}
\newcommand{\vScQLever}{64}
\newcommand{\vScQRhoMinHead}{24.100}
\newcommand{\vScQShortest}{512}
\newcommand{\vScaleBody}{Qwen2.5-7B-Instruct & 512 & 112 & 250 & 18.287 & 2.288 & 41.954 \\
 & 1024 & 112 & 250 & 19.143 & 2.319 & 44.721 \\
 & 2048 & 112 & 250 & 20.047 & 2.339 & 47.942 \\
 & 4096 & 112 & 36 & 20.502 & 2.398 & 49.685 \\
 & 8192 & 112 & 63 & 21.474 & 2.420 & 52.368 \\
 & 16384 & 112 & 62 & 22.517 & 2.445 & 56.454 \\
 & 32768 & 112 & 62 & 23.259 & 2.396 & 55.946 \\
Llama-3-8B-Instruct & 512 & 256 & 250 & 20.901 & 1.538 & 32.363 \\
 & 1024 & 256 & 250 & 21.770 & 1.537 & 33.869 \\
 & 2048 & 256 & 250 & 23.040 & 1.535 & 35.633 \\
 & 4096 & 256 & 27 & 23.724 & 1.542 & 36.693 \\
 & 8192 & 256 & 63 & 26.348 & 1.582 & 41.335 \\
 & 16384 & 256 & 60 & 31.572 & 1.870 & 56.120 \\
 & 32768 & 256 & 62 & 32.365 & 1.621 & 50.990 \\}
\newcommand{\vSepKaBound}{5.886}
\newcommand{\vSepKaObl}{16.010}
\newcommand{\vSepKaPq}{0.878}
\newcommand{\vSepKaWitness}{16.010}
\newcommand{\vSepKbBound}{11.772}
\newcommand{\vSepKbObl}{32.005}
\newcommand{\vSepKbPq}{0.813}
\newcommand{\vSepKbWitness}{32.005}
\newcommand{\vSepKcBound}{23.544}
\newcommand{\vSepKcObl}{64.003}
\newcommand{\vSepKcPq}{1.000}
\newcommand{\vSepKcWitness}{64.003}
\newcommand{\vSepSeeds}{8}

\title{Query-Oblivious Coresets for Softmax Attention:\\
Improved Bounds and Efficient Constructions}
\author{Ofek I. Cohen}
\date{September 2026}

\begin{document}
\maketitle
{\let\thefootnote\relax\footnotetext{Tel Aviv University. \texttt{ofeki@mail.tau.ac.il}}}

\begin{abstract}
A query-oblivious coreset for a softmax-attention head is a subset of the
key--value pairs whose attention output is within $\varepsilon$ of the full
one for every query in a ball. Liberty, Andoni and Kleiner proved that
unweighted coresets of size
$O(\sqrt d e^{\rho+\frac12\log\rho+o(\log\log\rho)}/\varepsilon)$ exist,
$\rho$ the query radius times the centred key radius, against a lower bound
$\Omega(\sqrt d e^{\rho}/\varepsilon)$, and conjectured that closing the gap
needs new techniques. It does not: a spherical lift of both balls into one
exponential-kernel instance lets the Bozzai--Rothvoss chaining bound apply,
and Chevet's inequality splits key from value dimension, giving coresets of
size $O(e^{\rho}(\sqrt{d_v}+\sqrt{d_k\log(1+\rho)})/\varepsilon)$ in
randomised polynomial time, the first constructive whole-ball guarantee
within $\sqrt{\log(1+\rho)}$ of the lower bound. A sampling cap
$O(e^{2\rho}/\varepsilon^{2})$ completes the envelope; in fixed dimension
Tai's diameter-free bound removes the logarithm, settling the
Gaussian-restriction case of a Bozzai--Rothvoss question for the exponential
and Hellinger kernels. We give the Liberty--Andoni--Kleiner lower bound a
full centred proof and transfer the one-way communication bounds of Chen et
al.; the dimensional factor is the price of one signing for all queries. A
census of every head of Qwen2.5-7B-Instruct and Llama-3-8B-Instruct finds
$\rho$ at least \vScLRhoMinHead, so every whole-ball bound with a factor
$e^{\rho}/\varepsilon$ prescribes a coreset larger than the cache: the
algorithmic contribution is asymptotic on these models.
\end{abstract}

% =====================================================================
\section{Introduction}
\label{sec:intro}

Fix an attention head with keys $k_1,\dots,k_n\in\R^{\dk}$ and values
$v_1,\dots,v_n\in\R^{\dv}$, and write
\begin{equation}
\label{eq:attn}
  A(q)=\sum_{i=1}^{n}e^{q^{\!\top}k_i}v_i,\qquad
  B(q)=\sum_{i=1}^{n}e^{q^{\!\top}k_i},\qquad
  \Attn(q)=\frac{A(q)}{B(q)} .
\end{equation}
For $S\subseteq[n]$ write $\Attn_S$ for the same expression over $S$. An
unweighted \emph{query-oblivious $\eps$-coreset} is a subset $S$ with
\begin{equation}
\label{eq:coreset}
  \sup_{\norm{q}\le\vrho}\;\norm{\Attn_S(q)-\Attn(q)}_2\;\le\;\eps .
\end{equation}
Compressing the key--value cache of a transformer to such a subset keeps
attention correct for every query the model may later produce, without
reweighting and without knowing the queries. The governing parameter is
\begin{equation}
\label{eq:rho}
  \rho:=\vrho\,\kappa,\qquad
  \kappa\ge\max_i\norm{k_i-\bar k},\quad \bar k=\tfrac1n\sum_ik_i ,
\end{equation}
the product of the query radius and the \emph{centred} key radius. Translating
all keys by $-\bar k$ multiplies numerator and denominator in~\eqref{eq:attn}
by the same factor $e^{-q^{\!\top}\bar k}$, for the full set and for every
subset alike, so~\eqref{eq:coreset} is invariant under centring; the
centred radius can be twice the uncentred one, and a theorem phrased with
$\max_i\norm{k_i}$ silently permits $e^{2\rho}$. Throughout, after
centring and rescaling once, we may and do assume the normalised model
\begin{equation}
\label{eq:norm}
  \tfrac1n\sum_ik_i=0,\qquad \norm{k_i}\le1,\qquad \norm{q}\le\rho,\qquad \norm{v_i}\le1 .
\end{equation}
Centring and the joint rescaling $k\mapsto(k-\bar k)/\kappa$, $q\mapsto\kappa q$
leave~\eqref{eq:coreset} unchanged; dividing every $v_i$ by $\max_i\norm{v_i}$
scales its left side by the same factor, so $\eps$ is measured in units of $\max_i\norm{v_i}$.
Two facts about~\eqref{eq:norm} are used repeatedly: every weight
$e^{q^{\!\top}k_i}$ lies in $[e^{-\rho},e^{\rho}]$, and by Jensen's inequality
$B(q)\ge n\exp(q^{\!\top}\bar k)=n$ for every $q$.

Liberty, Andoni and Kleiner~\cite{LAK26} proved that an unweighted coreset of
size $O(\sqrt d\,e^{\zeta}/\eps)$ exists, $\zeta=\rho+\tfrac12\log\rho+o(\log\log\rho)$,
$d=\dk=\dv$, against a lower bound $\Omega(\sqrt d\,e^{\rho}/\eps)$ under
conditions on $d$ and $\eps$ relative to $\rho$. They identified the vector
values as their main difficulty, conjectured that closing the residual
$\rho^{1/2+o(1)}$ (a $\sqrt{\log n}$ in the worst case) would ``require
completely different approaches'', and gave no algorithm. This paper closes
most of that gap by two reductions to known kernel-discrepancy theorems and
gives the first polynomial-time construction with the uniform
guarantee~\eqref{eq:coreset} at this size. Whether $e^{\rho}$ is moderate on
transformer caches, and whether the construction of Theorem~\ref{thm:main-alg}
beats sampling there, is measured in Section~\ref{sec:exp} under a plan fixed
before any result was read into the paper: it is not, and it does not, and
the same section locates the
regime, $\rho\le4$ on synthetic instances, in which the construction needs
fewer pairs than sampling.

\subsection{Results}

\begin{theorem}[Upper envelope]
\label{thm:main}
Let $\eps\in(0,1]$, $\dk,\dv\ge1$, and let the instance satisfy~\eqref{eq:norm}.
There is an unweighted subset $S$ satisfying~\eqref{eq:coreset} with
\begin{equation}
\label{eq:envelope}
  |S|
  \;=\;O\!\left(\min\left\{
    n,\;
    \frac{e^{\rho}\bigl(\sqrt{\dv}+\sqrt{\dk\log(1+\rho)}\bigr)}{\eps},\;
    \frac{e^{2\rho}}{\eps^{2}},\;
    \frac{F(\dk,\dv)\,e^{\rho}}{\eps}
  \right\}\right),
\end{equation}
where $F(\dk,\dv)=7\sqrt{\dv}\,e^{\dv/4}(f_{\mathrm T}(\dk+1+2\dv)+1)$ and
$f_{\mathrm T}(D)$ is Tai's diameter-free Gaussian-kernel discrepancy constant
in $\R^{D}$, which is finite for every $D$ and, as far as is known, exponential
in $D$. Of the three non-trivial branches, the second has explicit constant $32$; the first has constants
explicit up to Dudley's constant and the Gaussian $\psi_2$ constant, given in
Appendix~\ref{app:coreset}.
\end{theorem}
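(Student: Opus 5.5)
The envelope is a minimum of four terms, so it suffices to prove each branch separately; the branch $n$ is trivial ($S=[n]$). Every non-trivial branch uses the same reduction. For a subset $S$ with $|S|=m$, write $A_S,B_S$ for the subset sums and compare the rescaled quantities $\tfrac{n}{m}A_S$, $\tfrac{n}{m}B_S$ with $A,B$. Since $B(q)\ge n$ by Jensen and $\norm{\Attn}\le1$,
\[
  \norm{\Attn_S(q)-\Attn(q)}
  \le\frac{\norm{\tfrac nm A_S(q)-A(q)}+\norm{\tfrac nm B_S(q)-B(q)}}{\tfrac nm B_S(q)} .
\]
The denominator is at least about $n$ once the $B$-error is at most $n/2$. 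It is therefore enough to make the normalised errors $\tfrac1n\norm{\tfrac nm A_S-A}$ and $\tfrac1n|\tfrac nm B_S-B|$ at most $c\eps$ uniformly over $\norm q\le\rho$.

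\textbf{Sampling branch.} Sample $m$ indices uniformly without replacement. For fixed $q$ the summands $e^{q^\top k_i}v_i$ have norm at most $e^{\rho}$ and the normalised sums have mean $B(q)/n\ge1$, so the relative variance is $O(e^{2\rho}/m)$. The plan is to control $\E\sup_q$ directly rather than through a net. By Cauchy--Schwarz, the ratio error for a fixed $q$ is at most the $L^2(\mu_q)$ distance between the empirical and true softmax-weighted distributions, and that distance is at most $\chi^2$-type. Averaging over the sample bounds this by $e^{2\rho}/m$ independently of $q$, but the supremum over $q$ still needs uniformity. To get it, I would use the importance representation: reweighting by $e^{q^\top k}$ turns uniform sampling of size $m$ into effective sample size $m\,B^2/(n\sum e^{2q^\top k})\ge m e^{-2\rho}$. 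Combined with the $L^2$ bound, this should give the explicit constant $32$ via Markov on the worst $q$. I expect that uniformity to require care, and would fall back on the one-dimensional structure of the exponential weights if it fails.

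\textbf{Chaining branch.} This is the main step. The spherical lift maps $k_i\mapsto(k_i,\sqrt{1-\norm{k_i}^2})$ and $q\mapsto$ a point on a sphere of radius $\rho$, so that $e^{q^\top k}$ becomes $e^{\rho}$ times a Gaussian kernel between unit-sphere points, $e^{-\rho\norm{x-y}^2/2}$. Values are absorbed by pairing each $v_i$ with a test direction $u\in S^{\dv-1}$, so that $\sup_{q,u}\inner{u}{A(q)}$ is a kernel sum indexed by $(q,u)$. I would then apply the Bozzai--Rothvoss chaining bound to random signings $\sigma\in\{\pm1\}^n$, with balancing constraints added so that $\sum\sigma_i=0$. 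Here Chevet's inequality bounds the Gaussian process $\sum\sigma_i g_i\,e^{q^\top k_i}\inner{u}{v_i}$ by
\[
  \sup_q\norm{\cdot}\,\sqrt{\dv}+\sup_u(\cdot)\,w(Q),
\]
where the Gaussian width of the query set in the kernel metric is $O(\sqrt{\dk\log(1+\rho)})$ by Dudley on a $\rho$-radius ball at Lipschitz scale $e^{\rho}$. One halving step gives discrepancy $O(e^{\rho}(\sqrt{\dv}+\sqrt{\dk\log(1+\rho)}))$. Iterating halving $\log(n/m)$ times produces geometric error terms dominated by the last step, which gives error $\eps n$ at $m=O(e^{\rho}(\cdots)/\eps)$. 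The Bozzai--Rothvoss partial colouring is constructive (via Lovett--Meka/Bansal-type walks), so this branch runs in randomised polynomial time. The hardest point is keeping the additive $\sqrt{\dv}$ separate from $\sqrt{\dk}$ rather than multiplying them; Chevet is exactly the tool for this, provided the lifted kernel gives a genuine Gaussian process with a factorised covariance.

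\textbf{Fixed-dimension branch.} Here I would replace the Dudley estimate by Tai's diameter-free bound $f_{\mathrm T}(D)$ for the Gaussian kernel in $\R^{D}$. Encode the vector values by a Gaussian factor $e^{\inner{u}{v}}$ in $2\dv$ extra coordinates, recovering $\inner{u}{v}$ from two such kernels. The extra coordinates cost the factor $e^{\dv/4}$, and a net over $u$ costs $\sqrt{\dv}$, giving dimension $\dk+1+2\dv$ and the stated $F$. The same halving recursion then yields $F(\dk,\dv)e^{\rho}/\eps$.
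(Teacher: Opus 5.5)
Your four-branch structure and the ratio reduction match the paper, but the sampling branch has a genuine gap. Everything you derive there, the $\chi^2$-type $L^2(\mu_q)$ bound and the effective sample size $m e^{-2\rho}$, controls $\E[\mathrm{err}(q)^2]$ for one fixed $q$ at a time. ``Markov on the worst $q$'' would instead need $\E\sup_q\mathrm{err}(q)^2$, and the maximising query depends on the sample. Going through a net of the query ball would put a factor growing like $\dk$ (times a logarithm) into the sample size, and the dimension-free $e^{2\rho}/\eps^{2}$ would be lost.

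The missing idea is to take the supremum after embedding in the feature space. Use $\inner{\Phi_0(q)}{\phi_0(k)}=e^{q^{\top}k}$, $\norm{\phi_0(k)}\le1$, $\norm{\Phi_0(q)}\le e^{\rho}$, and put $z_i=2^{-1/2}(\phi_0(k_i),\phi_0(k_i)\otimes v_i)$. Then $\norm{z_i}\le1$ and, sampling without replacement, $\E\norm{\frac1s\sum_{i\in S}z_i-\frac1n\sum_iz_i}^{2}\le1/s$. This single scalar controls every $(q,u)$ at once, by Cauchy--Schwarz against $\sqrt2\,\Phi_0(q)$ and $\sqrt2\,\Phi_0(q)\otimes u$. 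So some $S$ has averaged numerator and denominator errors at most $\delta=\sqrt2\,e^{\rho}s^{-1/2}$ uniformly in $q$. Your reduction then gives error at most $4\delta$, and $4\delta\le\eps$ is exactly $s\ge32e^{2\rho}/\eps^{2}$, which is where the $32$ comes from.

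The same point corrects your chaining branch. The process Chevet must bound is $\inner{g}{\Phi_0(q)\otimes u}$ on the feature space. Its expected supremum, $w(Q\otimes S^{\dv-1})\lesssim e^{\rho}\sqrt{\dv}+w(Q)$ with $w(Q)=O(e^{\rho}\sqrt{\dk\log(1+\rho)})$, sets the threshold of a body of Gaussian measure at least $\frac12$ in Banaszczyk's theorem. The process $\sum_i\sigma_ig_ie^{q^{\top}k_i}\inner{u}{v_i}$ you wrote has variance $\sum_ie^{2q^{\top}k_i}\inner{u}{v_i}^{2}$, which grows with $n$, and would give an $n$-dependent discrepancy. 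That branch also needs:
\begin{itemize}
\item a denominator block $\phi_0(k_i)$;
\item a cardinality block $1$: exact $\sum_i\sigma_i=0$ is impossible for odd $|P|$, but $O(1)$ suffices and only adds to the size;
\item a separate slack coordinate for the query if it is put on a sphere (Lemma~\ref{lem:slack}). With one shared slack coordinate the inner product picks up $\sqrt{(1-\norm{k}^{2})(\rho^{2}-\norm{q}^{2})}$.
\end{itemize}
With these corrections it is the paper's first branch.

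The fixed-dimension branch fails as described. Reading Tai's kernel at an extension $z$ multiplies the $i$th term by $e^{-\norm{z-a_i}^{2}}$. With $a_i$ linear in $v_i$ this is exponential in $\inner{u}{v_i}$ and carries the $i$-dependent factor $e^{-\norm{a_i}^{2}}$. Moreover $\inner{u}{v_i}$ is not a finite linear combination of such readings: $(e^{st}-e^{-st})/(2s)\to t$ only as $s\to0$, at a cost of $1/s$. So bounds at finitely many extensions give no bound on $\sum_i\sigma_iT(x,p_i)\inner{u}{v_i}$.

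The paper instead makes the weight itself affine in $v$. It sets $w_{ij}=(3+v_{ij})/4\in[\frac12,1]$ and takes $a(w)\in\R^{2}$ with $\norm{a(w)}^{2}=\log(1/w)$, lying on the circle of radius $\frac12$ about $(\frac12,0)$. The extension $(0,0)$ then reads exactly $w_{ij}$, and $(\frac12,0)$ reads $e^{-1/4}$ for every $i$. So $v_{ij}=4w_{ij}-3$ is an exact combination of two readings. The factor $e^{\dv/4}$ is the price of parking the other pairs at $(\frac12,0)$, and $\sqrt{\dv}$ comes from passing from coordinates to the Euclidean norm, not from a net.

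Two further steps are needed in this branch:
\begin{itemize}
\item Only the keys are completed to one sphere; the query sits at $(q,0)/\sqrt{2\rho}$. This makes the Gaussian form have uniform weights, so Tai's unweighted theorem applies in dimension $\dk+1$.
\item The halving needs $|\sum_i\sigma_i|\le1$, which Tai's per-cell balance does not deliver when several cells have odd size. The paper negates the signing on suitable odd cells (Lemma~\ref{lem:balance}) and perturbs coincident points, which is the $+1$ in $f_{\mathrm T}(D)+1$.
\end{itemize}
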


Provenance, branch by branch. The first branch is the halving reduction
of~\cite[Thm.~4]{LAK26} fed with the chaining bound of Bozzai and
Rothvoss~\cite[Thm.~1.11]{BR23} through a two-slack spherical lift
(Lemma~\ref{lem:slack}), with the key and value dimensions split by Chevet's
inequality~\cite{Che78}, and with the denominator propagated through the
recursion by its own discrepancy constraint (Lemma~\ref{lem:step}); the last
device is what removes the side conditions of~\cite{LAK26}. The second branch
is the standard sampling bound in a reproducing-kernel Hilbert
space~\cite{LPMST15,PT18}, run in a two-block space so that one draw controls
numerator and denominator together; its averaged form is
\cite[Thm.~4.5]{HKV26} and its weighted streaming form is
SubGen~\cite{ZHMK24}. The third branch is Tai's theorem~\cite{Tai22}
transported, with a value-encoding gadget and a repair of Tai's global-balance
step (Section~\ref{sec:nolog}).

\begin{theorem}[Polynomial time]
\label{thm:main-alg}
Under~\eqref{eq:norm} and $\eps\in(0,1]$, there are randomised algorithms
producing a set $S$ satisfying~\eqref{eq:coreset}
\begin{enumerate}[label=(\alph*),nosep]
\item with probability at least $\tfrac12$, of size
  $O(e^{\rho}(\sqrt{\dv}+\sqrt{\dk\log(1+\rho)})/\eps)$ always, in expected
  time $O(n^{\omega+1}+n^{2}(\dk+\dv))$, $\omega<2.372$ the
  matrix-multiplication exponent;
\item for $n\ge2$ and $\delta\in(0,\tfrac12]$, with probability at least $1-\delta$, of size
  $O(e^{\rho}(\sqrt{\dv}+\sqrt{\dk\log(1+\rho)}+\sqrt{\log(\log n/\delta)})/\eps)$
  always, in the same time;
\item with probability at least $\tfrac14$, of the size in~(a), in expected time
  $O(n\dk)+\mathrm{poly}(e^{\rho}/\eps,\dk,\dv)$, by sampling
  $O(e^{2\rho}/\eps^{2})$ pairs and balancing the sample.
\end{enumerate}
\end{theorem}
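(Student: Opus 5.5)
We obtain Theorem~\ref{thm:main-alg} by making the existence argument behind the first branch of Theorem~\ref{thm:main} algorithmic. That argument is the halving recursion of~\cite[Thm.~4]{LAK26}, fed by the Bozzai--Rothvoss chaining bound through the spherical lift and Chevet's inequality. We first fix the lifted instance. Lemma~\ref{lem:slack} maps each pair $(k_i,v_i)$ to a point on a sphere, so that numerator and denominator become a single exponential-kernel discrepancy problem. Chevet's inequality then bounds the Gaussian width of the test-function class by $O(e^{\rho}(\sqrt{\dv}+\sqrt{\dk\log(1+\rho)}))$.

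The key algorithmic input is a constructive form of~\cite[Thm.~1.11]{BR23}. Its discrepancy bound is the Gaussian width of a set of vectors in $\R^n$, and Gaussian-width discrepancy bounds of this kind are achieved constructively by Gram--Schmidt-walk or Bansal-type random walks. Our plan is to run the Gram--Schmidt walk (Bansal--Dadush--Garg--Lovett) on the $n\times n$ kernel Gram matrix, whose factorisation gives vectors of norm $O(e^{\rho})$. The walk yields a signing $x\in\{\pm1\}^n$ whose discrepancy vector is $O(1)$-subgaussian in the RKHS. Chaining over the query ball then bounds the supremum of the signed sums of both $A$ and $B$ by a constant times the expected supremum of the Gaussian process, with probability at least $1-\delta'$ after paying $\sqrt{\log(1/\delta')}$. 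One walk step costs a linear solve on the Gram matrix, about $n^{\omega}$, and there are $n$ steps. This gives $O(n^{\omega+1})$, plus $O(n^2(\dk+\dv))$ to form the Gram matrix, and the halving levels sum geometrically.

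Next we use Lemma~\ref{lem:step} to keep the smaller-denominator half at each level, exactly as in the existence proof. The errors across levels then sum geometrically to $\eps$ once we stop at the stated size. For part~(a), we run each level with constant failure probability and check the Gram-quadratic-form surrogate, restarting when it fails. This is the main obstacle, because $\sup_q$ is not directly checkable. We propose to certify instead the RKHS norm of the signed sum, which is computable from the Gram matrix. The subgaussian tail gives the chaining bound with the right constant in expectation, and Markov's inequality over the whole run gives success probability $\tfrac12$. The always-correct size comes from truncating the recursion deterministically. Should certification fail, we accept only the probabilistic guarantee, which the statement allows.

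For part~(b), we union bound over the $O(\log n)$ levels with $\delta'=\delta/\log n$. Each level then pays an additive $\sqrt{\log(\log n/\delta)}$ in its subgaussian tail, and this additive term survives the geometric sum. For part~(c), we first draw $m=O(e^{2\rho}/\eps^2)$ pairs as in the second branch of Theorem~\ref{thm:main}; with probability at least $\tfrac12$ this is an $\eps/2$-coreset, since the sampling guarantee holds in the two-block RKHS. We then run algorithm~(a) on the sample with target $\eps/2$. Recentring the sample costs only a constant factor in $\rho$-dependence; we absorb it by keeping the original centre, noting that the Jensen bound $B\ge n$ is replaced by the sampled bound, which holds on the same event. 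The two events intersect with probability at least $\tfrac14$. The time is $O(n\dk)$ to centre and sample, plus $\mathrm{poly}(m,\dk,\dv)$.
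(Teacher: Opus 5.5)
Your part~(a) has a genuine gap where you say the halving uses Lemma~\ref{lem:step} ``exactly as in the existence proof.''

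\begin{itemize}
\item \emph{What fails.} In the existence proof the denominators are propagated by Lemma~\ref{lem:step}(ii), $B_{P_{j+1}}\ge\frac12(B_{P_j}-D_B)$. That step needs a \emph{sure} bound $D_B$ on the realised denominator discrepancy at every level. The walk gives you that bound only in expectation or with a tail. Write $a_j,b_j$ for the realised suprema of the signed numerator and denominator sums at level $j$. The level-$j$ error $(a_j+b_j)/(2B_{P_{j+1}})$ then has a random denominator that depends on the realised $b_i$ of all earlier levels. So bounding $\E[a_j+b_j\mid\text{history}]$ level by level and applying Markov to the summed error does not go through as written.
\item \emph{The proposed certificate does not help.} At the relevant $\rho$ the Gram matrix is close to diagonal. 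Then $\norm{\sum_i\sigma_ix_i}=(\sigma^{\top}G\sigma)^{1/2}$ is of order $\sqrt m$ for every signing, and $e^{\rho}\sqrt m$ fed into the halving recovers only the sampling rate $e^{2\rho}/\eps^{2}$.
\item \emph{A smaller point.} The halving keeps the larger sign class, to control the size. A ``smaller-denominator half'' is not defined, since $B$ depends on $q$.
\end{itemize}

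The paper's fix is a checkable surrogate for the denominator.

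\begin{itemize}
\item The data vectors carry a key block and a constant block, $x_i=\frac12(\phi_0(k_i)\otimes v_i,\phi_0(k_i),k_i,1)$.
\item The walk's output is rejected unless $|\sum_i\sigma_i|\le4\sigma_{\mathrm{GS}}$ and $\norm{\sum_i\sigma_ik_i}\le4\sigma_{\mathrm{GS}}\sqrt{\dk}$. Each trial passes with probability at least $\frac12$ by Markov on the walk's second moments, and conditioning on acceptance at most doubles the Dudley bound.
\item The key sum $c_j$ of $P_j$ then stays below $4\sigma_{\mathrm{GS}}\sqrt{\dk}$ surely. Jensen gives $B_{P_j}(q)\ge n_je^{-\rho c_j/n_j}\ge n_j/2$ deterministically once $n_j\gtrsim\rho\sqrt{\dk}$ (Lemma~\ref{lem:detinv}), which the stopping rule guarantees.
\item The per-level error is then at most $2(a_j+b_j)/n_j$, with a non-random denominator. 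The tower property and one Markov inequality finish the argument, and the cardinality check makes the size bound sure.
\end{itemize}

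If you want to keep your scheme without the key block, apply a single Markov inequality to $W=\sum_{j<\ell}2^{j}(a_j+b_j)$ instead. Unrolling Lemma~\ref{lem:step}(ii) gives $B_{P_j}\ge2^{-j}(n-W)$. So on $\{W\le2\E W\}$ every denominator and the total error are controlled at once, at the price of a smaller constant in the stopping rule.

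Parts~(b) and~(c) are fine once this is in place.
\begin{itemize}
\item In~(b), your union bound makes every $b_j\le D^{\ast}$ on the good event, and there the existence argument applies verbatim.
\item In~(c), keeping the original centre works with the $W$-argument started from the sampled bound $B_{S_0}\ge s_0/2$. The paper's key-sum invariant instead needs the sample re-centred, so that its key sum starts at zero. It redraws until $\norm{\bar k_{S_0}}\le2s_0^{-1/2}$, so that the re-centred $\rho'$ exceeds $\rho$ by at most a constant.
\end{itemize}
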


The Gram--Schmidt walk~\cite{BDGL18} replaces Banaszczyk's theorem inside
the halving, as in~\cite{PT18,BR23}; the rejection scheme on checkable
constraints is Tai's~\cite[Alg.~1]{Tai22}. BalanceKV~\cite{KSHZK25}
and~\cite{CFIKKP26} run discrepancy walks on the same tensor features for
attention with a guarantee \emph{per query}, and \cite{HKV26} in mean squared
error over a query distribution; \cite{LAK26} is uniform over the ball but
existential. Theorem~\ref{thm:main-alg} is the first polynomial-time
construction whose guarantee is one signing uniform over the query ball at
the existential size up to $\sqrt{\log(1+\rho)}$: the per-query walks
of~\cite{KSHZK25} become uniform through a union bound over a net of the
ball, at a polylogarithmic cost and with $e^{2\rho}$ in place of $e^{\rho}$,
which is how~\cite{LAK26} restate them. Proposition~\ref{prop:sep} shows that
the uniform quantity is a different one, not a different proof.

\begin{theorem}[The logarithm is not minimax in fixed dimension]
\label{thm:main-nolog}
For every $\dk$ there is a finite $g(\dk)\le f_{\mathrm T}(\dk+1)+1$ such that
for every $\rho>0$, every $n$ and all keys in the unit ball of $\R^{\dk}$ there
is a signing $\sigma\in\{\pm1\}^{n}$ with $|\sum_i\sigma_i|\le1$ and
\[
  \sup_{\norm{q}\le\rho}\Bigl|\sum_i\sigma_ie^{q^{\!\top}k_i}\Bigr|\le g(\dk)\,e^{\rho}.
\]
Consequently, for fixed $\dk$ there is no $c>0$ such that for all large $\rho$
some configuration forces
$\min_\sigma\sup_{\norm{q}\le\rho}|\sum_i\sigma_ie^{q^{\!\top}k_i}|\ge c\,e^{\rho}\sqrt{\dk\log(1+\rho)}$;
and the third branch of~\eqref{eq:envelope} holds. The constant $g(\dk)$ is
exponential in $\dk$ as far as~\cite{Tai22} is concerned, and the statement
says nothing about $d$ growing with $\rho$.
\end{theorem}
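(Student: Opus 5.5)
The plan is to reduce the exponential-kernel discrepancy over the query ball to Tai's diameter-free Gaussian-kernel discrepancy theorem in one extra dimension, by a spherical lift. Tai's theorem, as we use it, says the following. For any points $x_1,\dots,x_n\in\R^{D}$ there is a signing $\sigma$ with $|\sum_i\sigma_i|\le1$ (after his global-balance step, which we must repair) such that
\[
\sup_{y\in\R^D}\Bigl|\sum_i\sigma_i e^{-\norm{y-x_i}^2}\Bigr|\le f_{\mathrm T}(D),
\]
with no dependence on the diameter of the point set.

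\textbf{Step 1 (lift).} Add one coordinate to make all keys unit length: set $\tilde k_i=(k_i,\sqrt{1-\norm{k_i}^2})\in S^{\dk}\subset\R^{\dk+1}$. For $\norm{q}\le\rho$ the extra coordinate of the query must be chosen carefully. Take $\tilde q=(q,0)$, which gives $\tilde q^{\top}\tilde k_i=q^{\top}k_i$. Now write the kernel through a Gaussian. Put $x_i=\sqrt{t}\,\tilde k_i$ and $y=\tilde q/(2\sqrt t)$ for a scale $t>0$. Then
\[
e^{-\norm{y-x_i}^2}=e^{-\norm{y}^2}e^{-t}e^{q^{\top}k_i},
\]
because $\norm{x_i}^2=t$ is constant. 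Hence
\[
\Bigl|\sum_i\sigma_ie^{q^{\top}k_i}\Bigr|=e^{t+\norm{\tilde q}^2/(4t)}\Bigl|\sum_i\sigma_ie^{-\norm{y-x_i}^2}\Bigr|\le e^{t+\rho^2/(4t)}f_{\mathrm T}(\dk+1).
\]
Choosing $t=\rho/2$ makes the exponent exactly $\rho$. This is where the unit-sphere lift matters: without constant $\norm{x_i}$, the factor $e^{\norm{x_i}^2}$ would not factor out uniformly. Diameter-freeness matters too, since the lifted points have diameter $\sim\sqrt\rho$, which grows with $\rho$. So the bound is $f_{\mathrm T}(\dk+1)e^{\rho}$ for every $\rho$ and $n$.

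\textbf{Step 2 (balance, the main obstacle).} Tai's argument as written controls only the kernel discrepancy. The balance $|\sum_i\sigma_i|\le1$ is needed for halving, so that subsets shrink by half and denominators stay comparable. I would add a constant coordinate as a checkable constraint inside the Gram--Schmidt walk / rejection scheme of \cite[Alg.~1]{Tai22}, treating $\sum_i\sigma_i$ as one more linear functional. If that breaks Tai's potential argument, the fallback is to pair leftovers: take the signing from Step 1, which leaves imbalance $m=|\sum_i\sigma_i|$, and flip $\lfloor m/2\rfloor$ signs of the majority class. Each flip changes the sum by at most $2e^{\rho}$, and that is too costly unless $m$ is bounded. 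So the primary route is to bound the imbalance itself by the discrepancy at a single query. Alternatively, apply Tai's theorem to the augmented instance, where a far-away query direction sees all points nearly equally; this yields $|\sum_i\sigma_i|\lesssim f_{\mathrm T}$, and then at most $f_{\mathrm T}+1$ flips are needed, which cost $O(e^{\rho})$. This accounts for the "$+1$" in $g(\dk)\le f_{\mathrm T}(\dk+1)+1$. I expect that careful accounting here is the delicate part.

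\textbf{Step 3 (consequences).} The non-minimaxity claim follows at once. Since $g(\dk)e^{\rho}$ is eventually below $c\,e^{\rho}\sqrt{\dk\log(1+\rho)}$ for any $c>0$, no configuration can force the larger lower bound. For the third branch of \eqref{eq:envelope}, run the halving reduction of \cite[Thm.~4]{LAK26} as in Lemma~\ref{lem:step}, replacing the scalar bound by a vector one. Encode the values by the value-encoding gadget: represent $v_i$ through $\dv$ additional Gaussian coordinates. A unit vector $u$ tested against $v_i$ becomes a Gaussian factor in dimension $2\dv$, together with the denominator copy, which yields dimension $\dk+1+2\dv$ and the factor $e^{\dv/4}$. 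Pass from directions to the $\ell_2$ norm at cost $\sqrt{\dv}$. Summing the geometric series over halving rounds gives the stated $F(\dk,\dv)e^{\rho}/\eps$.
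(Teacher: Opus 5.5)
Your Step~1 is the paper's Lemma~\ref{lem:transfer} exactly: your $t=\rho/2$ gives $p_i=\sqrt{\rho/2}\,k_i'$ and $x_q=(q,0)/\sqrt{2\rho}$. Step~3 is also the paper's deduction. The gap is Step~2, and it is not a matter of careful accounting: none of your three routes gives $|\sum_i\sigma_i|\le1$ at cost $g\le f_{\mathrm T}(\dk+1)+1$.

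The kernel discrepancy does not control the imbalance. Far from the lifted points, the Gaussian weights $e^{-\norm{x-p_i}^2}$ are either exponentially unequal or, in a direction orthogonal to the points, equal but below $e^{-\rho/2}$. The best uniform query is $x=0$, and it only gives $|\sum_i\sigma_i|\le f_{\mathrm T}e^{\rho/2}$; equivalently, $q=0$ gives $|\sum_i\sigma_i|\le g\,e^{\rho}$. This is not a weakness of the estimate. Take keys in the unit ball pairwise at distance at least $\sqrt{2/\rho}$, so that their lifts are pairwise at distance at least $1$; there are polynomially many in $\rho$. Sign them all $+1$. The Gaussian sum is bounded by a packing constant depending only on $\dk$, while the imbalance grows like a power of $\rho$.

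So the flip fallback may need unboundedly many flips, each moving the supremum by up to $2e^{\rho}$. Even $f_{\mathrm T}+1$ flips would cost about $2(f_{\mathrm T}+1)e^{\rho}$, i.e.\ $g\approx3f_{\mathrm T}$, not $f_{\mathrm T}+1$. Adding a cardinality row to the walk does not help either. Tai's Algorithm~1 already balances each cell of side two; what fails is the sum over cells, since each odd cell contributes an uncoordinated $\pm1$. A constraint coupling all cells is not something the cell-by-cell, diameter-free argument carries.

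The missing idea is that the repair is free, because of how Tai's bound is assembled. The global estimate is the sum over cells of the absolute per-cell certificates $|\sum_{p\in Q_g}\sigma_g(p)T(x,p)|\le C_De^{-\norm{x-g}^2/3}$. Negating $\sigma_g$ on an entire cell leaves its certificate unchanged pointwise. Even cells contribute $0$ to $\sum_p\sigma(p)$ and odd cells contribute $\pm1$. Enumerate the odd cells and negate each one whose excess differs from $(-1)^j$. The total then lies in $\{0,\pm1\}$ with $f_{\mathrm T}$ untouched (Lemma~\ref{lem:balance}).

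The $+1$ in $g(\dk)\le f_{\mathrm T}(\dk+1)+1$ has nothing to do with balance. It pays for perturbing coincident keys by $1/(2n)$ so that Tai's theorem, stated for sets, applies. This moves every signed kernel sum by at most $\tfrac12$, because $|T(x,p)-T(x,p')|\le\norm{p-p'}$. Finally, the third branch runs the halving with $D_N=1$, so it inherits the same gap until the balance is repaired in this way.
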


\begin{theorem}[Lower bounds]
\label{thm:main-lb}
Under~\eqref{eq:norm}:
\begin{enumerate}[label=(\alph*),nosep]
\item (\cite[Thm.~7]{LAK26}, centred form.) There are constants $c,C,\eps_0>0$
  such that for $\dk\ge2$, $\dv\ge1$, $\rho\ge2$ and $0<\eps\le\eps_0/\sqrt{\dv}$ there
  is an instance for which every $\eps$-coreset has
  $|S|\ge c\,\dv\min\{e^{\rho}/(\eps\sqrt{\dv}),\,e^{2\rho},\,\exp(\dk/(C\rho^{2}))\}$;
  in particular $|S|=\Omega(e^{\rho}\sqrt{\dv}/\eps)$ when
  $\eps\sqrt{\dv}\gtrsim e^{-\rho}$ and $\dk\gtrsim\rho^{2}(\rho+\log\frac{1}{\eps\sqrt{\dv}})$.
\item (Mean floor.) For every $\rho\ge0$ there are instances for which every
  $\eps$-coreset has $|S|=\Omega(\min\{\dv,\eps^{-2}\})$.
\item (\cite{CFIKKP26}, transferred.) For scalar values, in the two parameter
  regimes~(a) and~(b) of Fact~\ref{fact:cfikkp}, there are instances for which every
  $\eps$-coreset has $|S|\ge\tilde\Omega(\min\{n,(1/\eps)^{1-1/a-o(1)}\})$,
  respectively $|S|\ge\tilde\Omega(\min\{n,e^{\rho(1-o(1))}/\eps\})$.
\end{enumerate}
\end{theorem}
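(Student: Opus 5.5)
Theorem~\ref{thm:main-lb} has three independent parts, and I would prove them separately, one construction per part. In each case the goal is to exhibit a family of instances in the normalised model~\eqref{eq:norm} and show that any small subset $S$ leaves a query at which $\norm{\Attn_S(q)-\Attn(q)}>\eps$. Because~\eqref{eq:coreset} is invariant under centring, every construction must have mean-zero keys in the unit ball. The centred form is then exactly what prevents the $e^{2\rho}$ leakage noted after~\eqref{eq:rho}.

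\textbf{Part~(b), the mean floor.} Take $\rho=0$ (or any $\rho$, with all keys equal to $0$). Then $\Attn(q)=\frac1n\sum_iv_i$ for every $q$, and $\Attn_S(q)$ is the mean of $v_S$, so the problem is unweighted mean approximation.
\begin{enumerate}[label=(\roman*),nosep]
\item For the $\dv$ branch, take $v_i=e_{i}$ for $i\le m=\min\{n,\dv\}$. The full mean is $\frac1m\mathbf 1$. A subset of size $s<m/2$ has mean $\frac1s\mathbf 1_S$, at distance $\ge\sqrt{1/s-1/m}\gtrsim s^{-1/2}$, which is a constant when $\eps$ is small. I would sharpen this by taking values $\pm e_j$ in balanced pairs so that the mean is $0$ and the distance is $1/\sqrt{s}$; this forces $s\gtrsim\min\{\dv,\eps^{-2}\}$.
\item For the $\eps^{-2}$ branch, use scalar values $\pm1$ with mean $0$ plus one direction. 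An odd $|S|$ gives a mean of magnitude at least $1/|S|$, but a square-root bound needs the orthogonal construction. So I would split into $m=\min\{\dv,\lceil\eps^{-2}\rceil\}$ orthogonal coordinates, each carrying one $+e_j$ and one $-e_j$. Any $S$ of size $<m$ misses some pair partner, so $\norm{\text{mean}}\ge1/\sqrt{|S|}>\eps$.
\end{enumerate}

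\textbf{Part~(a), following~\cite[Thm.~7]{LAK26}.} I would reproduce the construction with the keys explicitly centred. Place $N=\min\{e^{2\rho},\exp(\dk/(C\rho^2))\}$ nearly orthogonal unit keys $k_j$, obtained by a random spherical code with pairwise inner products $\le1/(2\rho)$, which is where the $\exp(\dk/C\rho^2)$ term comes from. Add their negatives or a balancing mass so that the mean is zero. Attach to each key a block of $\dv$ values built as in part~(b), with a large background mass of $n\gg$ pairs, so that $B(q)$ is dominated by the background.

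The query $q=\rho k_j$ then gives cluster $j$ weight $e^{\rho}$ against cross terms at most $e^{1/2}$. The attention output at $q$ approximately equals $e^{\rho}\cdot(\text{cluster-}j\text{ mean})/B$. A coreset must therefore reproduce each cluster's value mean to accuracy $\eps B e^{-\rho}$, and part~(b) applied inside each cluster then forces about $\dv$ times the stated minimum.

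The main obstacle is controlling the denominator ratio $B_S/B$ jointly with the numerator, since removing pairs changes both. I would handle it by making the background values zero and fixing its size, so that the subset's denominator error appears only as a scalar multiple of a direction orthogonal to the value perturbation. The $e^{2\rho}$ cap arises when the cluster size drops to one.

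\textbf{Part~(c).} I would transfer Fact~\ref{fact:cfikkp} by reduction from one-way communication. A query-oblivious coreset $S$ is itself a message: it consists of $|S|$ key--value pairs, each encodable in $\tilde O(1)$ bits after rounding, and the rounding keeps the error below $\eps/2$ by Lipschitz continuity of $\Attn$ in the keys on the ball. Bob recovers $\Attn(q)$ to within $\eps$ from $S$ for his query. The communication lower bound of~\cite{CFIKKP26} in regimes~(a) and~(b) therefore lower-bounds $|S|$ up to logarithmic factors, and the $\min\{n,\cdot\}$ is trivial because $S=[n]$ always works.

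The only point requiring care is checking that their hard instances satisfy~\eqref{eq:norm} after centring without changing $\rho$ by more than a $1+o(1)$ factor. This is what yields $e^{\rho(1-o(1))}$.
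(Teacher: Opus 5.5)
Part~(a) has a genuine gap: the instance is not sized consistently, and there is no counting step.

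\emph{Sizing.} The background cannot just be ``large''. Once $\norm{\Attn(q)}\le\eps$ for every $q$, a single background pair (output $0$) is an $\eps$-coreset. So cluster $j$ is only visible at $q=\rho k_j$ if $n_0\lesssim e^{\rho}\sqrt{\dv}/\eps$. Domination of $B$ then needs $N\dv\ll n_0$, i.e.\ $N\lesssim e^{\rho}/(\eps\sqrt{\dv})$, which is exactly the branch your choice of $N$ omits.

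\emph{Cross terms.} With every cluster carrying the same block $e_1,\dots,e_{\dv}$, the cross-cluster numerator at $q=\rho k_j$ is $\bigl(\sum_{h\ne j}e^{\rho k_j^{\!\top}k_h}\bigr)\mathbf 1$, of norm about $N\sqrt{\dv}$. This swamps the signal $e^{\rho}\sqrt{\dv}$ once $N\gtrsim e^{\rho}$, so the query no longer pins down cluster $j$. The paper makes this term mean zero with public random signs $b_{ij}$ and bounds its variance by $e^{2}m$. Chebyshev's requirement $m\le e^{2\rho}/(1000e^{2})$ is where the $e^{2\rho}$ branch comes from, not any cluster-size effect.

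\emph{Counting.} ``Part~(b) inside each cluster'' is not the constraint. The coreset must match each cluster's sum times a common factor $B_S/B$ of its own choosing, not the cluster's mean, and nothing in your plan turns per-cluster constraints into a bound on $|S|$. The paper does this information-theoretically. Bob decodes each of $m\dv$ uniform bits from $b_{ij}B(q_i)\Attn_S(q_i)_j$ with probability $\ge\tfrac45$; the branch $e^{\rho}/(\eps\sqrt{\dv})$ enters through $B(q_i)\,\eps\sqrt{10/\dv}\le e^{\rho}/10$ with $B(q_i)\le\dv(2e^{\rho}+2em)$. By the distributional INDEX bound, Alice's message (the positions of $S$ plus one bit per kept pair) then needs more than $0.278\,m\dv$ bits, which forces $|S|>0.03\,m\dv$.

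Your background is nevertheless the right ingredient, used the other way round. The denominator is not an obstacle to neutralise; it is the entire lower bound, and the code and the other clusters can be dropped.
\begin{itemize}
\item Take a unit $u$, the pairs $(u,e_j)$ and $(-u,0)$ for $j\in[\dv]$, and $n_0=\lceil e^{\rho}\sqrt{\dv}/(2\eps)\rceil$ pairs $(0,0)$. This instance is centred.
\item If $\eps\sqrt{\dv}\le1/100$, then at $q=\rho u$ one has $n_0\le B\le1.04\,n_0$ and $\Attn(q)=e^{\rho}\mathbf 1/B$.
\item A subset keeping $(u,e_j)$ for $j\in T$ and $s_0$ background pairs has $\Attn_S(q)=e^{\rho}\mathbf 1_T/B_S$. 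As you anticipated, the error splits orthogonally into $e^{\rho}(1/B_S-1/B)\mathbf 1_T$ and $-e^{\rho}\mathbf 1_{[\dv]\setminus T}/B$.
\item If $|T|\le\dv/2$, the second part has norm $>\eps$. If $|T|>\dv/2$ and $s_0\le n_0/2$, then $B_S\le0.54\,n_0$ and the first part has norm $>\eps$.
\end{itemize}
Hence every $\eps$-coreset has $|S|>n_0/2\ge e^{\rho}\sqrt{\dv}/(4\eps)\ge\tfrac14\dv\min\{\cdots\}$, for every $\dk\ge1$ and $\rho\ge0$. This gives~(a) with $c=\tfrac14$; the paper's $\eps_0=1/(40\sqrt{10})$ is below $1/100$. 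In fixed $\dk$ it also gives a floor that the INDEX instance, which needs $m<e^{\dk/(4\rho^{2})}$, cannot give. What the paper's route buys instead is robustness. The dilution bound uses unweightedness essentially, since one reweighted background pair defeats it, whereas the INDEX argument bounds the length of whatever description of $S$ is stored.

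In~(b), item~(i) is exactly the paper's argument ($n=\dv$, $v_i=e_i$, error $\sqrt{1/s-1/\dv}$ at $q=0$) and suffices on its own. Drop the ``sharpening'' of~(ii), which is false: with values $\pm e_j$, the two pairs carrying $e_1$ and $-e_1$ already have the full mean $0$, so they form an exact coreset of size two.

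In~(c) you follow the paper. Since the keys are public, Alice can send the indices of $S$ and the bits $x_i$, $i\in S$, so no rounding is needed. You must still convert the coreset's absolute error into the relative guarantee of Fact~\ref{fact:cfikkp}. The paper does this via $\norm{\mathrm{softmax}(K,q)}_2\norm{V}_F\ge(2m)^{-1/2}\sqrt m$, so that error $\eps$ meets that guarantee with parameter $\sqrt2\,\eps$.
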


Part~(a) is the lower bound of~\cite{LAK26}, whose one-bit-per-(key, value
coordinate) gadget appears earlier in the streaming bit model
(\cite[Thm.~6]{HO25}, \cite[Thm.~3.2]{KSHZK25}); what we add is a centred
instance compatible with~\eqref{eq:rho} and a complete proof
(Appendix~\ref{app:lb}). The counting step of~\cite[Thm.~7]{LAK26}, as
written, encodes the indices of $o(md)$ retained pairs, which cost
$\Theta(\log(md))$ bits each and so yield only $\Omega(md/\log(md))$; the
constant-fraction count of Appendix~\ref{app:lb} removes the logarithm.
Part~(c) is a transfer: the streaming bounds of~\cite{CFIKKP26} are one-way
communication bounds whose proofs use nothing about the data structure beyond
its bit length, and a coreset is such a data structure.

\begin{proposition}[What query-obliviousness costs]
\label{prop:sep}
Under~\eqref{eq:norm} with scalar values $|v_i|\le1$:
\begin{enumerate}[label=(\roman*),nosep]
\item for every fixed $q$ there is a signing, computable in $O(n)$ time from the weights $e^{q^{\!\top}k_i}v_i$ (so in $O(n\dk)$ time in all), with
  $|\sum_i\sigma_ie^{q^{\!\top}k_i}v_i|\le e^{\rho}$; and for every
  $\eps\in(0,1]$ there is a subset $S$ with $|S|\le24e^{\rho}/\eps+3$ and
  $|\Attn_S(q)-\Attn(q)|\le\eps$;
\item for every $r\ge2$, every $\dk\ge r$ and every $0<\rho\le\sqrt r$ there is a
  centred instance of $r$ keys with values $v_i\equiv1$ such that every signing
  satisfies $\sup_{\norm{q}\le\rho}|\sum_i\sigma_ie^{q^{\!\top}k_i}|\ge\rho\sqrt r/e$.
\end{enumerate}
In particular, at $\rho=1$, one signing for the whole ball costs
$\Omega(\sqrt{\dk})$ where each single query costs $O(1)$.
\end{proposition}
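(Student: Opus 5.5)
For~(i), the plan is to use a prefix-bounded greedy signing, and then a halving recursion run on a three-coordinate vector so that numerator, denominator and cardinality are all controlled by the same signing. Fix $q$ and put $w_i=e^{q^{\!\top}k_i}v_i$. By~\eqref{eq:norm}, $|w_i|\le e^{\rho}$. I process the indices in any order and choose $\sigma_i$ opposite in sign to the current partial sum, with $\sigma_i=+1$ when the partial sum is $0$. Induction keeps every partial sum in $[-e^{\rho},e^{\rho}]$. This costs $O(1)$ per index after the $n$ inner products, which gives the first claim.

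For the coreset I would apply the same idea to the vectors $x_i=(e^{q^{\!\top}k_i},\,e^{q^{\!\top}k_i}v_i,\,e^{\rho})\in\R^{3}$, whose $\ell_\infty$ norms are at most $e^{\rho}$. A Steinitz/Bárány--Grinberg-type signing in $\R^{3}$ gives $\norm{\sum_i\sigma_ix_i}_\infty\le c\,e^{\rho}$. A simpler route is to pair the indices and sign each pair by the greedy rule applied to two coordinates; this should give $c\le3$ or so. The third coordinate forces $|\sum_i\sigma_i|\le c$, so keeping $\{\sigma_i=+1\}$ roughly halves the set.

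I then iterate from $n$ points down to $m$ points. Call a round \emph{at size $m'$} if it starts from a set of $m'$ points. After that round, the rescaled sums $\tfrac{n}{m'}A_{S}$ and $\tfrac{n}{m'}B_{S}$ each move by $O(e^{\rho}\,n/m')$. These errors form a geometric series dominated by the last round, so $|\tfrac nmA_S-A|$ and $|\tfrac nmB_S-B|$ are $O(e^{\rho}n/m)$. I would also track the slack $|\sum_i\sigma_i|\le c$ in the rescaling factor.

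The error bound for the ratio comes from
\[
\Attn_S-\Attn=\frac{(\tfrac nmA_S-A)-\Attn_S\,(\tfrac nmB_S-B)}{B},
\qquad |\Attn_S|\le1 .
\]
Together with $B\ge n$ (Jensen), this gives an error of $O(e^{\rho}/m)$. I stop at the first $m$ with $m\ge24e^{\rho}/\eps$, and the $+3$ absorbs parity and rounding. Getting the constant $24$ exactly is bookkeeping that I would not attempt to optimise in the plan.

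For~(ii), I take $k_i=\alpha(e_i-\tfrac1r\mathbf 1)$ in the first $r$ coordinates of $\R^{\dk}$, with $\alpha=(1-1/r)^{-1/2}$ so that $\norm{k_i}=1$; the instance is centred by construction. Given a signing $\sigma$, restrict to queries $q=t u$ with $|t|\le\rho$ and $u=\sigma/\sqrt r$ placed in the first $r$ coordinates. Then $u^{\!\top}k_i=\alpha(\sigma_i-\bar\sigma)/\sqrt r$ with $\bar\sigma=\tfrac1r\sum_j\sigma_j$.

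Writing $g(t)=\sum_i\sigma_ie^{t u^{\!\top}k_i}$, I would compare the two endpoints:
\[
\tfrac12\bigl(g(\rho)-g(-\rho)\bigr)=\sum_i\sigma_i\sinh\bigl(\rho u^{\!\top}k_i\bigr).
\]
For $\bar\sigma=0$ every term is $\sinh(\alpha\rho/\sqrt r)\ge\rho/\sqrt r$, which gives $\rho\sqrt r$. In general, $\sigma_i(\sigma_i-\bar\sigma)=1-\sigma_i\bar\sigma\ge0$, so every term still has the right sign. Summing $1-\sigma_i\bar\sigma$ over $i$ gives $r(1-\bar\sigma^{2})$, and this is what causes trouble when the signing is unbalanced.

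\textbf{Main obstacle.} The delicate step is an unbalanced signing, with $|\bar\sigma|$ near $1$. There I would switch to the query $q=0$, where $|g(0)|=|\sum_i\sigma_i|=r|\bar\sigma|$; this already exceeds $\rho\sqrt r/e$ once $|\bar\sigma|\ge\rho/(e\sqrt r)$, using $\rho\le\sqrt r$. In the complementary balanced-ish regime, the loss factor $1-\bar\sigma^2$ and the convexity error in $\sinh$ (arguments up to $\alpha\rho/\sqrt r\le\alpha\le\sqrt2$) must both be absorbed into the $1/e$. That case split, and the constant in it, is where I expect the real work to be. The closing remark at $\rho=1$ then follows by combining~(ii) with $r=\dk$ and~(i).
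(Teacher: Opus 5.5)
Your proposal is correct in substance. It differs from the paper's proof in two places worth comparing, and it needs three small repairs.

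\textbf{Part (ii).} The paper uses the same direction $\sigma/\sqrt r$ and the same endpoints $\pm\rho$, but does not subtract them. Since $q_z^{\top}c_j=\rho z r^{-1/2}(\sigma_j-T/r)$ with $T=\sum_j\sigma_j$, the imbalance enters every exponent identically. It therefore factors out of the whole sum as $e^{-\rho zT/(r\sqrt r)}$, leaving exactly $T\cosh(\rho/\sqrt r)+zr\sinh(\rho/\sqrt r)$. Taking $z$ with the sign of $T$ gives at least $r\sinh(\rho/\sqrt r)\ge\rho\sqrt r$ for every signing, balanced or not. The only loss is the prefactor $e^{-\rho/\sqrt r}\ge e^{-1}$, which is the source of the $1/e$; there is no case split and no query at $0$.

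Your antisymmetrisation discards the $T\cosh$ term, which is why $1-\bar\sigma^2$ appears and you need $q=0$. The step you call the main obstacle closes in one line. Each summand equals $\sinh\bigl(\alpha\rho(1-\sigma_i\bar\sigma)/\sqrt r\bigr)\ge\alpha\rho(1-\sigma_i\bar\sigma)/\sqrt r$, because $\sinh x\ge x$ for $x\ge0$; there is no convexity error to absorb, since the inequality points the right way.
\begin{itemize}
\item If $|\bar\sigma|<\rho/(e\sqrt r)\le e^{-1}$, then $\max_{\pm}|g(\pm\rho)|\ge\alpha(1-e^{-2})\rho\sqrt r>\rho\sqrt r/e$.
\item Otherwise $|g(0)|=r|\bar\sigma|\ge\rho\sqrt r/e$.
\end{itemize}
With threshold $c\rho/\sqrt r$ and $c=(\sqrt5-1)/2$, your route even yields $c\rho\sqrt r$, a better constant than the paper's.

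\textbf{Coreset in (i).} Corollary~\ref{cor:perquery} uses the same signing in $\R^3$: Lemma~\ref{lem:bf} is the Beck--Fiala linear-algebra argument, i.e.\ the full-sum B\'ar\'any--Grinberg bound with $c=3$. The paper then telescopes the ratios level by level. That forces it to carry $B_j\ge n/2^{j}-3e^{\rho}$ through the recursion and to require $n/2^{j}\ge12e^{\rho}$ at every level, using $\eps\le1$. Your identity divides once by the full $B\ge n$ and needs no intermediate denominator bound.

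Your skipped bookkeeping reproduces the paper's constant exactly. Write $n_{j+1}=(n_j+\sum_i\sigma_i)/2$ and use $|A_{P_{j+1}}|\le B_{P_{j+1}}\le n_{j+1}e^{\rho}$. Then each of $\tfrac{n}{n_j}A_{P_j}$ and $\tfrac{n}{n_j}B_{P_j}$ moves by at most $3e^{\rho}n/n_j$ from $\delta$ plus $3e^{\rho}n/n_j$ from the cardinality slack. The final error is therefore below $12e^{\rho}2^{\ell}/n$, which is the paper's $4D2^{\ell}/n$ with $D=3e^{\rho}$.

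\textbf{Repairs.}
\begin{itemize}
\item \emph{Stopping rule.} Take $\ell$ maximal with $n/2^{\ell}\ge12e^{\rho}/\eps$, and keep the larger sign class so that $n_j\ge n/2^{j}$; then $|S|<24e^{\rho}/\eps+3$. Your threshold of $24e^{\rho}/\eps$ on the stopping size only gives $|S|<48e^{\rho}/\eps+3$.
\item \emph{Greedy rule.} The rule must make the contribution $\sigma_iw_i$, not $\sigma_i$, opposite in sign to the partial sum; this is the paper's condition $\sigma_jw_jP_{j-1}\le0$. Since $w_i<0$ whenever $v_i<0$, your rule read literally lets all-negative weights accumulate without bound.
\item \emph{Drop the ``pair, then greedy on two coordinates'' alternative.} A sequential sign rule does not keep a sum in $\R^2$ bounded in general. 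If each new vector is orthogonal to the running sum, then $\norm{P_j}_2^2=\norm{P_{j-1}}_2^2+\norm{y_j}_2^2$ whichever sign is chosen. Only the B\'ar\'any--Grinberg (Beck--Fiala) route works there.
\end{itemize}
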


\paragraph{Empirical results.}
Section~\ref{sec:exp} reports four experiments whose design and decision
rules were fixed before any result was read into the paper, with the
additions and departures listed in Appendix~\ref{app:prereg}. First, on
every layer and key--value head of Qwen2.5-7B-Instruct and Llama-3-8B-Instruct
at 4k--32k tokens, the median head has $\rho=\vEaQrhoMax$ and \vEaLrhoMax;
$e^{\rho}\ge10^{6}$ on \vEaQfracHi\ and \vEaLfracHi\ of heads; $\rho$ grows
as $n^{\vScQExp}$ and $n^{\vScLExp}$ with the context length $n$ over a
corpus that changes with the length, mostly through the key radius $\kappa$;
and the smallest $\rho$ of any head at 512 tokens is \vScLRhoMinHead. Every
non-trivial size bound in this paper, and every whole-ball bound that
carries a factor $e^{\rho}/\eps$, is therefore larger than the cache on these
models, and at matched size on real heads the construction's worst-case
error is \vEbQRatioMed\ and \vEbLRatioMed\ times uniform sampling's at the
median cell, with the baselines within five hundredths of uniform sampling
at every size. Second, where $\rho\le4$ is reached, on synthetic
instances, the construction reaches $\eps=0.1$ with \vEdGainLo\ to
\vEdGainHi\ times fewer pairs than uniform sampling at the median
configuration; the fitted exponent of its size in $\rho$ is at most
\vEdMaxSlope, lies within the pre-registered $1\pm0.1$ in \vEdOhWithin\ of
\vEdConstructionCells\ configurations, and exceeds every baseline's in
\vEdOursHighest\ configurations, where a lower slope for a baseline
contradicts nothing, since uniform sampling's bound has slope two in $\rho$
and the other baselines carry no whole-ball bound. Third, in a closed decoding loop a query-aware
heavy-hitter oracle degrades relative to the open loop, by a factor
\vEcQHhRatio\ on Qwen and \vEcLHhRatio\ on Llama over the rows in which the
loops diverge, while the three query-oblivious methods run, uniform
sampling, BalanceKV and the re-centred variant of the construction, do not;
this is consistent with
the modelling argument of Section~\ref{sec:why}, with the ceiling caveat of
Section~\ref{sec:e3}. The algorithmic contribution is asymptotic on the
models measured; the measurement of $\rho$ is a fact about attention rather
than about the construction.

\subsection{What is inherited and what is new}
\label{sec:provenance}

Table~\ref{tab:provenance} lists every device in the paper with its source.
The claims we make are: the four-branch envelope~\eqref{eq:envelope} and the
observation that the $\rho^{1/2+o(1)}$ of~\cite{LAK26} is an artefact of its
width input; the polynomial-time construction with a uniform-over-the-ball
guarantee, in its three forms; the fixed-dimension collapse of the logarithm,
including the balance repair (Lemma~\ref{lem:balance}) and the value encoding
(Lemma~\ref{lem:encode}); the centred value-direction instance and its full
proof; the transfer of~\cite{CFIKKP26}; Proposition~\ref{prop:sep}; and the
measurements of Section~\ref{sec:exp}.
Two framings must be kept apart. Against the
value-direction floor of Theorem~\ref{thm:main-lb}(a), with $\dk=\dv=d$, the
gap is $\sqrt{\log(1+\rho)}$; in fixed dimension the upper bound loses the
logarithm, but there the floor no longer applies (it needs
$\dk\gtrsim\rho^{3}$), and we prove no lower bound of order $e^{\rho}/\eps$. For scalar
values, no lower bound with any $\dk$-dependence is known, so there the gap
is the entire dimensional factor $\min\{\sqrt{\dk\log(1+\rho)},F(\dk,1)\}$
against $1$, and the sampling branch is unmatched from below.

\begin{table}[htbp]
\centering\small
\begin{tabular}{@{}p{5.3cm}p{9.5cm}@{}}
\toprule
Device & Source \\
\midrule
Tensor lift $\psi(k)\otimes v$, centring invariance, balanced rescaling &
  \cite[Lem.~16]{KL19}; \cite[App.~B.5]{KSHZK25}; \cite[\S6.1.2]{CFIKKP26}; \cite{SM26}\\
Halving; ratio split into numerator/denominator discrepancies &
  \cite{CM96,PT18}; \cite[Thm.~4]{LAK26}; \cite[\S2.2]{KSHZK25}; \cite[Lem.~B.1]{HKV26}\\
Scalar spherical exponential-kernel discrepancy &
  \cite[Thm.~1.11]{BR23}; earlier \cite[Lem.~17]{KL19}, \cite[Cor.~4.3]{PT18}\\
Key/value split of the Gaussian width &
  Chevet~\cite{Che78,Ver18}; sphere case \cite[Thm.~1]{TS14}\\
Gram--Schmidt walk in halving; Las Vegas rejection &
  \cite{BDGL18}; \cite[\S2]{PT18}; \cite{BR23}; \cite[Alg.~1]{Tai22}\\
RKHS $1/\eps^{2}$ bounds (sampling and greedy); two-block form &
  \cite{LPMST15}; \cite[Thm.~24]{KL19}; \cite[Thm.~4.5]{HKV26}; \cite{ZHMK24}\\
Exponential kernel on a sphere is Gaussian &
  \cite[\S4.2]{BR23}; \cite[eq.~(2)]{HKV26}; \cite[eq.~(3)]{Tai22}\\
Diameter-free Gaussian discrepancy in fixed dimension &
  \cite[Lem.~13]{Tai22}; up to $\sqrt{\log\log n}$, \cite[Thm.~1.9]{BR23}\\
INDEX gadget multiplexed over value coordinates &
  \cite[Thm.~6]{HO25}; \cite[Thm.~3.2]{KSHZK25}; \cite[Thm.~7]{LAK26}\\
One-way bounds with side information &
  \cite{CFIKKP26}\\
Beck--Fiala linear algebra; sorted alternating signing &
  \cite{BF81}; \cite{Phi13}\\
\bottomrule
\end{tabular}
\caption{Inherited devices. New here: the assembly in Theorem~\ref{thm:main}, the uniform-guarantee construction, the sphere completion and balance repair and value encoding of Section~\ref{sec:nolog}, the centred instance and full proof in Theorem~\ref{thm:main-lb}(a), the transfer in~(c), and Proposition~\ref{prop:sep}.}
\label{tab:provenance}
\end{table}

% The provenance table cannot fit below its paragraph; without the page break
% it opened the next page inside the first sentence of the related work.
\newpage
\subsection{Related work}

Kernel-density coresets by discrepancy go back to Phillips~\cite{Phi13} and
Phillips--Tai~\cite{PT18}, who also posed the $O(\sqrt d/\eps)$ question;
Karnin--Liberty~\cite{KL19} lifted dot-product kernels by tensor powers;
Bozzai--Rothvoss~\cite{BR23} proved the chaining bound we use, and a
fixed-dimension bound with a $\sqrt{\log\log n}$ that Tai~\cite{Tai22} removes
for the Gaussian kernel; Charikar--Kapralov--Waingarten~\cite{CKW24} gave
quasi-Monte-Carlo structures for smooth kernels. For attention,
BalanceKV~\cite{KSHZK25} and Chen et al.~\cite{CFIKKP26} give discrepancy-based
streaming algorithms with per-query guarantees and streaming lower bounds,
Haris--Onak~\cite{HO25} bit-model lower bounds, Haverbeck et al.~\cite{HKV26}
an averaged risk analysis, Zandieh et al.~\cite{ZHMK24} a sampling bound,
Schr\"oder--Mackey~\cite{SM26} a weighted near-linear-time construction, and
Liberty--Andoni--Kleiner~\cite{LAK26} the existential bounds we improve.
Thinning-based attention approximation~\cite{CGSDM25,GCDM26} selects unweighted
or weighted subsets with guarantees for the observed queries rather than
uniformly over a ball. Wang~\cite{Wang26} proves an oblivious-versus-adaptive
separation for KV caches in a combinatorial pointer-chasing model; the
separation here is one of softmax discrepancy and the two are not comparable. Sui and
Zhang~\cite{SZ26} study the approximation rank of attention, a different
object. Bohbot et al.~\cite{BLPV25} prove a $C(R)/\sqrt n$ token-sample rate
for i.i.d.\ tokens, superficially like our sampling cap but for a different
problem. The heavy-hitter oracle of Section~\ref{sec:exp} is a query-aware
eviction in the style of SnapKV~\cite{Li24}, fitted on the decoding queries
themselves; kernel thinning~\cite{DM21} is the primitive behind the thinning
of~\cite{CGSDM25}.

% =====================================================================
\section{Why one signing for the whole ball}
\label{sec:why}

A coreset built for the cache of a decoding transformer is used by queries that
are computed \emph{from} the compressed cache: the query at step $t+1$ is a
function of the attention outputs at steps $\le t$, which were produced from
$S$. A guarantee that holds with probability $1-\delta$ for each query fixed
in advance is a statement about queries independent of the randomness that
produced $S$; the closed loop makes them dependent, and the failure events at
successive steps are neither independent nor controlled. A guarantee
uniform over the ball, as in~\eqref{eq:coreset}, is insensitive to this. This
is a modelling argument, not a theorem; Proposition~\ref{prop:sep} is the
theorem, and it says the uniform guarantee is a different quantity, priced by
the dimension. Section~\ref{sec:e3} tests the argument in a decoding loop,
a query-aware oracle against query-oblivious methods: the outcome is
consistent with it and, for the reason given there, it is not a strong test.

\begin{proof}[Proof of Proposition~\ref{prop:sep}]
(i) Write $w_i:=e^{q^{\!\top}k_i}v_i$ and $M:=\max_i|w_i|\le e^{\rho}$. Process
the indices in any order, maintaining $P_j:=\sum_{i\le j}\sigma_iw_i$, and choose
$\sigma_j$ with $\sigma_jw_jP_{j-1}\le0$. Inductively $|P_{j-1}|\le M$: if
$P_{j-1}=0$ then $|P_j|=|w_j|\le M$; if $P_{j-1}\in(0,M]$ then
$\sigma_jw_j\in[-M,0]$ and $P_j\in(-M,M]$; symmetrically for
$P_{j-1}\in[-M,0)$. So $|P_n|\le M$. The coreset statement is
Corollary~\ref{cor:perquery} below.

(ii) Let $c_j:=e_j-\tfrac1r\sum_{i\le r}e_i\in\R^{r}\subseteq\R^{\dk}$,
$j=1,\dots,r$. Then $\sum_jc_j=0$ and $\norm{c_j}^{2}=1-\tfrac1r\le1$, so the
instance satisfies~\eqref{eq:norm} with the given $\rho$. Fix a signing
$\sigma$, put $T:=\sum_j\sigma_j$, and for $z\in\{\pm1\}$ consider
$q_z:=\rho z\,r^{-1/2}\sum_j\sigma_je_j$, of norm $\rho$. Then
$q_z^{\!\top}c_j=\rho zr^{-1/2}(\sigma_j-T/r)$, so
\[
  \sum_j\sigma_je^{q_z^{\!\top}c_j}
  =e^{-\rho zT/(r\sqrt r)}\Bigl[T\cosh\bigl(\rho/\sqrt r\bigr)+z\,r\sinh\bigl(\rho/\sqrt r\bigr)\Bigr].
\]
One of the two choices of $z$ makes the bracket at least $r\sinh(\rho/\sqrt r)$
in absolute value, and $|T|\le r$ gives $e^{-\rho zT/(r\sqrt r)}\ge e^{-\rho/\sqrt r}\ge e^{-1}$
for $\rho\le\sqrt r$. Since $\sinh x\ge x$, the absolute value is at least
$\rho\sqrt r/e$.
\end{proof}

\begin{lemma}[Three constraints at one query; Beck--Fiala~\cite{BF81}]
\label{lem:bf}
Let $a_1,\dots,a_m\in\R^{3}$ with $|a_i^{(c)}|\le M_c$ for $c=1,2,3$. There is
$\sigma\in\{\pm1\}^{m}$, computable in polynomial time, with
$|\sum_i\sigma_ia_i^{(c)}|\le3M_c$ for each $c$.
\end{lemma}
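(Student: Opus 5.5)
We follow the Beck--Fiala scheme via iterated rounding of a fractional colouring. First rescale each coordinate: replace $a_i^{(c)}$ by $a_i^{(c)}/M_c$, where a coordinate with $M_c=0$ is identically zero and can be dropped. This reduces to $|a_i^{(c)}|\le1$ with target bound $3$. Start from the fractional point $x=0\in[-1,1]^m$ and maintain the \emph{floating} set $F=\{i:|x_i|<1\}$. While $|F|>3$, look at the linear system
\[
\sum_{i\in F}y_ia_i^{(c)}=0\quad(c=1,2,3),\qquad y_i=0\ (i\notin F).
\]
It has more unknowns than equations, so it has a nonzero solution $y$, found by Gaussian elimination. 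Move $x\mapsto x+ty$ with the largest $|t|$ that keeps $x\in[-1,1]^m$. At least one more coordinate becomes $\pm1$, and each partial sum $\sum_ix_ia_i^{(c)}$ is unchanged, so it stays $0$. After at most $m$ such steps $|F|\le3$.

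In the final phase, round the at most $3$ remaining floating coordinates arbitrarily to $\pm1$, for instance $\sigma_i=\operatorname{sign}(x_i)$ with $0\mapsto+1$. For $i\in F$ each change $|\sigma_i-x_i|$ is at most $2$, but in fact at most $1$ under sign rounding, since $|x_i|<1$. So the error in coordinate $c$ is at most $\sum_{i\in F}|\sigma_i-x_i|\,|a_i^{(c)}|\le 3\cdot1=3$. Undoing the rescaling gives $|\sum_i\sigma_ia_i^{(c)}|\le3M_c$. The sign rounding gives the constant $3$ directly, so we do not need the sharper Beck--Fiala rounding that achieves $2t-1$.

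Polynomial time is immediate. There are at most $m$ iterations, each solving a $3\times|F|$ linear system in $O(m)$ time by reducing to $4$ floating columns: any $4$ columns of a $3$-row system are linearly dependent. The total cost is therefore $O(m)$ arithmetic operations. The only care point, which is mild, is to enforce strict progress: choose $t$ so that a new coordinate hits $\pm1$, and restrict the system to floating coordinates so that fixed coordinates never move. No step is a real obstacle. The lemma is a three-row instance of linear-algebraic rounding, and the constant $3$ is just the number of constraints times the per-entry bound.
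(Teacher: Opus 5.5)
Your proof is correct and is essentially the paper's argument: keep $x\in[-1,1]^m$ with $\sum_i x_ia_i=0$, move along a kernel direction supported on the floating set until one more coordinate reaches $\pm1$, stop at $|F|\le3$, and finish with sign rounding, which gives $|\sigma_i-x_i|\le1$ on $F$ and hence the bound $3M_c$. The rescaling by $M_c$ is harmless but unnecessary, since the bound $\sum_{i\in F}|\sigma_i-x_i|\,|a_i^{(c)}|\le3M_c$ holds without it. One small inconsistency: with the four-column reduction each iteration costs $O(1)$, not $O(m)$, and that is what your stated $O(m)$ total requires.
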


\begin{proof}
Maintain $x\in[-1,1]^{m}$ with $\sum_ix_ia_i=0$ and the set $F$ of indices with
$|x_i|<1$, starting from $x=0$, $F=[m]$. While $|F|\ge4$ the map
$y\mapsto\sum_{i\in F}y_ia_i$ from $\R^{F}$ to $\R^{3}$ has a nonzero kernel
element $y$; move $x\leftarrow x+ty$ with $|t|$ minimal such that a coordinate
in $F$ reaches $\pm1$. The invariant is preserved and $|F|$ decreases, so the
loop ends with $|F|\le3$ after at most $m$ steps. Put $\sigma_i=x_i$ for
$i\notin F$ and $\sigma_i=\operatorname{sign}(x_i)$ for $i\in F$
($\operatorname{sign}0:=1$); then $|\sigma_i-x_i|\le1$ on $F$ and
$\sum_i\sigma_ia_i=\sum_{i\in F}(\sigma_i-x_i)a_i$ has $c$-th coordinate at
most $3M_c$ in absolute value.
\end{proof}

\begin{corollary}[A coreset for one query]
\label{cor:perquery}
Under~\eqref{eq:norm} with scalar values, for every fixed $q$ with
$\norm q\le\rho$ and every $\eps\in(0,1]$ there is $S\subseteq[n]$, computable
deterministically in polynomial time, with $|\Attn_S(q)-\Attn(q)|\le\eps$
and $|S|\le24e^{\rho}/\eps+3$.
\end{corollary}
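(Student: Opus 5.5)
The plan is a halving argument: sign with Lemma~\ref{lem:bf} at the single fixed $q$, keep the $+1$ class, and repeat until the set has size about $e^{\rho}/\eps$. Write $w_i:=e^{q^{\!\top}k_i}\in[e^{-\rho},e^{\rho}]$ and $A_T:=\sum_{i\in T}w_iv_i$, $B_T:=\sum_{i\in T}w_i$.

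At each round, apply Lemma~\ref{lem:bf} to the current set $T$ with the three-coordinate vectors
\[
a_i=(w_iv_i,\;w_i,\;1),
\]
so that $M_1,M_2\le e^{\rho}$ and $M_3=1$. The resulting signing $\sigma$ has:
\begin{itemize}
\item numerator discrepancy $|\sum\sigma_iw_iv_i|\le3e^{\rho}$;
\item denominator discrepancy $|\sum\sigma_iw_i|\le3e^{\rho}$;
\item size discrepancy $|\sum\sigma_i|\le3$.
\end{itemize}
Let $T'=\{i:\sigma_i=+1\}$. Then $|T'|\le(|T|+3)/2$, and
\[
2A_{T'}-A_T=\sum\sigma_iw_iv_i,\qquad 2B_{T'}-B_T=\sum\sigma_iw_i .
\]
Hence the rescaled sums $2A_{T'},2B_{T'}$ differ from $A_T,B_T$ by at most $3e^{\rho}$ each. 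Each step runs in polynomial time and is deterministic, as Lemma~\ref{lem:bf} promises.

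To propagate the error, track the ratio directly. Use the elementary bound, valid since $|v_i|\le1$ gives $|A/B|\le1$:
\[
\Bigl|\tfrac{A'}{B'}-\tfrac{A}{B}\Bigr|\le\frac{|A'-A|+|B'-B|}{B'} .
\]
Applied to $A'=2A_{T'}$, $B'=2B_{T'}$, the error at that step is at most $6e^{\rho}/(2B_{T'})$. Since every $w_i\ge e^{-\rho}$ we have $B_{T'}\ge e^{-\rho}|T'|$, but this alone would cost $e^{2\rho}$.

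The better route is to compare with the full set. After $t$ rounds, the current set $T_t$ satisfies
\[
2^tB_{T_t}\ge B-3e^{\rho}(2^t-1)
\]
by telescoping the scaled denominator errors $2^{s}\cdot3e^{\rho}$. Here $B\ge n$ by Jensen, since the keys are centred. The numerator errors also telescope, giving
\[
|\Attn_{T_t}(q)-\Attn(q)|\le\frac{6e^{\rho}2^t}{B-3e^{\rho}2^t}.
\]
With $n_t:=|T_t|\approx n/2^t$, this is roughly $6e^{\rho}/(n_t-3e^{\rho})$.

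Stop at the last round where $n_t\ge24e^{\rho}/\eps$, reading $24$ as $6\cdot 4$ with the $3e^{\rho}$ loss absorbed by $\eps\le1$. Then the error is at most about $\eps/3$, and $|S|$ is at most twice the threshold plus the additive $3$s. Since each halving at most halves the size plus $3/2$, the final set has $|S|\le24e^{\rho}/\eps+3$ once the threshold is chosen as $12e^{\rho}/\eps$ for the last retained level.

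The main obstacle is the bookkeeping of the additive $+3$ per round and the rounding $2^t$ versus $n/n_t$. I would handle it by bounding $n/2^t\le n_t\le n/2^t+3$, which follows by induction from $|T'|\le(|T|+3)/2$. With that in hand, the constant comes out as stated, or at worst requires a slight retuning of the stopping threshold. If $n\le24e^{\rho}/\eps+3$ already, take $S=[n]$.
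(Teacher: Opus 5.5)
Your proof is correct, and its skeleton is the paper's: Lemma~\ref{lem:bf} on $(w_iv_i,w_i,1)$, keep one sign class, use $B\ge n$ by Jensen, and stop at the last level $t$ with $n/2^{t}\ge 12e^{\rho}/\eps$. The difference is in how the error is propagated.

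The paper bounds each level's ratio increment, $|A_{j+1}/B_{j+1}-A_j/B_j|\le 2D/(B_j-D)\le 4D2^{j}/n$ with $D=3e^{\rho}$. This needs the per-level denominator bound $B_j\ge n/2^{j}-D$ and the condition $n/2^{j}\ge 4D$ at every level, after which a geometric series is summed.

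You instead telescope the rescaled sums, $|2^{t}A_{T_t}-A|,\ |2^{t}B_{T_t}-B|\le D(2^{t}-1)$, and apply the ratio-perturbation inequality once, which gives error at most $2D/(n/2^{t}-D)$. What this buys:
\begin{itemize}
\item It is slightly sharper. At the paper's threshold it gives $2\eps/3$ rather than $\eps$, and stopping at $n/2^{t}\ge 3D/\eps$ would already give size below $18e^{\rho}/\eps+3$.
\item It needs no per-level control of the denominators.
\end{itemize}
What the paper's per-level form buys is that it matches Lemma~\ref{lem:step}(iii). It also matches the level-by-level expectation argument behind Theorem~\ref{thm:main-alg}, where the realised discrepancies are random at each level.

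Two small repairs are needed.
\begin{itemize}
\item Your write-up first proposes stopping at $n_t\ge 24e^{\rho}/\eps$, which allows size up to about $48e^{\rho}/\eps$. State the rule once, as $t$ maximal with $n/2^{t}\ge 12e^{\rho}/\eps$. Then $n/2^{t}-3e^{\rho}\ge 9e^{\rho}/\eps$ gives error at most $2\eps/3$, and maximality gives $n_t\le n/2^{t}+3<24e^{\rho}/\eps+3$.
\item The lower bound $n_t\ge n/2^{t}$ does not follow from $|T'|\le(|T|+3)/2$ alone. It requires keeping the larger class, flipping $\sigma$ when $\sum_i\sigma_i<0$, as the paper does. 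Your argument never actually uses it, because your error bound depends only on $t$ and $B\ge n$, and the size bound uses only the upper estimate on $n_t$.
\end{itemize}
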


\begin{proof}
Write $w_i=e^{q^{\!\top}k_i}\in(0,e^{\rho}]$ and $D:=3e^{\rho}$. Given $S_j$ of
size $n_j$ with $A_j=\sum_{S_j}w_iv_i$, $B_j=\sum_{S_j}w_i$, apply
Lemma~\ref{lem:bf} to $a_i=(w_iv_i,w_i,1)$ with $M_1=M_2=e^{\rho}$, $M_3=1$,
replace $\sigma$ by $-\sigma$ if $\sum\sigma_i<0$, and let
$S_{j+1}=\{i\in S_j:\sigma_i=+1\}$. With $\delta_A=\sum\sigma_iw_iv_i$,
$\delta_B=\sum\sigma_iw_i$ we have $|\delta_A|,|\delta_B|\le D$,
$A_{j+1}=\tfrac12(A_j+\delta_A)$, $B_{j+1}=\tfrac12(B_j+\delta_B)$, and
$n_j/2\le n_{j+1}\le n_j/2+\tfrac32$, hence $n/2^{j}\le n_j\le n/2^{j}+3$. By
Jensen $B_0\ge n$, and by induction $B_j\ge n/2^{j}-D$. If $n/2^{j}\ge4D$ then
$B_j-D\ge n/2^{j+1}$ and, since $|A_j|\le B_j$,
\[
  \Bigl|\frac{A_{j+1}}{B_{j+1}}-\frac{A_j}{B_j}\Bigr|
  =\Bigl|\frac{\delta_AB_j-\delta_BA_j}{B_j(B_j+\delta_B)}\Bigr|
  \le\frac{2D}{B_j-D}\le\frac{4D\,2^{j}}{n}.
\]
Take $\ell\ge0$ maximal with $n/2^{\ell}\ge4D/\eps$ (if none, $n<4D/\eps$ and
$S=[n]$ serves). Then $n/2^{j}\ge4D$ for $j<\ell$, the errors sum to at most
$4D2^{\ell}/n\le\eps$, and $|S_\ell|<8D/\eps+3$.
\end{proof}

\begin{remark}
At the level of coreset size the separation is not proved: the per-query size
is $O(e^{\rho}/\eps)$ for scalar values, the query-oblivious upper bound is
$O(e^{\rho}\min\{\sqrt{\dk\log(1+\rho)},F(\dk,1)\}/\eps)$, and whether a
query-oblivious coreset for scalar values needs any factor of $\dk$ is the
open key-direction question; the local INDEX architecture cannot settle it,
since one bit per token decoded at one query yields at most
$e^{\rho}\sqrt{\dv}/\eps$ (see~\cite{LAK26} and Appendix~\ref{app:lb}).
From the kernel-density side, uniformity in the query is the standard
definition of kernel discrepancy~\cite{PT18,Tai22,BR23}; the distinction is
meaningful against the attention literature~\cite{KSHZK25,CFIKKP26,HKV26}.
\end{remark}

% =====================================================================
\section{The upper bound}
\label{sec:upper}

\subsection{Two reductions}

Bozzai and Rothvoss~\cite{BR23} consider
$K_e^{(\alpha)}(x,y)=\exp(-\alpha(1-\inner{x}{y}))$ on $S^{d}\times S^{d}$ and
prove $\disc_{K_e^{(\alpha)}}(n)=O(\sqrt{d\log(2\max\{\alpha,1\})})$ with a
randomised polynomial-time colouring (their Theorem~1.11). Their theorem is for
points on a sphere; the keys here lie in a ball and the queries in another.

\begin{lemma}[Two-slack lift]
\label{lem:slack}
Under~\eqref{eq:norm} with $\rho>0$ put
$\widetilde k_i:=(k_i,\sqrt{1-\norm{k_i}^{2}},0)$ and
$\widetilde q:=(q/\rho,0,\sqrt{1-\norm{q}^{2}/\rho^{2}})$, both in
$S^{\dk+1}\subset\R^{\dk+2}$. Then $\inner{\widetilde k_i}{\widetilde q}=q^{\!\top}k_i/\rho$
and $e^{q^{\!\top}k_i}=e^{\rho}K_e^{(\rho)}(\widetilde k_i,\widetilde q)$.
\end{lemma}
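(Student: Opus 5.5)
The plan is a direct verification in three steps: membership of the lifted points in the sphere, the inner-product identity, and the kernel identity.

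\emph{Membership.} Under~\eqref{eq:norm} we have $\norm{k_i}\le1$, so $\sqrt{1-\norm{k_i}^{2}}$ is real and
\[
\norm{\widetilde k_i}^{2}=\norm{k_i}^{2}+(1-\norm{k_i}^{2})+0=1 .
\]
Likewise $\norm{q}\le\rho$ and $\rho>0$ give $\norm{q/\rho}\le1$, so the third slack $\sqrt{1-\norm{q}^{2}/\rho^{2}}$ is real and $\norm{\widetilde q}^{2}=1$. Both vectors therefore lie on $S^{\dk+1}\subset\R^{\dk+2}$.

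\emph{Inner product.} The key carries its slack in coordinate $\dk+1$ and the query in coordinate $\dk+2$. In each of these two coordinates one of the two vectors has a zero entry, so neither slack contributes. This disjoint placement of the slacks is the only point that needs care; everything else is arithmetic. Hence
\[
\inner{\widetilde k_i}{\widetilde q}=k_i^{\!\top}(q/\rho)=q^{\!\top}k_i/\rho .
\]

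\emph{Kernel identity.} By the definition of the Bozzai--Rothvoss kernel with $\alpha=\rho$,
\[
e^{\rho}K_e^{(\rho)}(\widetilde k_i,\widetilde q)
=e^{\rho}\exp\bigl(-\rho+\rho\inner{\widetilde k_i}{\widetilde q}\bigr)
=\exp(q^{\!\top}k_i).
\]

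Nothing deep is needed: centring plays no role in the identity itself, and the only hypotheses used are the two norm bounds and $\rho>0$.
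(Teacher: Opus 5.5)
Your verification is correct and follows the same route as the paper's proof. Placing the two slacks in disjoint coordinates makes them contribute nothing to the inner product while normalising both vectors, and the kernel identity $e^{\rho}K_e^{(\rho)}(\widetilde k_i,\widetilde q)=e^{q^{\top}k_i}$ then follows immediately from the definition of the kernel.
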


\begin{proof}
The two slack coordinates are orthogonal, contribute nothing to the inner
product, and normalise both vectors; the identity is exact on the whole key
ball and query ball.
\end{proof}

\begin{corollary}[Scalar discrepancy]
\label{cor:scalar}
Under~\eqref{eq:norm} with $\rho>0$,
\[
  \min_\sigma\,\sup_{\norm q\le\rho}\Bigl|\sum_i\sigma_ie^{q^{\!\top}k_i}\Bigr|
  =O\bigl(e^{\rho}\sqrt{\dk\log(2+\rho)}\bigr),
\]
constructively, by~\cite[Thm.~1.11]{BR23} applied to
$\{\widetilde k_i\}\subset S^{\dk+1}$.
\end{corollary}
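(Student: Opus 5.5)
The plan is to transfer the Bozzai--Rothvoss bound through the lift of Lemma~\ref{lem:slack} and then handle the bookkeeping between their normalisation and ours. Apply Lemma~\ref{lem:slack} to obtain points $\widetilde k_i\in S^{\dk+1}$ and, for each query with $\norm q\le\rho$, a point $\widetilde q\in S^{\dk+1}$ with
\[
  e^{q^{\!\top}k_i}=e^{\rho}K_e^{(\rho)}(\widetilde k_i,\widetilde q).
\]
Hence, for every signing $\sigma$,
\[
  \sup_{\norm q\le\rho}\Bigl|\sum_i\sigma_ie^{q^{\!\top}k_i}\Bigr|
  \le e^{\rho}\sup_{y\in S^{\dk+1}}\Bigl|\sum_i\sigma_iK_e^{(\rho)}(\widetilde k_i,y)\Bigr|.
\]
The supremum on the left runs over the image of the query ball, which is the closed half-sphere where the last coordinate is nonnegative and the $(\dk+1)$-st coordinate is zero. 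That set is contained in $S^{\dk+1}$, so bounding the full-sphere supremum suffices.

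Next, invoke \cite[Thm.~1.11]{BR23} in ambient sphere dimension $d=\dk+1$ with $\alpha=\rho$. It supplies, in randomised polynomial time, a signing whose full-sphere kernel discrepancy is $O(\sqrt{(\dk+1)\log(2\max\{\rho,1\})})$. Multiplying by $e^{\rho}$ and using $\dk+1\le2\dk$ together with $\log(2\max\{\rho,1\})\le\log 2+\log(2+\rho)=O(\log(2+\rho))$ gives the stated bound. Since the $\min_\sigma$ is at most the value at this particular signing, the corollary follows, and it is constructive because the BR23 colouring is.

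The main point to check is that \cite[Thm.~1.11]{BR23} really controls the supremum over all $y\in S^{d}$, not only over the data points. Their discrepancy $\disc_K$ is defined as $\sup_y$ over the whole sphere, so this holds. A secondary check is that their kernel is normalised exactly as $\exp(-\alpha(1-\inner xy))$ with the same $\alpha$, which the identity in Lemma~\ref{lem:slack} matches. Finally, if their theorem allows an additive constant or demands $n$ of a particular form, absorb it into the $O(\cdot)$; the prefactor $e^{\rho}\ge1$ keeps this harmless.
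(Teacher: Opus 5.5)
Your proposal is correct and follows essentially the same route as the paper: lift through Lemma~\ref{lem:slack}, enlarge the supremum from the image of the query ball to all of $S^{\dk+1}$, and apply \cite[Thm.~1.11]{BR23} with $\alpha=\rho$ and $d=\dk+1$. The remaining bookkeeping, $\dk+1\le2\dk$ and $\log(2\max\{\rho,1\})=O(\log(2+\rho))$, is exactly what is needed to reach the stated form.
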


Writing $k_i=\kappa\hat k_i$ silently assumes $\norm{k_i}=\kappa$; the lift is
the correct reduction. The guard $\log(1+\rho)$ for $\rho<1$, where the query
ball collapses and the width degenerates, is a computation inside the same
chaining method (Lemma~\ref{lem:dudQ}); \cite{BR23} cap the bandwidth at
$\max\{\alpha,1\}$, so their bound does not vanish as $\alpha\to0$.

For vector values one runs the colouring on the tensor features
$\phi(k_i)\otimes v_i$ and needs the signed sum controlled simultaneously for
all $q$ and all unit $u\in\R^{\dv}$. Let $\psi(z)=\bigoplus_{m\ge0}z^{\otimes m}/\sqrt{m!}$,
so $\inner{\psi(y)}{\psi(z)}=e^{y^{\!\top}z}$, and, for $\rho>0$, put
$\phi_0(k)=e^{-\rho/2}\psi(\sqrt\rho\,k)$, $\Phi_0(q)=e^{\rho/2}\psi(q/\sqrt\rho)$,
so that $\inner{\Phi_0(q)}{\phi_0(k)}=e^{q^{\!\top}k}$, $\norm{\phi_0(k)}\le1$ and
$\norm{\Phi_0(q)}\le e^{\rho}$ on the respective balls. The test set is
$T=Q\otimes S^{\dv-1}$ with $Q=\{\Phi_0(q):\norm q\le\rho\}$ of radius $e^{\rho}$.

\begin{lemma}[Chevet's inequality, two-sided]
\label{lem:chevet}
For bounded $A\subset H_1$, $B\subset H_2$ in Hilbert spaces, with $w$ the
Gaussian width,
$\max\{\rad(A)w(B),\rad(B)w(A)\}\le w(A\otimes B)\lesssim\rad(A)w(B)+\rad(B)w(A)$.
\end{lemma}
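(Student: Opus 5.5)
The plan is to read the Gaussian width of a set $C$ in a Hilbert space $H$ as $w(C)=\sup_{C_0}\E\sup_{c\in C_0}\inner{g}{c}$, the supremum over finite $C_0\subseteq C$, where $g$ is an isonormal Gaussian process on $H$. With this definition we may work with finite sets $A,B$ throughout. We then pass to $A\otimes B=\{a\otimes b\}$ in $H_1\otimes H_2$ by monotone convergence along an increasing sequence of finite subsets. For finite sets everything lives in $\mathrm{span}(A)\otimes\mathrm{span}(B)\cong\R^{m}\otimes\R^{p}$. There the isonormal process on $H_1\otimes H_2$ restricts to a standard Gaussian matrix $G$, and $\inner{G}{a\otimes b}=a^{\!\top}Gb$. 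So the statement reduces to the matrix form of Chevet's inequality, with $X_{a,b}:=a^{\!\top}Gb$.

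For the lower bound, fix $b\in B$. The vector $Gb$ is Gaussian in $\R^{m}$ with covariance $\norm b^{2}I$, so $Gb\overset{d}{=}\norm b\,g$. Hence
\[
  \E\sup_{a\in A,\,b'\in B}X_{a,b'}\;\ge\;\E\sup_{a\in A}X_{a,b}\;=\;\norm b\,w(A).
\]
Taking the supremum over $b$ gives $w(A\otimes B)\ge\rad(B)w(A)$, and the symmetric argument gives $w(A\otimes B)\ge\rad(A)w(B)$. The only care needed is that the supremum defining $\rad$ may not be attained, so we take $b$ along a maximising sequence.

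For the upper bound I would follow Chevet's argument as given by Vershynin~\cite{Ver18}. Compare $X_{a,b}$ with the independent-sum process
\[
  Y_{a,b}:=\rad(B)\inner{g}{a}+\rad(A)\inner{h}{b},
\]
where $g,h$ are independent standard Gaussians, using the Sudakov--Fernique inequality. The required increment comparison is
\[
  \norm{a\otimes b-a'\otimes b'}^{2}\le\rad(B)^{2}\norm{a-a'}^{2}+\rad(A)^{2}\norm{b-b'}^{2}.
\]
Once it holds, Sudakov--Fernique gives $\E\sup X\le\E\sup Y=\rad(B)w(A)+\rad(A)w(B)$.

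The main obstacle is this increment inequality, because of the cross term. Write
\[
  a\otimes b-a'\otimes b'=(a-a')\otimes b+a'\otimes(b-b').
\]
Expanding the square leaves $2\inner{a-a'}{a'}\inner{b}{b-b'}$, which can be positive. I would first try to settle it directly. Expand both sides, with $\norm{a}^{2}\norm{b}^{2}+\norm{a'}^{2}\norm{b'}^{2}-2\inner{a}{a'}\inner{b}{b'}$ on the left. Then use $\norm a,\norm{a'}\le\rad(A)$ and $\norm b,\norm{b'}\le\rad(B)$, which reduces the difference to a sum of nonnegative terms, as in the textbook proof for rank-one sets. If the exact form resists, I would fall back on the cruder comparison with a constant factor $2$, obtained from $\norm{x+y}^{2}\le2\norm x^{2}+2\norm y^{2}$. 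Then $Y$ is replaced by $\sqrt2\,Y$, which is enough since the statement only claims $\lesssim$.

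Finally, the comparison and both bounds are uniform over finite subsets. Monotone convergence therefore carries them to arbitrary bounded $A,B$ in separable or general Hilbert spaces, since the width is defined through finite subsets.
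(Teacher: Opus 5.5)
Your proposal is correct and follows the paper's route: the lower bound freezes one factor ($Gb\overset{d}{=}\norm b\,g$), and the upper bound is the Sudakov--Fernique comparison that the paper cites from Chevet via \cite[\S8.7]{Ver18}. Drop the first attempt at the constant-one increment bound, because it is false: in $\R$ take $A=B=\{1,\tfrac{9}{10}\}$, $a=b=1$, $a'=b'=\tfrac{9}{10}$, and then $\norm{a\otimes b-a'\otimes b'}^{2}=0.0361$ while $\rad(B)^{2}\norm{a-a'}^{2}+\rad(A)^{2}\norm{b-b'}^{2}=0.02$; so your factor-$2$ comparison against $\sqrt2\,Y$ is the actual argument, and it gives $w(A\otimes B)\le\sqrt2\bigl(\rad(B)w(A)+\rad(A)w(B)\bigr)$, which is all the $\lesssim$ requires.
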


The upper bound is~\cite{Che78}, see~\cite[\S8.7]{Ver18}; the lower bound
freezes one factor. With $w(S^{\dv-1})=\Theta(\sqrt{\dv})$ and
$w(Q)=O(e^{\rho}\sqrt{\dk\log(1+\rho)})$ (Lemma~\ref{lem:dudQ} and Dudley's
inequality) this gives $w(T)=O(e^{\rho}(\sqrt{\dv}+\sqrt{\dk\log(1+\rho)}))$;
Appendix~\ref{app:coreset} derives the same bound with explicit constants
directly from the entropy integral, which is what the proofs consume.

\subsection{The reduction to a coreset}
\label{sec:reduction}

\begin{definition}[Balanced signing]
\label{def:disc}
$\sigma\in\{\pm1\}^{P}$ is $(D_A,D_B,D_N)$-balanced on $P\subseteq[n]$ if
\begin{equation}
\label{eq:lvl}
  \sup_{\norm q\le\rho}\Bigl\lVert\sum_{i\in P}\sigma_ie^{q^{\!\top}k_i}v_i\Bigr\rVert\le D_A,\quad
  \sup_{\norm q\le\rho}\Bigl|\sum_{i\in P}\sigma_ie^{q^{\!\top}k_i}\Bigr|\le D_B,\quad
  \Bigl|\sum_{i\in P}\sigma_i\Bigr|\le D_N .
\end{equation}
\end{definition}

\begin{lemma}[One halving step]
\label{lem:step}
Let $\sigma$ be $(D_A,D_B,D_N)$-balanced on $P$, $|P|=m$, and let $P'$ be the
larger of the two sign classes. Then (i) $m/2\le|P'|\le m/2+D_N/2$;
(ii) $B_{P'}(q)\ge\tfrac12(B_P(q)-D_B)$ for all $\norm q\le\rho$;
(iii) $\norm{\Attn_{P'}(q)-\Attn_P(q)}\le(D_A+D_B)/(2B_{P'}(q))$ for all $\norm q\le\rho$.
\end{lemma}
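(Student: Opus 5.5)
The three parts are elementary, and I would prove them in order from the definitions. Write $P_+=\{i\in P:\sigma_i=+1\}$ and $P_-=\{i\in P:\sigma_i=-1\}$, and let $P'$ be the larger class. Replacing $\sigma$ by $-\sigma$ if necessary, I assume $P'=P_+$; all three bounds in~\eqref{eq:lvl} are invariant under this flip. Then
\[
|P'|-|P_-|=\sum_{i\in P}\sigma_i\in[0,D_N],\qquad |P'|+|P_-|=m,
\]
so $|P'|=\tfrac12\bigl(m+\sum_i\sigma_i\bigr)$. This gives (i) at once.

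For (ii) and (iii) I use the same halving identity applied to the weighted sums. Put $\delta_A(q)=\sum_{i\in P}\sigma_ie^{q^{\!\top}k_i}v_i$ and $\delta_B(q)=\sum_{i\in P}\sigma_ie^{q^{\!\top}k_i}$. Then
\[
A_{P'}=\tfrac12(A_P+\delta_A),\qquad B_{P'}=\tfrac12(B_P+\delta_B).
\]
Since $|\delta_B|\le D_B$ for every $\norm q\le\rho$, this yields $B_{P'}(q)\ge\tfrac12(B_P(q)-D_B)$, which is (ii). It is valid for every $q$ in the ball simultaneously, because $\sigma$ is balanced uniformly over the ball.

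For (iii) I compute the difference of ratios directly. Using $A_P=2A_{P'}-\delta_A$ and $B_P=2B_{P'}-\delta_B$, it is cleanest to write
\[
\Attn_{P'}-\Attn_P=\frac{A_{P'}B_P-A_PB_{P'}}{B_{P'}B_P}
=\frac{A_{P'}(2B_{P'}-\delta_B)-(2A_{P'}-\delta_A)B_{P'}}{B_{P'}B_P}
=\frac{\delta_AB_{P'}-\delta_BA_{P'}}{B_{P'}B_P}.
\]
Now $\norm{A_{P'}}\le B_{P'}$, since it is a convex combination of values with $\norm{v_i}\le1$ scaled by $B_{P'}$. So the norm is at most $(D_A+D_B)B_{P'}/(B_{P'}B_P)=(D_A+D_B)/B_P$.

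This bound has $B_P$ in the denominator where the statement has $2B_{P'}$, so one more step is needed. By symmetry (expand around $P$ instead) one also gets
\[
\Attn_{P'}-\Attn_P=\frac{\delta_AB_P-\delta_BA_P}{2B_{P'}B_P},
\]
and $\norm{A_P}\le B_P$ bounds this by $(D_A+D_B)/(2B_{P'})$, exactly as stated. This is the form Corollary~\ref{cor:perquery} used. So the plan is to expand with $A_P,B_P$ in the numerator.

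The main point to check is that this expansion holds algebraically. We have $A_{P'}B_P-A_PB_{P'}=\tfrac12\bigl[(A_P+\delta_A)B_P-A_P(B_P+\delta_B)\bigr]=\tfrac12(\delta_AB_P-\delta_BA_P)$. Combined with $\norm{A_P}\le B_P$, this gives (iii). Note that $B_{P'}>0$ automatically, since $P'\neq\emptyset$ and all weights are positive, so no division issue arises.
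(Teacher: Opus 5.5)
Your proof is correct and matches the paper's: flip $\sigma$ so that $P'$ is the $+1$ class, use $A_{P'}=\tfrac12(A_P+\delta_A)$ and $B_{P'}=\tfrac12(B_P+\delta_B)$, and for (iii) write $A_{P'}B_P-A_PB_{P'}=\tfrac12(\delta_AB_P-\delta_BA_P)$ and bound it with $\norm{\Attn_P}\le1$. Your first expansion around $P'$ is not needed for the stated form, though the bound $(D_A+D_B)/B_P$ it gives is also valid.
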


\begin{proof}
Replacing $\sigma$ by $-\sigma$ preserves~\eqref{eq:lvl}, so assume $P'$ is the
$+1$ class; then $A_{P'}=\tfrac12(A_P+\delta_A)$, $B_{P'}=\tfrac12(B_P+\delta_B)$
with $\norm{\delta_A}\le D_A$, $|\delta_B|\le D_B$. (i) and (ii) are immediate.
For (iii),
$\Attn_{P'}-\Attn_P=(A_{P'}B_P-A_PB_{P'})/(B_{P'}B_P)=(\delta_A-\Attn_P\,\delta_B)/(2B_{P'})$,
and $\norm{\Attn_P}\le1$ because it is a convex combination of the $v_i$.
\end{proof}

The recursion halves $\ell$ times. Sizes obey $n/2^{j}\le n_j\le n/2^{j}+D_N$.
Denominators obey $B_{P_0}\ge n$ by Jensen and then
$B_{P_j}\ge n/2^{j}-D_B$ by (ii), so $B_{P_j}\ge n/2^{j+1}$ as long as
$n/2^{j}\ge2D_B$; the per-level error is then at most $2^{j+1}(D_A+D_B)/n$ by
(iii), the errors sum to $\le\eps$ when $2^{\ell+1}(D_A+D_B)\le\eps n$, and
the survivors number at most $4(D_A+D_B)/\eps+D_N$
(Appendix~\ref{app:coreset}). This is the halving of~\cite[Thm.~4]{LAK26};
propagating the denominator by (ii) instead of re-centring at every level is
what removes their side conditions $n\ge2c\rho\sqrt d\log n$ and the
accumulating key-mean bound, and no key-mean constraint is needed for
existence.

\begin{lemma}[One level]
\label{lem:onelevel}
Under~\eqref{eq:norm} every $P\subseteq[n]$ admits a
$(D_A,D_B,D_N)$-balanced signing with
$D_A\le C_1'e^{\rho}(\sqrt{\dv}+\sqrt{\dk\log(1+\rho)})$,
$D_B\le C_1'e^{\rho}(1+\sqrt{\dk\log(1+\rho)})$, $D_N=10\sqrt5$, for an
explicit absolute constant $C_1'$.
\end{lemma}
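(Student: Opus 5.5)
We will obtain the signing from a single application of Banaszczyk's theorem, used in its vector-balancing form. Any such form applies, whether the existential theorem or the Gram--Schmidt walk version of~\cite{BDGL18}, since the walk gives the same guarantee up to constants. The vectors are placed in one Hilbert space that carries all three constraints of~\eqref{eq:lvl} at once.

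For each $i\in P$ set
\[
  x_i:=\tfrac{1}{\sqrt3}\bigl(\phi_0(k_i)\otimes v_i,\ \phi_0(k_i),\ 1\bigr)\in(H\otimes\R^{\dv})\oplus H\oplus\R .
\]
Here $H$ is the tensor (Fock) space of $\psi$. By the bounds $\norm{\phi_0(k)}\le1$ and $\norm{v_i}\le1$, each $x_i$ has norm at most $1$. Only finitely many vectors are involved, so we may restrict to their span and work in finite dimension. Each constraint in~\eqref{eq:lvl} is then a supremum of the linear functional $\inner{t}{\sum_i\sigma_ix_i}$ over a test set:
\begin{itemize}
\item for $D_A$, the set $T_A=\sqrt3\,(Q\otimes S^{\dv-1})\oplus0\oplus0$;
\item for $D_B$, the set $T_B=0\oplus\sqrt3\,Q\oplus0$;
\item for $D_N$, the pair $T_N=\{0\oplus0\oplus\pm\sqrt3\}$.
\end{itemize}

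The Gaussian-width (chaining) form of Banaszczyk's theorem works as follows. Let $K$ be a symmetric convex body with $\Pr[g\in K]\ge\tfrac12$ for standard Gaussian $g$ in the span. Then some signing puts $\sum_i\sigma_ix_i$ in $cK$. Equivalently, for a test set $T$, some $\sigma$ has $\sup_{t\in T}\inner{t}{\sum\sigma_ix_i}\lesssim w(T)+\rad(T)$. Applying this to the union $T_A\cup T_B\cup T_N$ directly would bound all three quantities by $w$ of the union, which is too weak for $D_N$. So we instead rescale each block and use the union bound trick: pick $\sigma$ with
\[
  \max_{\star\in\{A,B,N\}}\ \sup_{t\in T_\star}\frac{\inner{t}{\textstyle\sum_i\sigma_ix_i}}{w(T_\star)+\rad(T_\star)}\ \lesssim\ 1 .
\]
This follows from Banaszczyk applied to the body $K=\bigcap_\star\{y:\sup_{T_\star}\inner{t}{y}\le C(w(T_\star)+\rad(T_\star))\}$. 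By Borell--TIS each of the three events fails with probability at most $e^{-C^2/2}$ once $C$ is a moderate constant, so $\Pr[g\in K]\ge\tfrac12$ for, say, $C=3$.

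The widths come from results already in hand. For $D_A$, Lemma~\ref{lem:chevet} gives $w(Q\otimes S^{\dv-1})\lesssim e^{\rho}\sqrt{\dv}+w(Q)$. Lemma~\ref{lem:dudQ} with Dudley's inequality gives $w(Q)=O(e^{\rho}\sqrt{\dk\log(1+\rho)})$, and both radii are at most $e^{\rho}$, so $D_A$ has the claimed order. For $D_B$, the bound $w(Q)+\rad(Q)$ gives the claimed $e^{\rho}(1+\sqrt{\dk\log(1+\rho)})$. For $D_N$, the width and radius of $T_N$ are absolute constants. To get the stated explicit constant $D_N=10\sqrt5$ we track the Banaszczyk constant, e.g.\ $5$ in the convex-body form with $\Pr\ge\frac12$, times the Borell--TIS level.

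The expected main obstacle is keeping every constant explicit, as the statement demands ("explicit absolute constant $C_1'$", $D_N=10\sqrt5$). In particular, replacing "$\lesssim$" in Chevet by the entropy-integral computation of Appendix~\ref{app:coreset} is precisely where Dudley's constant enters. A second delicate point is that Banaszczyk's theorem is stated in finite dimension while $H$ is infinite-dimensional. We handle this by projecting onto $\mathrm{span}\{x_i\}$. Widths only decrease under orthogonal projection of the test set onto that span, and the functionals are unchanged on vectors in the span, so the bounds transfer. If the Borell--TIS union step loses too much in the constant for $D_N$, the fallback is to first balance with $D_N$ free. We then fix the count by one Beck--Fiala-style pairing (as in Lemma~\ref{lem:bf}), flipping $O(1)$ coordinates. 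Each flipped coordinate costs at most $2e^{\rho}$ in $D_A$ and $D_B$, which the claimed bounds absorb.
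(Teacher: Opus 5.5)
Your proposal is correct and is essentially the paper's proof. Both apply Banaszczyk's theorem to the intersection of three sup-constraint bodies over the numerator, denominator and cardinality test sets, with thresholds from Dudley's inequality and Borell--TIS; your union bound, projection onto $\operatorname{span}\{x_i\}$ and appeal to Chevet replace, interchangeably, the paper's independence of the disjoint Gaussian blocks, its truncation of $\psi$ to $\psi_M$ (Lemma~\ref{lem:trunc}) and its direct entropy integral for $T_A$ (Lemma~\ref{lem:dudT}).

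Two small repairs are needed. First, $T_B$ must be replaced by $T_B\cup(-T_B)$ to control the absolute value. Second, the cardinality block needs its own threshold $\theta_N=2\sqrt5$, which has failure probability at most $3/20$ by Chebyshev in your normalisation and gives $5\theta_N=10\sqrt5$; a common $C=3$ yields only $|\sum_i\sigma_i|\le15\sqrt3\,(1+\sqrt{2/\pi})\approx47$. With that, the Beck--Fiala fallback is unnecessary, and with $D_N$ genuinely unconstrained it would need up to $m/2$ flips, not $O(1)$.
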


The proof (Appendix~\ref{app:coreset}) applies Banaszczyk's theorem~\cite{Ban98}
to the data vectors
$x_i=\tfrac12(\phi_0(k_i)\otimes v_i,\;\phi_0(k_i),\;1)$, of norm at most one,
and the body $\{y:\sup_{T_X}\inner{y}{t}\le\theta_X,\ X\in\{A,B,N\}\}$ whose
three blocks are independent under the Gaussian measure and each of measure
at least $0.8$ by Dudley's inequality and Borell--TIS. The first branch of
Theorem~\ref{thm:main} follows.

\subsection{The sampling branch}

\begin{theorem}[Sampling cap]
\label{thm:sampling}
Under~\eqref{eq:norm} and $\eps\in(0,1]$ there is an unweighted subset
satisfying~\eqref{eq:coreset} with $|S|\le\min\{n,\lceil32e^{2\rho}/\eps^{2}\rceil\}$.
A uniformly random subset of size $\min\{n,\lceil128e^{2\rho}/\eps^{2}\rceil\}$
satisfies~\eqref{eq:coreset} with probability at least $\tfrac34$.
\end{theorem}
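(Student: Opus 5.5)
The plan is to bound the supremum of the error of a uniform sample by its expectation and then apply Markov's inequality; the deterministic statement follows because a positive-probability event must have a witness.

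\textbf{Two-block feature map.} I would work in the two-block space
$H=(\mathcal H\otimes\R^{\dv})\oplus\mathcal H$, where $\mathcal H$ is the space of $\psi$. The feature of pair $i$ is
$z_i=(\phi_0(k_i)\otimes v_i,\ \phi_0(k_i))$, which satisfies $\norm{z_i}^2\le 2$. For a test direction $(\Phi_0(q)\otimes u,0)$ or $(0,\Phi_0(q))$ with $\norm u\le1$, the inner product returns $u^{\!\top}A(q)$, respectively $B(q)$. Let $S$ be a uniform sample of size $m$, drawn without replacement, or with replacement as a multiset and then deduplicated at the end. Consider the empirical means
\[
\hat A(q)=\tfrac1m\sum_{S}e^{q^{\!\top}k_i}v_i,\qquad \hat B(q)=\tfrac1m\sum_{S}e^{q^{\!\top}k_i}
\]
and the population means $\bar A=A/n$, $\bar B=B/n$. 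Then
\[
\sup_q\norm{\hat A-\bar A}\le e^{\rho}\,\norm{\hat z-\bar z}_{\mathcal H\otimes\R^{\dv}},
\]
since $\norm{\Phi_0(q)}\le e^{\rho}$ and, by Cauchy--Schwarz in $\mathcal H\otimes\R^{\dv}$, we have $\sup_u\inner{\Phi_0(q)\otimes u}{w}\le e^{\rho}\norm w$. The same holds for $B$. The key point is that a single Hilbert-space norm controls the numerator and the denominator simultaneously for all $q$ and $u$. The standard variance computation for sampling without replacement gives
\[
\E\norm{\hat z-\bar z}^2\le\tfrac1m\cdot\tfrac1n\sum_i\norm{z_i-\bar z}^2\le 2/m .
\]

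\textbf{Ratio step.} Write $\Delta_A=\hat A-\bar A$ and $\Delta_B=\hat B-\bar B$. Using $\norm{\Attn}\le1$, I would write
\[
\Attn_S-\Attn=\frac{\Delta_A-\Attn\,\Delta_B}{\hat B}.
\]
Jensen gives $\bar B\ge1$. On the event $\sup|\Delta_B|\le\tfrac12$ we then have $\hat B\ge\tfrac12$, so
\[
\sup_q\norm{\Attn_S-\Attn}\le 2\bigl(\sup\norm{\Delta_A}+\sup|\Delta_B|\bigr)\le 2\sqrt2\,e^{\rho}\,\norm{\hat z-\bar z}.
\]
It therefore suffices that $\norm{\hat z-\bar z}\le\eps e^{-\rho}/(2\sqrt2)$. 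This threshold also forces $\sup|\Delta_B|\le\eps/(2\sqrt2)\le\tfrac12$, since $\eps\le1$, so the event on the denominator is implied and needs no separate treatment.

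\textbf{Markov and constants.} By Markov's inequality, this bound fails with probability at most
\[
\frac{8e^{2\rho}}{\eps^2}\cdot\frac{2}{m}=\frac{16e^{2\rho}}{m\eps^2}.
\]
With $m=\lceil128e^{2\rho}/\eps^2\rceil$ the failure probability is $\le\tfrac18$, comfortably within $\tfrac14$. For existence with $m=\lceil32e^{2\rho}/\eps^2\rceil$, I need failure probability strictly below $1$, which holds since $16/32=\tfrac12$.

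\textbf{Main obstacle.} The delicate step is keeping the constants at $32$ and $128$, which requires tracking the $\sqrt2$ from the two blocks carefully. I would tighten this by centring, using $\E\norm{\hat z-\bar z}^2\le\tfrac1m(\tfrac1n\sum\norm{z_i}^2-\norm{\bar z}^2)$. The other point needing care is that a with-replacement sample is a multiset, not a set. I would therefore use sampling without replacement, whose variance is no larger. When $m\ge n$, I take $S=[n]$, which is trivially exact.
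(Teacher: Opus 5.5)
Your proposal is correct and follows the paper's proof. Both use the two-block feature $(\phi_0(k_i)\otimes v_i,\phi_0(k_i))$ (the paper scales it by $2^{-1/2}$), the without-replacement variance bound, Jensen's $\bar B\ge1$ to keep the sampled denominator above $\tfrac12$, and Markov's inequality. The differences are cosmetic: you combine the blocks via $a+b\le\sqrt2\,(a^2+b^2)^{1/2}$ where the paper bounds each block by the full norm, which is why your failure probability is $\tfrac18$ rather than the paper's $\tfrac14$; and the with-replacement-then-deduplicate option you mention should be dropped, since deduplication does not preserve the empirical mean, as your final choice of sampling without replacement already does.
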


\begin{proof}
In $\mathcal H\oplus(\mathcal H\otimes\R^{\dv})$ put
$z_i:=2^{-1/2}(\phi_0(k_i),\phi_0(k_i)\otimes v_i)$, $\norm{z_i}\le1$. For a
uniformly random subset $S$ of size $s$, sampling without replacement,
$\E\norm{\tfrac1s\sum_{S}z_i-\tfrac1n\sum_iz_i}^{2}\le1/s$; so some $S$ has
norm deviation at most $s^{-1/2}$, and by Markov's inequality a random $S$ has
deviation at most $2s^{-1/2}$ with probability $\ge\tfrac34$. Reading the two
blocks against $\sqrt2\,\Phi_0(q)$ and $\sqrt2\,\Phi_0(q)\otimes u$, $\norm u=1$,
gives simultaneously for all $q$
$\norm{\bar A_S(q)-\bar A(q)}\le\delta$ and $|\bar B_S(q)-\bar B(q)|\le\delta$
with $\delta=\sqrt2e^{\rho}\cdot(\text{deviation})$, bars denoting averages.
By Jensen $\bar B\ge1$, and $\norm{\bar A}\le\bar B$. If $\delta\le\tfrac12$
then $\bar B_S\ge\bar B/2$ and
$\norm{\Attn_S-\Attn}\le\norm{\bar A_S-\bar A}/\bar B_S+\norm{\bar A}|\bar B_S-\bar B|/(\bar B_S\bar B)\le4\delta$.
With $s\ge32e^{2\rho}/\eps^{2}$ (deterministic deviation $s^{-1/2}$), or
$s\ge128e^{2\rho}/\eps^{2}$ (random deviation $2s^{-1/2}$), $4\delta\le\eps\le1$.
\end{proof}

The mean at $q=0$ shows the $\eps^{-2}$ is a phase transition
(Theorem~\ref{thm:main-lb}(b)). Theorem~\ref{thm:main} is the minimum of
Lemma~\ref{lem:onelevel} with the halving, Theorem~\ref{thm:sampling}, and
Theorem~\ref{thm:nolog-coreset} below.

% =====================================================================
\section{Polynomial time}
\label{sec:alg}

Banaszczyk's theorem is existential. The Gram--Schmidt walk~\cite{BDGL18}
outputs, for vectors of norm at most one, a signing whose signed sum is
$\sigma_{\mathrm{GS}}$-subgaussian with $\sigma_{\mathrm{GS}}=\sqrt{40}$, in
time $O(m(m+N)^{\omega})$. Two things change. The uniform constraints are no
longer certified by a body of measure one half; they hold in expectation,
through Dudley's inequality for the subgaussian process
$t\mapsto\inner{\sum\sigma_ix_i}{t}$ on the test sets. And the two
finite-dimensional constraints, cardinality and key sum, are checkable in
$O(m\dk)$ time, so rejection sampling makes them hold surely; the key sum is
the checkable surrogate for the uncheckable denominator bound, which is why
the algorithm carries a fourth block $k_i$ in its data vectors.

\begin{definition}[One level of the algorithm]
\label{def:level}
Given $P$, $|P|=m$, form the Gram matrix
$G_{ij}=\tfrac14[e^{\rho(k_i^{\!\top}k_j-1)}(1+\inner{v_i}{v_j})+k_i^{\!\top}k_j+1]$
of the vectors $x_i=\tfrac12(\phi_0(k_i)\otimes v_i,\phi_0(k_i),k_i,1)$ in
$O(m^{2}(\dk+\dv))$ time, factor $G=RR^{\!\top}$ by pivoted Cholesky in
$O(m^{3})$, and run the Gram--Schmidt walk on the rows $r_i$ of $R$, which
realise the $x_i$ up to a linear isometry of their span. Accept $\sigma$ if
$|\sum_{i\in P}\sigma_i|\le4\sigma_{\mathrm{GS}}$ and
$\norm{\sum_{i\in P}\sigma_ik_i}\le4\sigma_{\mathrm{GS}}\sqrt{\dk}$; otherwise
repeat with fresh randomness. Output the larger sign class.
\end{definition}

\begin{lemma}[Acceptance and conditional bounds]
\label{lem:accept}
Each trial is accepted with probability at least $\tfrac12$. Conditional on
acceptance, with $Z_X:=\sup_{t\in T_X}\inner{\sum_i\sigma_ix_i}{t}$ for
$X\in\{A,B\}$, $\E[Z_X\mid\mathrm{acc}]\le2C_{\mathrm D}K_{\ast}\Dud(T_X)$ and
$\Prob(Z_X>\tfrac12C_{\mathrm D}K_{\ast}[\Dud(T_X)+4ue^{\rho}]\mid\mathrm{acc})\le4e^{-u^{2}}$
for all $u\ge0$, where $K_{\ast}$ is an absolute multiple of
$\sigma_{\mathrm{GS}}$, $C_{\mathrm D}$ is Dudley's constant,
$\Dud(T):=\int_0^\infty\sqrt{\log N(T,\eta)}\,d\eta$ is the entropy integral in
the Euclidean metric, and
$\Dud(T_A)\le21e^{\rho}\sqrt{\dk\log(1+\rho)}+8e^{\rho}\sqrt{\dv}$,
$\Dud(T_B)\le21e^{\rho}\sqrt{\dk\log(1+\rho)}+4e^{\rho}$.
\end{lemma}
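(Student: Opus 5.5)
We will deduce acceptance directly from the subgaussian tail of the Gram--Schmidt walk. We then obtain the conditional bounds by dividing unconditional bounds by the acceptance probability.

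\emph{Step 1 (acceptance).} The walk outputs $\sigma$ with $Y:=\sum_i\sigma_ix_i$ $\sigma_{\mathrm{GS}}$-subgaussian: $\E e^{\inner{Y}{\theta}}\le e^{\sigma_{\mathrm{GS}}^{2}\norm\theta^{2}/2}$ for every $\theta$ in the span. The cardinality $\sum\sigma_i=2\inner{Y}{e_N}$ is a one-dimensional marginal, where $e_N$ is the unit vector of the last block. Hence
\[
\Prob\bigl(|\textstyle\sum\sigma_i|>4\sigma_{\mathrm{GS}}\bigr)\le2e^{-2}.
\]
The key sum $\sum\sigma_ik_i=2\,\Pi_kY$ is the projection onto the $\dk$-dimensional key block, so it is a $2\sigma_{\mathrm{GS}}$-subgaussian vector in $\R^{\dk}$. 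The standard bound for the norm of a subgaussian vector, via $\E\norm{W}^{2}\le\dk\cdot(\text{variance proxy})$ and a Chernoff argument on the moment generating function of $\norm W^{2}$, bounds $\Prob(\norm{\sum\sigma_ik_i}>4\sigma_{\mathrm{GS}}\sqrt{\dk})$ by a small constant. Markov alone gives $\E\norm{W}^{2}\le4\sigma_{\mathrm{GS}}^{2}\dk$, so Markov on $\norm{W}^{2}$ at level $16\sigma_{\mathrm{GS}}^{2}\dk$ gives probability at most $1/4$. Together with $2e^{-2}<0.28$, the total failure probability is at most $0.53$. If this slightly exceeds $\tfrac12$, I would sharpen the cardinality estimate by noting $\E(\sum\sigma_i)^{2}\le4\sigma_{\mathrm{GS}}^{2}$, so Chebyshev gives probability at most $1/4$. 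Both failure probabilities are then at most $\tfrac14$, and acceptance has probability at least $\tfrac12$.

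\emph{Step 2 (unconditional tails).} Since $Y$ is subgaussian, the process $t\mapsto\inner{Y}{t}$ has subgaussian increments with constant $K_\ast\propto\sigma_{\mathrm{GS}}$ in the Euclidean metric on $T_X$. Dudley's inequality gives $\E Z_X\le C_{\mathrm D}K_\ast\Dud(T_X)$. The tail form of Dudley's bound (chaining with deviation term $u\,\diam(T_X)$), using $\diam(T_X)\le2e^{\rho}$ because $\norm{\Phi_0(q)}\le e^{\rho}$, gives
\[
\Prob\bigl(Z_X>\tfrac12C_{\mathrm D}K_\ast[\Dud(T_X)+4ue^{\rho}]\bigr)\le2e^{-u^{2}}
\]
after absorbing constants into $K_\ast$. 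One must also ensure $Z_X\ge0$, or handle $\E Z_X^{+}$; including $0$ in the test sets, or taking $T_X$ symmetric, does this. Dividing by $\Prob(\mathrm{acc})\ge\tfrac12$ yields the conditional expectation bound $2C_{\mathrm D}K_\ast\Dud(T_X)$ and the conditional tail bound $4e^{-u^{2}}$.

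\emph{Step 3 (entropy integrals).} I take $T_A=\{\Phi_0(q)\otimes u\}$ and $T_B=\{\Phi_0(q)\}$, embedded in the relevant blocks. The key estimate is
\[
\norm{\Phi_0(q)-\Phi_0(q')}^{2}=e^{\rho}\bigl(e^{\norm q^{2}/\rho}+e^{\norm{q'}^{2}/\rho}-2e^{q^{\!\top}q'/\rho}\bigr)\lesssim e^{2\rho}\norm{q-q'}^{2}/\rho .
\]
A $\delta$-net of the $\rho$-ball in $\R^{\dk}$ then yields
\[
\log N(Q,\eta)\le\dk\log\bigl(1+c\,e^{\rho}\sqrt\rho/\eta\bigr).
\]
Integrating over $\eta\le2e^{\rho}$ gives $O(e^{\rho}\sqrt{\dk\log(1+\rho)})$, with the logarithm guarded for small $\rho$ as in Lemma~\ref{lem:dudQ}. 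For the tensor set, $\norm{a\otimes u-a'\otimes u'}\le\norm{a-a'}+e^{\rho}\norm{u-u'}$, so
\[
N(T_A,\eta)\le N(Q,\eta/2)\,N(S^{\dv-1},\eta/(2e^{\rho})).
\]
The square root of a sum is at most the sum of square roots, so the integrals add, giving $21e^{\rho}\sqrt{\dk\log(1+\rho)}+8e^{\rho}\sqrt{\dv}$. The $4e^{\rho}$ in $\Dud(T_B)$ comes from the degenerate scale near the diameter when $\rho$ is tiny.

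\emph{Main obstacle.} I expect the main difficulty to be the explicit constants 21, 8 and 4 in Step 3. Getting them requires careful bookkeeping of the Lipschitz constant of $\Phi_0$ uniformly in $\rho$, together with the $\log(1+\rho)$ guard. I would import this bookkeeping from Appendix~\ref{app:coreset} rather than redo it.
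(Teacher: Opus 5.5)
Your proposal follows the paper's proof, which rests on Lemma~\ref{lem:process}. Markov on the second moments $\E(\sum_i\sigma_i)^{2}\le4\sigma_{\mathrm{GS}}^{2}$ and $\E\norm{\sum_i\sigma_ik_i}^{2}\le4\sigma_{\mathrm{GS}}^{2}\dk$ gives acceptance with probability at least $\tfrac12$. Dudley's expectation and tail bounds on the symmetric sets $T_X=-T_X$ give the unconditional bounds; symmetry yields $Z_X\ge0$ and $Z_X\le\tfrac12\sup_{s,t}|X_t-X_s|$, which is where the factor $\tfrac12$ comes from. Dividing by $\Prob(\mathrm{acc})$ conditions them, and the entropy integrals are imported from Lemma~\ref{lem:dudT}.

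Two details of your Step~3 sketch are wrong, harmlessly since you defer to Appendix~\ref{app:coreset}:
\begin{itemize}
\item The Lipschitz constant of $\Phi_0$ is $e^{\rho}\sqrt{1+1/\rho}$ (Lemma~\ref{lem:lip}). This exceeds your $e^{\rho}/\sqrt{\rho}$ by a factor $\sqrt{1+\rho}$, which only moves a factor inside the logarithm.
\item The additive $4e^{\rho}$ in $\Dud(T_B)$ is not a small-$\rho$ effect. It is the $4e^{\rho}\sqrt{\log2}$ contributed by the sign in $T_B=\{\pm t_B(q)\}$, and you need this symmetric set in place of your one-sided $\{\Phi_0(q)\}$ anyway, to get $Z_B\ge0$.
\end{itemize}
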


\begin{proof}[Proof of Theorem~\ref{thm:main-alg}(a),(b)]
Full details are in Appendix~\ref{app:alg}. (a) Let $\bar D$ be an explicit
majorant of the sum of the two conditional expectation bounds. Run
Definition~\ref{def:level} for $\ell$ levels, $\ell$ maximal with
$2^{\ell+2}\bar D\le\eps n$. The checks hold surely, so sizes obey
$n/2^{j}\le n_j\le n/2^{j}+4\sigma_{\mathrm{GS}}$ and the key sum stays
bounded by $4\sigma_{\mathrm{GS}}\sqrt{\dk}$, whence by Jensen
$B_{P_j}\ge n_j/2$ once $n_j\ge\rho\cdot4\sigma_{\mathrm{GS}}\sqrt{\dk}/\log2$,
which the stopping rule guarantees. Lemma~\ref{lem:step}(iii) bounds the
level-$j$ error by $2(a_j+b_j)/n_j$ with $a_j,b_j$ the realised suprema, and
since the accepted signing at level $j$ depends on the history only through
$P_j$, the tower property and Lemma~\ref{lem:accept} give
$\E\sup_q\norm{\Attn_{P_\ell}-\Attn}\le\sum_{j<\ell}2^{j+1}\bar D/n\le\eps/2$;
Markov's inequality finishes. The size is $<8\bar D/\eps+4\sigma_{\mathrm{GS}}$
by maximality. (b) Apply the conditional tail with $u=\sqrt{\log(16L/\delta)}$,
$L=\max\{1,\lceil\log_2n\rceil\}$, at each of at most $L$ levels; a union
bound over $2L$ events leaves probability $1-\delta/2$ on which every
$a_j+b_j$ is at most $D^{\ast}=O(e^{\rho}(\sqrt{\dv}+\sqrt{\dk\log(1+\rho)}+\sqrt{\log(L/\delta)}))$,
and the error telescopes deterministically; $\log(16L/\delta)=O(\log(\log n/\delta))$.
\end{proof}

\begin{corollary}[Sample, then balance]
\label{cor:sample-balance}
Draw a uniformly random $S_0\subseteq[n]$ of size $s_0$, where
$s_0=\min\{n,\lceil2048e^{2\rho}/\eps^{2}\rceil\}$, redrawing until
$\norm{\bar k_{S_0}}\le2s_0^{-1/2}$ (a check costing $O(s_0\dk)$ time); centre
the sampled keys at $\bar k_{S_0}$, rescale keys and queries by the centred
radius, and run the algorithm of Theorem~\ref{thm:main-alg}(a) on
$(K_{S_0},V_{S_0})$ with target $\eps/2$. With probability at least $\tfrac14$
the output satisfies~\eqref{eq:coreset}; its size is
$O(e^{\rho}(\sqrt{\dv}+\sqrt{\dk\log(1+\rho)})/\eps)$ always, and the expected
running time is $O(n\dk)+O(s_0^{\omega+1}+s_0^{2}(\dk+\dv))$.
\end{corollary}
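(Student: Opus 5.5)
The plan is to combine the sampling argument of Theorem~\ref{thm:sampling}, applied to $S_0$, with the guarantee of Theorem~\ref{thm:main-alg}(a), applied to the renormalised subsample. The error then splits by the triangle inequality:
\[
\sup_{\norm q\le\rho}\norm{\Attn_S(q)-\Attn(q)}\le\sup\norm{\Attn_S-\Attn_{S_0}}+\sup\norm{\Attn_{S_0}-\Attn}.
\]
I will make each term at most $\eps/2$, each with its own probability bound, and intersect the events.

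\textbf{First term: the sample $S_0$.} By the proof of Theorem~\ref{thm:sampling}, a uniform sample of size $s_0\ge 128e^{2\rho}/(\eps/2)^2=512e^{2\rho}/\eps^2$ fails with probability at most $\tfrac14$. Taking $2048$ in place of $512$ gives slack: Markov at level $4\times$ the mean deviation, i.e.\ deviation $\le 4s_0^{-1/2}$, gives failure probability $\le\tfrac1{16}$ with the same $\eps/2$ target. The redraw condition must not spoil this, since conditioning on acceptance can inflate the failure probability. I will bound the acceptance probability from below using $\E\norm{\bar k_{S_0}}^2\le 1/s_0$, which holds for sampling without replacement with $\norm{k_i}\le1$ and $\bar k=0$. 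Markov then gives $\Prob(\norm{\bar k_{S_0}}>2s_0^{-1/2})\le\tfrac14$, so each draw is accepted with probability $\ge\tfrac34$. Hence the expected number of draws is $\le\tfrac43$, which yields the $O(n\dk)$ time term (drawing plus the check). It also gives
\[
\Prob(\text{bad}\mid\text{acc})\le\frac{1/16}{3/4}=\frac1{12}.
\]
If $s_0=n$, then $S_0=[n]$, $\bar k_{S_0}=0$, and this step is vacuous.

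\textbf{Second term: the renormalised instance.} This is the step I expect to be the main obstacle. After centring at $\bar k_{S_0}$, the sampled keys have centred radius $\kappa_0\le1+\norm{\bar k_{S_0}}\le1+2s_0^{-1/2}$. Rescaling by $\kappa_0$ gives an instance satisfying~\eqref{eq:norm} with parameter $\rho_0=\rho\kappa_0\le\rho+2\rho s_0^{-1/2}$. Since $s_0\ge2048e^{2\rho}$, we get $\rho s_0^{-1/2}\le\rho e^{-\rho}/45=O(1)$, so $e^{\rho_0}=O(e^\rho)$ and $\log(1+\rho_0)=O(\log(1+\rho))$. The same bound on $\rho s_0^{-1/2}$ absorbs the case $s_0<n$ uniformly.

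By the centring invariance noted after~\eqref{eq:rho}, $\Attn_{S}$ and $\Attn_{S_0}$ are unchanged by the centring and rescaling. Theorem~\ref{thm:main-alg}(a) with target $\eps/2$ therefore returns $S\subseteq S_0$ with second term $\le\eps/2$ with probability $\ge\tfrac12$. Its size is $O(e^{\rho_0}(\sqrt{\dv}+\sqrt{\dk\log(1+\rho_0)})/\eps)$, which is the claimed size by the bounds on $\rho_0$. Its expected time is $O(s_0^{\omega+1}+s_0^2(\dk+\dv))$. One subtlety: that theorem is stated for instances centred exactly at their own mean, and this holds here by construction.

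\textbf{Combining.} The algorithm's randomness is independent of $S_0$ given acceptance, so the success probability is at least $(1-\tfrac1{12})\cdot\tfrac12\ge\tfrac14$. Expected times add, using the expected number of redraws $\le\tfrac43$. This gives the stated running time. The size bound holds always, because Theorem~\ref{thm:main-alg}(a)'s size bound is deterministic.
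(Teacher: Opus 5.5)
Your proof is correct and is essentially the paper's. It uses the same triangle inequality through $\Attn_{S_0}$, the same Markov bound on $\E\norm{\bar k_{S_0}}^{2}\le1/s_0$ giving acceptance $\ge\tfrac34$, the same conditioning of the sampling failure on acceptance, the same multiplicative control $\rho_0\le(1+2s_0^{-1/2})\rho$ (this form, not the additive $O(1)$ bound, is what gives $\log(1+\rho_0)=O(\log(1+\rho))$ as $\rho\to0$), and the same independence of the walk's randomness from $S_0$. The only difference is that you spend the factor-$4$ slack between $2048$ and the $512$ that Theorem~\ref{thm:sampling} needs at target $\eps/2$ to lower the sampling failure to $\tfrac1{16}$ (conditionally $\tfrac1{12}$, success $\ge\tfrac{11}{24}$), whereas the paper applies Theorem~\ref{thm:sampling} as stated (failure $\tfrac14$, conditionally $\tfrac13$, success $\ge\tfrac13$); both clear $\tfrac14$.
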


\begin{proof}
If $s_0=n$ the check passes trivially and Theorem~\ref{thm:main-alg}(a)
applies directly. Otherwise, $\E\norm{\bar k_{S_0}}^{2}\le1/s_0$ for sampling
without replacement from centred vectors of norm at most one, so by Markov
each draw passes the check $\norm{\bar k_{S_0}}\le2s_0^{-1/2}$ with
probability $\ge\tfrac34$, and the expected number of draws is at most
$\tfrac43$. By Theorem~\ref{thm:sampling} with $\eps/2$, a single draw has
$\sup_q\norm{\Attn_{S_0}-\Attn}\le\eps/2$ except with probability $\le\tfrac14$,
hence except with conditional probability $\le\tfrac13$ given the check.
Given the check, the sampled instance, re-centred at $\bar k_{S_0}$ and
rescaled, surely has centred radius $\kappa'\le1+2s_0^{-1/2}$ and parameter
$\rho'=\rho\kappa'\le\rho(1+2s_0^{-1/2})$. Since $s_0\ge2048e^{2\rho}$ gives
$2s_0^{-1/2}\le e^{-\rho}/(16\sqrt2)$ and $\rho e^{-\rho}\le1/e$, we have both
$\rho'\le\rho+0.017$ and $\rho'\le1.05\rho$; hence $e^{\rho'}\le1.02e^{\rho}$
and $\log(1+\rho')\le\log(1+1.05\rho)\le1.05\log(1+\rho)$, the last step by
concavity ($c\log(1+\rho)-\log(1+c\rho)$ is nondecreasing in $\rho$ for
$c\ge1$); so the size bound of Theorem~\ref{thm:main-alg}(a) on the sample,
which holds always, is of the stated order. Theorem~\ref{thm:main-alg}(a) on
the re-centred, rescaled sample with target $\eps/2$ succeeds with conditional
probability $\ge\tfrac12$, its randomness being independent of $S_0$; its
guarantee is uniform over the original ball $\norm q\le\rho$, because
translating the keys and jointly rescaling keys and queries leave every
$\Attn_{S}(q)$ unchanged; and the triangle inequality gives error $\le\eps$
with probability $\ge\tfrac23\cdot\tfrac12=\tfrac13\ge\tfrac14$. Times follow
from Theorem~\ref{thm:main-alg}(a) with $n$ replaced by $s_0$, plus $O(n\dk)$
to compute the centring and the sample and $O(s_0\dk)$ per check.
\end{proof}

\begin{remark}[When the sampling stage is active]
\label{rem:s0}
The corollary improves on Theorem~\ref{thm:main-alg}(a) only when
$s_0<n$, that is when $2048e^{2\rho}/\eps^{2}<n$. At $\eps=0.1$ this needs
$n>2\cdot10^{5}e^{2\rho}$, which no cache of at most $2\cdot10^{5}$ pairs
meets at any $\rho\ge0$; on every instance of Section~\ref{sec:exp} the
sampling stage draws its whole input, the cache or the 8192-pair subsample
handed to the walk, and the corollary coincides with
Theorem~\ref{thm:main-alg}(a) up to the re-centring of that input and the
random seed.
\end{remark}

\begin{openproblem}[Verification]
\label{prob:verify}
Is there a polynomial-time algorithm that, given keys, values and a signing,
approximates $\sup_{\norm q\le\rho}\norm{\sum_i\sigma_ie^{q^{\!\top}k_i}v_i}$ within
a constant factor? A verifier would remove the additive term of
Theorem~\ref{thm:main-alg}(b) by repetition, and would amplify
Corollary~\ref{cor:sample-balance}. In fixed dimension Tai's grid certificate
is such a verifier for the third branch.
\end{openproblem}

% =====================================================================
\section{Fixed dimension: the logarithm disappears}
\label{sec:nolog}

Write $T(x,p):=e^{-\norm{x-p}^{2}}$ for the Gaussian kernel in Tai's
normalisation.

\begin{lemma}[Transfer]
\label{lem:transfer}
Let $\norm{k_i}\le1$ in $\R^{\dk}$ and $\rho>0$. Put
$k_i':=(k_i,\sqrt{1-\norm{k_i}^{2}})\in S^{\dk}$, $p_i:=\sqrt{\rho/2}\,k_i'$, and
$x_q:=(q,0)/\sqrt{2\rho}$ for $\norm q\le\rho$. Then
$e^{q^{\!\top}k_i}=e^{\norm q^{2}/(2\rho)+\rho/2}\,T(x_q,p_i)$, and for all real
$c_i$,
$\sup_{\norm q\le\rho}|\sum_ic_ie^{q^{\!\top}k_i}|\le e^{\rho}\sup_{x\in\R^{\dk+1}}|\sum_ic_iT(x,p_i)|$.
\end{lemma}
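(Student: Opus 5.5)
The plan is to verify the pointwise identity by expanding the Gaussian exponent and then deduce the supremum bound from a pointwise bound on the prefactor, using only that $x_q$ ranges over a subset of $\R^{\dk+1}$.

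\textbf{Step 1 (the identity).} Expand
\[
  \norm{x_q-p_i}^{2}=\norm{x_q}^{2}-2\inner{x_q}{p_i}+\norm{p_i}^{2}.
\]
Since $k_i'$ lies on $S^{\dk}$, we have $\norm{p_i}^{2}=\rho/2$. We also have $\norm{x_q}^{2}=\norm q^{2}/(2\rho)$. The last coordinate of $x_q$ is zero, so the slack coordinate of $k_i'$ drops out of the cross term, giving
\[
  \inner{x_q}{p_i}=\frac{1}{\sqrt{2\rho}}\sqrt{\frac{\rho}{2}}\,q^{\!\top}k_i=\tfrac12\,q^{\!\top}k_i .
\]
Hence
\[
  T(x_q,p_i)=\exp\bigl(-\norm q^{2}/(2\rho)+q^{\!\top}k_i-\rho/2\bigr),
\]
which rearranges to the claimed identity. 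This is the same one-slack completion as Lemma~\ref{lem:slack}, except that only the key receives a slack coordinate. The query is not normalised, which is why its norm reappears in the prefactor.

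\textbf{Step 2 (the supremum bound).} For fixed $q$ with $\norm q\le\rho$, the identity gives
\[
  \Bigl|\sum_ic_ie^{q^{\!\top}k_i}\Bigr|=e^{\norm q^{2}/(2\rho)+\rho/2}\,\Bigl|\sum_ic_iT(x_q,p_i)\Bigr| .
\]
The prefactor satisfies $\norm q^{2}/(2\rho)+\rho/2\le\rho/2+\rho/2=\rho$. Also $x_q\in\R^{\dk+1}$, so the second factor is at most $\sup_{x\in\R^{\dk+1}}|\sum_ic_iT(x,p_i)|$. Taking the supremum over the query ball finishes the proof.

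The only point needing care is the cross term in Step 1. The query must be padded with a zero in the slack slot so that the slack coordinate of $k_i'$ contributes nothing to the inner product. I do not expect any other obstacle: the lemma is an exact change of variables, and the loss $e^{\rho}$ is attained at the boundary $\norm q=\rho$.
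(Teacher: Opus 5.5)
Your proposal is correct and is essentially the paper's proof: the paper expands $\norm{x_q-p_i}^{2}=\norm q^{2}/(2\rho)+\rho/2-q^{\!\top}k_i$, exponentiates, bounds the prefactor using $\norm q^{2}/(2\rho)\le\rho/2$, and enlarges the supremum from the points $x_q$ to all of $\R^{\dk+1}$, exactly as you do. One small correction to a side remark: Lemma~\ref{lem:slack} is a two-slack lift, and the point here, as you say, is that only the key is completed to the sphere.
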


\begin{proof}
$\norm{x_q-p_i}^{2}=\norm q^{2}/(2\rho)+\rho/2-q^{\!\top}k_i$; exponentiate, use
$\norm q^{2}/(2\rho)\le\rho/2$, and enlarge the supremum.
\end{proof}

The identity is not new: for points on a sphere it is how~\cite[\S4.2]{BR23}
prove their Theorem~1.11, the attention form with a weighted context measure
is~\cite[eq.~(2)]{HKV26}, and the algebra is in~\cite[eq.~(3)]{Tai22}. What the lemma adds is the completion of the ball
to a sphere: every key lands on one sphere of radius $\sqrt{\rho/2}$, so the
weights $e^{\norm{k_i'}^{2}/2}$ of the weighted form are uniform and an
\emph{unweighted} colouring theorem applies.

\begin{fact}[Tai~\cite{Tai22}, Lemma~13 and Algorithm~1]
\label{fact:tai}
For every $D$ there is a finite $f_{\mathrm T}(D)$ such that every finite
$P\subset\R^{D}$ admits $\sigma\in\{\pm1\}^{P}$ with
$\sup_{x\in\R^{D}}|\sum_{p\in P}\sigma(p)T(x,p)|\le f_{\mathrm T}(D)$, with no
assumption on the diameter of $P$, computable by a Las Vegas algorithm in
expected time polynomial in $|P|$ for fixed $D$. The proof partitions $\R^{D}$
into the cells of side two centred at $(2\mathbb Z)^{D}$, colours each cell by
the Gram--Schmidt walk until a finite grid certifies
$|\sum_{p\in Q_g}\sigma_g(p)T(x,p)|\le C_De^{-\norm{x-g}^{2}/3}$ for all $x$ and
the cell is balanced, $|\sum_{p\in Q_g}\sigma_g(p)|\le1$, and sums the resulting convergent series
$\sum_{r\ge1}(2r+1)^{D}e^{-4(r-1)^{2}/3}$ over shells of cells.
\end{fact}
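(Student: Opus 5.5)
The plan is to reproduce Tai's proof in three layers: a local statement for a single cell, a certificate that makes the local statement checkable, and a global summation over cells. Partition $\R^{D}$ into the half-open cubes of side two centred at $g\in(2\mathbb Z)^{D}$, and let $Q_g:=P\cap\text{cube}(g)$. Every $p\in Q_g$ has $\norm{p-g}\le\sqrt D$. The colouring is built cell by cell, independently across cells, and $\sigma$ is the union of the cell colourings.

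\emph{Local step.} Fix $g$ and work with the centred points $p-g$. The key elementary inequality is
\[
\norm{x-p}^{2}\ \ge\ \tfrac13\norm{x-g}^{2}-\tfrac12\norm{p-g}^{2},
\]
which follows from $\norm{a+b}^{2}\ge(1-\lambda)\norm a^{2}-(\lambda^{-1}-1)\norm b^{2}$ with $\lambda=2/3$. It gives $T(x,p)\le e^{D/2}e^{-\norm{x-g}^{2}/3}$, so the cell's total contribution decays like the target envelope. We need more than this crude bound, however: a signing whose signed sum is bounded by $C_De^{-\norm{x-g}^{2}/3}$ uniformly in $x$, with $C_D$ independent of $|Q_g|$.

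For this we run the Gram--Schmidt walk~\cite{BDGL18} on the feature vectors $\bigl(\psi_T(p),\,1\bigr)/\sqrt2$, where $\psi_T$ is the RKHS feature map of $T$, so $\norm{\psi_T(p)}=1$. The walk makes $x\mapsto\sum_{p}\sigma(p)T(x,p)$ a subgaussian process with constant variance proxy. On the bounded region $\norm{x-g}\le R_D$, chaining this process in fixed dimension $D$ is what must deliver an $O_D(1)$ bound without a $\sqrt{\log\log n}$ loss. Since the cell has bounded diameter, the relevant function class $\{p\mapsto T(x,p)\}$ is a smooth family with $O_D(1)$ covering entropy at every scale, so Dudley's integral is a constant depending on $D$ alone. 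Outside $\norm{x-g}\le R_D$ we use the decay inequality, after absorbing $\sum_{Q_g}e^{\norm{p-g}^{2}/2}$; this absorption is the delicate point, discussed below. The constant coordinate forces $|\sum\sigma|$ to be $O(1)$; a final parity fix brings it to at most $1$, by pairing the last fractional coordinates, or by one rejection.

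\emph{Certificate.} Because $x\mapsto\sum\sigma T(x,p)$ is Lipschitz with constant $O(|Q_g|)$ on the bounded region, it suffices to evaluate it on a finite grid of mesh $\asymp1/|Q_g|$, polynomially many points for fixed $D$. If the grid check or the balance check fails, we rerun the walk. Acceptance has constant probability by the subgaussian tail, which yields a Las Vegas algorithm in expected polynomial time.

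\emph{Summation.} For a given $x$, group the cells into shells: there are at most $(2r+1)^{D}$ cells at $\ell_\infty$-index distance $r$, each with $\norm{x-g}\ge2(r-1)$. Summing the certified bounds gives
\[
\Bigl|\sum_{p\in P}\sigma(p)T(x,p)\Bigr|\ \le\ C_D\sum_{r\ge0}(2r+1)^{D}e^{-4(r-1)^{2}/3}\ =:\ f_{\mathrm T}(D)<\infty,
\]
with no dependence on $n$ or on the diameter of $P$. The signed count is not yet controlled globally: each cell contributes at most $1$ in absolute value, so we flip whole cells, re-signing greedily so that the $\pm1$ excesses alternate. This makes the global count at most $1$ and leaves the kernel bound unchanged, since flipping a cell negates its contribution.

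\emph{Main obstacle.} The obstacle is the local step's constant independent of $|Q_g|$. The subgaussian chaining controls the process relative to the RKHS-norm of the features, but converting that into the pointwise envelope $C_De^{-\norm{x-g}^{2}/3}$ far from $g$ requires the weighted features $e^{\norm{x-g}^{2}/3}T(x,\cdot)$ to stay uniformly bounded in the RKHS over all $x$. I would first try the factorisation
\[
T(x,p)=e^{-\norm{x-g}^{2}}\,e^{2\langle x-g,\,p-g\rangle}\,e^{-\norm{p-g}^{2}}
\]
and control the middle factor by Taylor-expanding in $p-g$. The expansion has degree growing only with $\norm{x-g}$, whose growth is damped by $e^{-\norm{x-g}^{2}\cdot2/3}$ once the coefficients are bounded. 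If this fails, I would cite Tai's Lemma~13 directly for the cell bound.
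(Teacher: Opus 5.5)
The paper gives no proof of its own here: the statement is Tai's Lemma~13 and Algorithm~1, and the outline in its last sentence is the paper's whole account. Your outer layers reproduce that outline: cells of side two, a grid certificate with rejection, the shell sum, and the cell flipping that the paper isolates as Lemma~\ref{lem:balance}. The gap is the one step the outline delegates to Tai. That step is the per-cell envelope $|\sum_{p\in Q_g}\sigma_g(p)T(x,p)|\le C_De^{-\norm{x-g}^{2}/3}$ with $C_D$ independent of $|Q_g|$, and the route you lead with cannot supply it. Outside the ball of radius $R_D$ your crude bound is $|Q_g|e^{D/2}e^{-\norm{x-g}^{2}/3}$. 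This decays at exactly the envelope's rate, so the factor $|Q_g|$ is never absorbed at any radius. Enlarging the chained region does not help either. With the features $\psi_T$ the walk's subgaussian bound at $x$ is $O(\sigma_{\mathrm{GS}}\norm{\psi_T(x)})=O(\sigma_{\mathrm{GS}})$ for every $x$, with no decay. Your Taylor fallback, run on these features, shows why. Write $z=x-g$ and $w=p-g$; then $\psi_T(p)=e^{-\norm w^{2}}\bigoplus_k(\sqrt2w)^{\otimes k}/\sqrt{k!}$. The $k$-th block of the signed sum $y$ lies in $\mathrm{Sym}^k(\R^{D})$, of dimension $N_k=\binom{k+D-1}{k}$, and the walk controls it only at the level $\norm{y_k}\lesssim\sqrt{N_k}$. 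The resulting bound $e^{-\norm z^{2}}\sum_k\sqrt{N_k}(\sqrt2\norm z)^{k}/\sqrt{k!}$ equals $e^{-\norm z^{2}}e^{\norm z^{2}}$ up to polynomial factors, so it does not decay at all. ``Once the coefficients are bounded'' is precisely what these features fail to provide.

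The missing idea is to colour features that overweight the high Taylor blocks. Put $\phi_g(p)=e^{-D}e^{-\norm w^{2}}\psi(2w)$ with $\psi(u)=\bigoplus_ku^{\otimes k}/\sqrt{k!}$. Then $\norm{\phi_g(p)}^{2}=e^{2\norm w^{2}-2D}\le1$ and $T(x,p)=e^{D}e^{-\norm z^{2}}\inner{\phi_g(p)}{\psi(z)}$. Take the walk's output $y=\sum_p\sigma(p)\phi_g(p)$. A net on each block and a union bound over $k$ give, with probability at least $\tfrac12$, $\norm{y_k}\le b_k:=C\sigma_{\mathrm{GS}}(\sqrt{N_k}+\sqrt{\log(k+2)})$ for all $k$ simultaneously. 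Cauchy--Schwarz with weights $(4/3)^{\pm k/2}$ then gives
\[
  |\inner{y}{\psi(z)}|\le\sum_k\frac{b_k\norm z^{k}}{\sqrt{k!}}\le\Bigl(\sum_kb_k^{2}\bigl(\tfrac34\bigr)^{k}\Bigr)^{1/2}e^{2\norm z^{2}/3}.
\]
Since $\sum_kN_k(3/4)^{k}=4^{D}$, this is the envelope with $C_D=O(\sigma_{\mathrm{GS}}(2e)^{D})$, and no chaining is needed.

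Three smaller repairs are also needed.
\begin{enumerate}
\item Per-cell balance by ``one rejection'' is unjustified, because a subgaussian tail gives no lower bound on $\Prob(|\sum\sigma|\le1)$. Instead, colour the differences $\phi_g(p)-\phi_g(p')$ over a perfect matching of $Q_g$. This makes $|\sum\sigma|\le1$ surely, at the cost of a constant factor.
\item The certificate must cover the region $\norm{x-g}^{2}\lesssim D+\log|Q_g|$, not a ball of radius $R_D$. Beyond that region, your inequality with $\lambda=1/3$ bounds the cell total by $|Q_g|e^{2D}e^{-2\norm{x-g}^{2}/3}$, which is below the envelope. Inside it, a grid of mesh polynomial in $1/|Q_g|$ suffices.
\item In your shell sum the home cell contributes $C_D$, not $C_De^{-4/3}$; finiteness is unaffected.
\end{enumerate}
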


\begin{lemma}[Global balance]
\label{lem:balance}
The colouring of Fact~\ref{fact:tai} can be modified, without changing
$f_{\mathrm T}(D)$ or the running time, so that $|\sum_{p\in P}\sigma(p)|\le1$.
\end{lemma}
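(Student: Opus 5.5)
Tai's colouring already produces, cell by cell, signings $\sigma_g$ on the cell point sets $Q_g$ that are individually balanced: $|\sum_{p\in Q_g}\sigma_g(p)|\le1$. A cell with $|Q_g|$ even therefore has signed sum $0$, and a cell with $|Q_g|$ odd has signed sum exactly $\pm1$. The global imbalance is thus $\sum_g s_g$, where $s_g\in\{-1,0,+1\}$ and $s_g\neq0$ precisely on the odd cells. The plan is to exploit the sign symmetry of every cell's guarantee. The certified bound $|\sum_{p\in Q_g}\sigma_g(p)T(x,p)|\le C_De^{-\norm{x-g}^{2}/3}$ and the cell-balance condition are both invariant under $\sigma_g\mapsto-\sigma_g$. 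We may therefore flip any subset of cells after the fact without touching any cell's certificate.

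Step one is to list the odd cells $g_1,\dots,g_r$ in any fixed order, say lexicographic in $(2\mathbb Z)^D$. Step two is to choose flips $\tau_j\in\{\pm1\}$ so that $\tau_js_{g_j}$ alternates $+1,-1,+1,\dots$, or, more simply, equals $+1$ for $j$ odd and $-1$ for $j$ even. The new global sum is then $\sum_j\tau_js_{g_j}\in\{0,1\}$. Step three is to recheck Tai's shell summation. That argument bounds $|\sum_pT(x,p)\sigma(p)|$ by the triangle inequality over cells, $\sum_g C_De^{-\norm{x-g}^2/3}$, and uses only the per-cell absolute bounds. Since these are unchanged, the same series $\sum_{r\ge1}(2r+1)^De^{-4(r-1)^2/3}$ gives the same constant $f_{\mathrm T}(D)$.

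For the running time, the flips need one pass over the cells after the cell colourings are computed, reading the already-computed cell sums. This costs $O(|P|)$ additional time and adds no randomness, so the Las Vegas expected time is unchanged up to a linear additive term that the polynomial already absorbs.

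The main obstacle is confirming that Tai's final bound genuinely decomposes into per-cell absolute bounds with no cross-cell cancellation used. If any step exploited relative signs between neighbouring cells, flipping could break it. I expect it does not, because the certificate is stated cellwise in absolute value. If it did, the fallback would be to perform the pairing flips only between cells in the same shell class, or to absorb the cost into a constant; the statement, however, claims the constant is unchanged, consistent with the pure triangle-inequality structure.
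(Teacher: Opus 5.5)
Your proposal is correct and is essentially the paper's proof. Per-cell balance makes even cells contribute $0$ and odd cells $\pm1$, and negating a cell's signing preserves its certificate pointwise, so flipping each odd cell according to its \emph{observed} excess, so that the contributions alternate in sign, leaves a global sum in $\{0,\pm1\}$ with $f_{\mathrm T}(D)$ and the running time unchanged. The paper also stresses, as your rule already respects, that the flips must depend on the realised excesses rather than blindly negating every second odd cell.
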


\begin{proof}
Lemma~13 of~\cite{Tai22} infers global balance from per-cell balance, which
fails when several cells have odd size with uncoordinated excesses. Negating
$\sigma_g$ on one cell preserves $|\sum_{p\in Q_g}\sigma_g(p)T(x,p)|$ pointwise,
hence the certificate. Even cells contribute $0$ to $\sum_p\sigma(p)$ and odd
cells $\pm1$; enumerate the odd cells $g_1,\dots,g_r$ and negate $\sigma_{g_j}$
whenever its contribution differs from $(-1)^{j}$. The total is then $0$ or
$\pm1$. Negating every second odd cell would not suffice, since the excesses
need not agree beforehand.
\end{proof}

\begin{proof}[Proof of Theorem~\ref{thm:main-nolog}, scalar part]
Apply Fact~\ref{fact:tai} with Lemma~\ref{lem:balance} to $(p_i)$ from
Lemma~\ref{lem:transfer} and use the inequality there with $c_i=\sigma_i$.
Coincident points may first be perturbed by $1/(2n)$ each, which moves every
signed kernel sum by at most $\tfrac12$ since $|T(x,p)-T(x,p')|\le\norm{p-p'}$;
this is the $+1$ in $g(\dk)\le f_{\mathrm T}(\dk+1)+1$. If some $c>0$ worked for
all large $\rho$ then $g(\dk)\ge c\sqrt{\dk\log(1+\rho)}\to\infty$.
\end{proof}

Values are absorbed geometrically.

\begin{lemma}[Two-coordinate encoding]
\label{lem:encode}
For $w\in[\tfrac12,1]$ let $R=\sqrt{\log(1/w)}<1$ and $a(w)=(R^{2},\sqrt{R^{2}-R^{4}})$.
Then $T((x,0),(p,a(w)))=wT(x,p)$ and $T((x,\tfrac12,0),(p,a(w)))=e^{-1/4}T(x,p)$.
\end{lemma}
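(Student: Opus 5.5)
The proof is a direct computation: both identities follow from splitting the squared Euclidean distance into the first block and the two extra coordinates. Since $T(y,y')=e^{-\norm{y-y'}^{2}}$, any orthogonal decomposition of $y-y'$ factors the kernel as a product. So it suffices to compute the squared distance between the appended two-coordinate parts in each case, and to check that this extra contribution is $R^{2}=\log(1/w)$ in the first case and $\tfrac14$ in the second.

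First, I check that $a(w)$ is well defined, i.e.\ that $R^{2}-R^{4}\ge0$, which amounts to $R\le1$. For $w\in[\tfrac12,1]$ we have $R^{2}=\log(1/w)\in[0,\log2]$. Since $\log2<1$, this gives $R<1$ and hence $R^{2}\ge R^{4}$. This is the only place the restriction $w\ge\tfrac12$ enters. It also explains why the encoding covers only a factor-two range of weights, presumably to be combined later with rescaling or with the $e^{\dv/4}$ factor in $F(\dk,\dv)$.

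Second, I compute $\norm{a(w)}^{2}=R^{4}+(R^{2}-R^{4})=R^{2}$. With the evaluation point $(x,0)$, the distance to $(p,a(w))$ is therefore $\norm{x-p}^{2}+R^{2}$. Exponentiating gives $T((x,0),(p,a(w)))=e^{-R^{2}}T(x,p)=w\,T(x,p)$.

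Third, with the evaluation point $(x,\tfrac12,0)$, the extra coordinates contribute $(\tfrac12-R^{2})^{2}+(R^{2}-R^{4})$. Expanding gives $\tfrac14-R^{2}+R^{4}+R^{2}-R^{4}=\tfrac14$, independent of $w$. This yields $T((x,\tfrac12,0),(p,a(w)))=e^{-1/4}T(x,p)$.

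There is no real obstacle here. The only point needing care is the well-definedness of the square root in the first step, i.e.\ the bound $\log2<1$. The choice of first coordinate $R^{2}$ is exactly what makes the cross term $-2\cdot\tfrac12\cdot R^{2}$ cancel the $R^{2}$ from $\norm{a}^{2}$ in the third step.
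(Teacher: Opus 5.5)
Your proof is correct and is the paper's argument: the Gaussian kernel factors over the split of coordinates, $\norm{a(w)}^{2}=R^{2}$ gives the first identity, and $\norm{(\tfrac12,0)-a(w)}^{2}=\tfrac14$ gives the second. Your check that $R^{2}\le\log2<1$ makes $a(w)$ well defined is a step the paper leaves implicit in the statement's ``$R<1$''.
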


\begin{proof}
$T((x,z),(p,a))=T(x,p)e^{-\norm{z-a}^{2}}$, $\norm a^{2}=R^{2}$, and
$\norm{(\tfrac12,0)-a}^{2}=(\tfrac12-R^{2})^{2}+R^{2}-R^{4}=\tfrac14$.
\end{proof}

\begin{theorem}[No logarithm: the coreset]
\label{thm:nolog-coreset}
Put $D=\dk+1+2\dv$ and $F(\dk,\dv)=7\sqrt{\dv}\,e^{\dv/4}(f_{\mathrm T}(D)+1)$,
which is exponential in $\dv$ and, as far as is known, in $\dk$. Under~\eqref{eq:norm}, every $P\subseteq[n]$
admits a $(Fe^{\rho},Fe^{\rho},1)$-balanced signing, computable in expected
polynomial time for fixed $(\dk,\dv)$, and hence for every $\eps\in(0,1]$ an
unweighted $\eps$-coreset of size $\min\{n,8Fe^{\rho}/\eps+1\}$ exists and is
computable in expected polynomial time for fixed $(\dk,\dv)$.
\end{theorem}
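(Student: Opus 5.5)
We plan to build a signing with $D_A=D_B=Fe^{\rho}$ and $D_N=1$, and then feed it to the halving recursion of Lemma~\ref{lem:step}. The signing comes from one application of Fact~\ref{fact:tai} together with Lemma~\ref{lem:balance} in dimension $D=\dk+1+2\dv$. We place every key--value pair at a point of $\R^{D}$ so that a single Gaussian-kernel signing controls all three quantities in~\eqref{eq:lvl} at once.

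\textbf{Step 1: lift the key.} Keep the lift of Lemma~\ref{lem:transfer}, $p_i=\sqrt{\rho/2}\,k_i'\in\R^{\dk+1}$, which already handles the scalar weight. Lemma~\ref{lem:transfer} bounds $\sup_q|\sum\sigma_ie^{q^{\!\top}k_i}|$ by $e^{\rho}$ times a Gaussian-kernel supremum.

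\textbf{Step 2: encode the values.} For each value coordinate $c\le\dv$ we append a two-coordinate block. Write $v_i^{(c)}\in[-1,1]$ and put
\[
w_{i,c}^{\pm}:=\tfrac12\bigl(1\pm v_i^{(c)}\bigr)\cdot\tfrac12+\tfrac12\in[\tfrac12,1],
\]
which is affine in $v_i^{(c)}$ with slope $\pm\tfrac14$. Block $c$ of the point of pair $i$ is $a(w_{i,c})$ from Lemma~\ref{lem:encode}. Query points then take one of two forms in each block: $(\tfrac12,0)$, which makes the factor the constant $e^{-1/4}$, or $0$, which makes the factor $w_{i,c}$.
\begin{itemize}
\item Choosing $(\tfrac12,0)$ in every block gives $e^{-\dv/4}T(x,p_i)$. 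This is the denominator, so $D_B\le e^{\rho}e^{\dv/4}(f_{\mathrm T}(D)+1)$.
\item Choosing $0$ in block $c$ and $(\tfrac12,0)$ elsewhere gives $e^{-(\dv-1)/4}w_{i,c}T(x,p_i)$. Subtracting the affine constant part, which is controlled by the denominator bound, isolates $\tfrac14v_i^{(c)}T(x,p_i)$ up to the factor $e^{-(\dv-1)/4}$.
\end{itemize}
Each coordinate of $\sum\sigma_ie^{q^{\!\top}k_i}v_i$ is therefore at most a constant times $e^{\dv/4}e^{\rho}(f_{\mathrm T}+1)$. Summing over $\dv$ coordinates in $\ell_2$ gives the factor $\sqrt{\dv}$, and tracking the constants $4$, $e^{1/4}$ and $2$ should give the $7$ in $F$.

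\textbf{Step 3: balance and perturbation.} Lemma~\ref{lem:balance} gives $D_N=1$. As in the proof of Theorem~\ref{thm:main-nolog}, coincident points are perturbed first, which accounts for the $+1$ in $f_{\mathrm T}+1$.

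\textbf{Step 4: halving to a coreset.} Run the recursion after Lemma~\ref{lem:step} with $D_A+D_B\le2Fe^{\rho}$ and $D_N=1$. The denominator stays at least $n/2^{j+1}$ while $n/2^{j}\ge2D_B$, and the per-level errors sum to at most $\eps$ once $2^{\ell+1}\cdot2Fe^{\rho}\le\eps n$. The survivor count is then at most $4\cdot2Fe^{\rho}/\eps+1$, capped by $n$. Running time is Tai's Las Vegas algorithm at each of $O(\log n)$ levels, polynomial for fixed $D$.

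\textbf{Main obstacle.} The hard step is the value encoding. Lemma~\ref{lem:encode} only handles weights in $[\tfrac12,1]$, so signed values need the affine shift in Step 2. Recovering each coordinate then requires a difference of two kernel sums, each certified by the same uniform bound, and we must check that the same test points $x_q$ work for both. We also need the extra block coordinates of the query to be allowed, which holds because Fact~\ref{fact:tai} bounds the supremum over all $x\in\R^{D}$.

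If the per-coordinate route loses more than $\sqrt{\dv}$, we would instead test against a unit $u$ directly. The $\ell_2$ summation over coordinates is simpler, though, and matches the form of $F$.
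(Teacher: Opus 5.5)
Your proposal is correct and is essentially the paper's own proof: the same lift $p_i$, the encoding $w_{ic}=(3+v_{ic})/4$ (your $w^{+}$; the $w^{-}$ variant is never needed) read at $(x,(\tfrac12,0),\dots,(\tfrac12,0))$ and at the same point with one block set to $(0,0)$, then $v=4w-3$, Lemma~\ref{lem:transfer}, and halving with $D_A=D_B=Fe^{\rho}$, $D_N=1$. The only slip is in the constant bookkeeping: the $7$ comes from $4e^{(\dv-1)/4}+3e^{\dv/4}=e^{\dv/4}(4e^{-1/4}+3)\le7e^{\dv/4}$, and no factor $2$ is involved.
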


\begin{proof}
For $i\in P$, $j\in[\dv]$ put $w_{ij}=(3+v_{ij})/4\in[\tfrac12,1]$ and
$\hat p_i=(p_i,a(w_{i1}),\dots,a(w_{i\dv}))\in\R^{D}$. Colour by
Fact~\ref{fact:tai} with Lemma~\ref{lem:balance} (after the perturbation of
the previous proof), obtaining $\sup_{\hat x}|\sum_i\sigma_iT(\hat x,\hat p_i)|\le f:=f_{\mathrm T}(D)+1$.
Reading the sum at $(x,(\tfrac12,0),\dots,(\tfrac12,0))$ and at the same point
with the $j$th pair replaced by $(0,0)$, Lemma~\ref{lem:encode} gives
$|\sum_i\sigma_iT(x,p_i)|\le e^{\dv/4}f$ and $|\sum_i\sigma_iT(x,p_i)w_{ij}|\le e^{(\dv-1)/4}f$
for all $x$. Since $v_{ij}=4w_{ij}-3$, each value coordinate of the signed sum
is bounded by $(4e^{(\dv-1)/4}+3e^{\dv/4})f\le7e^{\dv/4}f$, and the Euclidean
norm over $j$ by $\sqrt{\dv}$ times that. Lemma~\ref{lem:transfer} converts
both bounds at the cost $e^{\rho}$. The halving of Section~\ref{sec:reduction}
with $D_A=D_B=Fe^{\rho}$, $D_N=1$ gives size $\min\{n,4(D_A+D_B)/\eps+1\}$.
\end{proof}

\begin{remark}[Kernel-density reading]
\label{rem:br23}
Bozzai and Rothvoss ask in their conclusions whether the $\sqrt{\log\log n}$ of
their fixed-dimension Theorem~1.9 can be dropped for kernels other than the
Gaussian, and remark that Tai's technique ``does not appear to generalize to
other kernels''. For the exponential kernel $K_e^{(\alpha)}$ on $S^{d}$ and
the Hellinger kernel $\exp(-\alpha\norm{\sqrt x-\sqrt y}^{2})$ on the simplex
it does, through the identity they themselves use: both are the Gaussian
kernel $T$ on the point set $\sqrt{\alpha/2}\,x_i\subset\R^{d+1}$, respectively
$\sqrt\alpha\,\sqrt{x_i}\subset\R^{d}$, so Fact~\ref{fact:tai} with
Lemma~\ref{lem:balance} gives discrepancy $f_{\mathrm T}(d+1)+1$, respectively
$f_{\mathrm T}(d)+1$, uniformly in $\alpha$, and coreset complexity $O_d(1/\eps)$
by halving. This settles the Gaussian-restriction case of their question and
says nothing about the Laplacian or Jensen--Shannon kernels, nor about the
growth of $f_{\mathrm T}$. In attention language it is
Theorem~\ref{thm:main-nolog} without the sphere completion.
\end{remark}

\begin{openproblem}[Growth of the diameter-free constant]
\label{prob:g}
Is $f_{\mathrm T}(D)=O(\sqrt D)$? This is the discrepancy form of the
$O(\sqrt d/\eps)$ question of~\cite[\S5]{PT18} and the first open question
of~\cite{BR23}. A positive answer would give attention coresets of size
$O(e^{\rho}\sqrt{\dv}e^{\dv/4}\sqrt{\dk+\dv}/\eps)$ through
Theorem~\ref{thm:nolog-coreset}, and removing the $e^{\dv/4}$ would need a
weighted version of Fact~\ref{fact:tai}. Nothing here bears on either.
\end{openproblem}

% =====================================================================
\section{Lower bounds}
\label{sec:lb}

Table~\ref{tab:lb} is the summary. Part~(b) of Theorem~\ref{thm:main-lb} is
one line: with $n=\dv$ and $v_i=e_i$, attention at $q=0$ is the mean
$\dv^{-1}\sum e_i$, and a subset of size $s$ has error
$\sqrt{1/s-1/\dv}$. Part~(a) is proved in Appendix~\ref{app:lb} by a
distributional one-way INDEX argument on the instance
$\{(u_i,b_{ij}x_{ij}e_j),(-u_i,0)\}$ built from a spherical code
$|\inner{u_i}{u_h}|\le1/\rho$; the antipodal keys make the instance centred,
and the $-u_h$ weights are the inverses of the $u_h$ weights, which keeps the
denominator $\le\dv(2e^{\rho}+2em)$. The floor $\min\{\dv,\eps^{-2}\}$ is the
$r\to0$ form of~\cite[Thm.~3.2]{KSHZK25}. On this instance weighting does not
help: at $q=0$ the output of any weighted subset is a probability vector
supported on $S$, whose distance from $\dv^{-1}\mathbf 1$ is at least
$\sqrt{(1-s/\dv)^{2}/s+(\dv-s)/\dv^{2}}$ by Cauchy--Schwarz, so weighted
subsets face the same floor. (On the lookup instance of~\cite[Lem.~B.4]{HKV26},
by contrast, two weighted atoms reproduce attention exactly at $q=0$ and at one
further query~\cite[Lem.~B.8]{HKV26}.)

\begin{table}[hbp]
\centering\small
\begin{tabular}{@{}p{3.3cm}p{4.9cm}p{4.4cm}p{2.7cm}@{}}
\toprule
Regime & Best upper bound & Best lower bound & Gap \\
\midrule
$\dk=\dv=d$, $\eps\sqrt d\gtrsim e^{-\rho}$, $\dk\gtrsim\rho^{3}$ &
  $e^{\rho}\sqrt{d\log(1+\rho)}/\eps$ &
  $e^{\rho}\sqrt d/\eps$ \quad(\ref{thm:main-lb}a, \cite{LAK26}) &
  $\sqrt{\log(1+\rho)}$ \\
$\dk=\dv=d$ fixed, $\rho\to\infty$ &
  $F(d,d)e^{\rho}/\eps$ &
  $\Omega(d)$ only (\ref{thm:main-lb}a; its $e^{\rho}$ branch needs $\dk\gtrsim\rho^{3}$) &
  $e^{\rho}/\eps$ unmatched from below \\
scalar values; $\dk\gtrsim\rho^{3}$ or $\dk=\mathrm{polylog}\,n$ &
  $e^{\rho}\min\{\sqrt{\dk\log(1+\rho)},\allowbreak F(\dk,1)\}/\eps$ &
  $e^{\rho}/\eps$ \quad(\ref{thm:main-lb}a at $\dv=1$; \ref{thm:main-lb}c) &
  whole $\dk$ factor \\
$\eps\ll e^{-\rho}$, $\dk=\Theta(\log^{a}n)$, scalar &
  $\min\{n,\,e^{\rho}(1+\sqrt{\dk\log(1+\rho)})/\eps\}$, i.e.\ $n^{o(1)}/\eps$ &
  $(1/\eps)^{1-1/a-o(1)}$ \quad(\ref{thm:main-lb}c) &
  $(1/\eps)^{1/a+o(1)}n^{o(1)}$ \\
$\rho\to0$ &
  $\min\{n,\eps^{-2},\ldots\}$ &
  $\min\{\dv,\eps^{-2}\}$ \quad(\ref{thm:main-lb}b) &
  none for $\dv\ge\eps^{-2}$ \\
\bottomrule
\end{tabular}
\caption{Lower-bound status under~\eqref{eq:norm}. The sampling branch $e^{2\rho}/\eps^{2}$ is unmatched from below except through the mean floor.}
\label{tab:lb}
\end{table}

\begin{fact}[\cite{CFIKKP26}, Lemmas 5.3, 6.7, 7.7 and Theorem 4.12]
\label{fact:cfikkp}
Let $r$ bound $\norm{k_i}$ and $\norm q$ for the un-normalised kernel
$e^{q^{\!\top}k}$. Their streaming attention problem (Definition~2.2) asks for
$\hat z$ with $\norm{\hat z-\Attn(q)}\le\eps\norm{\mathrm{softmax}(K,q)}_2\norm V_F$
with probability $1-\delta$ for each fixed query. The proofs of the lower
bounds in their Theorems~7.1 and~7.2 construct, from any data structure for
this problem occupying $S$ bits, a public-coin one-way protocol for
$\INDEX_{2m,m}$ with message length $S+o(m)$: the $2m$ keys are public random
points, the values are Alice's bits $v_i=x_i\in\{0,1\}$ (one-dimensional), and
Bob's query is a public key, rescaled in regime~(a). The only property of
the data structure used is that it is a function of Alice's input of at most
$S$ bits from which Bob's single query can be answered. With
$R^{\mathrm{pub},\to}(\INDEX_{2m,m})=\Omega(m)$ this gives:
(a) if $r=\log^{o(1)}n$, $d=\Theta(\log^{a}n)$ for a constant $a>2$, and
$n^{-100}<\eps<n^{-c}$, then $S\ge\tilde\Omega(\min\{n,(1/\eps)^{1-1/a-o(1)}\})$;
here the keys are $N(0,(r/2)^{2}I_d/d)$ and Bob's query is $k_j\cdot(r/2)/\norm{k_j}$;
(b) if $\Omega(\log n)\le r^{2}<\tfrac12\log n$ and $d=(\log n)^{1+\Omega(1)}$,
then $S\ge\tilde\Omega(\min\{n,e^{r^{2}(1-o(1))}/\eps\})$; here the keys lie on
the sphere of radius $r$ and Bob's query is a key.
\end{fact}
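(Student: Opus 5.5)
The statement is a reading of~\cite{CFIKKP26}, so the proof I would give checks their reductions instead of deriving a new bound. I would reconstruct the proofs of their Theorems~7.1 and~7.2 as explicit one-way protocols. I would record exactly which property of the data structure each step uses, and then quote the one-way lower bound $R^{\mathrm{pub},\to}(\INDEX_{2m,m})=\Omega(m)$ (Kremer--Nisan--Ron) to conclude.

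\emph{The protocol.} Alice holds $x\in\{0,1\}^{2m}$ of weight $m$ and Bob holds an index $j$. Public coins fix $2m$ keys $k_1,\dots,k_{2m}$:
\begin{itemize}[nosep]
\item in regime~(a), i.i.d.\ $N(0,(r/2)^2I_d/d)$;
\item in regime~(b), points on the sphere of radius $r$, drawn from the distribution of their Lemma~6.7.
\end{itemize}
Alice sets $v_i=x_i$, builds the $S$-bit data structure on $(k_i,v_i)_i$, and sends it together with the $o(m)$ bits of auxiliary information that their Lemmas~5.3 and~7.7 require; I expect this to be the normaliser or the count $m$, and I would confirm it. Bob forms the query $q=k_j$, rescaled to $k_j(r/2)/\norm{k_j}$ in regime~(a), asks the structure for $\hat z$, and thresholds it.

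\emph{Why thresholding works.} I would show that, with constant probability over the public keys, the self term $e^{q^{\!\top}k_j}$ dominates. The mechanism is concentration of the cross inner products $q^{\!\top}k_i$, $i\ne j$, at scale $r^2/\sqrt d$. The key consequence is that the two cases $x_j=0$ and $x_j=1$ give values of $\Attn(q)$ separated by more than twice the allowed error $\eps\norm{\mathrm{softmax}(K,q)}_2\norm V_F$.

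\emph{Parameter bookkeeping.} I would then solve for the largest $m$ compatible with the gap. In regime~(a), with $d=\Theta(\log^a n)$, this should give $m\approx(1/\eps)^{1-1/a-o(1)}$. In regime~(b), the ratio of self weight to background weight is $e^{r^2(1-o(1))}$, which gives $m\approx e^{r^2(1-o(1))}/\eps$. Capping $m$ at $n/2$ yields the $\min\{n,\cdot\}$ form, and $S+o(m)=\Omega(m)$ gives both bounds.

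\emph{The property claim.} I would verify that no step of the protocol inspects the data structure except through the single answer $\hat z$ to Bob's one query. This is the only use of the structure anywhere above, so the lower bound applies to any $S$-bit object from which that query can be answered, and in particular to a coreset.

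\emph{Main obstacle.} I expect the hardest step to be the separation estimate in regime~(a). There the query is rescaled and $r$ is only $\log^{o(1)}n$, so the self-weight advantage is small. The gap must then come from the choice of $\eps$ relative to the fluctuation of the background sum $\sum_{i\ne j}e^{q^{\!\top}k_i}x_i$. I would first try a second-moment bound on this sum conditional on $k_j$, which is what I expect their Lemma~5.3 supplies. If that is too weak, I would fall back to their high-probability net over the key draw.
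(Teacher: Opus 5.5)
The paper does not re-derive this Fact. It reads it off the cited lemmas of~\cite{CFIKKP26}, and its own work is the transfer in Theorem~\ref{thm:main-lb}(c). Auditing the reductions is therefore the right kind of argument, but the protocol you plan to reconstruct cannot give these bounds.

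Your decoding is single-query thresholding on the self term. Bob can compute $B(q)$ from the public keys, and the weight-$m$ promise gives $\sum_{i\ne j}x_i=m-x_j$. What he cannot predict is $\sum_{i\ne j}x_i(w_i-\bar w)$, with $w_i=e^{q^{\!\top}k_i}$. In regime~(a), $q^{\!\top}k_i$ has standard deviation about $r^{2}/(4\sqrt d)$ for $i\ne j$. Over Alice's other bits this sum therefore fluctuates by order $\sqrt m\,r^{2}/\sqrt d$. Meanwhile the self weight exceeds the background weight only by $e^{r^{2}/4}-1=n^{o(1)}$. So thresholding fails once $m\ge n^{\Omega(1)}$, whereas the target is $m\approx(1/\eps)^{1-1/a}\ge n^{c(1-1/a)-o(1)}$.

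This is essentially the mechanism of Theorem~\ref{thm:main-lb}(a), and the remark after the proof of part~(c) notes that it gives only $n^{o(1)}$ in regime~(a). A sharper estimate of the background cannot help, whether a second moment conditional on $k_j$ or a net over the keys: the obstruction is genuine variance over Alice's input, not a loose bound. In regime~(b) the same computation caps thresholding at $m\approx e^{2r^{2}(1-o(1))}$, the analogue of the $e^{2\rho}$ branch of Theorem~\ref{thm:main-lb}(a). That cap lies below the claimed $e^{r^{2}(1-o(1))}/\eps$ once $\eps$ is below about $e^{-r^{2}}$.

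The missing idea is what the extra $o(m)$ bits are for. They are side information about Alice's string, which Bob uses to remove this background before thresholding; this is the side-information decoding the paper attributes to~\cite{CFIKKP26}. Your guesses for these bits cannot be right, since Bob already knows the normaliser and the count $m$.

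The exponent $1-1/a$ reflects a trade-off: how much background that side information cancels against what it costs in bits when $d=\Theta(\log^{a}n)$. Nothing in your mechanism involves $a$ in a way that could produce this exponent, so your parameter bookkeeping asserts the answer rather than deriving it. For the same reason, your property audit omits a check the transfer in Theorem~\ref{thm:main-lb}(c) needs beyond single-query use of $\hat z$. The side information must be a function of Alice's input and the public coins alone, not of the data structure, so that Alice can append it to a coreset encoding just as well.
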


\begin{proof}[Proof of Theorem~\ref{thm:main-lb}(c)]
Consider the family of instances of Fact~\ref{fact:cfikkp}: $2m$ public keys,
values $v_i=x_i$, Bob's query $q$. With probability $1-o(1)$ over the keys
all norms are as required and, the keys being i.i.d.\ isotropic,
$\norm{\bar k}=o(r)$; call such key sets good. Rescale keys by $1/r$ and the
query by $r$, which leaves $e^{q^{\!\top}k_i}$ unchanged; translate the keys by
$-\bar k$, which leaves $\Attn$ unchanged; and rescale by the centred radius
$\kappa$. A good instance now satisfies~\eqref{eq:norm} with
$\rho=\Theta(r^{2})$ in regime~(a) (keys of norm $\approx r/2$, query of norm
$r/2$) and $\rho=r^{2}(1+o(1))$ in regime~(b); the regime hypotheses of
Fact~\ref{fact:cfikkp} are conditions on their $r$ and are unaffected by the
normalisation, and in~(b) $e^{r^{2}(1-o(1))}=e^{\rho(1-o(1))}$.

Suppose every good instance admitted a query-oblivious unweighted
$\eps$-coreset with $|S|\le s$. Exactly $m$ values equal $1$, so
$\norm V_F=\sqrt m$, and $\mathrm{softmax}(K,q)$ has $2m$ nonnegative entries
summing to $1$, so $\norm{\mathrm{softmax}(K,q)}_2\ge(2m)^{-1/2}$; hence
$|\Attn_S(q)-\Attn(q)|\le\eps$ at Bob's query meets their guarantee with
parameter $\sqrt2\,\eps$ and no failure probability. Alice, who knows $x$ and
the public keys, sends the indices in $S$ and the bits $x_i$, $i\in S$:
at most $\lceil\log_2(2m+1)\rceil+s(1+\lceil\log_2(2m)\rceil)=O(s\log m)$ bits,
the first term encoding $|S|$. Bob evaluates $\Attn_S(q)$ from the
public keys and the message. This is a data structure of $O(s\log m)$ bits
answering Bob's query on all good key sets, the bad ones charged to the
reduction's failure budget, so the reductions of Fact~\ref{fact:cfikkp}
give $O(s\log m)+o(m)\ge\Omega(m)$, i.e.\ $s\ge\Omega(m/\log m)$. Substituting
$m$ from the two regimes gives the two bounds.
\end{proof}

\begin{remark}[Where the transferred bounds sit]
Regime~(a) has $e^{\rho}=n^{o(1)}$ and $\eps<n^{-c}$, so $\eps\ll e^{-\rho}$:
there Theorem~\ref{thm:main-lb}(a) gives only
$c\,\dv\min\{e^{2\rho},\exp(\dk/C\rho^{2})\}=n^{o(1)}$ and the mean floor gives
$1$ for scalar values, while the transferred bound is polynomial in $1/\eps$.
Regime~(b) has $e^{\rho}\in[n^{\Omega(1)},n^{1/2+o(1)}]$ and the shape
$e^{\rho}/\eps$ of Theorem~\ref{thm:main-lb}(a) at $\dv=1$, with $e^{o(\rho)}$
and polylogarithmic losses in place of explicit constants and with $\dk$
polylogarithmic in $n$ in place of $\dk\gtrsim\rho^{3}$. Whether the
side-information decoding of~\cite{CFIKKP26} composes with multiplexing over
$\dv$ value coordinates to give $\sqrt{\dv}$ times these floors is open, as is
any lower bound with a factor of $\dk$.
\end{remark}

% =====================================================================
% Experiments section.  Every measured number is a macro defined in
% results_values.tex, which experiments/scripts/fill_results_tex.py generates
% from the run summaries and experiments/scripts/audit_generated_tables.py
% recomputes independently.  Protocol constants (sizes, seeds, the 8192-pair
% cap, eps = 0.1, 256 generated tokens) are typed; nothing else is.
% =====================================================================
\section{Experiments}
\label{sec:exp}

Every bound in this paper carries $e^{\rho}$, and whether that factor is
moderate on transformer caches is a question the theory cannot answer. Four
experiments follow: a census of $\rho$ over every layer and key--value head
of two transformers (E1); a comparison of coreset constructions at matched
size on real heads (E2); a closed-loop decoding test of the modelling
argument of Section~\ref{sec:why} (E3); and a scaling study on synthetic
instances, including the instance of Proposition~\ref{prop:sep} (E4). The
design, the thresholds and the decision rules were fixed before any result
was read into the paper; Appendix~\ref{app:prereg} reproduces the archived
plan, dates it relative to the runs, and lists every departure from it, each
marked by whether it was made before the run it affects, when the run was
configured, or after results were seen. Each experiment is reported in turn,
then the outcome of every pre-registered rule (Section~\ref{sec:exp-rules}),
then what the experiments establish (Section~\ref{sec:exp-summary}).

% ---------------------------------------------------------------------
\subsection{Setup}
\label{sec:exp-setup}

\paragraph{Models and corpus.}
Qwen2.5-7B-Instruct~\cite{Qwen25} (\vEaQLayers\ layers, \vEaQKvHeads\
key--value heads, \vEaQGroup\ query heads per key--value head) and
Llama-3-8B-Instruct~\cite{Llama3} (\vEaLLayers\ layers, \vEaLKvHeads\
key--value heads, \vEaLGroup\ query heads per key--value head), both with
$\dk=\dv=\vEaQHeadDim$. The corpus holds 200 natural prompts of 4k--32k
tokens from LongBench~\cite{Bai24} and QASPER~\cite{Dasigi21} and 50 prompts
of random tokens as a control, each document filed under the longest of the
lengths 4096, 8192, 16384 and 32768 tokens that it fills to at least nine
tenths, so that no prompt is a short document padded to a length. A prompt that falls short of the
4096-token floor, or of its bucket by more than a tenth, under a model's own
tokeniser is skipped: Qwen measures \vEaQPromptsMeasured\ prompts and Llama
\vEaLPromptsMeasured\ (Appendix~\ref{app:corpus}).

\paragraph{The measured quantities.}
After the prefill, for every key--value head, the keys are read after the
rotary embedding and the queries after the rotary embedding with the model's
scaling $1/\sqrt{\dk}$ folded in, so that $q^{\!\top}k$ is the logit the model
computes. Then $\kappa=\max_i\norm{k_i-\bar k}$ is taken over the cache and
$\vrho$ over the decoding queries in two conventions: the maximum, which is
what~\eqref{eq:coreset} asks for, and the 99th percentile, which shows
whether one query decides the answer. Finally $\rho=\vrho\kappa$. Under
grouped-query attention one key--value head serves several query heads and
$\vrho$ is the maximum over the group. The uncentred product
$\max_t\norm{q_t}\max_i\norm{k_i}$ is recorded as a control: it is the
quantity a theorem phrased with $\max_i\norm{k_i}$ would use.

\paragraph{Methods.}
Five query-oblivious methods are compared. Uniform sampling draws the
target size without replacement. BalanceKV~\cite{KSHZK25} runs the
self-balancing walk~\cite{ALS21} inside halving on the numerator and
denominator features. Kernel halving~\cite{DM21,CGSDM25} is a
query-oblivious variant of the thinning of~\cite{CGSDM25} on the same
features. The construction of this paper is run in two variants: halving
with the Gram--Schmidt walk~\cite{BDGL18} on the Gram matrix of
Definition~\ref{def:level} with its two rejection checks
(Theorem~\ref{thm:main-alg}(a)), and the same halving after the input has
been re-centred at its key mean and rescaled
(Corollary~\ref{cor:sample-balance}). Two query-aware reference methods are
shown the queries the others may not see: a heavy-hitter oracle in the style
of SnapKV~\cite{Li24}, which keeps the keys with the largest attention mass
under the decoding queries, and the per-query reference, a scalarised form
of the coreset of Corollary~\ref{cor:perquery}, which is stated for scalar
values, built for the first decoding query with the vector values projected
on their first principal direction (Appendix~\ref{app:methods}). The set
handed to a Gram--Schmidt or self-balancing walk is capped at 8192 pairs by
uniform subsampling; the cap binds on \vEbQCapDumps\ of the \vEbQDumps\ E2
instances of each model (Section~\ref{sec:e2}) and on every E4 instance,
since there $n=2^{16}$. At a requested
size of 8192 pairs or more a capped method returns that subsample itself, so
the $n/4$ cell of the 32768-pair E2 instances and the $n/8$ and $n/4$ rows of
E4 are, for the construction and for BalanceKV, a uniform sample of 8192
pairs recorded under the requested size (for the re-centred variant, a draw
made under its key-mean rejection rule, whose outcome was not recorded).
Kernel halving runs on the whole cache, an
asymmetry in its favour on the capped instances. BalanceKV and kernel halving
are reimplementations: no public code was located for the former, and the
latter is a query-oblivious variant of the published method without its
refinement step and its queries; Appendix~\ref{app:methods} specifies both.

The two variants of the construction are one pipeline. The sampling stage of
Corollary~\ref{cor:sample-balance} draws
$s_0=\min\{n,\lceil2048e^{2\rho}/\eps^{2}\rceil\}$ pairs; at $\eps=0.1$ the
second term is at least $2048\cdot10^{2}$ already at $\rho=0$, so $s_0=n$ on
every cache of at most $204\,800$ pairs, which includes every instance below,
real or synthetic. The 8192-pair cap is therefore the only subsampling, and
the two variants are Theorem~\ref{thm:main-alg}(a) run on the capped input,
differing in whether that input is re-centred at its own key mean and in the
random seed. This is why their cells in Table~\ref{tab:e2} differ by at most
two hundredths; on the synthetic instances their fitted slopes differ by
about a tenth at most (Table~\ref{tab:e4slopefull}) and their size ratios
by up to three hundredths (\vEdRatioRc\ against \vEdOsbRatioRc).

\paragraph{Metric, query populations, seeds.}
The error of a subset $S$ at a query is
$\norm{\Attn_S(q)-\Attn(q)}_2/\max_i\norm{v_i}$, the unit of~\eqref{eq:norm},
and every cell reports the worst case over a query population. Two
populations are used and never pooled: the model's own decoding queries, of
which an instance has up to \vEbQRealQueriesMax\ on Qwen and
\vEbLRealQueriesMax\ on Llama, and $10^{4}$ queries drawn uniformly from the
sphere of radius $\vrho$. The sphere is a subset of the ball
that~\eqref{eq:coreset} quantifies over and $10^{4}$ draws are a subset of
the sphere, so the sphere column is a lower bound on the supremum
in~\eqref{eq:coreset}; the real column is what a deployed cache meets and is
outside the theorem. Randomised methods are run with \vEbQSeeds\ seeds and a
cell is the median over seeds and instances pooled; the two reference methods
are deterministic and their median is over instances. The worst seed of every
cell is in Appendix~\ref{app:exptables}. Hardware and wall-clock times are in
Appendix~\ref{app:hardware}.

% ---------------------------------------------------------------------
\subsection{E1: the census of centred \texorpdfstring{$\rho$}{rho}}
\label{sec:e1}

Table~\ref{tab:e1} and Figure~\ref{fig:e1} report $\rho$ over every layer and
key--value head. Each head contributes one value, the median over the prompts
it was measured on, and the medians and fractions of the table are taken over
heads.

\begin{table}[htbp]
\centering\small
\begin{tabular}{@{}lrrrrrrr@{}}
\toprule
& \multicolumn{2}{c}{median $\rho$} & \multicolumn{2}{c}{median $e^{\rho}$}
& \multicolumn{2}{c}{fraction of heads} & uncentred \\
\cmidrule(lr){2-3}\cmidrule(lr){4-5}\cmidrule(lr){6-7}
Model & max & p99 & max & p99 & $e^{\rho}\le10^{3}$ & $e^{\rho}\ge10^{6}$ & median $\rho$ \\
\midrule
Qwen2.5-7B-Instruct & \vEaQrhoMax & \vEaQrhoPnn & \vEaQerhoMax & \vEaQerhoPnn & \vEaQfracLo & \vEaQfracHi & \vEaQuRhoMax \\
Llama-3-8B-Instruct & \vEaLrhoMax & \vEaLrhoPnn & \vEaLerhoMax & \vEaLerhoPnn & \vEaLfracLo & \vEaLfracHi & \vEaLuRhoMax \\
\midrule
& \multicolumn{5}{c}{median $\rho$ by layer quintile, shallowest to deepest} & & \\
\cmidrule(lr){2-6}
Model & first & second & third & fourth & fifth & & \\
\midrule
Qwen2.5-7B-Instruct & \vEaQqA & \vEaQqB & \vEaQqC & \vEaQqD & \vEaQqE & & \\
Llama-3-8B-Instruct & \vEaLqA & \vEaLqB & \vEaLqC & \vEaLqD & \vEaLqE & & \\
\bottomrule
\end{tabular}
\caption{E1. Centred $\rho$ over all layers and key--value heads at 4k--32k
tokens: \vEaQKvHeads\ heads in \vEaQLayers\ layers on Qwen
(\vEaQRows\ measurements) and \vEaLKvHeads\ heads in \vEaLLayers\ layers on
Llama (\vEaLRows). The maximum and the 99th percentile are the two conventions
for $\vrho$; the last column is the uncentred control
$\max_t\norm{q_t}\max_i\norm{k_i}$ in the maximum convention. Lower block:
medians over the heads of each layer quintile, the quintiles being layers
\vEaQBands\ on Qwen and \vEaLBands\ on Llama.}
\label{tab:e1}
\end{table}

\begin{figure}[hbp]
\centering
\includegraphics[width=\textwidth]{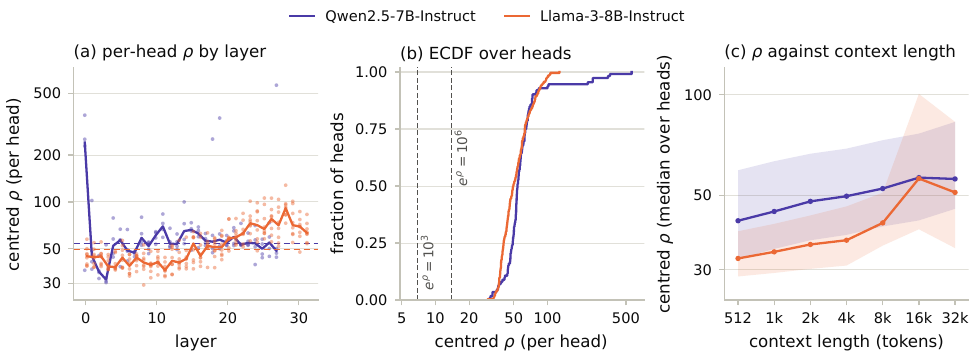}
\caption{E1. (a) Centred $\rho$ of every key--value head against its layer,
one point per head (the median over prompts), with the median over the heads
of each layer joined and the overall median dashed. (b) Empirical distribution
of the per-head $\rho$ on both models; the dashed lines mark
$e^{\rho}=10^{3}$ and $e^{\rho}=10^{6}$. (c) Median $\rho$ against context
length from 512 to 32768 tokens, with the band from the 10th to the 90th
percentile over heads in each model's colour; the corpus at each length is
the set of documents that fill it.}
\label{fig:e1}
\end{figure}

The median head has $\rho=\vEaQrhoMax$ on Qwen and \vEaLrhoMax\ on Llama in
the maximum convention, and \vEaQrhoPnn\ and \vEaLrhoPnn\ at the 99th
percentile, so no single outlier query is responsible. Over heads $\rho$ runs
from \vEaQrhoMinHead\ to \vEaQrhoMaxHead\ on Qwen, with the 10th and 90th
percentiles at \vEaQrhoPtenHead\ and \vEaQrhoPninetyHead, and from
\vEaLrhoMinHead\ to \vEaLrhoMaxHead\ on Llama, with percentiles
\vEaLrhoPtenHead\ and \vEaLrhoPninetyHead. At the median head $e^{\rho}$ is
\vEaQerhoMax\ and \vEaLerhoMax; $e^{\rho}\ge10^{6}$ holds on \vEaQfracHi\ and
\vEaLfracHi\ of heads and $e^{\rho}\le10^{3}$ on \vEaQfracLo\ and \vEaLfracLo;
the sampling branch $128e^{2\rho}/\eps^{2}$ of Theorem~\ref{thm:sampling} is
below the cache size at $\eps=0.1$ on \vEaQfracSampling\ and
\vEaLfracSampling\ of heads. The two factors of $\rho$ separate at the median
head: the median $\kappa$ is \vEaQkappaMed\ on Qwen and \vEaLkappaMed\ on
Llama, the median $\vrho$ is \vEaQvrhoMed\ and \vEaLvrhoMed, so there the
query radius is near two and the centred key radius carries $\rho$; on the
Qwen heads above $\rho=200$ the roles reverse, with an ordinary $\kappa$ and
a query radius several times the median. Random-token prompts give a smaller
$\rho$ than natural text, \vEaQrhoControl\ against \vEaQrhoNatural\ on Qwen
and \vEaLrhoControl\ against \vEaLrhoNatural\ on Llama, and of the same
order: at the median head the text raises $\rho$ by a fifth on Qwen and a
quarter on Llama, and the model sets its order.

\paragraph{Layer profile and two explanations that fail.}
The medians by layer quintile in Table~\ref{tab:e1} and the per-head profile
in Figure~\ref{fig:e1}(a) show no band of layers approaching the thresholds
marked in Figure~\ref{fig:e1}(b). Two explanations of a large $\rho$ were tested
(Table~\ref{tab:e1gqa}, Appendix~\ref{app:exptables}); neither brings any
head near the regime. Within a group of query heads sharing one key--value
head, the largest and smallest query radii differ by a median factor of
\vEaQgqaRatio\ on Qwen and \vEaLgqaRatio\ on Llama, so at the median head no
single query head sets $\rho$ for its group; the Qwen heads above $\rho=200$
are the exception, where the largest query radius in the group is many times
the smallest: on all but one of them $\rho$ under the second-largest query
head falls to within a factor of about two of the median head, while at
L27H01 two query heads share the excess, and every one of these heads stays
far outside the regime under any of its query heads. Removing
the 32 keys furthest from the centroid leaves $\kappa$ at
\vEaQtrimThirtytwo\ and \vEaLtrimThirtytwo\ of its value at the median head,
and at no less than three quarters of it on any head, so $\kappa$ is a bulk
property of the key cloud and not the work of a few attention sinks.

\paragraph{The uncentred control.}
The uncentred product has median \vEaQuRhoMax\ on Qwen and \vEaLuRhoMax\ on
Llama, so $e^{\rho}$ of \vEaQuErhoMax\ and \vEaLuErhoMax. On Qwen, whose key
projection carries a bias vector, the uncentred key radius equals the norm of
that bias to within \vEaQBiasGapLarge\ on the \vEaQBiasHeadsLarge\ heads
whose bias norm exceeds 100 (the largest being \vEaQBiasMax), and the
centring of~\eqref{eq:rho} removes it. The centred convention gives the
smaller $\rho$ on every
Qwen head and on all but a few Llama heads, where the two differ by under
four percent, and it is still far outside the regime.

\paragraph{Context length.}
Three further censuses per model at 512, 1024 and 2048 tokens, run after E1
(Appendix~\ref{app:prereg}), extend the range of lengths to a factor of
\vScQLever\ (Table~\ref{tab:scale} in Appendix~\ref{app:exptables},
Figure~\ref{fig:e1}(c)). A least-squares fit of $\log\rho$ on $\log n$ over
the seven lengths gives $\rho\propto n^{\beta}$ with
$\beta=\vScQExp\pm\vScQExpSe$ on Qwen and $\vScLExp\pm\vScLExpSe$ on Llama,
with bootstrap intervals over heads of $[\vScQExpLo,\vScQExpHi]$ and
$[\vScLExpLo,\vScLExpHi]$; the growth is mostly in $\kappa$ (exponents
\vScQExpKappa\ and \vScLExpKappa), with $\vrho$ carrying the rest
(\vScQExpVrho\ and \vScLExpVrho). The corpus is not the same at every
length, since only long documents fill the long buckets, so part of the
trend is a change of corpus, and the exponent is not extrapolated. At the
shortest length, \vScQShortest\ tokens, the smallest $\rho$ of any head is
\vScQRhoMinHead\ on Qwen and \vScLRhoMinHead\ on Llama. For comparison, the
discrepancy branch of~\eqref{eq:envelope} at $\eps=0.1$ and $\dk=\dv=128$ is below
the cache size, with its constant set to one, only for $\rho<2.9$ when
$n=4096$ and only for $\rho<6$ when $n=10^{5}$.

% ---------------------------------------------------------------------
\subsection{E2: constructions at matched size}
\label{sec:e2}

\paragraph{Design.}
Nine heads per model, stratified by terciles of the E1 distribution of
$\rho$, were each captured at six prompts spanning cache lengths from
\vEbQLengthsMin\ to \vEbQLengthsMax\ pairs on Qwen and from \vEbLLengthsMin\
to \vEbLLengthsMax\ on Llama, giving \vEbQDumps\ instances per model. Sizes
$n/64$, $n/16$ and $n/4$ are shown; $n/32$ and $n/8$ were also swept, and
the mean error was measured beside the worst case; neither is shown.
Instances whose softmax is a near hard max, with a mean participation ratio
of at most 16 keys over the decoding queries (Appendix~\ref{app:defs}), are
excluded from every cell: on such an instance any method that keeps the few
keys carrying the mass scores well, and a tie is not evidence. The criterion
was chosen after the runs (Appendix~\ref{app:prereg}); it removes
\vEbQHardMaxDumps\ of \vEbQDumps\ instances on Qwen and \vEbLHardMaxDumps\
of \vEbLDumps\ on Llama, among them every instance of \vEbQHardMaxHeads\
heads on Qwen and of \vEbLHardMaxHeads\ on Llama. On Qwen the tercile design
therefore does not survive into Table~\ref{tab:e2}, which rests on
\vEbQKeptDumps\ instances of \vEbQKeptHeads\ heads on Qwen and
\vEbLKeptDumps\ instances of all \vEbLKeptHeads\ heads on Llama. The
per-head concentration is Table~\ref{tab:e2conc} in
Appendix~\ref{app:exptables}.

\begin{table}[hbp]
\centering\small
\begin{tabular}{@{}lrrrrrr@{}}
\toprule
& \multicolumn{3}{c}{real decoding queries} & \multicolumn{3}{c}{sphere of radius $\vrho$}\\
\cmidrule(lr){2-4}\cmidrule(lr){5-7}
Method & $n/64$ & $n/16$ & $n/4$ & $n/64$ & $n/16$ & $n/4$ \\
\midrule
\multicolumn{7}{@{}l@{}}{\itshape Qwen2.5-7B-Instruct}\\
Uniform sampling & \vEbQUnifRealA & \vEbQUnifRealB & \vEbQUnifRealC & \vEbQUnifRandA & \vEbQUnifRandB & \vEbQUnifRandC \\
BalanceKV & \vEbQBkvRealA & \vEbQBkvRealB & \vEbQBkvRealC & \vEbQBkvRandA & \vEbQBkvRandB & \vEbQBkvRandC \\
Kernel halving & \vEbQThinRealA & \vEbQThinRealB & \vEbQThinRealC & \vEbQThinRandA & \vEbQThinRandB & \vEbQThinRandC \\
\textbf{Halving} & \vEbQOhRealA & \vEbQOhRealB & \vEbQOhRealC & \vEbQOhRandA & \vEbQOhRandB & \vEbQOhRandC \\
\textbf{Halving, re-centred subsample} & \vEbQOsbRealA & \vEbQOsbRealB & \vEbQOsbRealC & \vEbQOsbRandA & \vEbQOsbRandB & \vEbQOsbRandC \\
\addlinespace
Heavy hitters (oracle) & \vEbQHhRealA & \vEbQHhRealB & \vEbQHhRealC & \vEbQHhRandA & \vEbQHhRandB & \vEbQHhRandC \\
Per-query & \vEbQPqRealA & \vEbQPqRealB & \vEbQPqRealC & \vEbQPqRandA & \vEbQPqRandB & \vEbQPqRandC \\
\midrule
\multicolumn{7}{@{}l@{}}{\itshape Llama-3-8B-Instruct}\\
Uniform sampling & \vEbLUnifRealA & \vEbLUnifRealB & \vEbLUnifRealC & \vEbLUnifRandA & \vEbLUnifRandB & \vEbLUnifRandC \\
BalanceKV & \vEbLBkvRealA & \vEbLBkvRealB & \vEbLBkvRealC & \vEbLBkvRandA & \vEbLBkvRandB & \vEbLBkvRandC \\
Kernel halving & \vEbLThinRealA & \vEbLThinRealB & \vEbLThinRealC & \vEbLThinRandA & \vEbLThinRandB & \vEbLThinRandC \\
\textbf{Halving} & \vEbLOhRealA & \vEbLOhRealB & \vEbLOhRealC & \vEbLOhRandA & \vEbLOhRandB & \vEbLOhRandC \\
\textbf{Halving, re-centred subsample} & \vEbLOsbRealA & \vEbLOsbRealB & \vEbLOsbRealC & \vEbLOsbRandA & \vEbLOsbRandB & \vEbLOsbRandC \\
\addlinespace
Heavy hitters (oracle) & \vEbLHhRealA & \vEbLHhRealB & \vEbLHhRealC & \vEbLHhRandA & \vEbLHhRandB & \vEbLHhRandC \\
Per-query & \vEbLPqRealA & \vEbLPqRealB & \vEbLPqRealC & \vEbLPqRandA & \vEbLPqRandB & \vEbLPqRandC \\
\bottomrule
\end{tabular}
\caption{E2. Worst-case relative error at matched coreset size, lower is
better. A cell is the median over the \vEbQKeptDumps\ (Qwen) or
\vEbLKeptDumps\ (Llama) instances that are not near hard max and over
\vEbQSeeds\ seeds, pooled; the last two rows of each panel, set off by a
gap, are the query-aware references and are deterministic. The real column is the worst case over the
model's decoding queries and the sphere column over $10^{4}$ queries of norm
$\vrho$. Bold marks the two variants of the construction of this paper.}
\label{tab:e2}
\end{table}

\begin{figure}[hbp]
\centering
\includegraphics[width=\textwidth]{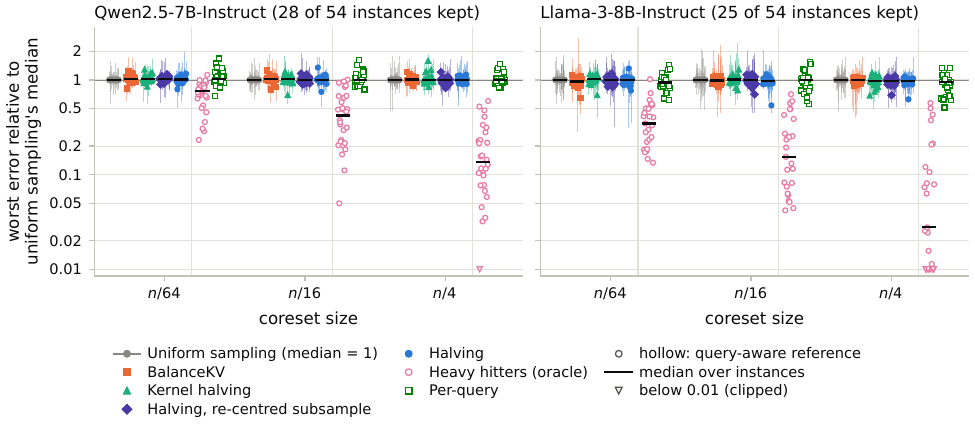}
\caption{E2. Worst-case error on the real decoding queries relative to
uniform sampling: for each kept instance, method and size, the method's
median over seeds divided by uniform sampling's median over seeds, one point
per instance, the whisker spanning the method's range over seeds, and the bar
the median over instances; uniform sampling's own strip shows its range over
seeds around one. Hollow markers are the two query-aware references, set
off from the query-oblivious methods by a rule. Instances near hard max are
excluded as in Table~\ref{tab:e2}.}
\label{fig:e2}
\end{figure}

\paragraph{The result on the real decoding queries.}
On the real decoding queries the five query-oblivious rows of
Table~\ref{tab:e2} lie within five hundredths of one another at every size
on both models; only the two query-aware references lie further from uniform
sampling, the oracle below it at every size and the per-query reference
above it at $n/64$ on Qwen. Counting (instance, size) cells over the five
swept sizes, the construction beats uniform sampling in \vEbQOhWins\ of
\vEbQOhCells\ cells on Qwen and \vEbLOhWins\ of \vEbLOhCells\ on Llama, and
its re-centred variant in \vEbQOsbWins\ of \vEbQOsbCells\ and \vEbLOsbWins\
of \vEbLOsbCells. The per-cell ratio of the construction's error to uniform
sampling's has median \vEbQRatioMed\ on Qwen and \vEbLRatioMed\ on Llama,
with 80\% of cells between \vEbQRatioQlo\ and \vEbQRatioQhi\ on Qwen and
between \vEbLRatioQlo\ and \vEbLRatioQhi\ on Llama; Figure~\ref{fig:e2}
shows the per-instance ratios of every method at the three tabulated sizes.
The cells are not independent, since instances of one head at different
lengths share the head and the two variants share the pipeline, so no test
is applied. The effect size is the statement: on these heads, in at least
eight cells of ten, the construction's worst-case error is within thirteen
percent of uniform sampling's, in either direction. The ratio by tercile of
$\rho$ is \vEbQTercRatioA, \vEbQTercRatioB\ and \vEbQTercRatioC\ on Qwen and
\vEbLTercRatioA, \vEbLTercRatioB\ and \vEbLTercRatioC\ on Llama (terciles
cut on all \vEbQDumps\ dumped instances, over which the kept cells are
unequally spread), so the gap does not widen with $\rho$. Without the near-hard-max exclusion the
construction's count is \vEbQBeatsRealAll\ of \vEbQCellsRealAll\ on Qwen and
\vEbLBeatsRealAll\ of \vEbLCellsRealAll\ on Llama (both variants pooled).

% Table 4 and Figure 2 fill the page before this paragraph; without the
% break the paragraph began under them and continued after them.
\newpage
\paragraph{Matched error.}
The same sweep read at matched error rather than matched size asks, for each
instance and each swept size $T$, for the smallest swept size at which a
method reaches uniform sampling's error at $T$ (Table~\ref{tab:matched},
Appendix~\ref{app:exptables}). Over the cells in which a method reaches that
error at some swept size, no query-oblivious method needs fewer keys than
uniform sampling in more than \vMeQMaxFrac\ of cells on Qwen and
\vMeLMaxFrac\ on Llama, where the oracle does in \vMeQHhFrac\ and
\vMeLHhFrac. The two readings agree.

\paragraph{The sphere queries.}
On the sphere of radius $\vrho$ uniform sampling has worst-case error
\vEbQUnifRandA\ on Qwen and \vEbLUnifRandA\ on Llama at $n/64$, and still
\vEbQUnifRandC\ and \vEbLUnifRandC\ at $n/4$; every query-oblivious method
lies within three hundredths of it at every size, and the oracle lies above
it at each of the three sizes shown on both models, by \vEbLHhRandC\
against \vEbLUnifRandC\ at $n/4$ on Llama at the widest. The mechanism, which was not measured, is
arithmetic: a query of norm $\vrho$ at these $\rho$ spans a logit range of up
to $2\rho$, so the softmax is close to a hard max in most directions, and the
worst case over directions is then the distance between the value of the
winning key and the value of its replacement in $S$, of order one for any
$S$ that misses one extreme key. This is consistent with a size bound larger
than $n$: at these $\rho$ no method tested here, at a quarter of the cache,
brings the worst case over the $10^{4}$ sampled directions below
\vEbQThinRandC\ on Qwen or \vEbLUnifRandC\ on Llama.

\paragraph{Arithmetic of the walk.}
Two failure mechanisms independent of the constants would also produce a
null: that at the measured $\rho$ the Gram matrix of
Definition~\ref{def:level} is numerically diagonal, or that it collapses to
near rank one, so that the walk balances noise. Recomputing the Gram matrix
from the E2 instances (Table~\ref{tab:kernel}, Appendix~\ref{app:exptables})
refutes both below $\rho=\vKnQCollapse$: the pivoted Cholesky factorisation
reaches its rank cap of \vKnQCap\ on \vKnQOperSat\ of Qwen's instances and
\vKnLOperSat\ of Llama's, the median largest off-diagonal correlation of the
exponential blocks lies between \vKnQAbLo\ and \vKnQAbHi\ by band on Qwen
and between \vKnLAbLo\ and \vKnLAbHi\ on Llama, and the largest relative
residual of any factorisation is \vKnQResid\ on Qwen and \vKnLResid\ on
Llama. Above $\rho=\vKnQCollapse$, where Qwen has \vKnQHeadsHi\ of its
\vKnQHeads\ heads and Llama \vKnLHeadsHi, the largest off-diagonal
correlation of the exponential blocks falls below $10^{-2}$ on every instance
and below the degeneracy threshold of $10^{-8}$ on \vKnQDegenHi\ of the
\vKnQDumpsHi\ instances, so that there the walk balances little beyond the
key-sum and cardinality blocks. The census therefore rules out arithmetic
failure below $\rho=\vKnQCollapse$; it does not show that the directions the
walk balanced are the ones that matter for the attention error at a real
query, and at $\rho$ near \vEaQrhoRound\ the exponential kernel
$e^{\rho(k_i^{\!\top}k_j-1)}$ is negligible except within tight clusters of
keys. The \vKnQDumpsHi\ Qwen instances above $\rho=\vKnQCollapse$ are all
among the \vEbQKeptDumps\ that Table~\ref{tab:e2} keeps, so on Qwen the null
rests in part on instances on which the walk had little to balance.

\paragraph{The low-$\rho$ tail, the cap and the clean instances.}
The E1 rule's asymptotic branch calls for E2 on the low-$\rho$ tail; E2 was
run on all three terciles and the tail is reported separately. Restricted to
the third of instances with the smallest $\rho$, $\rho\le\vEbQTailRhoMax$ on
Qwen (\vEbQTailKeptDumps\ instances kept of \vEbQTailDumps) and
$\rho\le\vEbLTailRhoMax$ on Llama (\vEbLTailKeptDumps\ of \vEbLTailDumps),
the construction wins \vEbQTailBeats\ of \vEbQTailCells\ and \vEbLTailBeats\
of \vEbLTailCells\ cells (both variants pooled). The cap of 8192 pairs binds
on every instance whose cache exceeds it, the two longest of each head's six
prompts: \vEbQCapDumps\ of \vEbQDumps\ instances from \vEbQCapHeads\ heads on
Qwen and \vEbLCapDumps\ of \vEbLDumps\ from \vEbLCapHeads\ on Llama. There,
both variants are Theorem~\ref{thm:main-alg}(a) on a uniform sample of the
cache (for the re-centred variant, a draw made under its key-mean rejection
rule). On the \vEbQCleanDumps\ (Qwen) and \vEbLCleanDumps\ (Llama)
instances that are neither capped, nor near hard max, nor above
$\rho=\vKnQCollapse$ (three heads on Qwen, two on Llama), the construction
wins \vEbQBeatsRealClean\ of \vEbQCellsRealClean\ and \vEbLBeatsRealClean\
of \vEbLCellsRealClean\ cells (both variants pooled).

\paragraph{Reading.}
At the median head the size that the discrepancy branch
of~\eqref{eq:envelope} prescribes exceeds the cache by many orders of
magnitude, so the guarantee
of~\eqref{eq:coreset} is vacuous on the sphere column and the theorem makes
no statement about the real column: on real heads E2 tests the construction
as a heuristic under the pre-registered rule, and the rule is not met: the
constants of Section~\ref{sec:alg} dominate at these $n$, as the plan
required the paper to say. The same $e^{\rho}$ appears in the whole-ball bounds of uniform sampling and of
the construction, BalanceKV and kernel halving carry no whole-ball bound at
all, and none of the four separates from uniform sampling here by more than
five hundredths of worst-case error at any size.

% ---------------------------------------------------------------------
\subsection{E3: closed-loop decoding}
\label{sec:e3}

\paragraph{Design.}
On the nine E2 heads of each model, \vEcQSteps\ tokens are generated from a
compressed cache in two ways: open loop, where the tokens of uncompressed
decoding are fed back at every step regardless of what the masked model
would produce, and closed loop, where the model decodes freely and its
queries are computed from the compressed cache. Eviction is simulated by
masking the evicted positions at the decode steps only; the prefill is never
masked, so the cache being compressed is exactly the cache the coreset was
built from, and only the nine selected heads are compressed. Each loop's
error is the worst case over the queries it issued. Four methods were run,
uniform sampling, BalanceKV, the re-centred variant of the construction and
the heavy-hitter oracle, at requested sizes $n/16$ and $n/4$, on the six E2
documents per model re-tokenised to at most \vEcQTokenLength\ tokens, which
shortens the two longest E2 instances of each model and lengthens the
instances E2 held at its 4096-token bucket; the realised caches hold
\vEcQCacheMin\ to \vEcQCacheMax\ pairs on Qwen and \vEcLCacheMin\ to
\vEcLCacheMax\ on Llama. This is a subset of the E2 methods on caches that
coincide with the E2 instances only where neither truncation applied
(Appendix~\ref{app:prereg}); the oracle is fitted on the queries of the
reference generation. The measurement is fidelity, not throughput.

\paragraph{The result, as counts.}
The closed/open ratio is taken over the rows in which the closed loop
diverged from the uncompressed model, because a row that tracks it to the end
is the open loop and contributes exactly one; this cut was chosen after the
results (Appendix~\ref{app:prereg}), and over all rows the oracle's median
ratio is \vEcQHhRatioAll\ on Qwen and \vEcLHhRatioAll\ on Llama. Pooled over
sizes, the oracle's ratio over diverged rows is \vEcQHhRatio\ on Qwen and
\vEcLHhRatio\ on Llama; the re-centred variant's is \vEcQOsbRatio\ and
\vEcLOsbRatio, uniform sampling's \vEcQUnifRatio\ and \vEcLUnifRatio,
BalanceKV's \vEcQBkvRatio\ and \vEcLBkvRatio. Among the oracle's diverged
rows at $n/16$ the closed loop is worse than the open loop in \vEcQHhBGt\
rows, better in \vEcQHhBLt\ and equal in \vEcQHhBEq\ on Qwen (\vEcLHhBGt,
\vEcLHhBLt\ and \vEcLHhBEq\ on Llama); at $n/4$ the counts are \vEcQHhCGt,
\vEcQHhCLt\ and \vEcQHhCEq\ on Qwen and \vEcLHhCGt, \vEcLHhCLt\ and
\vEcLHhCEq\ on Llama. Aggregated by head, the oracle's median ratio exceeds
one on \vEcQHhBHeadsAbove\ of \vEcQHhBHeads\ heads at $n/16$ and
\vEcQHhCHeadsAbove\ of \vEcQHhCHeads\ at $n/4$ on Qwen, and on
\vEcLHhBHeadsAbove\ of \vEcLHhBHeads\ and \vEcLHhCHeadsAbove\ of
\vEcLHhCHeads\ on Llama. For the three query-oblivious methods neither
direction dominates: the counts lean either way by model and size, with a
share of rows at exactly one whose worst error fell inside the prefix both
loops share; for the re-centred variant at $n/16$ they are \vEcQOsbBGt,
\vEcQOsbBLt\ and \vEcQOsbBEq\ on Qwen and \vEcLOsbBGt, \vEcLOsbBLt\ and
\vEcLOsbBEq\ on Llama. Two features of the table qualify this. The oracle's
absolute error is the lowest at every size, so its closed-loop penalty is
measured on a method with something to lose. The query-oblivious methods
enter the closed loop with worst-case error already between \vEcQUnifOpenB\
and \vEcQOsbOpenB\ at $n/16$ on Qwen and with token agreement below a quarter
on both models; their closed-loop tokens leave the uncompressed model's after
a median of at most ten steps in every query-oblivious row of
Table~\ref{tab:e3} but one, uniform sampling at $n/4$ on Qwen, so the
absence of a closed-loop penalty is in part a ceiling effect. The
per-step trajectories are Figure~\ref{fig:e3} in
Appendix~\ref{app:exptables}, and the ratio of the mean error from the first
divergent step on is in Table~\ref{tab:e3post}.

\begin{table}[hbp]
\centering\footnotesize
\setlength{\tabcolsep}{4.5pt}
\begin{tabular}{@{}llrrrrrr@{}}
\toprule
Method & size & open & closed & closed/open & diverged & tracked & agreement \\
\midrule
\multicolumn{8}{@{}l@{}}{\itshape Qwen2.5-7B-Instruct}\\
\vEcQBody
\midrule
\multicolumn{8}{@{}l@{}}{\itshape Llama-3-8B-Instruct}\\
\vEcLBody
\bottomrule
\end{tabular}
\caption{E3. Worst-case relative error over the queries each loop issued,
median over the \vEcQRows\ rows (prompt, head) of each size; the closed/open
ratio is the median over the rows in which the closed loop diverged from the
uncompressed model, counted in the ``diverged'' column. ``Tracked'' is the
median number of decode steps before the closed loop first diverged and
``agreement'' the median fraction of the \vEcQSteps\ generated tokens that
agree with the uncompressed model. Above one the closed loop is worse.}
\label{tab:e3}
\end{table}

\paragraph{Verdict.}
The heavy-hitter oracle degrades in the closed loop and the three
query-oblivious methods do not, on both models. The argument of
Section~\ref{sec:why} concerns guarantees tied to queries fixed in advance:
the oracle, fitted to the queries, shows the penalty, while BalanceKV, whose
guarantee is per query, shows none at this ceiling (\vEcQBkvRatio,
\vEcLBkvRatio), so the experiment does not separate the form of a method's
guarantee from its having been fitted to the queries. The experiment is
consistent with the argument; it is not a strong test of it, for the reason
just given.

% ---------------------------------------------------------------------
\subsection{E4: scaling on synthetic instances}
\label{sec:e4}

\paragraph{Design.}
Instances with $\dk\in\{2,8,64\}$, $\dv\in\{1,8\}$, $\rho\in\{1,2,4,6\}$ and
$n=2^{16}$, keys uniform on the sphere or clustered, \vEdSeeds\ seeds
(Appendix~\ref{app:e4}). For each method the size needed to reach $\eps=0.1$
is the crossing of a least-squares fit of $\log(\text{worst error})$ on
$\log(\text{requested size})$, the worst error being over 2000 queries of
norm $\rho$, over the sizes from $n/16384$ to $n/4$ in factors of two for
uniform sampling and kernel halving and from $n/16384$ to $n/16$ for the
three capped methods, whose rows at $n/8$ and $n/4$ are the 8192-pair cap
subsample itself (Section~\ref{sec:exp-setup}). The slope of
$\log(\text{size})$ against $\rho$ is fitted over $\rho\le4$ per seed and
reported as the median over seeds, because at $\rho=6$, $\dk=64$, $\dv=8$ the
discrepancy branch of~\eqref{eq:envelope},
$e^{\rho}(\sqrt{\dv}+\sqrt{\dk\log(1+\rho)})/\eps$, comes within a factor
$1.2$ of $n$ before constants, which was the plan's reason for the
restriction. A fit is excluded when at least half of its crossings fall outside
the sizes it was fitted over and marked $\dagger$ when any does
(Appendix~\ref{app:e4}); of the \vEdFits\ fits, \vEdCensoredFits\ are
excluded, \vEdFullDaggered\ are marked, and \vEdOffGridFrac\ of all
\vEdOffGridTotal\ crossings lie outside those sizes. The fitted crossing
replaced the pre-registered snapped one, and the rows of the capped methods
at and above the cap were removed from the fit, after the results were seen
(Appendix~\ref{app:prereg}); the snapped crossing is not reported.

\paragraph{The pre-registered check.}
The check fixed in advance was a slope of $1\pm0.1$ for every method over
$\rho\le4$. It is met in \vEdFullWithin\ of the \vEdFits\ fits, and by the
construction in \vEdOhWithin\ of its \vEdConstructionCells\ configurations
(its re-centred variant in \vEdOsbWithin), so the check fails as stated.
Theorem~\ref{thm:main-alg}(a) is an upper bound, and the tolerance treated it
as tight at slope one; the bound's own size expression
$e^{\rho}(\sqrt{\dv}+\sqrt{\dk\log(1+\rho)})$ has slope between $1.05$ and
$1.12$ over $\rho\in\{1,2,4\}$ on the six $(\dk,\dv)$ pairs, the fitted slopes
range up to \vEdMaxSlope, and their ranges over seeds are often several
tenths wide (Table~\ref{tab:e4slope}). Nothing in the sweep contradicts the
bound, and nothing in it establishes its rate either: three values of $\rho$
fitted over ten seeds do not fix an exponent to a tenth. Both variants of the
construction have a higher slope than every baseline in \vEdOursHighest\
configurations, so the difference in scaling is systematic rather than one
configuration; a lower slope for a baseline contradicts nothing, since
uniform sampling's bound (Theorem~\ref{thm:sampling}) is an upper bound of
slope two in $\rho$ and the other baselines carry no whole-ball bound.

\begin{table}[hbp]
\centering\small
\begin{tabular}{@{}llrrr@{}}
\toprule
& & \multicolumn{3}{c}{fitted slope of $\log(\text{size})$ against $\rho$, and its range over seeds}\\
\cmidrule(lr){3-5}
Keys & Method & $\dk=2$ & $\dk=8$ & $\dk=64$ \\
\midrule
\vEdSlopeBody
\bottomrule
\end{tabular}
\caption{E4. Scaling of the size needed to reach $\eps=0.1$ in $\rho$, fit
over $\rho\in\{1,2,4\}$, median over \vEdSeeds\ seeds with the range over
seeds in brackets, for uniform sampling and Halving; all \vEdFits\ fits are
in Table~\ref{tab:e4slopefull}. Bold marks a slope within the
pre-registered $1\pm0.1$; $\dagger$ as in Appendix~\ref{app:e4}.}
\label{tab:e4slope}
\end{table}

\paragraph{Size at the tolerance.}
Table~\ref{tab:e4ratio} and Figure~\ref{fig:e4} hold the error fixed at
$\eps$ and read the size. The construction needs the smaller coreset in
\vEdRatioWinsRa\ of \vEdRatioCellsRa\ configurations at $\rho=1$,
\vEdRatioWinsRb\ of \vEdRatioCellsRb\ at $\rho=2$ and \vEdRatioWinsRc\ of
\vEdRatioCellsRc\ at $\rho=4$, with median ratios \vEdRatioRa, \vEdRatioRb\
and \vEdRatioRc\ (the re-centred variant: \vEdOsbRatioRa, \vEdOsbRatioRb\
and \vEdOsbRatioRc): between \vEdGainLo\ and \vEdGainHi\ times fewer pairs
than uniform sampling at the median configuration, an advantage that narrows
as $\rho$ grows and closes on the sphere at $\dk=64$ by $\rho=4$. The one
cell printed as a dash, clustered keys at $\dk=2$, $\dv=1$ and $\rho=1$, is one on
which the construction reaches $\eps$ below the smallest swept size in most
seeds, so that its crossing would be an extrapolation below the grid. At a
fixed size of \vEfSize\ of \vEfN\ pairs the construction's worst error is
\vEfOhRa\ against uniform sampling's \vEfUnifRa\ at $\rho=1$ and \vEfOhRd\
against \vEfUnifRd\ at $\rho=6$ (Table~\ref{tab:e4fixed},
Appendix~\ref{app:exptables}). Two qualifications. The worst error over the
2000 test queries is a lower bound on the supremum over the ball, so the
fitted size-to-$\eps$ is a lower bound on the size the guarantee needs: what
is verified is that this lower bound is smaller than uniform sampling's in
\vEdRatioWinsRa\ of \vEdRatioCellsRa, \vEdRatioWinsRb\ of \vEdRatioCellsRb\
and \vEdRatioWinsRc\ of \vEdRatioCellsRc\ configurations at $\rho=1,2,4$,
not the supremum guarantee itself and not its rate in $\rho$, which no
polynomial-time verifier is known to certify (Open
Problem~\ref{prob:verify}). And the worst error is not monotone in the size
on this grid, decreasing in only \vEdMonotone\ of the (configuration, method,
$\rho$, seed) groups at $\rho\le4$, with a median upward step of
\vEdViolation; the fit over the swept sizes is what makes the crossing well
defined. The check of Theorem~\ref{thm:main-nolog} at $\dk=2$, that the size
stops growing beyond $e^{\rho}$, was not performed: it needs the cell-wise
colouring of Fact~\ref{fact:tai}, and the Gram--Schmidt construction's own
factor $\sqrt{\log(1+\rho)}$ cannot be told
from a constant over $\rho\in[1,4]$.

\begin{figure}[hbp]
\centering
\includegraphics[width=\textwidth]{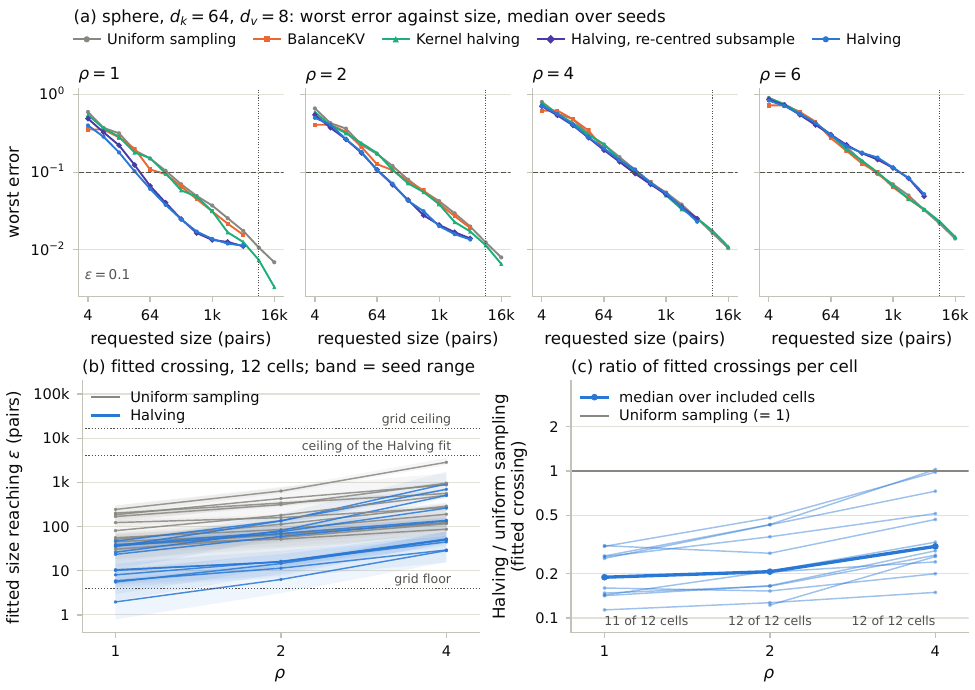}
\caption{E4. (a) Worst-case error against the requested coreset size at each
$\rho$ on the sphere with $\dk=64$, $\dv=8$, median over seeds, with the
tolerance $\eps=0.1$ dashed and the 8192-pair cap dotted; the curves of the
three capped methods end at $n/16$, the largest size below the cap. (b)
Fitted size to reach $\eps$ against $\rho$ for Halving and uniform sampling
over the twelve (keys, $\dk$, $\dv$) configurations, with the range over seeds
shaded; the dotted lines are the smallest and largest swept sizes and the
largest size below the cap, which bounds the Halving fit. (c) The ratio of
the two, by configuration, as in Table~\ref{tab:e4ratio}; the count below
each $\rho$ is the number of configurations whose ratio is defined.}
\label{fig:e4}
\end{figure}

\begin{table}[hbp]
\centering\small
\begin{tabular}{@{}lrrr@{}}
\toprule
& \multicolumn{3}{c}{size to reach $\eps$: construction over uniform sampling}\\
\cmidrule(lr){2-4}
Configuration & $\rho=1$ & $\rho=2$ & $\rho=4$ \\
\midrule
\vEdRatioBody
\bottomrule
\end{tabular}
\caption{E4. Ratio of the fitted size at which Halving
(Theorem~\ref{thm:main-alg}(a)) reaches $\eps=0.1$ to the size at which
uniform sampling does, median over the seeds in which both crossings lie
inside the sizes their fits used; a cell prints a dash when fewer than half
the seeds qualify. Below one the construction needs the smaller coreset.}
\label{tab:e4ratio}
\end{table}

\begin{table}[htbp]
\centering\small
\begin{tabular}{@{}rrrrr@{}}
\toprule
& \multicolumn{3}{c}{query-oblivious signing at $q_z$} & per-query greedy at $q_z$ \\
\cmidrule(lr){2-4}\cmidrule(lr){5-5}
$\dk$ & bound $\rho\sqrt r/e$ & $r\sinh(\rho/\sqrt r)$ & walk, best of \vSepSeeds\ seeds & worst of \vSepSeeds\ seeds \\
\midrule
$256$  & \vSepKaBound & \vSepKaWitness & \vSepKaObl & \vSepKaPq \\
$1024$ & \vSepKbBound & \vSepKbWitness & \vSepKbObl & \vSepKbPq \\
$4096$ & \vSepKcBound & \vSepKcWitness & \vSepKcObl & \vSepKcPq \\
\bottomrule
\end{tabular}
\caption{The two inequalities of Proposition~\ref{prop:sep} on its instance at
$\rho=1$. The walk's signing must sit above the bound and the greedy below
$e^{\rho}=2.718$. The third column is the value of the witness query for any
balanced signing; the fourth reproduces it because the best seed's signing
is balanced.}
\label{tab:sep}
\end{table}

\paragraph{The separation instance.}
On the instance of Proposition~\ref{prop:sep}(ii) with $n=r=\dk$ and
$\rho=1$, at $\dk\in\{256,1024,4096\}$ and \vSepSeeds\ seeds,
Table~\ref{tab:sep} gives the signed sum at the witness query $q_z$ of the
proof for the signing of the Gram--Schmidt walk on the pivoted-Cholesky
features (a truncated factor at $\dk\ge1024$, where the rank cap of
\vKnQCap\ binds), and the value of the greedy of
Proposition~\ref{prop:sep}(i) at the same query. Both inequalities hold: the
walk's signings sit above $\rho\sqrt r/e$ and the greedy below $e^{\rho}$ at
every dimension. The column for the walk carries no information about the
walk beyond the balance of its signings: for a signing with
$\sum_j\sigma_j=0$ the witness value is exactly $r\sinh(\rho/\sqrt r)$, the
walk's signings have $\sum_j\sigma_j\in\{0,\pm2\}$, and the best seed at each
dimension is balanced and reproduces that value to the printed precision.
The factor $e$ between the bound and that value is slack in the proof, most
of it from the step $e^{-\rho/\sqrt r}\ge e^{-1}$, a worst case over
$\rho\le\sqrt r$; and the greedy's value is small because at $q_z$ every
weight $e^{q_z^{\!\top}c_j}$ lies within a factor of about $e^{\rho/\sqrt r}$
of one, which is $1.06$ at $\dk=256$. The witness value is a lower bound on
the supremum over the ball, so the column does not measure the walk's
discrepancy on this instance.

% ---------------------------------------------------------------------
\subsection{The pre-registered rules and their outcomes}
\label{sec:exp-rules}

\begin{table}[htbp]
\centering\small
\begin{tabular}{@{}p{1.1cm}p{6.6cm}p{7.3cm}@{}}
\toprule
& Rule, fixed in advance & Outcome \\
\midrule
E1 & If the median head satisfies
$e^{\rho}(\sqrt{\dv}+\sqrt{\dk\log(1+\rho)})/\eps\le n$ at $\eps=0.1$, the
discrepancy branch is meaningful and E2 proceeds on the median heads. If most
heads have $e^{\rho}\ge10^{6}$, the results are reported as asymptotic, the
abstract says so, and E2 runs on the low-$\rho$ tail. &
Not meaningful at the median head on either model (yes/no: Qwen
\vEaQMeaningful, Llama \vEaLMeaningful); $e^{\rho}\ge10^{6}$ on \vEaQfracHi\
and \vEaLfracHi\ of heads. Reported as asymptotic, in the abstract. E2 was
run on all terciles (Section~\ref{sec:e2}). \\
\addlinespace
E2 & The construction should have lower worst-case error than uniform
sampling at every size on every head, and the gap should widen with $\rho$;
if it does not beat uniform sampling on the median heads, the constants of
Section~\ref{sec:alg} dominate at these $n$ and the paper says so. &
Not met: over all instances the construction beats uniform sampling in
\vEbQBeatsRealAll\ of \vEbQCellsRealAll\ cells on Qwen and \vEbLBeatsRealAll\
of \vEbLCellsRealAll\ on Llama (both variants), and its halving variant in
\vEbQOhWins\ of \vEbQOhCells\ and \vEbLOhWins\ of \vEbLOhCells\ after the
near-hard-max exclusion; the ratio is flat across terciles. The constants
dominate at these $n$, and $e^{\rho}$ puts every bound above the cache. \\
\addlinespace
E3 & Per-query methods degrade faster in the closed loop than in the open
loop, and the uniform-guarantee construction degrades equally in both; the
modelling argument of Section~\ref{sec:why} loses its support if the two are
indistinguishable. &
The two differ in the median closed/open ratio on both models over the rows
in which the closed loop diverged, a cut chosen after the results, with the
oracle as the only query-aware method run: its ratio there is \vEcQHhRatio\ (Qwen)
and \vEcLHhRatio\ (Llama), and \vEcQHhRatioAll\ and \vEcLHhRatioAll\ over
all rows; the re-centred variant's is \vEcQOsbRatio\ and \vEcLOsbRatio.
Consistent with the argument, with the ceiling caveat of
Section~\ref{sec:e3}. \\
\addlinespace
E4 & The size needed to reach $\eps=0.1$ should scale as $e^{\rho}$: a
log-linear fit of slope $1\pm0.1$ over $\rho\le4$ for every method. &
Met in \vEdFullWithin\ of \vEdFits\ fits, and by the construction in
\vEdOhWithin\ of \vEdConstructionCells; the largest fitted slope is
\vEdMaxSlope. Not met as stated. \\
\bottomrule
\end{tabular}
\caption{The decision rules of the plan and what the runs returned.}
\label{tab:rules}
\end{table}

Table~\ref{tab:rules} places each rule beside its outcome. The consequences
the plan attached to the first two rules were applied: the abstract states
that the algorithmic contribution is asymptotic on the models measured, and
Section~\ref{sec:e2} states that the constants dominate at these $n$. The
consequence attached to the third rule, that the case for uniform guarantees
in a decoding loop would rest on Proposition~\ref{prop:sep} alone, does not
apply. Every analysis not in the plan is listed as such in
Appendix~\ref{app:prereg}.

% ---------------------------------------------------------------------
\subsection{What the experiments establish}
\label{sec:exp-summary}

The contribution of this section is a measurement. On every key--value head
of two instruction-tuned models of seven and eight billion parameters, at
every context length from 512 to 32768 tokens, the centred parameter $\rho$
is no smaller than \vScLRhoMinHead, reaches \vEaQrhoMaxHead\ at the largest
head at 4k--32k tokens, and has medians of \vEaQrhoRound\ and \vEaLrhoRound\
there. It grows as $n^{\vScQExp}$ and $n^{\vScLExp}$ with the context length
over a corpus that changes with the length, and it is a bulk property of the
key cloud rather than the work of a few sinks and, outside the Qwen heads
above $\rho=200$, not of one query head. At these values every whole-ball
bound that carries a factor $e^{\rho}$, this paper's included, prescribes a
coreset larger than the cache it compresses, and on real heads the construction of
this paper and the reimplemented baselines match uniform sampling at matched
size, the five query-oblivious rows of Table~\ref{tab:e2} lying within five
hundredths of one another at every size, and at matched error, where none
needs fewer keys in more than \vMeQMaxFrac\ of cells on Qwen and
\vMeLMaxFrac\ on Llama, while the heavy-hitter oracle has lower error in
every real-query cell of Table~\ref{tab:e2}, from \vEbQHhRealA\ against
\vEbQUnifRealA\ at $n/64$ on Qwen to \vEbLHhRealC\ against \vEbLUnifRealC\
at $n/4$ on Llama. Where
$\rho\le4$ is reached, on synthetic instances, the construction needs between
\vEdGainLo\ and \vEdGainHi\ times fewer pairs than uniform sampling at the
median configuration, and the sweep neither contradicts the rate of
Theorem~\ref{thm:main-alg}(a) nor establishes it. The closed-loop experiment
is consistent with the modelling argument of Section~\ref{sec:why} and is
weak for the reason stated. Every bound in this paper is stated in the
centred convention; the uncentred product is measured beside it
(Table~\ref{tab:e1}), and a bound stated in another convention has not been
evaluated on these models.

% =====================================================================
\section{Conclusion}
\label{sec:conc}

The Liberty--Andoni--Kleiner gap is an artefact of its width input: two
reductions to known theorems narrow it to $\sqrt{\log(1+\rho)}$ against their
value-direction floor, and Tai's theorem removes even that in fixed dimension.

The fixed-dimension branch is governed by the growth in $D$ of Tai's constant
(Open Problem~\ref{prob:g}), and is exponential in $\dv$ and, as far as is
known, in $\dk$; against it we prove no lower bound of order $e^{\rho}/\eps$,
since Theorem~\ref{thm:main-lb}(a) forces $e^{\rho}$ only once
$\dk\gtrsim\rho^{2}(\rho+\log\frac1{\eps\sqrt{\dv}})$. For scalar values no
lower bound carries a factor of $\dk$, and the sampling branch is unmatched
from below. The construction succeeds with constant probability at the stated
size, which a verifier for the signing objective
(Open Problem~\ref{prob:verify}) would remove.

Section~\ref{sec:exp} answers the two empirical questions of the
introduction. On the two models measured, $\rho$ is near \vEaQrhoRound\ and
\vEaLrhoRound\ at the median head at 4k--32k tokens and no smaller than
\vScLRhoMinHead\ on any head at 512 tokens, so $e^{\rho}$ makes every non-trivial upper bound in
this paper, and every whole-ball bound that carries a factor $e^{\rho}/\eps$,
larger than the cache it compresses. On real heads the construction, in both
variants, matches uniform sampling and does not beat it; below
$\rho=\vKnQCollapse$ the Gram matrices it balances reach the rank cap of the
factorisation on \vKnQOperSat\ of Qwen's instances and \vKnLOperSat\ of
Llama's and are not numerically diagonal, so the null is not an
arithmetic failure there, and whether the directions the walk balances
matter at a real query is not shown. Where $\rho\le4$, on synthetic
instances, the construction needs \vEdGainLo\ to \vEdGainHi\ times fewer
pairs than uniform sampling at the median configuration, with a fitted
exponent in $\rho$ of at most \vEdMaxSlope. In a closed decoding loop the
query-aware oracle degrades and the query-oblivious methods do not, which is
consistent with Section~\ref{sec:why} and is not a strong test of it. The
algorithmic contribution is therefore asymptotic on the two models measured.
Two routes could change that, and neither is taken here: a guarantee over
the realised query distribution rather than the ball, which is a different
theorem and forfeits the property Proposition~\ref{prop:sep} is about; or an
architecture whose key geometry is tighter, of which models with normalised
or capped attention logits are candidates, and none was measured.

% =====================================================================
\appendix

\section{Tools}
\label{app:tools}

\begin{fact}[Banaszczyk~\cite{Ban98}]
\label{fact:ban}
Let $K\subseteq\R^{N}$ be a convex body (compact, convex, nonempty interior;
symmetry not required) with $\gam_N(K)\ge\tfrac12$, $\gam_N$ the standard
Gaussian measure on $\R^{N}$, and $x_1,\dots,x_m\in\R^{N}$
with $\norm{x_i}\le1$. Then some $\sigma\in\{\pm1\}^{m}$ has $\sum_i\sigma_ix_i\in5K$.
\end{fact}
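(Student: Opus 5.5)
The plan is Banaszczyk's induction on the number of vectors, carried out inside a body that shrinks by one vector at each step. First rescale. Put $y_i:=x_i/5$, so $\norm{y_i}\le\tfrac15$; it then suffices to find $\sigma$ with $\sum_i\sigma_iy_i\in K$.

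The induction needs, for a convex body $L$ and a vector $u$, an operation $L\mapsto L*u$ with three properties:
\begin{itemize}
\item $L*u$ is convex (or empty);
\item every $s\in L*u$ satisfies $s+u\in L$ or $s-u\in L$;
\item the key lemma: if $\gam_N(L)\ge\tfrac12$ and $\norm u\le\tfrac15$, then $\gam_N(L*u)\ge\tfrac12$.
\end{itemize}
I would take Banaszczyk's explicit choice. The body $L*u$ agrees with $(L-u)\cup(L+u)$ wherever that union is convex in the direction $u$. On each line parallel to $u$, the slice of $L$ is an interval $[a,b]$, and the slice of $L*u$ is $[a+|u|,b-|u|]$ when $b-a\ge2|u|$, and is taken empty (or convexly completed) otherwise. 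Convexity then follows from the concavity and convexity of the endpoint functions $b$ and $a$ over the orthogonal hyperplane. The "$\pm u$ lands in $L$" property holds by construction.

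Given the key lemma, the induction is routine. Set $K_m:=K$ and $K_{j-1}:=K_j*y_j$ for $j=m,\dots,1$. By the key lemma each $K_j$ is convex with $\gam_N(K_j)\ge\tfrac12$. In particular $K_0\ne\emptyset$, and since $\gam_N(K_0)\ge\tfrac12$ a half-space argument (or symmetry of $\gam_N$) puts a point near the origin in it. More carefully, I would prove directly that $0\in K_0$, using that a convex set of Gaussian measure $\ge\tfrac12$ contains the origin. Then choose signs forward. Start from $s_0=0\in K_0$. Given $s_{j-1}\in K_{j-1}=K_j*y_j$, pick $\sigma_j$ with $s_j:=s_{j-1}+\sigma_jy_j\in K_j$. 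At the end $s_m=\sum_j\sigma_jy_j\in K$. Compactness and nonempty interior are used only so that the measures are well defined and the operation is stable; a limiting argument handles degenerate slices.

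The main obstacle is the key lemma $\gam_N(L*u)\ge\tfrac12$. It reduces to one line: by Fubini over the hyperplane orthogonal to $u$, it suffices to compare one-dimensional Gaussian measures of $[a,b]$ and $[a+t,b-t]$ with $t=\norm u\le\tfrac15$, weighted by the marginal. Shrinking loses measure. The convex-completion part of Banaszczyk's definition gains measure, through the union structure on slices whose endpoint $a$ or $b$ lies far out in the Gaussian tail. What has to be shown is that the gain dominates when $\gam_N(L)\ge\tfrac12$. I would first try Banaszczyk's route: use Ehrhard's inequality (log-concavity of $\Phi^{-1}\circ\gam_N$) to reduce to $L$ a half-space or a slab. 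In that extremal case the claim is an explicit inequality for the standard normal distribution function, valid for $t\le\tfrac15$, and this inequality is where the constant $5$ comes from.
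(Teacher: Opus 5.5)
The paper does not prove this statement; it takes it from~\cite{Ban98}, so the only question is whether your argument stands on its own. Your outer layer does, and it is Banaszczyk's induction: rescaling to $\norm{y_i}\le\tfrac15$, building $K_{j-1}=K_j*y_j$, getting $0\in K_0$ by separation from closedness, convexity and $\gam_N(K_0)\ge\tfrac12$, and choosing signs forward.

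The gap is the key lemma, which carries the entire content of the theorem. You both misdescribe it and leave it unproved. Your two descriptions of $L*u$ disagree with each other:
\begin{itemize}
\item ``Agrees with $(L-u)\cup(L+u)$ where that union is convex along $u$'' means the following. On a line parallel to $u$ whose slice $[a,b]$ of $L$ has $b-a\ge2\norm u$, the slice of $L*u$ is the \emph{enlarged} interval $[a-\norm u,\,b+\norm u]$. On shorter slices it is empty. This set is convex because $b-a$ is concave, $a$ is convex and $b$ is concave.
\item Your formula $[a+\norm u,\,b-\norm u]$ is instead the slice of $(L-u)\cap(L+u)$. For that set the lemma is false. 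If $s+u,s-u\in L$ then $s\in L$, so $(L-u)\cap(L+u)\subseteq L$. It also omits the end segments of every slice, so its measure is strictly below $\gam_N(L)$. At $\gam_N(L)=\tfrac12$ your induction stops at the first step.
\end{itemize}
Nor may short slices be ``convexly completed''. A point in the gap of $[a-\norm u,b-\norm u]\cup[a+\norm u,b+\norm u]$ has neither $s+u$ nor $s-u$ in $L$, so short slices must be deleted.

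Consequently your gain/loss bookkeeping is reversed. The gain is the pair of end strips of length $\norm u$ added to every long slice. The loss is the whole mass of the short slices, where lines parallel to $u$ graze $L$. The lemma asserts that the gain dominates when $\gam_N(L)\ge\tfrac12$ and $\norm u\le\tfrac15$.

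The measure hypothesis must enter here. For the slab $\{\lvert x_1\rvert\le w\}$ with $w<\norm u$ and $u$ parallel to $e_1$, every slice is short and $L*u=\emptyset$. Your Ehrhard step does not supply this comparison, for two reasons:
\begin{itemize}
\item You give no reason why half-spaces or slabs should be extremal for $\gam_N(L*u)-\gam_N(L)$.
\item In exactly those cases nothing happens. A half-space has only unbounded or empty slices. A slab of Gaussian measure at least $\tfrac12$ has width at least $1.34>2\norm u$. So every slice is long, $L*u\supseteq L$, and no constant comes out, let alone $5$.
\end{itemize}
The bodies that make the lemma delicate are those with many short slices, which your reduction discards. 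What you have is a correct reduction of Banaszczyk's theorem to Banaszczyk's lemma. The lemma itself still needs a real comparison of the deleted short slices against the added end strips under $\gam_N(L)\ge\tfrac12$.
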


\begin{fact}[Gram--Schmidt walk~\cite{BDGL18}, Thm.~1.4]
\label{fact:gsw}
There is a randomised algorithm which, given $x_1,\dots,x_m\in\R^{N}$ with
$\norm{x_i}\le1$, runs in time $O(m(m+N)^{\omega})$ and outputs $\sigma$ such
that $y=\sum_i\sigma_ix_i$ satisfies $\E e^{\inner\theta y}\le e^{\sigma_{\mathrm{GS}}^{2}\norm\theta^{2}/2}$
for all $\theta$, with $\sigma_{\mathrm{GS}}=\sqrt{40}$. Consequently $\E y=0$,
$\E\inner\theta y^{2}\le\sigma_{\mathrm{GS}}^{2}\norm\theta^{2}$, and
$\norm{\inner\theta y}_{\psi_2}\le C_\psi\sigma_{\mathrm{GS}}\norm\theta$ for an
absolute $C_\psi\ge1$ (\cite[Prop.~2.5.2]{Ver18}).
\end{fact}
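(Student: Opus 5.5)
The main assertion, the subgaussian moment-generating bound with $\sigma_{\mathrm{GS}}^{2}=40$ and the running time, I would take from~\cite[Thm.~1.4]{BDGL18} and not reprove. The only work is to check that the normalisations agree and to derive the three listed consequences.

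For the constant, Bansal et al.\ call a random vector $y$ $\sigma^{2}$-subgaussian when $\E e^{\inner\theta y}\le e^{\sigma^{2}\norm\theta^{2}/2}$ for all $\theta$, and their Theorem~1.4 gives $\sigma^{2}=40$ for the signed sum of the walk on vectors of norm at most one. That is exactly the displayed inequality with $\sigma_{\mathrm{GS}}=\sqrt{40}$. For the time bound, the walk runs at most $m$ phases, since each phase freezes at least one coordinate. Each phase solves one least-squares problem for the update direction over at most $m$ columns in $\R^{N}$, which costs $O((m+N)^{\omega})$ by fast matrix multiplication, for $O(m(m+N)^{\omega})$ in total. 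The step I expect to need most care is this matching of conventions, that is, confirming the theorem is stated for $\norm{x_i}\le1$ and not for $\norm{x_i}\le1/2$ or with a variance proxy $\sigma$ rather than $\sigma^{2}$. If the conventions differ, only the absolute constant changes, and the later uses of $\sigma_{\mathrm{GS}}$ absorb that.

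For the consequences, fix $\theta$ and put $g(t):=\E e^{t\inner\theta y}$. Since $\norm y\le m$ surely, $g$ is smooth and may be differentiated under the expectation. The bound gives $g(t)\le e^{\sigma_{\mathrm{GS}}^{2}t^{2}\norm\theta^{2}/2}$ for all real $t$, with equality at $t=0$.
\begin{itemize}
\item \emph{Mean zero.} The function $t\mapsto e^{\sigma_{\mathrm{GS}}^{2}t^{2}\norm\theta^{2}/2}-g(t)$ is nonnegative and vanishes at $t=0$, so $t=0$ is a minimum and its derivative there, $-\E\inner\theta y$, vanishes. As $\theta$ is arbitrary, $\E y=0$.
\item \emph{Second moment.} With the first-order term gone, expand to second order: $1+\tfrac{t^{2}}{2}\E\inner\theta y^{2}+o(t^{2})\le1+\tfrac{t^{2}}{2}\sigma_{\mathrm{GS}}^{2}\norm\theta^{2}+o(t^{2})$. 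Dividing by $t^{2}$ and letting $t\to0$ gives $\E\inner\theta y^{2}\le\sigma_{\mathrm{GS}}^{2}\norm\theta^{2}$.
\item \emph{The $\psi_2$ bound.} The scalar $X=\inner\theta y$ is centred and satisfies $\E e^{\lambda X}\le e^{\lambda^{2}K^{2}}$ for all $\lambda$, with $K=\sigma_{\mathrm{GS}}\norm\theta/\sqrt2$. The equivalence of the subgaussian characterisations in~\cite[Prop.~2.5.2]{Ver18}, the mgf form implying the $\psi_2$ form up to an absolute factor, then gives $\norm X_{\psi_2}\le C_\psi\sigma_{\mathrm{GS}}\norm\theta$. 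Enlarging $C_\psi$ if necessary gives $C_\psi\ge1$, harmlessly.
\end{itemize}
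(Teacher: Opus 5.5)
Your proposal is correct and follows the same route as the paper, which proves nothing here: it takes the subgaussian bound with $\sigma_{\mathrm{GS}}^{2}=40$ and the running time directly from \cite[Thm.~1.4]{BDGL18}, and the $\psi_2$ bound from \cite[Prop.~2.5.2]{Ver18}. Your second-order expansion of the moment-generating function at $t=0$, which gives $\E y=0$ and the variance bound, fills in the step the paper calls a consequence without deriving it.
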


\begin{fact}[Dudley's inequality, expectation and tail; \cite{Ver18}, Thms.~8.1.3 and~8.1.6]
\label{fact:dudley}
Let $(X_t)_{t\in T}$ have subgaussian increments $\norm{X_t-X_s}_{\psi_2}\le Kd(t,s)$.
Then for every $t_0$, and for every $u\ge0$ with probability at least
$1-2e^{-u^{2}}$,
\begin{gather*}
  \E\sup_{t}(X_t-X_{t_0})\le C_{\mathrm D}K\!\int_0^\infty\!\sqrt{\log N(T,d,\eta)}\,d\eta,\\
  \sup_{t,s}|X_t-X_s|\le C_{\mathrm D}K\Bigl[\int_0^\infty\!\sqrt{\log N(T,d,\eta)}\,d\eta+u\diam(T)\Bigr].
\end{gather*}
A centred Gaussian process $X_t=\inner gt$, $t\in T\subset\R^{N}$, has such
increments with $K=K_{\mathrm G}$ absolute in the Euclidean metric.
\end{fact}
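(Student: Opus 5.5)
This fact is quoted from~\cite{Ver18}. If I had to reprove it, I would use the standard generic chaining argument with dyadic nets, and handle the tail version by distributing the deviation parameter $u$ across the scales of the chain.

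\emph{Setup.} First reduce to finite $T$. If the entropy integral is finite, $T$ is totally bounded and the supremum over $T$ is a monotone limit of suprema over finite subsets, so monotone convergence transfers both bounds. Let $\Delta:=\diam(T)$ (finite, otherwise the integral is infinite). Pick $\eta_j:=2^{-j}\Delta$ and let $\mathcal N_j$ be a minimal $\eta_j$-net of $T$ in $d$. Take $\mathcal N_{j_0}=\{t_0\}$ at the top scale $\eta_{j_0}\ge\Delta$, and define projections $\pi_j:T\to\mathcal N_j$ to a nearest point. Since $T$ is finite, $\pi_J=\mathrm{id}$ for $J$ large. Then
\[
  X_t-X_{t_0}=\sum_{j>j_0}\bigl(X_{\pi_j t}-X_{\pi_{j-1}t}\bigr).
\]
Each link satisfies $d(\pi_jt,\pi_{j-1}t)\le\eta_j+\eta_{j-1}=3\eta_j$. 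At level $j$ there are at most $|\mathcal N_j||\mathcal N_{j-1}|\le N(T,d,\eta_j)^{2}$ distinct links.

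\emph{Expectation bound.} The maximum of $M$ variables with $\psi_2$-norm at most $L$ has expectation $\lesssim L\sqrt{\log M}$. Applying this to each level gives
\[
  \E\sup_t(X_t-X_{t_0})\lesssim K\sum_j\eta_j\sqrt{\log N(T,d,\eta_j)}.
\]
Since $N(T,d,\cdot)$ is monotone, $\eta_j\sqrt{\log N(\eta_j)}\le2\int_{\eta_{j+1}}^{\eta_j}\sqrt{\log N(\eta)}\,d\eta$. Summing therefore bounds the series by a constant times the entropy integral, which yields $C_{\mathrm D}$.

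\emph{Tail bound.} At level $j$ set the threshold $3K\eta_j\bigl(C\sqrt{\log N(\eta_j)}+u+\sqrt{j-j_0}\bigr)$. A union bound over the at most $N(\eta_j)^{2}$ links of level $j$, and then over levels, gives a failure probability of at most $\sum_{j>j_0}2e^{-u^{2}-(j-j_0)}\le2e^{-u^{2}}$ after adjusting constants; the factor $N(\eta_j)^2$ is absorbed by the $C\sqrt{\log N}$ term in the threshold. On the complementary event, summing the thresholds gives three terms:
\begin{itemize}
\item the entropy sum, bounded as in the expectation step;
\item $u\sum_j\eta_j\le2u\Delta$;
\item $\sum_j\eta_j\sqrt{j-j_0}=O(\Delta)$, which is absorbed either into $u\Delta$ when $u\ge1$ or into the entropy integral, since $\int_0^{\Delta}\sqrt{\log N}\,d\eta\gtrsim\Delta$ whenever $|T|\ge2$.
\end{itemize}
This bounds $\sup_t|X_t-X_{t_0}|$, and the bound on $\sup_{t,s}|X_t-X_s|$ follows by the triangle inequality at the cost of a factor $2$.

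\emph{Gaussian case.} For $X_t=\inner gt$, the increment $X_t-X_s$ is distributed as $N(0,\norm{t-s}^{2})$. Its $\psi_2$-norm is therefore an absolute multiple of $\norm{t-s}$, which is the claimed $K_{\mathrm G}$.

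\emph{Main obstacle.} The only delicate step is the bookkeeping in the tail bound. The $u$-dependence has to come out as exactly $u\diam(T)$, and not $u$ times the entropy integral. That is why the level-dependent slack $\sqrt{j-j_0}$ is added to the thresholds and then absorbed separately, rather than multiplying each level's threshold by $u$.
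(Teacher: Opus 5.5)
Your proposal is correct and is the standard dyadic chaining argument behind the cited Theorems~8.1.3 and~8.1.6 of~\cite{Ver18}; the paper gives nothing beyond that citation, so there is no different route to compare. The points you leave implicit are all routine:
the passage from finite subsets to $T$ uses the usual separability convention, which is automatic for the continuous processes $t\mapsto\inner{y}{t}$ to which the paper applies the fact;
a finite $F\subseteq T$ has $N(F,d,\eta)\le N(T,d,\eta/2)$ with internal nets, so its entropy integral is at most twice that of $T$;
and at a level with $N(T,d,\eta_j)=1$ your bound $L\sqrt{\log M}$ reads $0$, while the single link can contribute $O(K\Delta)$ for an uncentred process.
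That last term is absorbed into $C_{\mathrm D}$ by the same observation $\int_0^\Delta\sqrt{\log N(T,d,\eta)}\,d\eta\gtrsim\Delta$ that you use for the $\sqrt{j-j_0}$ slack.
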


\begin{fact}[Borell--TIS; \cite{AT07}, Thm.~2.1.1]
\label{fact:btis}
For a centred Gaussian process bounded a.s.\ on $T$ with
$\sigma^{2}=\sup_t\E X_t^{2}$, $Z=\sup_tX_t$ satisfies
$\Prob(Z>\E Z+r)\le e^{-r^{2}/(2\sigma^{2})}$.
\end{fact}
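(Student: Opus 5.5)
The plan is to reduce to a finite index set, where the statement becomes Gaussian concentration for a Lipschitz function of a standard Gaussian vector, and then pass to the limit. The route to the sharp exponent $r^{2}/(2\sigma^{2})$ is the Gaussian isoperimetric inequality. I will assume, as is implicit in \cite{AT07}, that the process is separable. Then $Z=\sup_tX_t$ is the increasing limit of $Z_F=\max_{t\in F}X_t$ over finite sets $F$ increasing to a countable dense subset, and it suffices to prove the bound for each $Z_F$ and let $F$ grow.

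\emph{Finite case.} For finite $F=\{t_1,\dots,t_N\}$, factor the covariance matrix $\Sigma=AA^{\!\top}$, so that $(X_{t_j})_j$ has the law of $Ag$ with $g\sim N(0,I_N)$. Put $f(g)=\max_j(Ag)_j$. Each coordinate map $g\mapsto\inner{a_j}{g}$ is Lipschitz with constant $\norm{a_j}=(\E X_{t_j}^{2})^{1/2}\le\sigma$. A maximum of $\sigma$-Lipschitz functions is $\sigma$-Lipschitz, so $f$ is $\sigma$-Lipschitz.

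\emph{Concentration.} Apply the Sudakov--Tsirelson/Borell isoperimetric inequality. With $M$ a median of $f$ and $\Phi$ the standard normal distribution function, the set $\{f\le M\}$ has Gaussian measure at least $\tfrac12$. Its $r/\sigma$-enlargement therefore has measure at least $\Phi(r/\sigma)$, and so
\[
\Prob(f>M+r)\le1-\Phi(r/\sigma)\le\tfrac12e^{-r^{2}/(2\sigma^{2})}.
\]
Passing from the median to the mean needs one more step. Either use the Tsirelson--Ibragimov--Sudakov stochastic-calculus proof, which gives $\E e^{\lambda(f-\E f)}\le e^{\lambda^{2}\sigma^{2}/2}$ directly, and Chernoff then gives the exact bound $e^{-r^{2}/(2\sigma^{2})}$ around $\E f$. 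Or use Ehrhard's inequality in the form that the mean of a Lipschitz function dominates what the isoperimetric bound requires. I would take the first option: Itô's formula applied to $\E[e^{\lambda f(W_1)}\mid\mathcal F_s]$ along Brownian motion, after smoothing $f$ by convolution. Smoothing preserves the Lipschitz constant and converges uniformly.

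\emph{Limit and the main obstacle.} The step I expect to need the most care is the limit, because one must know that $\E Z<\infty$. First, since $Z<\infty$ almost surely, the medians $M_F$ are bounded uniformly in $F$; they are bounded by any $m$ with $\Prob(Z\le m)>\tfrac12$. The median-form tail bound above, with $\sigma$ uniform in $F$, then gives uniform subgaussian integrability of $Z_F-M_F$. Hence $\E Z_F$ is bounded uniformly in $F$, and monotone convergence gives $\E Z_F\uparrow\E Z<\infty$. Finally, $\Prob(Z_F>\E Z_F+r)\le e^{-r^{2}/(2\sigma^{2})}$ holds for every $F$. Letting $F$ increase, the events $\{Z_F>\E Z+r\}$ increase to $\{Z>\E Z+r\}$, and $\E Z_F\le\E Z$, so $\{Z_F>\E Z+r\}\subseteq\{Z_F>\E Z_F+r\}$. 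Continuity from below then yields the stated bound for $Z$.
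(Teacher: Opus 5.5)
Your proof is correct and follows the standard argument for the cited theorem, which the paper quotes from Adler--Taylor without proof: the sharp exponential-moment bound $\E e^{\lambda(f-\E f)}\le e^{\lambda^{2}\sigma^{2}/2}$ for the $\sigma$-Lipschitz map $g\mapsto\max_j(Ag)_j$ on finite index sets, then separability, a uniform bound on $\E Z_F$, and monotone convergence. Two small remarks: your limit step uses $\sigma<\infty$ without saying so, and this does follow from a.s.\ boundedness (each $X_t\le Z$, so choosing $m$ with $\Prob(Z>m)\le\tfrac14$ gives $\Prob(X_t>m)\le\tfrac14$, which bounds $\E X_t^{2}$ uniformly in $t$); and your unused Ehrhard alternative gives median $\le$ mean only for convex functions, which $\max_j(Ag)_j$ is, not for Lipschitz functions in general.
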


\begin{fact}[Covering numbers; \cite{Ver18}, Cor.~4.2.13]
\label{fact:cover}
For $0<\eta\le1$, $N(S^{d-1},\eta)\le(3/\eta)^{d}$ and $N(B^{d},\eta)\le(3/\eta)^{d}$.
\end{fact}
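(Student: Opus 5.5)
The plan is the standard volumetric argument, run once for the ball $B^{d}$ and once for the sphere $S^{d-1}$, with $N(T,\eta)$ read as the smallest number of Euclidean balls of radius $\eta$ centred in $T$ that cover $T$. Any covering with arbitrary centres is no larger, so this reading suffices for every later use.

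Fix $T\in\{B^{d},S^{d-1}\}$ and $0<\eta\le1$. I would take a maximal $\eta$-separated subset $\mathcal N\subseteq T$, meaning $\norm{x-y}>\eta$ for distinct $x,y\in\mathcal N$. Such a set is finite: a Zorn-free way to see this is that any $\eta$-separated subset of $B^{d}$ has bounded cardinality by the volume count below. So a maximal one exists by greedily adding points. By maximality, every $t\in T$ lies within distance $\eta$ of some point of $\mathcal N$. Hence $\mathcal N$ is an $\eta$-net of $T$ with centres in $T$, and $N(T,\eta)\le|\mathcal N|$.

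To bound $|\mathcal N|$, I would observe that the open balls $x+\tfrac\eta2B^{d}$, $x\in\mathcal N$, are pairwise disjoint by separation. Since $\mathcal N\subseteq T\subseteq B^{d}$, these balls all lie in $(1+\tfrac\eta2)B^{d}$. Comparing Lebesgue volumes, and using that $\operatorname{vol}(rB^{d})=r^{d}\operatorname{vol}(B^{d})$, gives
\[
  |\mathcal N|\Bigl(\frac\eta2\Bigr)^{d}\le\Bigl(1+\frac\eta2\Bigr)^{d},
  \qquad\text{so}\qquad
  |\mathcal N|\le\Bigl(1+\frac2\eta\Bigr)^{d}\le\Bigl(\frac3\eta\Bigr)^{d},
\]
where the last step is $1\le1/\eta$ for $\eta\le1$. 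The same computation applies verbatim to $S^{d-1}$, because the separated set lies in the sphere and hence in the ball.

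No step here is a real obstacle. The only point needing care is the convention on centres: if $N(T,\eta)$ is meant with centres in $T$, the maximal separated set already provides them. The restriction $\eta\le1$ is used only in the final inequality $1+2/\eta\le3/\eta$.
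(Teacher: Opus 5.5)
Your argument is correct. It is the standard volumetric proof behind \cite[Cor.~4.2.13]{Ver18}, which the paper cites without giving a proof of its own: a maximal $\eta$-separated subset is an $\eta$-net, and its disjoint balls of radius $\eta/2$ fit inside $(1+\eta/2)B^{d}$, which gives $(1+2/\eta)^{d}\le(3/\eta)^{d}$ for $\eta\le1$. Keeping the centres inside $T$ is also the convention the paper needs later, because Lemma~\ref{lem:dudQ} pushes a net of the query ball through $\Phi$, and Lemma~\ref{lem:lip} controls $\Phi$ only on that ball.
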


\begin{fact}[Spherical caps; {\cite[Lemma~2.2]{Ball97}}]
\label{fact:cap}
For $u$ uniform on $S^{d-1}$ and a fixed unit vector $v$,
$\Prob(|\inner uv|\ge t)\le2e^{-dt^{2}/2}$ for all $t\ge0$.
\end{fact}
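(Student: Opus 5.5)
The plan is the classical cone-volume argument behind Ball's lemma, followed by a union bound over the two antipodal caps. By rotation invariance of the uniform measure $\sigma$ on $S^{d-1}$ we may take $v=e_1$. The event $\{|\inner uv|\ge t\}$ is the union of the caps $C_t^{\pm}=\{x\in S^{d-1}:\pm x_1\ge t\}$, which have equal measure. It therefore suffices to prove $\sigma(C_t^{+})\le e^{-dt^{2}/2}$, and the factor $2$ comes from the union bound. For $t=0$ the right side is $2$ and the bound is trivial. For $t\ge1$ the event has probability zero, since it is empty for $t>1$ and consists of the two points $\pm e_1$ for $t=1$. So assume $0<t<1$.

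First, I would convert the surface measure into a volume. Let $K_t=\{sx:0\le s\le1,\ x\in C_t^{+}\}$ be the cone over the cap with apex at the origin. Polar coordinates give $\sigma(C_t^{+})=\operatorname{vol}(K_t)/\operatorname{vol}(B^{d})$. It then suffices to place $K_t$ inside a Euclidean ball of radius $r$, since this gives $\sigma(C_t^{+})\le r^{d}$.

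Second, I would choose that ball in two regimes.
\begin{itemize}
\item For $t\le1/\sqrt2$, I would use the ball centred at $te_1$ with radius $\sqrt{1-t^{2}}$. Every cap point $x$ satisfies $\norm{x-te_1}^{2}=1-2tx_1+t^{2}\le1-t^{2}$. The apex satisfies $\norm{0-te_1}=t\le\sqrt{1-t^{2}}$ exactly when $t\le1/\sqrt2$. The ball is convex, and $K_t$ lies in the convex hull of $\{0\}\cup C_t^{+}$, so $K_t$ is contained in the ball. This gives $\sigma(C_t^{+})\le(1-t^{2})^{d/2}\le e^{-dt^{2}/2}$ by $1-s\le e^{-s}$.
\item For $t>1/\sqrt2$, I would use the ball centred at $e_1/(2t)$ with radius $1/(2t)$. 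The apex lies on its boundary. For a cap point, $\norm{x-e_1/(2t)}^{2}=1-x_1/t+1/(4t^{2})\le1/(4t^{2})$ because $x_1\ge t$. Convexity again gives containment, and so $\sigma(C_t^{+})\le(2t)^{-d}$.
\end{itemize}

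The step I expect to need the most care is the elementary inequality in the second regime, $(2t)^{-1}\le e^{-t^{2}/2}$ on $[1/\sqrt2,1]$, equivalently $h(t):=\log(2t)-t^{2}/2\ge0$ there. Since $h$ is concave, it suffices to check the endpoints: $h(1/\sqrt2)=\tfrac12\log2-\tfrac14>0$ and $h(1)=\log2-\tfrac12>0$. The rest is routine. The polar-coordinates identity for $\sigma$ is standard. The $t\ge1$ edge case needs only the remark that the event is empty or null.
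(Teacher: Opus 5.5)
Your proposal is correct and is Ball's own proof of the cited Lemma~2.2, which the paper invokes without reproving: the cone over the cap fits in a ball of radius $\sqrt{1-t^{2}}$ for $t\le1/\sqrt2$ and of radius $1/(2t)$ beyond, and your union bound over the two antipodal caps supplies the factor $2$. One small slip is the endpoint $t=1$: the event has probability zero there only for $d\ge2$ (for $d=1$ it has probability one), but your second-regime ball already covers $t=1$, since you checked $h(1)>0$, so nothing is lost.
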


\begin{lemma}[Distributional one-way INDEX]
\label{lem:index}
Let $x\in\{0,1\}^{N}$ be uniform, let $t\in[N]$ be uniform and independent, and
let public randomness $R$ be independent of both. If Alice sends
$M=M(x,R)$ of at most $b$ bits and Bob outputs $\hat x_t=\hat x_t(M,t,R)$ with
$\Prob(\hat x_t=x_t)\ge1-\delta$ for some $\delta\le\tfrac12$, then $b\ge(1-h(\delta))N$, $h$ the binary
entropy.
\end{lemma}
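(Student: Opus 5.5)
The plan is a Fano-type argument built on the chain rule for mutual information with public randomness as side information. Condition on the public randomness $R$. Since $x$ is uniform on $\{0,1\}^{N}$ and independent of $R$, the coordinates $x_1,\dots,x_N$ are independent given $R$, each a fair bit. Write $\delta_j:=\Prob(\hat x_j(M,j,R)\ne x_j)$ for the error at a fixed index $j$. Because $t$ is uniform and independent of $(x,R)$, Bob's output at $t=j$ depends only on $(M,R)$, so the overall success probability is $\frac1N\sum_j(1-\delta_j)\ge1-\delta$, that is $\frac1N\sum_j\delta_j\le\delta$. Here we use that $M$ is a function of $(x,R)$ only and does not see $t$.

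The first step is the upper bound on information: $I(x;M\mid R)\le H(M\mid R)\le b$, since $M$ takes at most $2^{b}$ values. A count of strings of length at most $b$ gives $2^{b+1}-1$ values; to avoid that off-by-one, read ``at most $b$ bits'' as a prefix-free or fixed-length encoding, or absorb it as the paper's convention. The second step is the chain rule. We have
\[
I(x;M\mid R)=\sum_jI(x_j;M\mid R,x_{<j})\ge\sum_jI(x_j;M\mid R),
\]
where the inequality holds because $x_j$ is independent of $(R,x_{<j})$, so $I(x_j;M,x_{<j}\mid R)=I(x_j;M\mid R,x_{<j})$ and this dominates $I(x_j;M\mid R)$ by monotonicity.

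The third step is Fano at each coordinate. Bob's guess $\hat x_j$ is a function of $(M,R)$, so
\[
H(x_j\mid M,R)\le h(\delta_j),
\]
binary Fano with no $\log(\text{alphabet}-1)$ term. It follows that $I(x_j;M\mid R)\ge1-h(\delta_j)$.

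Summing, and using concavity of $h$ with Jensen, gives $\sum_jh(\delta_j)\le Nh(\bar\delta)$ with $\bar\delta=\frac1N\sum_j\delta_j\le\delta\le\tfrac12$. Since $h$ is increasing on $[0,\tfrac12]$, $h(\bar\delta)\le h(\delta)$. Hence $b\ge\sum_j(1-h(\delta_j))\ge N(1-h(\delta))$.

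The main obstacle is bookkeeping rather than ideas. First, the per-index errors $\delta_j$ may exceed $\tfrac12$ individually. Fano still holds there, and Jensen only needs the average, so no per-index restriction is required. Second, the monotonicity of $h$ must be applied to the average $\bar\delta$, not to the individual $\delta_j$. Third, conditioning on $R$ throughout must be justified, using independence of $R$ from $(x,t)$.
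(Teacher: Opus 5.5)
Your proof is correct and is essentially the paper's argument: per-index Fano $H(x_j\mid M,R)\le h(\delta_j)$, the chain $b\ge H(M\mid R)\ge I(x;M\mid R)$, a decomposition over the independent bits, and concavity plus monotonicity of $h$ applied to the average error. Your chain-rule superadditivity $I(x;M\mid R)\ge\sum_jI(x_j;M\mid R)$ is the same step as the paper's $N-H(x\mid M,R)\ge N-\sum_tH(x_t\mid M,R)$, and your caveat that variable-length messages of at most $b$ bits only give $H(M\mid R)<b+1$ is a fair point the paper's one-line $b\ge H(M\mid R)$ passes over; it is harmless in the counting step of Appendix~\ref{app:lb}, where the message length already carries an $O(\log N')$ slack.
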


\begin{proof}
For each fixed $t$ let $\delta_t$ be Bob's error probability, so
$\tfrac1N\sum_t\delta_t\le\delta$. By Fano's inequality
$H(x_t\mid M,R)\le h(\delta_t)$. The bits of $x$ are independent, so
$b\ge H(M\mid R)\ge I(x;M\mid R)=N-H(x\mid M,R)\ge N-\sum_tH(x_t\mid M,R)\ge\sum_t(1-h(\delta_t))\ge N(1-h(\delta))$,
the last step by concavity of $h$ and its monotonicity on $[0,\tfrac12]$. (This is the average-case form of the
classical bound of~\cite{KNR99}.)
\end{proof}

% =====================================================================
\section{Proof of the first branch of Theorem~\ref{thm:main}}
\label{app:coreset}

Throughout, \eqref{eq:norm} holds. To keep every Gaussian process
finite-dimensional we truncate the feature map: for $M\in\mathbb N$ let
$\psi_M(z)=\bigoplus_{m\le M}z^{\otimes m}/\sqrt{m!}$ in
$E_{d,M}=\bigoplus_{m\le M}(\R^{d})^{\otimes m}$, so
$\inner{\psi_M(y)}{\psi_M(z)}=e_M(y^{\!\top}z)$ with $e_M(s)=\sum_{m\le M}s^{m}/m!$.

\begin{lemma}[Truncation]
\label{lem:trunc}
For all $y,z$, all $M$ and all $u,u'$ in a Hilbert space:
(i) $\norm{\psi_M(y)\otimes u-\psi_M(z)\otimes u'}\le\norm{\psi(y)\otimes u-\psi(z)\otimes u'}$
and $\norm{\psi_M(z)}\le\norm{\psi(z)}$;
(ii) if $|s|\le\rho$ then $|e^{s}-e_M(s)|\le\tau_M(\rho):=\sum_{m>M}\rho^{m}/m!$,
and $\tau_M(\rho)\le2^{-M}$ whenever $M+1\ge2e\rho$.
\end{lemma}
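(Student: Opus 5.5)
For (i), I would use the fact that tensoring with a fixed vector is an isometry onto its image, together with the orthogonal decomposition $\psi(z)=\psi_M(z)\oplus\psi_{>M}(z)$. Here $\psi_{>M}(z)=\bigoplus_{m>M}z^{\otimes m}/\sqrt{m!}$ lies in the orthogonal complement of $E_{d,M}$. Then $\psi(y)\otimes u-\psi(z)\otimes u'$ splits orthogonally as
\[
  \bigl(\psi_M(y)\otimes u-\psi_M(z)\otimes u'\bigr)\oplus\bigl(\psi_{>M}(y)\otimes u-\psi_{>M}(z)\otimes u'\bigr).
\]
The two summands lie in the mutually orthogonal subspaces $E_{d,M}\otimes H$ and $E_{d,M}^{\perp}\otimes H$. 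Pythagoras then gives the first inequality, since dropping the second summand cannot increase the norm. The second inequality is the special case $u=1$, $u'=0$ in $H=\R$, or equally just Pythagoras applied to $\psi(z)$ itself.

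For (ii), I would expand $e^{s}-e_M(s)=\sum_{m>M}s^{m}/m!$ and apply the triangle inequality with $|s|\le\rho$, which gives $\tau_M(\rho)$. For the numerical bound, I would factor out the first term $\rho^{M+1}/(M+1)!$. Each successive term ratio is $\rho/(m+1)\le\rho/(M+2)<1/(2e)\le\tfrac12$, so the tail is at most $2\rho^{M+1}/(M+1)!$.

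The remaining step, and the only one needing care, is bounding that leading term. I would use Stirling's lower bound $(M+1)!\ge((M+1)/e)^{M+1}$. With $M+1\ge2e\rho$ this gives
\[
  \frac{\rho^{M+1}}{(M+1)!}\le\Bigl(\frac{e\rho}{M+1}\Bigr)^{M+1}\le2^{-(M+1)},
\]
so $\tau_M(\rho)\le 2\cdot2^{-(M+1)}=2^{-M}$. If $\rho=0$ the claim is trivial.

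I expect no real obstacle. The one point to check is that the geometric-ratio bound uses $M+2>2e\rho$, which holds strictly.
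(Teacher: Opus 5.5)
Your proof is correct and follows the paper's argument: the paper also proves (i) degree by degree, writing the squared norm as a sum over $m$ of terms it shows are nonnegative by Cauchy--Schwarz, which is your orthogonal decomposition, and it proves (ii) termwise. The only difference is in (ii), where the paper applies $m!\ge(m/e)^m$ to every tail term, so that $\rho^m/m!\le(e\rho/m)^m\le2^{-m}$ for all $m\ge M+1$; summing gives $2^{-M}$ directly, so your geometric-ratio step is not needed.
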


\begin{proof}
(i) The squared norm on the left is
$\sum_{m\le M}(\norm y^{2m}\norm u^{2}+\norm z^{2m}\norm{u'}^{2}-2(y^{\!\top}z)^{m}\inner u{u'})/m!$,
each summand nonnegative by $2|(y^{\!\top}z)^{m}\inner u{u'}|\le\norm y^{2m}\norm u^{2}+\norm z^{2m}\norm{u'}^{2}$;
the untruncated squared norm is the same sum over all $m$. (ii) The first
bound is termwise; for $m\ge M+1\ge2e\rho$, $\rho^{m}/m!\le(e\rho/m)^{m}\le2^{-m}$.
\end{proof}

\paragraph{Data and test vectors.}
Fix $P\subseteq[n]$, $|P|=m$, and $M$. Let $E=(E_{\dk,M}\otimes\R^{\dv})\oplus E_{\dk,M}\oplus\R$
and, for $i\in P$,
\begin{equation}
\label{eq:data}
  x_i:=\tfrac12\bigl(e^{-\rho/2}\psi_M(\sqrt\rho\,k_i)\otimes v_i,\;e^{-\rho/2}\psi_M(\sqrt\rho\,k_i),\;1\bigr),
  \qquad\norm{x_i}^{2}\le\tfrac34 ,
\end{equation}
by Lemma~\ref{lem:trunc}(i) and $\norm{k_i}\le1$. Test vectors, for $\norm q\le\rho$
and $u\in S^{\dv-1}$:
\[
t_A(q,u)=2(e^{\rho/2}\psi_M(q/\sqrt\rho)\otimes u,0,0),\quad
t_B(q)=2(0,e^{\rho/2}\psi_M(q/\sqrt\rho),0),\quad t_N=2(0,0,1),
\]
and $T_A=\{t_A(q,u)\}$, $T_B=\{\pm t_B(q)\}$, $T_N=\{\pm t_N\}$. For
$y_\sigma=\sum_{i\in P}\sigma_ix_i$,
\begin{equation}
\label{eq:readoff}
  \inner{y_\sigma}{t_A(q,u)}=\sum_{i\in P}\sigma_ie_M(q^{\!\top}k_i)\inner{v_i}u,\qquad
  \inner{y_\sigma}{t_B(q)}=\sum_{i\in P}\sigma_ie_M(q^{\!\top}k_i),\qquad
  \inner{y_\sigma}{t_N}=\sum_{i\in P}\sigma_i ,
\end{equation}
and $\norm{t_A},\norm{t_B}\le2e^{\rho/2}e^{\norm q^{2}/(2\rho)}\le2e^{\rho}$, $\norm{t_N}=2$.

\paragraph{Metric entropy.}
Write $\Phi(q)=e^{\rho/2}\psi(q/\sqrt\rho)$ (untruncated), $Q=\{\Phi(q):\norm q\le\rho\}$.

\begin{lemma}[Lipschitz bound]
\label{lem:lip}
For $\norm q,\norm{q'}\le\rho$, $\norm{\Phi(q)-\Phi(q')}\le\Lambda\norm{q-q'}/\rho$
with $\Lambda=e^{\rho}\sqrt{\rho(1+\rho)}$.
\end{lemma}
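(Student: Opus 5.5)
The plan is to bound the derivative of the feature map along the segment from $q'$ to $q$ and integrate. Put $a=q/\sqrt\rho$ and $b=q'/\sqrt\rho$, so that $\Phi(q)-\Phi(q')=e^{\rho/2}(\psi(a)-\psi(b))$ with $\norm a,\norm b\le\sqrt\rho$. The path $z(t)=b+t(a-b)$, $t\in[0,1]$, stays in the ball of radius $\sqrt\rho$ by convexity. The fundamental theorem of calculus for Hilbert-valued maps then gives
\[
\norm{\psi(a)-\psi(b)}\le\norm{a-b}\sup_{\norm z\le\sqrt\rho}\sup_{\norm h=1}\norm{D\psi(z)h}.
\]

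The key step is an exact formula for the directional derivative. I would obtain it from the kernel identity $\inner{\psi(y)}{\psi(z)}=e^{y^{\!\top}z}$ rather than from the tensor series. For fixed $z$ and $h$, set $F(s,t)=\inner{\psi(z+sh)}{\psi(z+th)}=e^{(z+sh)^{\!\top}(z+th)}$. Then
\[
\norm{D\psi(z)h}^{2}=\partial_s\partial_tF(0,0)=\bigl(\norm h^{2}+(z^{\!\top}h)^{2}\bigr)e^{\norm z^{2}}.
\]
For unit $h$ and $\norm z\le\sqrt\rho$ this is at most $(1+\rho)e^{\rho}$. Hence $\norm{\psi(a)-\psi(b)}\le\sqrt{1+\rho}\,e^{\rho/2}\norm{q-q'}/\sqrt\rho$.

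Multiplying by $e^{\rho/2}$ gives
\[
\norm{\Phi(q)-\Phi(q')}\le e^{\rho}\sqrt{1+\rho}\,\norm{q-q'}/\sqrt\rho=\Lambda\norm{q-q'}/\rho,
\]
since $\sqrt{1+\rho}/\sqrt\rho=\sqrt{\rho(1+\rho)}/\rho$.

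The only delicate point is rigour about differentiating the infinite-dimensional map $\psi$. I would handle it by working with the truncations $\psi_M$, which are polynomial maps into a finite-dimensional space, so ordinary calculus applies. Their kernel is $e_M(y^{\!\top}z)$, and the same mixed-derivative computation gives
\[
\norm{D\psi_M(z)h}^{2}=\norm h^{2}e_M'(\norm z^{2})+(z^{\!\top}h)^{2}e_M''(\norm z^{2})\le\bigl(\norm h^{2}+(z^{\!\top}h)^{2}\bigr)e^{\norm z^{2}}.
\]
The inequality holds because the Taylor coefficients are nonnegative and $\norm z^{2}\ge0$. This yields the bound uniformly in $M$. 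Letting $M\to\infty$, $\psi_M(a)\to\psi(a)$ in norm (the tail of the series for $e^{\norm a^{2}}$), which transfers the inequality to $\psi$.
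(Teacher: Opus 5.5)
Your proof is correct and follows the paper's argument. Both compute the squared norm of the directional derivative (your $\norm{D\psi(z)h}^{2}$ after the rescaling $z=q/\sqrt\rho$, the paper's $\norm{D\Phi(q)[h]}^{2}$) as the mixed second derivative of the kernel $e^{y^{\!\top}z}$, bound it on the ball, and integrate along the segment; the only difference is that the paper asserts $\Phi$ is $C^{1}$ via uniform convergence of the derivatives of the polynomial partial sums, while you justify the same computation on the truncations $\psi_M$ (using $e_M',e_M''\le e^{(\cdot)}$ on $[0,\infty)$) and pass to the limit $M\to\infty$, which is equally valid.
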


\begin{proof}
$\Phi$ is $C^{1}$ (the partial sums are polynomial and their derivatives
converge uniformly on bounded sets), and for $h\in\R^{\dk}$,
$\norm{D\Phi(q)[h]}^{2}=\partial_h\partial_{h'}\inner{\Phi(q)}{\Phi(q')}|_{q'=q,h'=h}
=e^{\rho}e^{\norm q^{2}/\rho}(\norm h^{2}/\rho+(q^{\!\top}h)^{2}/\rho^{2})\le e^{2\rho}\norm h^{2}(1/\rho+1)$.
Integrate along the segment.
\end{proof}

\begin{lemma}[Entropy integral of $Q$]
\label{lem:dudQ}
$I_Q:=\int_0^\infty\sqrt{\log N(Q,\eta)}\,d\eta\le5.2\,e^{\rho}\sqrt{\dk\log(1+\rho)}$ for every $\rho>0$.
\end{lemma}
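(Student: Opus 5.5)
The plan is to bound the covering numbers of $Q$ by pulling back nets of the query ball through the Lipschitz map $\Phi$ of Lemma~\ref{lem:lip}. We then cut the entropy integral at the scale where a single ball suffices, and treat $\rho\le1$ and $\rho\ge1$ separately.

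\emph{Covering numbers.} By Lemma~\ref{lem:lip}, an $(\eta\rho/\Lambda)$-net of the ball $\{\norm q\le\rho\}$ maps under $\Phi$ to an $\eta$-net of $Q$. The ball has radius $\rho$, so this is an $(\eta/\Lambda)$-net of the unit ball after rescaling, and Fact~\ref{fact:cover} gives
\[
N(Q,\eta)\le(3\Lambda/\eta)^{\dk}\qquad\text{for }0<\eta\le\Lambda .
\]
For $\eta\ge\Lambda$ a single ball suffices: every $q$ has $\norm{q-0}\le\rho$, so $\norm{\Phi(q)-\Phi(0)}\le\Lambda$ and $N(Q,\eta)=1$. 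Independently, $\norm{\Phi(q)}\le e^{\rho}$ (from $\norm{\Phi(q)}^{2}=e^{\rho}e^{\norm q^{2}/\rho}\le e^{2\rho}$), so $N(Q,\eta)=1$ also for $\eta\ge e^{\rho}$, with centre $0$. Hence the integrand vanishes beyond $\eta_0:=\min\{\Lambda,e^{\rho}\}$.

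\emph{Two regimes.} We get
\[
I_Q\le\sqrt{\dk}\int_0^{\eta_0}\sqrt{\log(3\Lambda/\eta)}\,d\eta .
\]
For $\rho\le1$ take $\eta_0=\Lambda$ and substitute $\eta=\Lambda t$. This gives $I_Q\le\Lambda\sqrt{\dk}\int_0^1\sqrt{\log(3/t)}\,dt$, and the integral is an explicit constant of about $1.3$. Then use $\Lambda=e^{\rho}\sqrt{\rho(1+\rho)}\le e^{\rho}\sqrt{2\rho}$ together with $\rho\le\log(1+\rho)/\log 2$ on $[0,1]$, which yields a bound $c\,e^{\rho}\sqrt{\dk\log(1+\rho)}$. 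For $\rho\ge1$ take $\eta_0=e^{\rho}$ and substitute $\eta=e^{\rho}s$. This gives
\[
I_Q\le e^{\rho}\sqrt{\dk}\int_0^1\sqrt{\log(3\sqrt{\rho(1+\rho)}/s)}\,ds .
\]
Bound the integrand by $\sqrt{\log(3\sqrt{\rho(1+\rho)})}+\sqrt{\log(1/s)}$. Since $\int_0^1\sqrt{\log(1/s)}\,ds=\sqrt\pi/2$, and for $\rho\ge1$ both $\log(3\sqrt{\rho(1+\rho)})\le\log 3+\log(1+\rho)$ and $1\le\log(1+\rho)/\log2$ hold, everything is absorbed into a multiple of $\sqrt{\log(1+\rho)}$.

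\emph{Main obstacle.} The expected difficulty is purely arithmetic: keeping the absolute constant at or below $5.2$ in both regimes, especially near $\rho=1$, where neither the small-$\rho$ nor the large-$\rho$ estimate is tight. If the crude split $\sqrt{a+b}\le\sqrt a+\sqrt b$ loses too much, I would evaluate $\int_0^1\sqrt{\log(A/s)}\,ds$ exactly in terms of the complementary error function and check the resulting one-variable inequality in $\rho$ numerically on a bounded interval, using monotonicity outside it. The qualitative shape, $e^{\rho}\sqrt{\dk\log(1+\rho)}$ vanishing like $\sqrt\rho$ as $\rho\to0$, comes directly from the $\Lambda$ cutoff.
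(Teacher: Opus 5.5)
Your proof is correct and follows the paper's route: volumetric nets of the query ball pulled back through Lemma~\ref{lem:lip}, a cutoff of the entropy integral, a split at $\rho=1$, and $\sqrt{a+b}\le\sqrt a+\sqrt b$. The one difference is the cutoff: the paper integrates the bound $(3\Lambda/\eta)^{\dk}$ up to its diameter bound, $2\Lambda$ for $\rho<1$ and $2e^{\rho}$ for $\rho\ge1$, whereas you stop at $\Lambda$, respectively $e^{\rho}$, the radius of a single covering ball. This gives constants of about $2.4$ and $2.7$ in the two regimes, well under $5.2$, so the difficulty you anticipate near $\rho=1$ does not arise. (The integral $\int_0^1\sqrt{\log(3/t)}\,dt$ is about $1.42$, not $1.3$, which does not matter.) The only repair is that $0\notin Q$: if net points must lie in the set, centre the single ball at $\Phi(0)$, which works because $\norm{\Phi(q)-\Phi(0)}^{2}=e^{\rho}\bigl(e^{\norm{q}^{2}/\rho}-1\bigr)<e^{2\rho}$.
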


\begin{proof}
Let $\Delta=\diam(Q)\le\min\{2e^{\rho},2\Lambda\}$. By Lemma~\ref{lem:lip} and
Fact~\ref{fact:cover}, $N(Q,\eta)\le(3\Lambda/\eta)^{\dk}$ for $\eta\le\Lambda$;
for $\Lambda\le\eta\le\Delta$ the point $\Phi(0)$ is an $\eta$-net, so the same
bound holds on $(0,\Delta]$, and $N=1$ beyond $\Delta$. With $a=\Delta/\Lambda\le2$
and $\eta=\Lambda s$,
$I_Q\le\Lambda\sqrt{\dk}\int_0^{a}\sqrt{\log(3/s)}\,ds\le\Lambda\sqrt{\dk}\,\varphi(a)$,
$\varphi(a):=a(\sqrt{\log(3/a)}+\sqrt\pi/2)$, which is increasing on $(0,2]$ (its
derivative at $a=2$ exceeds $0.73$ and decreases in $a$). If $\rho\ge1$, use
$a\le a_0=2/\sqrt{\rho(1+\rho)}$, $\Lambda a_0=2e^{\rho}$, and
$\log(3/a_0)\le\log(\tfrac32(1+\rho))\le1.6\log(1+\rho)$, $0.887\le1.07\sqrt{\log(1+\rho)}$,
giving $I_Q\le2e^{\rho}\sqrt{\dk}(1.27+1.07)\sqrt{\log(1+\rho)}<4.7e^{\rho}\sqrt{\dk\log(1+\rho)}$.
If $\rho<1$, use $a\le2$: $I_Q\le\Lambda\sqrt{\dk}\varphi(2)<3.047\Lambda\sqrt{\dk}\le4.31e^{\rho}\sqrt{\rho\dk}$
with $\Lambda\le e^{\rho}\sqrt{2\rho}$, and $\sqrt\rho\le1.2012\sqrt{\log(1+\rho)}$
by concavity, giving $I_Q<5.2e^{\rho}\sqrt{\dk\log(1+\rho)}$.
\end{proof}

\begin{lemma}[Entropy integrals of the test sets]
\label{lem:dudT}
With $W_k=e^{\rho}\sqrt{\dk\log(1+\rho)}$ and $W_v=e^{\rho}\sqrt{\dv}$,
$\Dud(T_A)\le21W_k+8W_v$, $\Dud(T_B)\le21W_k+4e^{\rho}$, and
$\diam(T_A),\diam(T_B)\le4e^{\rho}$. The same bounds hold for the untruncated
sets $\widetilde T_A=\{2\Phi(q)\otimes u\}$, $\widetilde T_B=\{\pm2\Phi(q)\}$.
\end{lemma}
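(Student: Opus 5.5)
The plan is to bound the entropy integrals of the untruncated sets first and then pass to the truncated ones. By Lemma~\ref{lem:trunc}(i) the truncation map $2\Phi(q)\otimes u\mapsto t_A(q,u)$ (and likewise $\pm2\Phi(q)\mapsto\pm t_B(q)$) does not increase distances. Hence it maps an $\eta$-net of $\widetilde T_A$ onto an $\eta$-net of $T_A$, so $N(T_A,\eta)\le N(\widetilde T_A,\eta)$ and the same for $B$. The diameters also do not increase. It therefore suffices to treat $\widetilde T_A$ and $\widetilde T_B$.

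\emph{Diameters.} Every element has norm at most $2e^{\rho}$, because $\norm{\Phi(q)}^{2}=e^{\rho}e^{\norm q^{2}/\rho}\le e^{2\rho}$. This gives $\diam\le4e^{\rho}$ immediately.

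\emph{$\widetilde T_B$.} The set is the union of $2Q$ and $-2Q$. Covering numbers of a union are at most the sum, so $N(\widetilde T_B,\eta)\le2N(Q,\eta/2)$. Using $\sqrt{\log(2N)}\le\sqrt{\log N}+\sqrt{\log2}$ and integrating only up to the diameter $4e^{\rho}$ (beyond it $N=1$), we get
\[
\Dud(\widetilde T_B)\le2I_Q+4e^{\rho}\sqrt{\log2}.
\]
Lemma~\ref{lem:dudQ} gives $2I_Q\le10.4W_k$, and $4\sqrt{\log 2}<4$. This is comfortably within $21W_k+4e^{\rho}$.

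\emph{$\widetilde T_A$.} I will use a product net. Let $\mathcal N_Q$ be an $\eta_1$-net of $Q$ and $\mathcal N_S$ an $\eta_2$-net of $S^{\dv-1}$. For the error of the net point, split
\[
\Phi(q)\otimes u-\Phi(q')\otimes u'=(\Phi(q)-\Phi(q'))\otimes u+\Phi(q')\otimes(u-u').
\]
Its norm is at most $\eta_1+e^{\rho}\eta_2$. Choosing $\eta_1=\eta/4$ and $\eta_2=\eta/(4e^{\rho})$ makes $2(\eta_1+e^{\rho}\eta_2)=\eta$, so
\[
\log N(\widetilde T_A,\eta)\le\log N(Q,\eta/4)+\dv\log\bigl(12e^{\rho}/\eta\bigr)
\]
for $\eta\le4e^{\rho}$, by Fact~\ref{fact:cover}. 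Splitting the square root and substituting gives $4I_Q\le20.8W_k$ from the first term. The second term is $\sqrt{\dv}\int_0^{4e^{\rho}}\sqrt{\log(12e^{\rho}/\eta)}\,d\eta=4e^{\rho}\sqrt{\dv}\int_0^1\sqrt{\log(3/s)}\,ds$. That integral is $\varphi(1)=\sqrt{\log3}+\sqrt\pi/2\approx1.93$, which bounds the term by $7.8W_v\le8W_v$. The total is $\le21W_k+8W_v$.

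\emph{Main obstacle.} The step that needs care is bookkeeping the constants: it must land under $21$ and $8$. This requires the factor $4$ for $Q$ (not $2$), since the $2$ in the test vectors doubles the scale. The value-side integral must also be cut at the diameter rather than run to infinity, using $N\le(3/\eta_2)^{\dv}$, which is valid only when $\eta_2\le1$, that is $\eta\le4e^{\rho}$.
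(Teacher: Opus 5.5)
Your proof is correct and follows the paper's argument: reduce to the untruncated sets by the contraction of Lemma~\ref{lem:trunc}(i), cover $\widetilde T_A$ by the product of an $\eta/4$-net of $Q$ and an $\eta/(4e^{\rho})$-net of $S^{\dv-1}$, and split the square root to get $4I_Q+4e^{\rho}\sqrt{\dv}\,\varphi(1)$. Your union bound for $\widetilde T_B$, which gives $2I_Q$ where the paper's product with $\{\pm1\}$ gives $4I_Q$, is a harmless tightening; one small slip is that $\int_0^1\sqrt{\log(3/s)}\,ds\approx1.42$ is only bounded by $\varphi(1)$, not equal to it, which only helps your estimate.
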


\begin{proof}
By Lemma~\ref{lem:trunc}(i) the map $2\Phi(q)\otimes u\mapsto t_A(q,u)$ is a
contraction, so it suffices to treat $\widetilde T_A$. Since
$\norm{\Phi(q)\otimes u-\Phi(q')\otimes u'}\le\norm{\Phi(q)-\Phi(q')}+e^{\rho}\norm{u-u'}$,
the product of an $\eta/4$-net of $Q$ and an $\eta/(4e^{\rho})$-net of $S^{\dv-1}$
is an $\eta$-net of $\widetilde T_A$; the integral runs over $\eta\le4e^{\rho}$,
and $\sqrt{\log(ab)}\le\sqrt{\log a}+\sqrt{\log b}$ gives
$\Dud(T_A)\le4I_Q+4e^{\rho}\int_0^{1}\sqrt{\dv\log(3/\eta)}\,d\eta\le4I_Q+4e^{\rho}\sqrt{\dv}(\sqrt{\log3}+\sqrt\pi/2)\le21W_k+8W_v$.
For $T_B$ the sphere is replaced by $\{\pm1\}$, contributing $4e^{\rho}\sqrt{\log2}\le4e^{\rho}$.
\end{proof}

By Dudley's inequality $w(Q)\le C_{\mathrm D}K_{\mathrm G}I_Q$ and
$w(T)=\tfrac12\E\sup_{\widetilde T_A}\inner gt\le\tfrac12C_{\mathrm D}K_{\mathrm G}(21W_k+8W_v)$,
which is the upper bound quoted after Lemma~\ref{lem:chevet}.

\paragraph{A body of measure at least one half.}
Put
\begin{equation}
\label{eq:theta}
  \theta_A=C_{\mathrm D}K_{\mathrm G}(21W_k+8W_v)+3.6e^{\rho},\qquad
  \theta_B=C_{\mathrm D}K_{\mathrm G}(21W_k+4e^{\rho})+3.6e^{\rho},\qquad
  \theta_N=2\sqrt5 .
\end{equation}

\begin{lemma}[Product body]
\label{lem:body}
Let $g$ be standard Gaussian on $E$, $K_X=\{y:\sup_{T_X}\inner yt\le\theta_X\}$
and $K=K_A\cap K_B\cap K_N$. Then $K$ is closed, symmetric and convex, the
three events $\{g\in K_X\}$ are independent, each has probability $\ge0.8$,
so $\gam(K)\ge0.512$; and with $N=\dim E$, $R=10\sqrt N$, the set $K_R=K\cap RB_E$
is a convex body with $\gam(K_R)\ge0.502$ and $5K_R\subseteq5K$.
\end{lemma}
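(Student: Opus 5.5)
The plan is to verify the structural claims first and then the measure estimates block by block. Each $K_X$ is an intersection of closed half-spaces $\{y:\inner yt\le\theta_X\}$, $t\in T_X$, so it is closed and convex, and so is $K$. For symmetry it is enough that each $T_X$ is symmetric under $t\mapsto-t$.
\begin{itemize}
\item For $T_A$ this holds because $S^{\dv-1}$ is symmetric: $t_A(q,-u)=-t_A(q,u)$.
\item For $T_B$ and $T_N$ it holds by construction, since both are defined with $\pm$.
\end{itemize}
Since $\norm t\le2e^{\rho}$ on $T_A\cup T_B$ and $\norm{t_N}=2$, we have $\sup_{T_X}\inner yt\le2e^{\rho}\norm y$. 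Hence $K$ contains a ball around $0$, so $K_R$ has nonempty interior. Being bounded by $RB_E$ and closed, $K_R$ is compact, hence a convex body. The inclusion $5K_R\subseteq5K$ is immediate from $K_R\subseteq K$.

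Independence comes from the orthogonal decomposition $E=(E_{\dk,M}\otimes\R^{\dv})\oplus E_{\dk,M}\oplus\R$. Every $t\in T_A$ is supported in the first summand, every $t\in T_B$ in the second, and $t_N$ in the third. So $\{g\in K_X\}$ is a function of the projection of $g$ onto the $X$-th summand. The three projections of a standard Gaussian onto orthogonal subspaces are independent, and therefore so are the three events, which gives $\gam(K)=\prod_X\gam(K_X)$.

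For the measure bounds on $K_A$ and $K_B$, take $Z_X=\sup_{T_X}\inner gt$. This is a centred Gaussian process on a compact subset of a finite-dimensional space, so it is a.s.\ bounded, and Fact~\ref{fact:btis} applies.
\begin{itemize}
\item \emph{Expectation.} By Fact~\ref{fact:dudley}, with any base point $t_0$ (using $\E\inner g{t_0}=0$), and by Lemma~\ref{lem:dudT}, we get $\E Z_A\le C_{\mathrm D}K_{\mathrm G}(21W_k+8W_v)$ and $\E Z_B\le C_{\mathrm D}K_{\mathrm G}(21W_k+4e^{\rho})$.
\item \emph{Tail.} The variance proxy is $\sigma^{2}=\sup\norm t^{2}\le4e^{2\rho}$. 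Borell--TIS with $r=3.6e^{\rho}$ then gives
\[
\Prob(g\notin K_X)\le e^{-12.96/8}<0.2 .
\]
\end{itemize}
For $K_N$, the condition reads $2|g_N|\le2\sqrt5$, i.e.\ $|g_N|\le\sqrt5$, which holds with probability about $0.975\ge0.8$. Multiplying the three factors gives $\gam(K)\ge0.8^{3}=0.512$.

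Finally, $\E\norm g^{2}=N$, so Markov's inequality gives $\Prob(\norm g>10\sqrt N)\le1/100$. Hence $\gam(K_R)\ge0.512-0.01=0.502$.

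The only step needing care is the tail constant. It depends on checking that the Borell--TIS variance uses $\sup\norm t^2\le4e^{2\rho}$ for the truncated test vectors, which holds by the norm bound recorded after \eqref{eq:readoff}, and that $3.6^{2}/8>\log5$.
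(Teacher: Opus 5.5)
Your proof is correct and follows the paper's argument essentially step for step. The shared steps are: intersections of half-spaces with $T_X=-T_X$ for closedness, convexity and symmetry; disjoint orthogonal blocks for independence; Dudley plus Borell--TIS with $\sigma^{2}\le4e^{2\rho}$ and $r=3.6e^{\rho}$ for the $A$ and $B$ blocks; the inscribed ball of radius $\min_X\theta_X/(2e^{\rho})$ for nonempty interior; and Markov on $\norm g^{2}$ for the truncation. The only difference is cosmetic, in the $N$-block: you evaluate $\Prob(|g_N|\le\sqrt5)\approx0.975$ directly, whereas the paper bounds $\Prob(g_N^{2}>5)\le\tfrac15$ by Markov.
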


\begin{proof}
Each $K_X$ is an intersection of closed half-spaces and $T_X=-T_X$, so $K_X$ is
closed, convex, symmetric; the functionals depend on disjoint blocks of $y$,
whose Gaussian coordinates are independent. Block $N$:
$\Prob(2|g_N|>2\sqrt5)\le\tfrac15$ by Markov. Blocks $A,B$: $T_A$ is the
continuous image of a compact set, so $Z_A=\sup_{T_A}\inner gt$ is the supremum
of a centred Gaussian process with $\sigma^{2}\le4e^{2\rho}$; Fact~\ref{fact:dudley}
and Lemma~\ref{lem:dudT} give $\E Z_A\le C_{\mathrm D}K_{\mathrm G}(21W_k+8W_v)$,
and Fact~\ref{fact:btis} with $r=2e^{\rho}\sqrt{2\log5}<3.6e^{\rho}$ gives
$\Prob(Z_A>\theta_A)\le\tfrac15$; likewise for $B$. $K$ need not be bounded,
so intersect with the ball: $K_R$ is compact, symmetric, convex, and contains
the ball of radius $\min_X\theta_X/(2e^{\rho})$ since $|\inner yt|\le2e^{\rho}\norm y$
for every test vector, so it is a convex body; and
$\Prob(\norm g>R)\le N/R^{2}=0.01$.
\end{proof}

\begin{proof}[Proof of Lemma~\ref{lem:onelevel}]
Take $M=\lceil\max\{2e\rho,\log_2(2n)\}\rceil$. Fact~\ref{fact:ban} in $E$
with the vectors~\eqref{eq:data} and the body $K_R$ gives $\sigma$ with
$\sup_{T_X}\inner{y_\sigma}t\le5\theta_X$. By~\eqref{eq:readoff} and symmetry,
$|\sum\sigma_ie_M(q^{\!\top}k_i)\inner{v_i}u|\le5\theta_A$ for all $q,u$, so the
Euclidean norm of $\sum\sigma_ie_M(q^{\!\top}k_i)v_i$ is at most $5\theta_A$;
$|\sum\sigma_ie_M(q^{\!\top}k_i)|\le5\theta_B$; $|\sum\sigma_i|\le5\theta_N=10\sqrt5$.
By Lemma~\ref{lem:trunc}(ii) and the choice of $M$, replacing $e_M$ by $e^{(\cdot)}$
changes each kernel sum by at most $m\cdot2^{-M}\le\tfrac12\le e^{\rho}$. So
$D_A=5\theta_A+e^{\rho}$, $D_B=5\theta_B+e^{\rho}$, $D_N=10\sqrt5$ work, and
with $C_1'=105C_{\mathrm D}K_{\mathrm G}+23$ the stated bounds follow
from~\eqref{eq:theta}.
\end{proof}

\begin{proof}[Proof of the first branch of Theorem~\ref{thm:main}]
Set $D=D_A+D_B$ from Lemma~\ref{lem:onelevel}. If $2D>\eps n$ output $[n]$.
Otherwise let $\ell$ be maximal with $2^{\ell+1}D\le\eps n$, and obtain
$P_{j+1}$ from $P_j$ by Lemma~\ref{lem:onelevel} and Lemma~\ref{lem:step},
$P_0=[n]$, $n_j=|P_j|$. Sizes: $n/2^{j}\le n_j\le n/2^{j}+D_N$. Denominators:
$B_{P_0}\ge n$ by Jensen, and $B_{P_j}\ge n/2^{j}-D_B\sum_{i=1}^{j}2^{-i}\ge n/2^{j}-D_B$;
for $j\le\ell$, $n/2^{j}\ge2D/\eps\ge2D_B$, so $B_{P_j}\ge n/2^{j+1}$. Error:
Lemma~\ref{lem:step}(iii) at step $j$ has denominator $B_{P_{j+1}}\ge n/2^{j+2}$,
so $\norm{\Attn_{P_\ell}-\Attn}\le\sum_{j<\ell}2^{j+1}D/n\le2^{\ell+1}D/n\le\eps$
uniformly in $q$. Size: maximality gives $n/2^{\ell}<4D/\eps$, so
$n_\ell<4D/\eps+D_N$, and $4D\le8C_1'e^{\rho}(\sqrt{\dv}+\sqrt{\dk\log(1+\rho)})$.
\end{proof}

% =====================================================================
\section{Proofs for Section~\ref{sec:alg}}
\label{app:alg}

Fix $P$, $|P|=m$, and the untruncated vectors
$x_i=\tfrac12(\phi_0(k_i)\otimes v_i,\phi_0(k_i),k_i,1)$ in
$\mathcal H=(\mathcal H_{\dk}\otimes\R^{\dv})\oplus\mathcal H_{\dk}\oplus\R^{\dk}\oplus\R$,
$\norm{x_i}\le1$, with test vectors $t_A(q,u)=2(\Phi(q)\otimes u,0,0,0)$,
$t_B(q)=2(0,\Phi(q),0,0)$, $t_C(w)=2(0,0,w,0)$ for unit $w$, $t_N=2(0,0,0,1)$;
the read-offs~\eqref{eq:readoff} hold with $e^{(\cdot)}$, and
$\inner{y_\sigma}{t_C(w)}=\inner{\sum\sigma_ik_i}w$.

\begin{lemma}[Isometric coordinates]
\label{lem:gram}
$G_{ij}=\inner{x_i}{x_j}$ is the matrix of Definition~\ref{def:level}, positive
semidefinite; if $G=RR^{\!\top}$ then the rows $r_i$ of $R$ satisfy
$\inner{r_i}{r_j}=G_{ij}$, there is a linear isometry $\iota$ of
$V=\operatorname{span}\{x_i\}$ onto $\operatorname{span}\{r_i\}$ with $\iota x_i=r_i$,
and for every $t\in\mathcal H$,
$\inner{\sum\sigma_ix_i}t=\inner{\sum\sigma_ir_i}{\iota\Pi t}$ with $\Pi$ the
projection onto $V$ and $\norm{\iota\Pi t}\le\norm t$.
\end{lemma}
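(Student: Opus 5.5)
The plan is to compute $G_{ij}$ directly from the four blocks of $x_i$ and then apply standard finite-dimensional linear algebra to the Gram matrix.

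\emph{Step one: identify $G$.} The blocks of $x_i$ are orthogonal summands, so $\inner{x_i}{x_j}$ is the sum of four block inner products, each carrying the factor $\tfrac14$.
\begin{itemize}[nosep]
\item The tensor block gives $\inner{\phi_0(k_i)}{\phi_0(k_j)}\inner{v_i}{v_j}$.
\item The second block gives $\inner{\phi_0(k_i)}{\phi_0(k_j)}$.
\item The third gives $k_i^{\!\top}k_j$ and the fourth gives $1$.
\end{itemize}
Since $\inner{\psi(y)}{\psi(z)}=e^{y^{\!\top}z}$, we get $\inner{\phi_0(k_i)}{\phi_0(k_j)}=e^{-\rho}e^{\rho k_i^{\!\top}k_j}=e^{\rho(k_i^{\!\top}k_j-1)}$. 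Substituting recovers exactly the formula of Definition~\ref{def:level}. Positive semidefiniteness is automatic, because $c^{\!\top}Gc=\norm{\sum_ic_ix_i}^{2}\ge0$.

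\emph{Step two: construct the isometry.} Given $G=RR^{\!\top}$, the rows satisfy $\inner{r_i}{r_j}=G_{ij}$ by definition. Define $\iota$ on $V$ by $\iota(\sum_ic_ix_i)=\sum_ic_ir_i$.

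The only point needing care is well-definedness. If $\sum c_ix_i=\sum c_i'x_i$, then with $d=c-c'$ we have
\[
\norm{\textstyle\sum_i d_ir_i}^{2}=d^{\!\top}Gd=\norm{\textstyle\sum_i d_ix_i}^{2}=0,
\]
so the two images agree. The same identity shows that $\iota$ is linear and norm-preserving, hence an isometry. It is onto $\operatorname{span}\{r_i\}$ by construction, and an isometry preserves inner products by polarization.

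\emph{Step three: transfer test vectors.} For $t\in\mathcal H$, the vector $\sum\sigma_ix_i$ lies in $V$, so
\[
\inner{\textstyle\sum\sigma_ix_i}{t}=\inner{\textstyle\sum\sigma_ix_i}{\Pi t}=\inner{\textstyle\sum\sigma_ir_i}{\iota\Pi t},
\]
the last equality by preservation of inner products. Finally, $\norm{\iota\Pi t}=\norm{\Pi t}\le\norm t$, since orthogonal projection is a contraction.

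The main obstacle is that $\mathcal H$ is infinite-dimensional. This causes no difficulty because $V$ is finite-dimensional and therefore closed, so the orthogonal projection $\Pi$ exists.
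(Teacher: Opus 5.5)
Your proposal is correct and follows the paper's proof. The paper likewise defines the isometry $\sum a_ix_i\mapsto\sum a_ir_i$, justifies well-definedness by $\norm{\sum a_ix_i}^{2}=a^{\!\top}Ga=\norm{\sum a_ir_i}^{2}$, and uses $\inner{y}{t}=\inner{y}{\Pi t}$ for $y\in V$; you also spell out two things it leaves implicit, the block computation matching $G$ to Definition~\ref{def:level} and the existence of $\Pi$ because $V$ is finite-dimensional.
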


\begin{proof}
Two families with the same Gram matrix are related by the isometry
$\sum a_ix_i\mapsto\sum a_ir_i$, well defined since
$\norm{\sum a_ix_i}^{2}=a^{\!\top}Ga=\norm{\sum a_ir_i}^{2}$; and
$\inner yt=\inner y{\Pi t}$ for $y\in V$.
\end{proof}

We work in the real-RAM model; rows accurate to $\eta\le1/(2m)$ change every
read-off by at most $2m\eta e^{\rho}\le e^{\rho}$, absorbed by thresholds
exceeding $100e^{\rho}$, and multiply the subgaussian constant by at most $1+\eta$.

\begin{lemma}[Subgaussian process on the test sets]
\label{lem:process}
Let $\sigma$ be the walk's output on $r_1,\dots,r_m$ and $X_t=\inner{\sum\sigma_ix_i}t$.
Then $\E X_t=0$, $\norm{X_t-X_s}_{\psi_2}\le K_{\ast}\norm{t-s}$ with
$K_{\ast}=C_\psi\sigma_{\mathrm{GS}}$, and for $X\in\{A,B\}$, $Z_X=\sup_{T_X}X_t$
satisfies $\E Z_X\le C_{\mathrm D}K_{\ast}\Dud(T_X)$ and
$\Prob(Z_X>\tfrac12C_{\mathrm D}K_{\ast}[\Dud(T_X)+4ue^{\rho}])\le2e^{-u^{2}}$;
moreover $\E(\sum\sigma_i)^{2}\le4\sigma_{\mathrm{GS}}^{2}$ and
$\E\norm{\sum\sigma_ik_i}^{2}\le4\sigma_{\mathrm{GS}}^{2}\dk$.
\end{lemma}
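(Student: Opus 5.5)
The plan is to push everything through Lemma~\ref{lem:gram} and Fact~\ref{fact:gsw}. Write $y=\sum_i\sigma_ir_i$ for the walk's output on the rows, which is $\sigma_{\mathrm{GS}}$-subgaussian in $\R^{N}$.

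\textbf{Process bounds.} For any test vectors $t,s\in\mathcal H$, Lemma~\ref{lem:gram} gives
\[
X_t-X_s=\inner{y}{\iota\Pi(t-s)}.
\]
Fact~\ref{fact:gsw} then gives $\E X_t=0$ and
\[
\norm{X_t-X_s}_{\psi_2}\le C_\psi\sigma_{\mathrm{GS}}\norm{\iota\Pi(t-s)}\le K_{\ast}\norm{t-s},
\]
using linearity of $\iota\Pi$ and that it does not increase norms. So $(X_t)$ has subgaussian increments on each of $T_A$ and $T_B$ in the Euclidean metric of $\mathcal H$, with constant $K_{\ast}$.

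Before applying Dudley I check three points. Both test sets are symmetric ($T_B=\{\pm 2\Phi(q)\}$ by construction, and $T_A$ because $u\mapsto-u$ preserves it). Both are compact images of compact parameter sets, so the suprema are measurable. And $X$ is linear in $t$, so $X_0=0$ serves as the reference point even though $0\notin T_X$.

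\textbf{Expectation.} Anchor at any $t_0\in T_X$. Since $\E X_{t_0}=0$,
\[
\E Z_X=\E\sup_t(X_t-X_{t_0})\le C_{\mathrm D}K_{\ast}\Dud(T_X),
\]
by the expectation form of Fact~\ref{fact:dudley}.

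\textbf{Tail.} The tail form of Fact~\ref{fact:dudley} bounds $\sup_{t,s}|X_t-X_s|$, not $Z_X$ directly. Symmetry of $T_X$ closes the gap: for $t\in T_X$ also $-t\in T_X$, so
\[
Z_X\le\sup_t|X_t|=\tfrac12\sup_t|X_t-X_{-t}|\le\tfrac12\sup_{t,s}|X_t-X_s|.
\]
Lemma~\ref{lem:dudT} gives $\diam(T_X)\le4e^{\rho}$. The tail form then yields, with probability at least $1-2e^{-u^2}$,
\[
Z_X\le\tfrac12C_{\mathrm D}K_{\ast}\bigl[\Dud(T_X)+4ue^{\rho}\bigr].
\]

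\textbf{Second moments.} Take $\theta=\iota\Pi t_N$ with $\norm{t_N}=2$ and apply the variance bound of Fact~\ref{fact:gsw}. This gives
\[
\E\Bigl(\sum_i\sigma_i\Bigr)^{2}=\E\inner{y}{\theta}^{2}\le4\sigma_{\mathrm{GS}}^{2}.
\]
For the key sum, fix an orthonormal basis $w_1,\dots,w_{\dk}$ of $\R^{\dk}$. Each coordinate is $\inner{\sum_i\sigma_ik_i}{w_j}=\inner{y}{\iota\Pi t_C(w_j)}$ with $\norm{t_C(w_j)}=2$. Summing the $\dk$ variance bounds gives
\[
\E\Bigl\lVert\sum_i\sigma_ik_i\Bigr\rVert^{2}\le4\sigma_{\mathrm{GS}}^{2}\dk.
\]

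\textbf{Anticipated obstacle.} I expect the only delicate step to be that Dudley's inequality is stated for sets in $\R^{N}$, while $T_X$ lives in the infinite-dimensional $\mathcal H$. I would handle this in one of two ways. The first is to note that the entropy integral only involves the metric $\norm{t-s}$ on $T_X$, and Fact~\ref{fact:dudley} only needs increment conditions on an abstract metric space. The second is to push $T_X$ forward by $\iota\Pi$: this is a contraction, so it does not increase covering numbers or the diameter, and the bounds transfer unchanged to a set in $\R^{N}$.
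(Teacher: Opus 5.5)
Your proposal is correct and follows the paper's proof step for step. The isometry of Lemma~\ref{lem:gram} with Fact~\ref{fact:gsw} gives mean zero and the $\psi_2$ increments, and Fact~\ref{fact:dudley} on $(T_X,\norm\cdot)$ gives the expectation bound. For the tail, the symmetry $T_X=-T_X$ (so $X_{-t}=-X_t$) gives $Z_X\le\tfrac12\sup_{t,s}|X_t-X_s|$ with $\diam(T_X)\le4e^{\rho}$, and the variance bound at $t_N$ and at $t_C(e_l)$, summed over $l\le\dk$, gives the last two claims. Your concern about the infinite-dimensional index set is settled by your first option: Fact~\ref{fact:dudley} is stated for an abstract metric on $T$, and that is how the paper applies it.
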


\begin{proof}
$X_t=\inner y{\iota\Pi t}$ with $y=\sum\sigma_ir_i$ subgaussian by
Fact~\ref{fact:gsw}, so the increment bound follows with
$\norm{\Pi(t-s)}\le\norm{t-s}$. Fact~\ref{fact:dudley} on $(T_X,\norm\cdot)$
gives the expectation bound; for the tail, $T_X=-T_X$ and $X_{-t}=-X_t$ give
$Z_X\le\tfrac12\sup_{t,s}|X_t-X_s|$ with $\diam(T_X)\le4e^{\rho}$
(Lemma~\ref{lem:dudT}, untruncated sets). The last two claims are
$\E X_t^{2}\le\sigma_{\mathrm{GS}}^{2}\norm{\Pi t}^{2}$ at $t=t_N$ and at
$t=t_C(e_l)$, summed over $l\le\dk$.
\end{proof}

\begin{proof}[Proof of Lemma~\ref{lem:accept}]
By Markov, each check fails with probability $\le\tfrac14$, so acceptance has
probability $\ge\tfrac12$; $Z_X\ge0$ by symmetry, and dividing by
$\Prob(\mathrm{acc})$ gives the conditional statements.
\end{proof}

\begin{lemma}[Deterministic invariants]
\label{lem:detinv}
Let $D_N'=4\sigma_{\mathrm{GS}}$, $D_C'=4\sigma_{\mathrm{GS}}\sqrt{\dk}$. For all $j$,
$n/2^{j}\le n_j\le n/2^{j}+D_N'$ and $c_j:=\norm{\sum_{P_j}k_i}\le D_C'$; if
$n_j\ge\rho D_C'/\log2$ then $B_{P_j}(q)\ge n_j/2$ for all $\norm q\le\rho$.
\end{lemma}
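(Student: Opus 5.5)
Three claims need proof: the size invariant, the key-sum invariant, and the denominator bound. The first two follow by induction on $j$ from the two checks of Definition~\ref{def:level}, which hold surely on every accepted signing; the third follows from Jensen's inequality applied to the current key sum.

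\emph{Sizes.} At level $j$ the accepted signing has $|\sum_{i\in P_j}\sigma_i|\le D_N'$. The larger sign class therefore has size between $n_j/2$ and $n_j/2+D_N'/2$. Inductively, $n_{j+1}\ge n/2^{j+1}$ and
\[
n_{j+1}\le\tfrac12\bigl(n/2^{j}+D_N'\bigr)+D_N'/2=n/2^{j+1}+D_N' .
\]
The base case is $n_0=n$. The upper bound does not accumulate, because each level halves the previous excess before adding the new one.

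\emph{Key sums.} At $j=0$ we have $c_0=\norm{\sum_ik_i}=0$ by the centring in~\eqref{eq:norm}. For the step, let $s_j=\sum_{P_j}k_i$ and $\delta_C=\sum_{P_j}\sigma_ik_i$. If $P_{j+1}$ is the $+1$ class, then $\sum_{P_{j+1}}k_i=\tfrac12(s_j+\delta_C)$. If it is the $-1$ class, the same formula holds with $\delta_C$ replaced by $-\delta_C$. In either case the second check gives $\norm{\delta_C}\le D_C'$, so
\[
c_{j+1}\le\tfrac12(c_j+D_C')\le D_C' .
\]
This is the same non-accumulating recursion as for the sizes.

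\emph{Denominator.} Fix $q$ with $\norm q\le\rho$. By Jensen's inequality,
\[
B_{P_j}(q)=\sum_{P_j}e^{q^{\!\top}k_i}\ge n_j\exp\Bigl(q^{\!\top}\tfrac{1}{n_j}\sum_{P_j}k_i\Bigr)\ge n_j\exp\bigl(-\rho c_j/n_j\bigr),
\]
where the last step is Cauchy--Schwarz. Now $c_j\le D_C'$, and the hypothesis $n_j\ge\rho D_C'/\log2$ gives $\rho c_j/n_j\le\log2$. Hence $B_{P_j}(q)\ge n_j/2$.

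All three steps are elementary. The only point needing care is that the invariants hold surely rather than in expectation. That rests on two facts: the rejection loop outputs only signings passing both checks, and negating $\sigma$ to select the larger class preserves both checks, since they are symmetric under $\sigma\mapsto-\sigma$.
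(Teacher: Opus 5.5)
Your proof is correct and follows the paper's argument: the size bounds from the one-step bound of Lemma~\ref{lem:step}(i) with $D_N=D_N'$, the non-accumulating recursion $c_{j+1}\le\tfrac12(c_j+D_C')$ started from $c_0=0$, and Jensen with Cauchy--Schwarz for the denominator. You write out the induction, and the fact that both checks are symmetric under $\sigma\mapsto-\sigma$, which the paper leaves implicit; the route is the same.
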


\begin{proof}
Sizes are Lemma~\ref{lem:step}(i). $c_{j+1}\le\tfrac12(c_j+D_C')\le D_C'$ from
$c_0=0$. Jensen: $B_{P_j}\ge n_je^{q^{\!\top}\bar k_{P_j}}\ge n_je^{-\rho c_j/n_j}\ge n_j/2$.
\end{proof}

\begin{lemma}[Error per level]
\label{lem:errc}
If $n_{j+1}\ge\rho D_C'/\log2$ then
$\norm{\Attn_{P_{j+1}}(q)-\Attn_{P_j}(q)}\le2(a_j+b_j)/n_j$ for all $\norm q\le\rho$,
where $a_j,b_j$ are the realised values of $Z_A,Z_B$ at level $j$.
\end{lemma}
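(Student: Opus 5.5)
The plan is to apply Lemma~\ref{lem:step}(iii) at level $j$ with $P=P_j$ and $P'=P_{j+1}$. This requires three ingredients: a bound on the realised numerator and denominator discrepancies by $a_j$ and $b_j$, a lower bound on $B_{P_{j+1}}$, and the relation $n_{j+1}\ge n_j/2$.

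First I read off the discrepancies. By the read-offs~\eqref{eq:readoff} (untruncated form) with $y_\sigma=\sum_{P_j}\sigma_ix_i$, for every $\norm q\le\rho$ and unit $u$ we have $\inner{\delta_A(q)}{u}=\inner{y_\sigma}{t_A(q,u)}\le Z_A=a_j$. Taking the supremum over $u$ gives $\norm{\delta_A(q)}\le a_j$, where $\delta_A=\sum\sigma_ie^{q^{\!\top}k_i}v_i$. Since $T_B=-T_B$, the same argument gives $|\delta_B(q)|\le b_j$. The signing may be replaced by $-\sigma$ so that $P_{j+1}$ is the $+1$ class; this leaves $a_j$ and $b_j$ unchanged, because $T_A$ is symmetric under $u\mapsto-u$ and $T_B$ under $\pm$. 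Hence the pair $(D_A,D_B)=(a_j,b_j)$ satisfies the hypotheses used in the proof of Lemma~\ref{lem:step}. Note that the $D_N$ part of Definition~\ref{def:disc} is not used in (iii).

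Next, Lemma~\ref{lem:step}(iii) gives $\norm{\Attn_{P_{j+1}}-\Attn_{P_j}}\le(a_j+b_j)/(2B_{P_{j+1}}(q))$. For the denominator, the hypothesis $n_{j+1}\ge\rho D_C'/\log2$ together with Lemma~\ref{lem:detinv} applied at index $j+1$ gives $B_{P_{j+1}}(q)\ge n_{j+1}/2$ for all $\norm q\le\rho$. The key-sum invariant $c_{j+1}\le D_C'$ holds surely, because the acceptance check of Definition~\ref{def:level} enforces it at every level. Finally, $P_{j+1}$ is the larger sign class, so $n_{j+1}\ge n_j/2$ by Lemma~\ref{lem:step}(i). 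Combining these, $B_{P_{j+1}}\ge n_j/4$, and the error is at most $(a_j+b_j)/(2\cdot n_j/4)=2(a_j+b_j)/n_j$.

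The main point needing care is the first step. The bound must hold for the realised suprema rather than in expectation, and the read-off must use the untruncated features of Appendix~\ref{app:alg}. These are exactly what the walk balances through the isometry of Lemma~\ref{lem:gram}, because $\inner{\sum\sigma_ix_i}{t}=\inner{\sum\sigma_ir_i}{\iota\Pi t}$. With that in place, the supremum $Z_X$ is literally the quantity bounding the discrepancy, so no truncation error enters. In the real-RAM caveat, finite precision is absorbed into the thresholds as stated before Lemma~\ref{lem:process}.
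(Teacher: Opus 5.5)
Your proposal is correct and is the paper's argument: Lemma~\ref{lem:step}(iii) applied with $(D_A,D_B)=(a_j,b_j)$, combined with $B_{P_{j+1}}\ge n_{j+1}/2$ from Lemma~\ref{lem:detinv} at index $j+1$ and $n_{j+1}\ge n_j/2$ from Lemma~\ref{lem:step}(i). The appeal to the isometry of Lemma~\ref{lem:gram} is harmless but not needed here: $Z_X$ is defined directly through $\sum_i\sigma_ix_i$ in the untruncated feature space, so the read-off is an identity for whatever signing the walk returns.
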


\begin{proof}
Lemma~\ref{lem:step}(iii) with $(a_j,b_j)$, and $B_{P_{j+1}}\ge n_{j+1}/2\ge n_j/4$.
\end{proof}

\begin{proof}[Proof of Theorem~\ref{thm:main-alg}(a), details]
Let $\bar D=2\bar C_{\mathrm D}\bar K_{\ast}(42W_k+8W_v+4e^{\rho})$ with explicit
numerical majorants $\bar C_{\mathrm D},\bar K_{\ast}$ of the absolute
constants, and $\ell$ maximal with $2^{\ell+2}\bar D\le\eps n$ ($\ell=0$ if
none). Running time: level $j$ costs $O(m^{2}(\dk+\dv))+O(m^{3})$ plus an
expected two walks at $O(m(2m)^{\omega})$, with $m\le n/2^{j}+D_N'$, summing
to $O(n^{\omega+1}+n^{2}(\dk+\dv))$. Size: $n_\ell\le n/2^{\ell}+D_N'<8\bar D/\eps+D_N'$.
Validity: for $j<\ell$, $n_{j+1}\ge n/2^{\ell}\ge4\bar D/\eps\ge4\bar D$, and
$\bar D\ge42\sqrt{40}\,e^{\rho}\sqrt{\dk\log(1+\rho)}\ge42\sqrt{40}\sqrt{\log2}\,\rho\sqrt{\dk}>221\rho\sqrt{\dk}$
using $\inf_{\rho>0}e^{\rho}\sqrt{\log(1+\rho)}/\rho\ge\sqrt{\log2}$, while
$\rho D_C'/\log2<36.5\rho\sqrt{\dk}$. Error: the accepted $\sigma^{(j)}$ is a
function of $P_j$ and fresh randomness with the law of one trial conditioned
on acceptance, so $\E[a_j+b_j\mid\text{history}]\le\bar D$ by
Lemma~\ref{lem:accept}, and by Lemma~\ref{lem:errc}, $n_j\ge n/2^{j}$ and the
tower property, $\E\sup_q\norm{\Attn_{P_\ell}-\Attn}\le\sum_{j<\ell}2^{j+1}\bar D/n\le\eps/2$.
\end{proof}

\begin{proof}[Proof of Theorem~\ref{thm:main-alg}(b), details]
With $u=\sqrt{\log(16L/\delta)}$, Lemma~\ref{lem:accept} gives at each level
$\Prob(a_j>\tfrac12C_{\mathrm D}K_{\ast}[\Dud(T_A)+4ue^{\rho}]\mid\text{history})\le\delta/(4L)$,
likewise for $b_j$; a union bound over $\le2L$ events leaves probability
$\ge1-\delta/2$ on which $a_j+b_j\le D^{\ast}$ for all $j<\ell$, with
$D^{\ast}=\tfrac12\bar C_{\mathrm D}\bar K_{\ast}(42W_k+8W_v+4e^{\rho}+8ue^{\rho})$,
with the majorants of~(a) so that the stopping rule is computable (the tail
bound of Lemma~\ref{lem:accept} then applies a fortiori). On that event
the error is $\le2^{\ell+1}D^{\ast}/n\le\eps$ deterministically with the
stopping rule $2^{\ell+1}D^{\ast}\le\eps n$, validity holds as before since
$D^{\ast}\ge\bar D/4$, and the size is $<4D^{\ast}/\eps+D_N'$.
\end{proof}

% =====================================================================
\section{Proof of Theorem~\ref{thm:main-lb}(a)}
\label{app:lb}

Let $L=e^{\rho}$, $\rho\ge2$, $\dk\ge2$, $\dv\ge1$, $\eps\le\eps_0/\sqrt{\dv}$ with
$\eps_0:=1/(40\sqrt{10})$, and put $c_1:=1/(40\sqrt{10}\,e)$. Define
\begin{equation}
\label{eq:m}
  m:=\max\Bigl\{1,\Bigl\lfloor\min\Bigl\{\frac{c_1L}{\eps\sqrt{\dv}},\;\frac{L^{2}}{1000e^{2}},\;\frac{e^{\dk/(4\rho^{2})}}{2}\Bigr\}\Bigr\rfloor\Bigr\},
\end{equation}
so that $m\ge\tfrac12\min\{\cdots\}$ in all cases.

\paragraph{Code.} Draw $u_1,\dots,u_m$ i.i.d.\ uniform on $S^{\dk-1}$. By
Fact~\ref{fact:cap} with $t=1/\rho$ and a union bound over $\binom m2$ pairs,
$\Prob(\exists i\ne h:|\inner{u_i}{u_h}|>1/\rho)\le m^{2}e^{-\dk/(2\rho^{2})}<1$
when $m<e^{\dk/(4\rho^{2})}$, which~\eqref{eq:m} ensures; fix such a code
(for $m=1$ there is nothing to check).

\paragraph{Instance.} Let $x=(x_{ij})\in\{0,1\}^{m\times\dv}$ and let
$b=(b_{ij})\in\{\pm1\}^{m\times\dv}$ be public signs. The instance has
$N':=2m\dv$ pairs: $(u_i,b_{ij}x_{ij}e_j)$ and $(-u_i,0)$ for $i\in[m]$,
$j\in[\dv]$. The key centroid is zero, all keys have norm one, all values
norm at most one, so~\eqref{eq:norm} holds with the given $\rho$.

\paragraph{Decoding.} For the query $q_i=\rho u_i$ put $w_{ih}=e^{\rho\inner{u_i}{u_h}}$;
then $w_{ii}=L$, and $e^{-1}\le w_{ih}\le e$ for $h\ne i$. The weight of the key
$-u_h$ is $w_{ih}^{-1}$. Hence
\[
  B_i:=B(q_i)=\dv\sum_{h}(w_{ih}+w_{ih}^{-1})\le\dv(2L+2em),
\]
a quantity depending only on the public keys. The $j$th numerator coordinate
is $A(q_i)_j=\sum_hw_{ih}b_{hj}x_{hj}$, so
$b_{ij}A(q_i)_j=Lx_{ij}+N_{ij}$ with $N_{ij}=\sum_{h\ne i}w_{ih}b_{ij}b_{hj}x_{hj}$.
Conditional on $x$ and the code, $\E_bN_{ij}=0$ and
$\E_bN_{ij}^{2}=\sum_{h\ne i}w_{ih}^{2}x_{hj}\le e^{2}m$, so by Chebyshev
$\Prob_b(|N_{ij}|>L/10)\le100e^{2}m/L^{2}\le\tfrac1{10}$ by~\eqref{eq:m}
(vacuous for $m=1$, where $N_{ij}=0$).

Let $S$ be an $\eps$-coreset. Bob, holding $(i,j)$, $S$, the data of $S$, the
public keys and $b$, computes $\hat y:=b_{ij}B_i\Attn_S(q_i)_j$. Since
$\sum_{j'}(\Attn_S(q_i)_{j'}-\Attn(q_i)_{j'})^{2}\le\eps^{2}$, for $j$ uniform in
$[\dv]$ Markov's inequality gives $|\Attn_S(q_i)_j-\Attn(q_i)_j|\le\eps\sqrt{10/\dv}$
with probability $\ge\tfrac9{10}$. On that event
$|\hat y-(Lx_{ij}+N_{ij})|\le B_i\eps\sqrt{10/\dv}\le2\eps\sqrt{10\dv}(L+em)\le L/20+L/20$,
using $\eps\sqrt{\dv}\le\eps_0$ for the first term and $m\le c_1L/(\eps\sqrt{\dv})$
for the second (for $m=1$ the latter holds since
$c_1L/(\eps\sqrt{\dv})\ge c_1L/\eps_0=e^{\rho-1}\ge e$). Together with $|N_{ij}|\le L/10$, Bob's threshold
$\hat x_{ij}:=\mathbf 1[\hat y>L/2]$ is correct. The two failure events have
total probability $\le\tfrac15$ by a union bound, so Bob succeeds with
probability $\ge\tfrac45$ over $x$, $b$ and $(i,j)$.

\paragraph{Counting.} Suppose that for every $x$ and every $b$ the instance
admitted an $\eps$-coreset of size at most $s$. Alice sends the positions of
$S$ among the $N'$ pairs and the bits $x_{ij}$ of the selected pairs with key
$+u_i$: at most $\log_2\binom{N'}{s}+s+O(\log N')\le s\log_2(eN'/s)+s+O(\log N')$
bits, and $N=m\dv=N'/2$ INDEX bits are being decoded. With $s=0.03N$ this is
$\le0.03N\log_2(2e/0.03)+0.03N+O(\log N)\le0.256N+O(\log N)$ bits, whereas Lemma~\ref{lem:index} with $\delta=\tfrac15$
requires at least $(1-h(\tfrac15))N>0.278N$ bits. For $N\ge N_0$ (an absolute
constant) this is a contradiction, so some $(x,b)$ yields an instance every
$\eps$-coreset of which has $|S|>0.03m\dv\ge0.015\dv\min\{\cdots\}$. For
$N<N_0$: since $c_1,\tfrac12>1/(1000e^{2})$, each term of the minimum in the
theorem is at most $1000e^{2}$ times the corresponding term in~\eqref{eq:m},
so $\dv\min\{L/(\eps\sqrt{\dv}),L^{2},e^{\dk/(4\rho^{2})}\}\le1000e^{2}\dv\cdot2m=2000e^{2}N$,
and the claimed bound is at most $2000e^{2}c\,N<2000e^{2}c\,N_0$, which is
below $1\le|S|$ for $c\le1/(2000e^{2}N_0)$. The same comparison of minima
gives, for $N\ge N_0$,
$|S|>0.015\dv\min\{\cdots\}\ge(0.015/(1000e^{2}))\,\dv\min\{L/(\eps\sqrt{\dv}),L^{2},e^{\dk/(4\rho^{2})}\}$,
so the first display holds with $c=\min\{0.015,\,1/(2N_0)\}/(1000e^{2})$ and $C=4$. The regime statement selects the first branch of the minimum:
$L/(\eps\sqrt{\dv})\lesssim L^{2}$ iff $\eps\sqrt{\dv}\gtrsim e^{-\rho}$, and
$L/(\eps\sqrt{\dv})\lesssim e^{\dk/(4\rho^{2})}$ iff
$\dk\gtrsim\rho^{2}(\rho+\log\frac1{\eps\sqrt{\dv}})$.
\qed

\paragraph{What this proves and what it does not.}
The $\sqrt{\dv}$ comes from multiplexing independent bits over output
coordinates; the same one-bit-per-token architecture yields no key-dimension
factor, because at any single query the signal available from one token is at
most $2e^{\rho}/n$ against a required resolution $\eps/\sqrt{\dv}$, which caps
the argument at $e^{\rho}\sqrt{\dv}/\eps$. Breaking this requires side
information, as in~\cite{CFIKKP26}.

% =====================================================================
% Experimental appendices: details, the pre-registration record, and
% additional tables and figures.  Macros come from results_values.tex.
% =====================================================================
\section{Experimental details}
\label{app:expdetails}

\subsection{Corpus and capture}
\label{app:corpus}

The corpus is built once. Natural documents are
drawn from nine LongBench~\cite{Bai24} subsets (2wikimqa, gov\_report,
hotpotqa, multifieldqa\_en, musique, narrativeqa, qasper, qmsum, triviaqa) and
from the QASPER test split~\cite{Dasigi21}. Each document is tokenised with
the Qwen2.5 tokeniser and filed under the longest of the four lengths 4096,
8192, 16384 and 32768 tokens that it fills to at least nine tenths, fifty
natural documents per length; a document is assigned by the length it fills
rather than by its position in a file, and most of the documents in the
32768-token bucket are from narrativeqa. Fifty prompts of uniformly random tokens are added as a
control and the whole is interleaved so that any prefix of the corpus holds
every length and the controls in proportion. A model whose own tokeniser
counts fewer tokens than the 4096-token floor, or fewer than nine tenths of
the bucket's length, skips the prompt and records why: Qwen skips
\vEaQSkipped\ of the 250 prompts, the documents of the 4096-token bucket
that fall between nine tenths of it and the floor, and Llama \vEaLSkipped,
of which \vEaLSkippedShort\ fall short of the 16384-token bucket; they
measure \vEaQPromptsMeasured\ and \vEaLPromptsMeasured.

The census runs the prefill with the scaled-dot-product attention backend,
then generates 64 tokens greedily. Forward hooks read the
projected keys, values and queries of every layer; the rotary embedding is
applied to keys and queries at their positions and the model's scaling
$1/\sqrt{\dk}$ is folded into the queries, so that the captured
$q^{\!\top}k$ equals the logit the model computes, which the tests check
against the model's own attention output. Neither model uses sliding-window
attention at these lengths, so no layer is excluded on that ground. Nine
heads per model, stratified by terciles of the census's per-head $\rho$, are
dumped at the six leading prompts of the corpus, taken without the census's
skip rule (on Llama one of them, at \vEbLLengthsMin\ tokens, is below the
floor), each with 128 decoding queries, for E2 and the kernel census; E3
uses the same heads and documents with its own prefill
(Appendix~\ref{app:e3}).

\subsection{Definitions}
\label{app:defs}

For one head, $\kappa=\max_i\norm{k_i-\bar k}$ over the cache after the
prefill. $\vrho$ is computed over the decoding queries of the query heads
that share the key--value head, in the maximum convention (the supremum
of~\eqref{eq:coreset}) and in the 99th-percentile convention; the percentile
is left undefined, and the run refuses to start, when fewer than 100
decoding queries are captured, since below that count a percentile
interpolates between the two largest values. $\rho=\vrho\kappa$. Because
$\vrho$ is a maximum, it depends on the query set: E1 takes the 64 decoding
queries of the census per query head, the E2 dumps carry 128, and E3 takes
the 256 queries of the reference generation, so the same head need not
carry the same $\rho$ in E1 and E2, since the prompt sets and the number of
queries differ; the tables state which value they print. The uncentred control is $\max_t\norm{q_t}\max_i\norm{k_i}$ over the
same queries. The participation ratio of a query is $1/\sum_ip_i^{2}$ for
its softmax weights $p$, the effective number of keys it attends to; an
instance is near hard max when the mean participation ratio over its
decoding queries is at most 16. Errors are $\norm{\Attn_S(q)-\Attn(q)}_2$
divided by $\max_i\norm{v_i}$, computed in double precision from the
captured tensors, which are stored in half precision. A head is a (layer, key--value head) pair, written
L07H01; an instance is one head at one prompt; a cell is one (instance,
method, size).

\subsection{Methods and provenance}
\label{app:methods}

All methods receive the normalised instance of~\eqref{eq:norm}: keys centred
at their mean and divided by $\kappa$, queries multiplied by $\kappa$, values
divided by $\max_i\norm{v_i}$.

\emph{Uniform sampling} draws the target size without replacement.

\emph{Halving} (Theorem~\ref{thm:main-alg}(a)) forms the Gram matrix of
Definition~\ref{def:level} with all four blocks, factors it by a matrix-free
pivoted Cholesky decomposition with a relative stopping rule and a rank cap
of \vKnQCap, runs the Gram--Schmidt walk~\cite{BDGL18} on the rows of the
factor, checks the cardinality and key-sum constraints and redraws on failure
(at most four trials; on both E2 runs every level of this variant was
accepted at its first trial, the re-centred variant's rows do not record
trials, and neither do the E3 and E4 rows), keeps the larger sign class and
recurses; every intermediate level is a coreset, so all sizes
come from one recursion. The walk runs in double precision on the CPU; its
step solves a $k\times k$ system through the maintained Gram matrix, with the
residual of the cheap Sherman--Morrison update checked and a
ridge-regularised Cholesky solve substituted where it fails, which is what
keeps the walk correct at large $\rho$ where the Gram matrix is nearly
diagonal. The rank cap truncates the features the walk chooses a signing
from; every error reported is measured on the actual instance, so
truncation affects which signing is chosen and not how its error is
measured. The largest relative residual of any factorisation in the census
of Table~\ref{tab:kernel} was \vKnQResid\ on Qwen and \vKnLResid\ on Llama;
the E2 rows record the walk's own per-level absolute residual.

\emph{Halving, re-centred subsample} (Corollary~\ref{cor:sample-balance})
draws $s_0=\min\{n,\lceil2048e^{2\rho}/\eps^{2}\rceil\}$ pairs, redraws, at
most four times, until the sampled key mean is at most $2s_0^{-1/2}$ in norm,
re-centres and rescales the sample, and applies the halving above; the rows
do not record the draw count. On every instance $s_0=n$ (see
Section~\ref{sec:exp-setup}), so the sampled set is the capped input.

\emph{BalanceKV}~\cite{KSHZK25} is a reimplementation: the self-balancing walk
of~\cite{ALS21} with the parameter $c=30\log(2m/\delta)$, $\delta=0.1$, inside
halving on the pivoted-Cholesky features of the numerator and denominator
blocks alone, without rejection and without the streaming merge structure of
the original; no public code was located. On \vKnQDegenHi\ of the
\vKnQDumpsHi\ Qwen instances above $\rho=\vKnQCollapse$ the census finds the
two blocks it uses diagonal to its $10^{-8}$ threshold
(Table~\ref{tab:kernel}), and the run's own per-level check flagged
degeneracy or an underflowed diagonal on every one of the twelve, so its
signing reduces to random signs there, which is a property of the method at
that $\rho$ and not of the implementation.

\emph{Kernel halving}~\cite{DM21,CGSDM25} is a reimplementation of the
pairwise halving of kernel thinning with the self-balancing probabilities,
on the same two blocks, without the refinement step and without the queries
that the thinning of~\cite{CGSDM25} uses; the published guarantee of that
method does not apply to this row. It runs on the whole cache, uncapped.

\emph{Heavy hitters} keeps the keys with the largest mean softmax mass under
the instance's decoding queries; it is fitted and evaluated on the same
queries, which is the most favourable reading of a SnapKV-style
oracle~\cite{Li24}. \emph{Per-query} runs the Beck--Fiala halving of
Corollary~\ref{cor:perquery} for the first decoding query, with vector
values projected on their first principal direction, and is evaluated at the
worst over all queries; it is a reference for Proposition~\ref{prop:sep},
not a competitor.

The set handed to the Gram--Schmidt or self-balancing walk is a uniform
subsample of 8192 pairs when the cache is larger; in E2 the cap binds on the
two longest of each head's six prompts, and in E4 on every instance. Sizes
are matched at the requested size, and a halving method returns the level
nearest it: the construction's achieved size exceeds the request by up to
\vEbQOursOversizeMax\ times on Qwen and \vEbLOursOversizeMax\ on Llama (the
latter on the instances whose cache length is not a power of two times the
request), and BalanceKV's by up to \vEbQBkvOversizeMax\ and
\vEbLBkvOversizeMax\ at the smallest sizes, since its levels are not
rejected to the cardinality check. The two variants of the construction
share a rank cap of \vKnQCap, a relative Cholesky tolerance of $10^{-12}$,
four trials per level and the 8192-pair cap; BalanceKV shares the rank cap,
the tolerance and the cap but takes one trial per level; kernel halving is
uncapped, and its pairwise walk runs on the device holding the features,
while the Gram--Schmidt, self-balancing and Beck--Fiala walks run in double
precision on the CPU under two threads.

\subsection{Closed-loop protocol}
\label{app:e3}

For each prompt the model first decodes \vEcQSteps\ tokens from the
uncompressed cache; the tokens and the per-step queries of the nine selected
heads are the reference. A coreset is built per selected head from the
prefill cache with the walk settings of E2, at the requested sizes $n/16$
and $n/4$; the heavy-hitter oracle is fitted on the queries of the reference
generation. The open loop feeds the reference tokens back at every step
through the model with the selected heads masked to their coresets, so its
queries are those the masked model computes under the reference tokens, and
records, for each masked head at each step, the E2 error of the coreset's
attention against the full prefill cache's attention on the queries that
loop issued, taking the largest over the query heads of the group; the
closed loop decodes freely
with the same masks, so its tokens and queries depend on the compressed
cache, and records the same error on the queries it issued. Masking a position sets its logit to a large
negative value, which yields exactly the attention a shrunken cache would
produce; a run whose mask never fires is refused. Token agreement is the
fraction of the \vEcQSteps\ generated tokens that equal the reference token
at that position, and the tracked prefix is the number of leading steps on
which they agree. Rows are (prompt, head, method, size); the per-step
trajectories of every row are recorded with the first divergent step marked.

\subsection{Synthetic instances}
\label{app:e4}

Keys are drawn uniformly on the unit sphere of $\R^{\dk}$, or from eight
clusters with centres uniform on the sphere and isotropic Gaussian noise of
standard deviation $0.15$ per coordinate, then centred and rescaled to unit
maximum norm. Values are uniform on the unit sphere of $\R^{\dv}$ for
the sphere family and, for the clustered family, a cluster value plus noise
of the same scale, rescaled to unit maximum norm; queries are 2000 points of
norm $\rho$. An instance is seeded from its seed alone and its query set from
(seed, $\dk$, $\dv$), so the sweep is independent of execution order and was
run as six shards, one per ($\dk$, key family), merged with a check that the
shards cover the declared grid exactly once. For each (family, $\dk$, $\dv$,
$\rho$, method, seed) the worst error is measured at the thirteen requested
sizes from $n/16384$ to $n/4$; the size reaching $\eps=0.1$ is the crossing
of a least-squares fit of $\log(\text{worst})$ on $\log(\text{requested
size})$ over those sizes, less the rows at $n/8$ and $n/4$ for the three
capped methods, which hold the 8192-pair cap subsample itself
(Section~\ref{sec:exp-setup}); the slope of $\log(\text{size})$ against
$\rho$ is fitted over $\rho\in\{1,2,4\}$ per seed, the reported slope being
the median over seeds and the bracketed range its extremes. A crossing at or
below the smallest fitted size, or above the largest, is off the fitted
sizes; a fit is excluded when at least half of its crossings are, or when
its fitted sizes vary by under five percent, and marked $\dagger$ when any
is. The ratio table of Section~\ref{sec:e4} divides the
construction's crossing by uniform sampling's within each seed, keeps a seed
only when both crossings lie inside the sizes their fits used, and reports a
cell only when at least half of its seeds are kept. The pre-registered
snapped crossing, the smallest requested size at which the worst error is at
most $\eps$, was also computed for the reference seed of each fit; it is not
reported. The separation instance is that of the proof of
Proposition~\ref{prop:sep}(ii); its witness query and value are computed in
closed form from the signing.

\subsection{Hardware, run times and seeds}
\label{app:hardware}

E1 and E3 ran on one RTX 5090 with 32\,GB, E2 and the kernel census on one
RTX 3090 with 24\,GB, and the six shards of E4 on the two cards, three
each; every walk ran on the CPU of the respective host.
Table~\ref{tab:runs} gives the wall-clock time of each run. The base seed is
$0$; E2 uses seeds $0$--$9$, E4 ten seeds per instance and the separation
eight. Three absolute constants that the proofs leave symbolic, Dudley's
constant, the Gaussian $\psi_2$ constant and the sub-gaussian-to-$\psi_2$
constant, were given numerical majorants for the runs, and none of the
numbers in this paper depends on them. Tai's constant $f_{\mathrm T}$ is
never assigned a value.\footnote{Four gates, run before the full experiments, refuse a 99th
percentile that is a copy of the maximum, two definitions of $\vrho$ that
disagree on the same head, a corpus of random tokens where natural text was
meant, and a closed-loop run whose mask never fires.}

\begin{table}[hbp]
\centering\small
\begin{tabular}{@{}llr@{}}
\toprule
Run & GPU & hours \\
\midrule
E1, Qwen & \vRunEaQGpu & \vRunEaQHours \\
E1, Llama & \vRunEaLGpu & \vRunEaLHours \\
E2, Qwen & \vRunEbQGpu & \vRunEbQHours \\
E2, Llama & \vRunEbLGpu & \vRunEbLHours \\
E3, Qwen & \vRunEcQGpu & \vRunEcQHours \\
E3, Llama & \vRunEcLGpu & \vRunEcLHours \\
E4 (six shards, summed) & \vRunEdGpu & \vRunEdHours \\
\bottomrule
\end{tabular}
\caption{Wall-clock time of the runs behind the tables.}
\label{tab:runs}
\end{table}

% =====================================================================
\section{Pre-registration record}
\label{app:prereg}

\subsection{The archived plan}
\label{app:plan}

The following is the experimental section of the manuscript as archived on
6 September 2026, after the pilot runs, the Qwen census, the separation run
and the first E4 sweep had completed, while E2 and E3 on Qwen were running,
and before any result entered the text. The plan's own opening sentence,
that it is fixed before any experiment is run, is the manuscript's wording
at the time of archiving and is superseded by this dating; the experiment
configurations, written before the first pilot run, already refer to this
plan by its prompt count and lengths, and the pilots and the first sweeps
ran under its design. Relative to the copy archived eight minutes earlier,
the experimental section differs in one token, the constant of the sampling
branch, changed from 512 to 128 with the correction of
Theorem~\ref{thm:sampling}. Cross-references are updated to the present
numbering, and the plan's one-paragraph reporting statement is omitted. The last paragraph is taken from the companion results
document, saved on 5 September 2026 after the separation run and the pilots
and while the Qwen census was running. The names
\emph{sample-then-balance} and \emph{Thinformer's thinning} in the plan are
the rows \emph{Halving, re-centred subsample} and \emph{Kernel halving} of
Table~\ref{tab:e2}.

\begin{quote}\small
Every bound above carries $e^{\rho}$. Whether that is a constant or a
catastrophe on real caches is an empirical question the theory cannot answer,
and the algorithm of Section~\ref{sec:alg} has never been run. The following
plan is fixed before any experiment is run; all outcomes, including negative
ones, will be reported.

\paragraph{E1: a census of centred $\rho$.}
Models: Llama-3-8B-Instruct and Qwen2.5-7B-Instruct, all layers and heads.
Inputs: 200 prompts of length $4$k--$32$k tokens from LongBench and QASPER,
plus 50 synthetic prompts (random tokens) as a control. After the prefill,
for each head compute, with the model's own scaling folded into the query
($q\leftarrow q/\sqrt{\dk}$, after rotary embedding),
$\kappa=\max_i\norm{k_i-\bar k}$ over the cache, $\vrho$ as the maximum and
the $99$th percentile of $\norm{q_t}$ over the decoding queries, and
$\rho=\vrho\kappa$; also the uncentred product $\max_t\norm{q_t}\max_i\norm{k_i}$
for comparison with~\cite{SM26}. Report per-layer histograms of $\rho$ and
$e^{\rho}$, the fraction of heads with $e^{\rho}\le10^{3}$, and the fraction
for which the sampling branch $128e^{2\rho}/\eps^{2}$ is below the cache size
at $\eps=0.1$. Decision rule: if for the median head
$e^{\rho}(\sqrt{\dv}+\sqrt{\dk\log(1+\rho)})/\eps\le n$ at $\eps=0.1$ (for
$\dk=\dv=128$ and $n=10^{5}$ this needs $e^{\rho}\lesssim3\cdot10^{2}$, before
constants) the discrepancy branch is meaningful and E2 proceeds on the median
heads; if most heads have $e^{\rho}\ge10^{6}$
the results are reported as asymptotic and E2 is run only on the low-$\rho$
tail, with that stated. Cost: one GPU-day.

\paragraph{E2: sample-then-balance against baselines at matched size.}
Heads: nine, stratified by $\rho$ terciles from E1. Coreset sizes:
$s\in\{n/64,n/32,n/16,n/8,n/4\}$. Methods: uniform sampling; BalanceKV
(SoftmaxBalance)~\cite{KSHZK25}; Thinformer's thinning~\cite{CGSDM25};
a query-aware heavy-hitter eviction (SnapKV-style, given the true queries, as
an oracle upper reference); and Corollary~\ref{cor:sample-balance} with the
Gram--Schmidt walk of~\cite{BDGL18} at each level, capping the balanced set at
$8192$ pairs by subsampling when $s_0=2048e^{2\rho}/\eps^{2}$ exceeds that, and
flagging every head where the cap bites. Metric: the worst case over a query
set of $\norm{\Attn_S(q)-\Attn(q)}_2/\max_i\norm{v_i}$, with the query set
consisting of the model's actual decoding queries and $10^{4}$ random queries
on the sphere of radius $\vrho$; also the mean. Ten seeds; report median and
worst seed. Decision rules: the construction should have lower worst-case
error than uniform sampling at every size on every head, and the gap should
widen with $\rho$; if it does not beat uniform sampling on the median heads,
the constants of Section~\ref{sec:alg} dominate at these $n$ and the paper
says so. Cost: three GPU-days plus CPU time for the walks.

\paragraph{E3: closed-loop decoding.}
On the same heads, generate $256$ tokens from the compressed cache in two
ways: open loop (queries taken from uncompressed decoding) and closed loop
(queries computed by the model running on the compressed cache). For each
method of E2 record the error at every step and the token-level agreement
with the uncompressed model. Hypothesis, stated in advance: per-query methods
degrade faster in the closed loop than in the open loop, and the
uniform-guarantee construction degrades equally in both. This tests the
modelling argument of Section~\ref{sec:why} and could refute it.

\paragraph{E4: scaling on synthetic instances.}
Instances with controlled parameters: $\dk\in\{2,8,64\}$, $\dv\in\{1,8\}$,
$\rho\in\{1,2,4,6\}$, $n=2^{16}$, keys uniform on the sphere, keys clustered,
and the instance of Proposition~\ref{prop:sep}(ii). Measure the smallest
coreset size reaching $\eps=0.1$ for each method, and the worst-case error
against $s$. Checks: size proportional to $e^{\rho}$ (log-linear fit slope
$1\pm0.1$ over $\rho\le4$, since at $\rho=6$, $\dk=64$, $\dv=8$ the bound of
Theorem~\ref{thm:main} already reaches $n=2^{16}$); on $\dk=2$ the size should
not grow with $\rho$ beyond $e^{\rho}$ (Theorem~\ref{thm:main-nolog}; this is
tested only if Tai's cell-wise algorithm of Fact~\ref{fact:tai} with
Lemma~\ref{lem:balance} is added to the methods of E2, since the
Gram--Schmidt construction's own factor $\sqrt{2\log(1+\rho)}$ varies by
$1.7\times$ over $\rho\in[1,6]$ and cannot be told from a constant), while on
$\dk=64$ a slow growth consistent
with $\sqrt{\log(1+\rho)}$ is admissible and cannot be distinguished from a
constant at these $\rho$, which is stated; on the separation instance at
$\rho=1$, run additionally at $\dk\in\{256,1024,4096\}$ where
$\sqrt{\dk}/e\in\{5.9,11.8,23.5\}$ exceeds the per-query guarantee $e$, the
per-query greedy of Proposition~\ref{prop:sep}(i) should reach discrepancy at
most $e$ while every query-oblivious signing found by any method stays above
$\sqrt{\dk}/e$, the two measured discrepancies being compared directly. Cost: CPU only.

\paragraph{What would change the paper.}
Three outcomes would each force a change, and are named now so that the
change is not a matter of judgement afterwards. If E1 finds that most heads
have $e^{\rho}\ge10^{6}$, every size bound in this
paper is larger than the cache it compresses, and the contribution is
asymptotic. The paper would say so in the abstract, not only in the
limitations. If E2 finds that the construction does not beat uniform sampling
on the median heads, then at realistic $n$ the constants of
Section~\ref{sec:alg} dominate the asymptotics, and the algorithmic claim is
theoretical rather than practical. If E3 finds the closed-loop and open-loop
degradation indistinguishable, the modelling argument of
Section~\ref{sec:why} loses its empirical support. Proposition~\ref{prop:sep}
would still stand, since it is a theorem about discrepancy, but the case for
caring about uniform guarantees in a decoding loop would rest on the theorem
alone.
\end{quote}

\subsection{Departures from the plan}
\label{app:deviations}

Table~\ref{tab:deviations} lists every change to the plan and every analysis
that was not in it, with the stage at which it was made: before the run it
affects, at the time the run was configured, after results had been seen,
or at reporting. No threshold and no decision rule was changed at any stage;
the populations, rows and estimators behind some readings were changed after
results were seen, and where they were, the pre-registered reading is
reported beside the new one (rows 11 to 13) or its outcome is stated in the
row (rows 17 and 27).

{\footnotesize
\begin{longtable}{@{}p{0.4cm}p{5.5cm}p{1.7cm}p{3.6cm}p{3.6cm}@{}}
\caption{Every departure from the pre-registered plan.}\label{tab:deviations}\\
\toprule
& Change & Stage & Reason & Effect \\
\midrule
\endfirsthead
\toprule
& Change & Stage & Reason & Effect \\
\midrule
\endhead
\bottomrule
\endfoot
1 & E2 reports the two query populations as separate columns rather than one worst case over their union. & before run, after pilots not reported here & The sphere's radius exceeds the real queries' norms, so the union's worst case was the sphere's. & Rule read on the real column; sphere reported beside it. \\
2 & E1 decomposes $\vrho$ over the query heads of a group. & before run & One query head could set $\rho$ for its group. & Table~\ref{tab:e1gqa}; no rule affected. \\
3 & E1 reports $\kappa$ after trimming the furthest keys. & before run & The attention-sink explanation needed a test. & Table~\ref{tab:e1gqa}; negative. \\
4 & The uncentred control is reported with the key-projection bias norms recorded per head. & before run & The uncentred radius tracks the bias where a bias exists. & One column of Table~\ref{tab:e1}. \\
5 & The 8192-pair cap applies to every Gram--Schmidt or self-balancing walk, and the pairs seen are recorded per row. & before run & Otherwise matched output size at mismatched input size. & Kernel halving remains uncapped, in its favour; stated in Section~\ref{sec:exp-setup}. \\
6 & $\vrho$ is computed over decoding queries only, and the 99th percentile is undefined below 100 queries. & before run, after pilots not reported here & Two call sites disagreed on the population; a percentile of few queries copies the maximum. & Definitions of Appendix~\ref{app:defs}. \\
7 & The corpus assigns each document the longest length it fills to at least nine tenths; the attention backend is chosen per experiment. & before run, after pilots not reported here & Assignment by file position made the long buckets truncated short documents. & What each context length means. \\
8 & E2 ran on all three $\rho$ terciles rather than the low-$\rho$ tail the E1 rule prescribes. & at run time & The full comparison was wanted. & Tail reported separately in Section~\ref{sec:e2}. \\
9 & The halving variant of Theorem~\ref{thm:main-alg}(a) and the scalarised per-query reference of Corollary~\ref{cor:perquery} were added to the E2 methods. & at run time & To run the theorem's construction directly and to place Proposition~\ref{prop:sep}. & Two rows of Table~\ref{tab:e2}. \\
10 & E3 ran uniform sampling, BalanceKV, the re-centred variant and the oracle, at $n/16$ and $n/4$ only, on the six E2 documents re-tokenised to at most \vEcQTokenLength\ tokens, rather than every E2 method at every E2 size on the E2 instances. & at run time & Running time. & Section~\ref{sec:e3} states the subset and the realised cache lengths. \\
11 & Instances that are near hard max (mean participation ratio at most 16) are excluded from every E2 cell. & after results & On such instances every method that keeps the top keys scores well. & Counts without the exclusion in Section~\ref{sec:e2}. \\
12 & The E3 ratio is taken over the rows in which the closed loop diverged, with the all-rows ratio beside it; a ratio conditioned on the shared prefix was proposed and not adopted, since it is identically one. & after results & Rows that track to the end are the open loop by construction. & Table~\ref{tab:e3}. \\
13 & E3 is also read as counts of rows and heads above one, and as the ratio of mean error from the first divergent step on. & after results & The pre-registered hypothesis had no margin. & Section~\ref{sec:e3}, Table~\ref{tab:e3post}. \\
14 & The kernel census (Table~\ref{tab:kernel}). & after results & To separate a failure of arithmetic from the constants. & Section~\ref{sec:e2}. \\
15 & Censuses at 512, 1024 and 2048 tokens (Table~\ref{tab:scale}). & after results & To fit an exponent, not to search for a head with small $\rho$. & Section~\ref{sec:e1}. \\
16 & The matched-error reading (Table~\ref{tab:matched}). & after results & The question a deployment asks. & Section~\ref{sec:e2}; agrees with matched size. \\
17 & The $\eps$-crossing is the crossing of a fit over the swept sizes rather than the smallest swept size reaching $\eps$, with the exclusion and $\dagger$ rules of Appendix~\ref{app:e4}, and the size grid was extended to $n/16384$. & after results & On the original grid, whose smallest size was $n/64$, the worst error already met $\eps$ at that size in almost every group, so the snapped crossing sat on the grid floor. & Tables~\ref{tab:e4slope} and~\ref{tab:e4slopefull}; the snapped crossing is not reported. \\
18 & Fixed-size readout (Table~\ref{tab:e4fixed}) and the ratio of sizes at matched error (Table~\ref{tab:e4ratio}). & after results & Dynamic range on the original grid; the deployment question. & Section~\ref{sec:e4}. \\
19 & The worst seed is reported in Appendix~\ref{app:exptables} rather than in the main table. & reporting & Table width. & Table~\ref{tab:e2worst}. \\
20 & The per-layer histograms of the plan are given as the per-head layer profile and the distribution over heads of Figure~\ref{fig:e1}; the fraction of heads with the sampling branch below the cache is given in the text. & reporting & Presentation. & Section~\ref{sec:e1}. \\
21 & On the separation instance only the Gram--Schmidt walk was run, on a pivoted-Cholesky factor truncated at $\dk\ge1024$; the witness value is evaluated in closed form, and the two inequalities are reported separately rather than as a ratio. & at run time; after results & The witness value is a lower bound on the supremum, so a ratio would overclaim. & Table~\ref{tab:sep}. \\
22 & Thinformer's thinning is replaced by a query-oblivious kernel halving without its refinement step and its queries. & at run time & The published method uses the queries; the comparison is query-oblivious. & The kernel-halving row of Table~\ref{tab:e2}. \\
23 & E4 runs the five query-oblivious methods only, not the two query-aware references. & at run time & The check concerns query-oblivious scaling. & Table~\ref{tab:e4slopefull}. \\
24 & The E2 mean error is not reported. & reporting & The rule is about the worst case. & Not shown. \\
25 & The sizes $n/32$ and $n/8$ are not tabulated. & reporting & Table width. & Not shown. \\
26 & The open loop of E3 feeds the reference tokens back through the masked model rather than reusing the uncompressed model's queries. & at run time & Both loops run the same masked forward pass per step and differ only in the token fed. & Appendix~\ref{app:e3}. \\
27 & The E4 crossing fit leaves out the rows of the three capped methods at $n/8$ and $n/4$. & after results & Those rows hold the untouched 8192-pair cap subsample, so a regression through them fitted a point that is not the method's, twice. & Tables~\ref{tab:e4slope}, \ref{tab:e4slopefull} and~\ref{tab:e4ratio}, Figure~\ref{fig:e4}; the fit over all thirteen sizes gives the construction higher slopes and larger ratios, and is not reported. \\
28 & E4 evaluates the worst case over 2000 queries of norm $\rho$ per instance; the plan named no query count for E4. & at run time & Running time of the sweep. & Section~\ref{sec:e4}. \\
\end{longtable}}

% =====================================================================
\section{Additional tables and figures}
\label{app:exptables}
\suppressfloats[t]

The tables and the figure below are referred to from Section~\ref{sec:exp}
and Appendix~\ref{app:expdetails}; each caption states what a cell is and
the population it is taken over.

\begin{table}[htp]
\centering\small
\begin{tabular}{@{}lrrrrr@{}}
\toprule
& \multicolumn{3}{c}{$\vrho$ within one query-head group} & \multicolumn{2}{c}{$\kappa$ after trimming}\\
\cmidrule(lr){2-4}\cmidrule(lr){5-6}
Model & largest head & smallest head & ratio & drop $4$ & drop $32$ \\
\midrule
Qwen2.5-7B-Instruct & \vEaQgqaHi & \vEaQgqaLo & \vEaQgqaRatio & \vEaQtrimFour & \vEaQtrimThirtytwo \\
Llama-3-8B-Instruct & \vEaLgqaHi & \vEaLgqaLo & \vEaLgqaRatio & \vEaLtrimFour & \vEaLtrimThirtytwo \\
\bottomrule
\end{tabular}
\caption{E1. Left: $\vrho$ decomposed over the query heads that share one
key--value head; each column is a median over heads, so the ratio is the
median of per-head ratios. Right: $\kappa$ after removing the four and the
thirty-two keys furthest from the centroid, as a fraction of the untrimmed
value, median over heads.}
\label{tab:e1gqa}
\end{table}

\begin{table}[htp]
\centering\small
\begin{tabular}{@{}lrrrrrr@{}}
\toprule
Model & context $n$ & heads & prompts & median $\kappa$ & median $\vrho$ & median $\rho$ \\
\midrule
\vScaleBody
\bottomrule
\end{tabular}
\caption{E1. Centred $\rho$ and its two factors against context length. Each
row is a separate census; rows are bucketed by the requested length, and the
realised token count falls short of it by at most \vScQFillFrac\ on Qwen and
\vScLFillFrac\ on Llama: the corpus admits a document that fills nine tenths
of its bucket, and the Llama tokeniser counts fewer tokens than the Qwen
tokeniser the corpus was bucketed with. The documents differ between
lengths, since only long documents fill the long buckets; the fitted
exponents are in Section~\ref{sec:e1}.}
\label{tab:scale}
\end{table}

\begin{table}[htp]
\centering\small
\begin{tabular}{@{}lrrrrr@{}}
\toprule
Head & $\rho$ & participation ratio & $n$ & top-1 mass & near hard max \\
\midrule
\multicolumn{6}{@{}l@{}}{\itshape Qwen2.5-7B-Instruct}\\
\vEbQConcBody
\midrule
\multicolumn{6}{@{}l@{}}{\itshape Llama-3-8B-Instruct}\\
\vEbLConcBody
\bottomrule
\end{tabular}
\caption{E2. Softmax concentration of each selected head at its decoding
queries, one row per head with the median over its instances; $\rho$ is the
E2 value of the head, $n$ the range of cache lengths, and the last column
counts the instances that are near hard max (Appendix~\ref{app:defs}) out
of its six.}
\label{tab:e2conc}
\end{table}

\begin{table}[htp]
\centering\small
\begin{tabular}{@{}lrrrrrr@{}}
\toprule
& \multicolumn{3}{c}{real decoding queries} & \multicolumn{3}{c}{sphere of radius $\vrho$}\\
\cmidrule(lr){2-4}\cmidrule(lr){5-7}
Method & $n/64$ & $n/16$ & $n/4$ & $n/64$ & $n/16$ & $n/4$ \\
\midrule
\multicolumn{7}{@{}l@{}}{\itshape Qwen2.5-7B-Instruct}\\
Uniform sampling & \vEbQUnifRealAMax & \vEbQUnifRealBMax & \vEbQUnifRealCMax & \vEbQUnifRandAMax & \vEbQUnifRandBMax & \vEbQUnifRandCMax \\
BalanceKV & \vEbQBkvRealAMax & \vEbQBkvRealBMax & \vEbQBkvRealCMax & \vEbQBkvRandAMax & \vEbQBkvRandBMax & \vEbQBkvRandCMax \\
Kernel halving & \vEbQThinRealAMax & \vEbQThinRealBMax & \vEbQThinRealCMax & \vEbQThinRandAMax & \vEbQThinRandBMax & \vEbQThinRandCMax \\
\textbf{Halving} & \vEbQOhRealAMax & \vEbQOhRealBMax & \vEbQOhRealCMax & \vEbQOhRandAMax & \vEbQOhRandBMax & \vEbQOhRandCMax \\
\textbf{Halving, re-centred subsample} & \vEbQOsbRealAMax & \vEbQOsbRealBMax & \vEbQOsbRealCMax & \vEbQOsbRandAMax & \vEbQOsbRandBMax & \vEbQOsbRandCMax \\
\midrule
\multicolumn{7}{@{}l@{}}{\itshape Llama-3-8B-Instruct}\\
Uniform sampling & \vEbLUnifRealAMax & \vEbLUnifRealBMax & \vEbLUnifRealCMax & \vEbLUnifRandAMax & \vEbLUnifRandBMax & \vEbLUnifRandCMax \\
BalanceKV & \vEbLBkvRealAMax & \vEbLBkvRealBMax & \vEbLBkvRealCMax & \vEbLBkvRandAMax & \vEbLBkvRandBMax & \vEbLBkvRandCMax \\
Kernel halving & \vEbLThinRealAMax & \vEbLThinRealBMax & \vEbLThinRealCMax & \vEbLThinRandAMax & \vEbLThinRandBMax & \vEbLThinRandCMax \\
\textbf{Halving} & \vEbLOhRealAMax & \vEbLOhRealBMax & \vEbLOhRealCMax & \vEbLOhRandAMax & \vEbLOhRandBMax & \vEbLOhRandCMax \\
\textbf{Halving, re-centred subsample} & \vEbLOsbRealAMax & \vEbLOsbRealBMax & \vEbLOsbRealCMax & \vEbLOsbRandAMax & \vEbLOsbRandBMax & \vEbLOsbRandCMax \\
\bottomrule
\end{tabular}
\caption{E2, worst seed. As Table~\ref{tab:e2}, but each instance
contributes the largest error over its \vEbQSeeds\ seeds before the median
over instances is taken.}
\label{tab:e2worst}
\end{table}

\begin{table}[htp]
\centering\small
\begin{tabular}{@{}lrrrr@{}}
\toprule
& \multicolumn{2}{c}{Qwen2.5-7B-Instruct} & \multicolumn{2}{c}{Llama-3-8B-Instruct}\\
\cmidrule(lr){2-3}\cmidrule(lr){4-5}
Method & size ratio & fewer keys & size ratio & fewer keys \\
\midrule
BalanceKV & \vMeQBkvRatio & \vMeQBkvFrac & \vMeLBkvRatio & \vMeLBkvFrac \\
Kernel halving & \vMeQThinRatio & \vMeQThinFrac & \vMeLThinRatio & \vMeLThinFrac \\
\textbf{Halving} & \vMeQOhRatio & \vMeQOhFrac & \vMeLOhRatio & \vMeLOhFrac \\
\textbf{Halving, re-centred subsample} & \vMeQOsbRatio & \vMeQOsbFrac & \vMeLOsbRatio & \vMeLOsbFrac \\
\addlinespace
Heavy hitters (oracle) & \vMeQHhRatio & \vMeQHhFrac & \vMeLHhRatio & \vMeLHhFrac \\
Per-query & \vMeQPqRatio & \vMeQPqFrac & \vMeLPqRatio & \vMeLPqFrac \\
\bottomrule
\end{tabular}
\caption{E2 read at matched error. For each instance and each swept size $T$,
the smallest swept size at which the method reaches uniform sampling's error
at $T$ on the real decoding queries, divided by $T$: ``size ratio'' is the
median over (instance, $T$) and ``fewer keys'' the fraction of those cells in
which the method needs strictly fewer keys. Cells at an instance's smallest
swept size are excluded, since no smaller size exists, and cells in which the
method never reaches uniform sampling's error are dropped. The size grid is
geometric, so the ratio is quantised to powers of two and a median of exactly
one means the same grid point. Population: \vMeQDumps\ instances of
\vMeQHeads\ heads on Qwen and \vMeLDumps\ of \vMeLHeads\ on Llama, the
near-hard-max instances excluded as in Table~\ref{tab:e2}.}
\label{tab:matched}
\end{table}

\begin{table}[htp]
\centering\footnotesize
\setlength{\tabcolsep}{4pt}
\begin{tabular}{@{}llrrrcrrr@{}}
\toprule
& & & & \multicolumn{2}{c}{pivoted Cholesky} & \multicolumn{2}{c}{off-diag.\ correlation} & \\
\cmidrule(lr){5-6}\cmidrule(lr){7-8}
Model & $\rho$ band & inst. & heads & rank & saturated & full & $A,B$ & degen. \\
\midrule
\vKernelBody
\bottomrule
\end{tabular}
\caption{The Gram matrix of Definition~\ref{def:level} on the E2 instances,
by band of the instance's $\rho$. ``Rank'' is the rank at which the
matrix-free pivoted Cholesky met its relative stopping rule, capped at
\vKnQCap; ``saturated'' counts the instances that reached the cap, for which
the recorded rank is only a lower bound on the true rank. The correlation columns give the largest
off-diagonal entry of the correlation matrix over a sample of pairs, for the
whole Gram matrix and for the numerator and denominator blocks alone; the
cardinality block adds a constant to every entry of the former, so only the
latter moves with $\rho$. ``Degenerate'' counts instances whose $A,B$ blocks
have no off-diagonal correlation above $10^{-8}$. Each band lists instances and distinct
heads, since one head contributes up to six instances.}
\label{tab:kernel}
\end{table}

\begin{table}[htp]
\centering\small
\begin{tabular}{@{}lrrrr@{}}
\toprule
& \multicolumn{2}{c}{Qwen2.5-7B-Instruct} & \multicolumn{2}{c}{Llama-3-8B-Instruct}\\
\cmidrule(lr){2-3}\cmidrule(lr){4-5}
Method & worst case & after divergence & worst case & after divergence \\
\midrule
Uniform sampling & \vEcQUnifRatio & \vEcQUnifPostDiv & \vEcLUnifRatio & \vEcLUnifPostDiv \\
BalanceKV & \vEcQBkvRatio & \vEcQBkvPostDiv & \vEcLBkvRatio & \vEcLBkvPostDiv \\
\textbf{Halving, re-centred subsample} & \vEcQOsbRatio & \vEcQOsbPostDiv & \vEcLOsbRatio & \vEcLOsbPostDiv \\
\addlinespace
Heavy hitters (oracle) & \vEcQHhRatio & \vEcQHhPostDiv & \vEcLHhRatio & \vEcLHhPostDiv \\
\bottomrule
\end{tabular}
\caption{E3, pooled over both sizes and over the diverged rows. The first
column of each pair is the ratio of worst-case errors of Table~\ref{tab:e3};
the second is the ratio of the mean error over the steps from the first
divergent token on; the step that produced that token still issues the same
query in both loops, so one shared step is included.}
\label{tab:e3post}
\end{table}

\begin{figure}[htp]
\centering
\includegraphics[width=\textwidth]{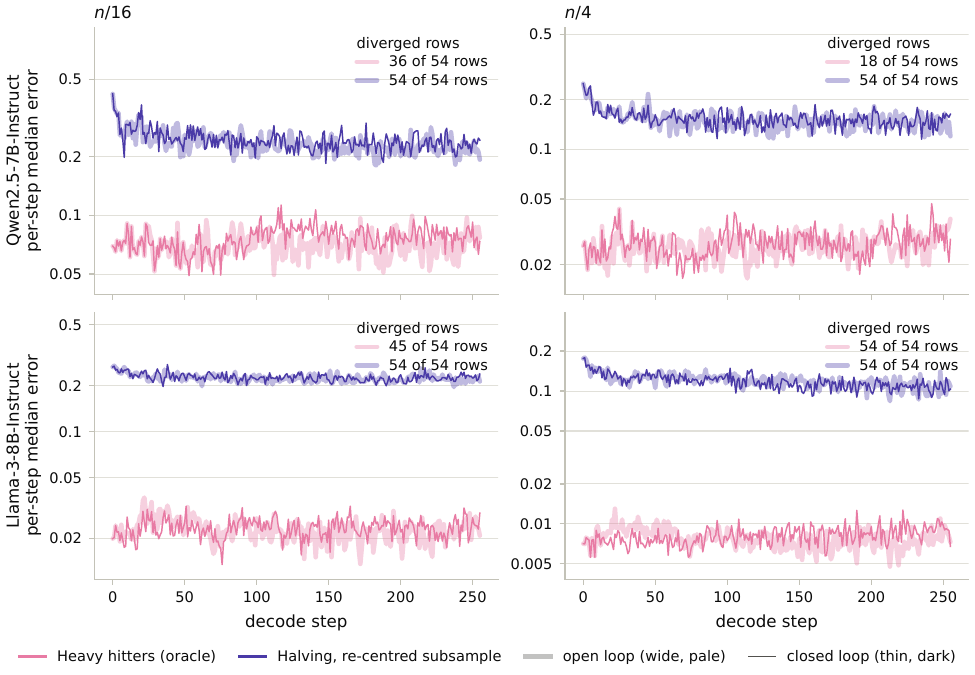}
\caption{E3. Median over the diverged rows of the attention error at each
decode step, the error at a step being the largest over the query heads of
the group; the open loop is the wide pale trace and the closed loop the thin
dark one, for the heavy-hitter oracle and the re-centred variant of the
construction, on both models at $n/16$ and $n/4$, each panel on its own
logarithmic scale. The legend of each panel gives the number of diverged rows
behind each method's pair of traces.}
\label{fig:e3}
\end{figure}

\begin{table}[htp]
\centering\footnotesize
\setlength{\tabcolsep}{4pt}
\begin{tabular}{@{}llrrr@{}}
\toprule
& & \multicolumn{3}{c}{fitted slope of $\log(\text{size})$ against $\rho$, and its range over seeds}\\
\cmidrule(lr){3-5}
Keys & Method & $\dk=2$ & $\dk=8$ & $\dk=64$ \\
\midrule
\vEdFullBody
\bottomrule
\end{tabular}
\caption{E4, all \vEdFits\ fits. Bold marks a slope within the pre-registered
$1\pm0.1$ (\vEdFullWithin\ of \vEdFits); $\dagger$ marks a fit resting on at
least one crossing outside the sizes it was fitted over
(\vEdFullDaggered\ fits, Appendix~\ref{app:e4}); an excluded fit prints a
dash.}
\label{tab:e4slopefull}
\end{table}

\begin{table}[htp]
\centering\small
\begin{tabular}{@{}lrrrr@{}}
\toprule
& \multicolumn{4}{c}{worst-case error at a coreset of \vEfSize\ of \vEfN\ pairs}\\
\cmidrule(lr){2-5}
Method & $\rho=1$ & $\rho=2$ & $\rho=4$ & $\rho=6$ \\
\midrule
Uniform sampling & \vEfUnifRa & \vEfUnifRb & \vEfUnifRc & \vEfUnifRd \\
BalanceKV & \vEfBkvRa & \vEfBkvRb & \vEfBkvRc & \vEfBkvRd \\
Kernel halving & \vEfThinRa & \vEfThinRb & \vEfThinRc & \vEfThinRd \\
\textbf{Halving} & \vEfOhRa & \vEfOhRb & \vEfOhRc & \vEfOhRd \\
\textbf{Halving, re-centred subsample} & \vEfOsbRa & \vEfOsbRb & \vEfOsbRc & \vEfOsbRd \\
\bottomrule
\end{tabular}
\caption{E4 read at a fixed requested coreset size: median over the
\vEdSeeds\ seeds, both key families, all three $\dk$ and both $\dv$. Lower is
better.}
\label{tab:e4fixed}
\end{table}

% Every float of this appendix is placed before the bibliography.
\clearpage

% =====================================================================


\begin{thebibliography}{99}
\small

\bibitem{AT07}
R.~J.~Adler and J.~E.~Taylor.
\newblock \emph{Random Fields and Geometry}. Springer, 2007.

\bibitem{ALS21}
R.~Alweiss, Y.~P.~Liu, and M.~Sawhney.
\newblock Discrepancy minimization via a self-balancing walk.
\newblock In \emph{STOC}, 2021; arXiv:2006.14009.

\bibitem{Bai24}
Y.~Bai, X.~Lv, J.~Zhang, H.~Lyu, J.~Tang, Z.~Huang, Z.~Du, X.~Liu, A.~Zeng,
L.~Hou, Y.~Dong, J.~Tang, and J.~Li.
\newblock LongBench: A bilingual, multitask benchmark for long context understanding.
\newblock In \emph{ACL}, 2024; arXiv:2308.14508.

\bibitem{Ball97}
K.~Ball.
\newblock An elementary introduction to modern convex geometry.
\newblock In \emph{Flavors of Geometry}, MSRI Publ.~31, pp.~1--58, 1997.

\bibitem{Ban98}
W.~Banaszczyk.
\newblock Balancing vectors and Gaussian measures of $n$-dimensional convex bodies.
\newblock \emph{Random Structures \& Algorithms}, 12(4):351--360, 1998.

\bibitem{BDGL18}
N.~Bansal, D.~Dadush, S.~Garg, and S.~Lovett.
\newblock The Gram--Schmidt walk: a cure for the Banaszczyk blues.
\newblock In \emph{STOC}, 2018; \emph{Theory of Computing}, 15(21), 2019.

\bibitem{BF81}
J.~Beck and T.~Fiala.
\newblock ``Integer-making'' theorems.
\newblock \emph{Discrete Applied Mathematics}, 3(1):1--8, 1981.

\bibitem{BLPV25}
L.~Bohbot, C.~Letrouit, G.~Peyr\'e, and F.-X.~Vialard.
\newblock Token sample complexity of attention.
\newblock arXiv:2512.10656, 2025.

\bibitem{BR23}
R.~Bozzai and T.~Rothvoss.
\newblock Stronger coreset bounds for kernel density estimators via chaining.
\newblock arXiv:2310.08548, 2023.

\bibitem{CGSDM25}
A.~M.~Carrell, A.~Gong, A.~Shetty, R.~Dwivedi, and L.~Mackey.
\newblock Low-rank thinning.
\newblock arXiv:2502.12063, 2025.

\bibitem{CFIKKP26}
J.~Y.~Chen, Y.~Feng, P.~Indyk, M.~Kapralov, E.~Kochetkova, and B.~Prokhorov.
\newblock Towards tight bounds for streaming attention.
\newblock arXiv:2606.07205, 2026.

\bibitem{CKW24}
M.~Charikar, M.~Kapralov, and E.~Waingarten.
\newblock A quasi-Monte Carlo data structure for smooth kernel evaluations.
\newblock In \emph{SODA}, 2024; arXiv:2401.02562.

\bibitem{CM96}
B.~Chazelle and J.~Matou\v{s}ek.
\newblock On linear-time deterministic algorithms for optimization problems in fixed dimension.
\newblock \emph{Journal of Algorithms}, 21(3):579--597, 1996.

\bibitem{Che78}
S.~Chevet.
\newblock S\'eries de variables al\'eatoires gaussiennes \`a valeurs dans $E\widehat\otimes_\varepsilon F$.
\newblock \emph{S\'eminaire sur la G\'eom\'etrie des Espaces de Banach}, Expos\'e 19, 1978.

\bibitem{Dasigi21}
P.~Dasigi, K.~Lo, I.~Beltagy, A.~Cohan, N.~A.~Smith, and M.~Gardner.
\newblock A dataset of information-seeking questions and answers anchored in research papers.
\newblock In \emph{NAACL}, 2021; arXiv:2105.03011.

\bibitem{DM21}
R.~Dwivedi and L.~Mackey.
\newblock Kernel thinning.
\newblock In \emph{COLT}, 2021; \emph{Journal of Machine Learning Research}, 2024; arXiv:2105.05842.

\bibitem{GCDM26}
A.~Gong, A.~M.~Carrell, R.~Dwivedi, and L.~Mackey.
\newblock Express language modeling.
\newblock arXiv:2606.10944, 2026.

\bibitem{Llama3}
A.~Grattafiori, A.~Dubey, A.~Jauhri, et al.
\newblock The Llama 3 herd of models.
\newblock arXiv:2407.21783, 2024.

\bibitem{HO25}
T.~Haris and K.~Onak.
\newblock Compression barriers for autoregressive transformers.
\newblock In \emph{COLT}, 2025; arXiv:2502.15955.

\bibitem{HKV26}
J.~Haverbeck, C.~Amo~Alonso, A.~F.~Posada-Moreno, S.~Trimpe, and M.~Pavone.
\newblock The risk of KV cache compression: a kernel-discrepancy view of attention.
\newblock arXiv:2607.01520, 2026.

\bibitem{KL19}
Z.~Karnin and E.~Liberty.
\newblock Discrepancy, coresets, and sketches in machine learning.
\newblock In \emph{COLT}, 2019.

\bibitem{KSHZK25}
I.~Kochetkova, A.~Sheth, I.~Han, A.~Zandieh, and M.~Kapralov.
\newblock BalanceKV: streaming attention approximation via discrepancy theory.
\newblock In \emph{NeurIPS}, 2025; arXiv:2502.07861.

\bibitem{KNR99}
I.~Kremer, N.~Nisan, and D.~Ron.
\newblock On randomized one-round communication complexity.
\newblock \emph{Computational Complexity}, 8(1):21--49, 1999.

\bibitem{Li24}
Y.~Li, Y.~Huang, B.~Yang, B.~Venkitesh, A.~Locatelli, H.~Ye, T.~Cai,
P.~Lewis, and D.~Chen.
\newblock SnapKV: LLM knows what you are looking for before generation.
\newblock arXiv:2404.14469, 2024.

\bibitem{LAK26}
E.~Liberty, A.~Andoni, and E.~Kleiner.
\newblock Nearly optimal attention coresets.
\newblock arXiv:2605.05602, 2026.

\bibitem{LPMST15}
D.~Lopez-Paz, K.~Muandet, B.~Sch\"olkopf, and I.~Tolstikhin.
\newblock Towards a learning theory of cause-effect inference.
\newblock In \emph{ICML}, 2015.

\bibitem{Phi13}
J.~M.~Phillips.
\newblock $\varepsilon$-samples for kernels.
\newblock In \emph{SODA}, 2013.

\bibitem{PT18}
J.~M.~Phillips and W.~M.~Tai.
\newblock Near-optimal coresets of kernel density estimates.
\newblock In \emph{SoCG}, 2018; \emph{Discrete \& Computational Geometry}, 63:867--887, 2020.

\bibitem{Qwen25}
Qwen Team (A.~Yang et al.).
\newblock Qwen2.5 technical report.
\newblock arXiv:2412.15115, 2024.

\bibitem{SM26}
T.~Schr\"oder and L.~Mackey.
\newblock WildCat: near-linear attention in theory and practice.
\newblock arXiv:2602.10056, 2026.

\bibitem{SZ26}
Y.~Sui and J.~Zhang.
\newblock The approximation rank of softmax attention: sharp geometric laws and robust interaction dimension.
\newblock arXiv:2608.28150, 2026.

\bibitem{Tai22}
W.~M.~Tai.
\newblock Optimal coreset for Gaussian kernel density estimation.
\newblock In \emph{SoCG}, LIPIcs~224, 63:1--63:15, 2022; arXiv:2007.08031.

\bibitem{TS14}
R.~Tomioka and T.~Suzuki.
\newblock Spectral norm of random tensors.
\newblock arXiv:1407.1870, 2014.

\bibitem{Ver18}
R.~Vershynin.
\newblock \emph{High-Dimensional Probability}. Cambridge University Press, 2018.

\bibitem{Wang26}
X.~Wang.
\newblock How much cache does reasoning need? Depth--cache tradeoffs in KV-compressed transformers.
\newblock arXiv:2604.17935, 2026.

\bibitem{ZHMK24}
A.~Zandieh, I.~Han, V.~Mirrokni, and A.~Karbasi.
\newblock SubGen: token generation in sublinear time and memory.
\newblock arXiv:2402.06082, 2024.

\end{thebibliography}
\end{document}